\documentclass[aps,showpacs,reprint,floatfix,superscriptaddress]{revtex4-2}
\usepackage{graphicx}
\usepackage{dcolumn}
\usepackage{bm}
\usepackage[colorlinks=true,urlcolor=blue,citecolor=blue,linkcolor=blue]{hyperref}
\usepackage{suffix}

\usepackage{xr}
\usepackage{xfrac}
\usepackage{multirow}
\usepackage{amsmath}
\usepackage{amssymb}
\usepackage{braket}
\usepackage{physics}
\usepackage{amsthm}
\usepackage{natbib}
\usepackage{mathtools}
\usepackage{diagbox}
\usepackage{hyperref}
\usepackage{cleveref}
\usepackage[utf8]{inputenc}
\usepackage[english]{babel}
\usepackage{scalerel}
\usepackage{stackengine}
\usepackage{quantikz}
\usepackage{algorithm}
\usepackage[noend]{algpseudocode} 
\usepackage{subfigure}
\usepackage{enumitem}
\usepackage{comment}

\newcommand{\algmargin}{\the\ALG@thistlm}
\makeatother
\newlength{\whilewidth}
\algdef{SE}[parWHILE]{parWhile}{EndparWhile}[1]
  {\parbox[t]{\dimexpr\linewidth-\algmargin}{%
     \hangindent\whilewidth\strut\algorithmicwhile\ #1\ \algorithmicdo\strut}}{\algorithmicend\ \algorithmicwhile}%
\algnewcommand{\parState}[1]{\State%
  \parbox[t]{\dimexpr\linewidth-\algmargin}{\strut #1\strut}}

\newtheorem{theorem}{Theorem}[]
\newtheorem*{remark}{Remark}
\newtheorem{corollary}{Corollary}[]
\newtheorem{lemma}[]{Lemma}
\newtheorem{prop}{Proposition}

\newtheorem{definition}{Definition}

\theoremstyle{definition}

\makeatletter
\@addtoreset{proofpart}{theorem}
\@addtoreset{proofcase}{theorem}
\makeatother
\renewcommand{\exp}[1]{\text{exp}\left( #1 \right)}

\newcommand{\sm}[0]{Supplemental Material}

\begin{document}
\preprint{APS/123-QED}

\title{Complexity of tensor network simulation for noisy quantum circuits}

\author{Yuguo Shao}
\thanks{These authors contributed equally to this work.}
\affiliation{Yau Mathematical Sciences Center, Tsinghua University, Beijing 100084, China}
\affiliation{Beijing Institute of Mathematical Sciences and Applications, Beijing 100407, China}

\author{Zishuo Zhao}
\thanks{These authors contributed equally to this work.}
\affiliation{Yau Mathematical Sciences Center, Tsinghua University, Beijing 100084, China}
\affiliation{Department of Mathematics, Tsinghua University, Beijing 100084, China}

\author{Song Cheng}
\email{chengsong@bimsa.cn}
\affiliation{Beijing Institute of Mathematical Sciences and Applications, Beijing 100407, China}

\author{Zhengwei Liu}
\email{liuzhengwei@mail.tsinghua.edu.cn}
\affiliation{Yau Mathematical Sciences Center, Tsinghua University, Beijing 100084, China}
\affiliation{Beijing Institute of Mathematical Sciences and Applications, Beijing 100407, China}


\begin{abstract}
    We aim to rigorously address how local noise affects classical simulability of quantum dynamics benchmarked by tensor-network methods. Using operator entanglement entropy (OEE), we prove the following: (1) For single-qubit depolarizing noise on arbitrary circuits, tensor networks with $\mathrm{poly}(n)$ bond dimension suffice for fixed absolute Hilbert-Schmidt error after $\order{1}$ depth, while relative error demands $\order{\log n}$ depth; and this bound is optimal. (2) For single-qubit depolarizing noise on 1D local circuits, the existence of whole-trajectory error-bounded matrix product operator (MPO) of $\mathrm{poly}(n)$ bond dimension at all depths. (3) For general single-qubit noise in 1D brickwall circuits, random two-design gates with contraction coefficient $c<1/3$ yield an $\order{1}$ OEE plateau with probability $1-Te^{-\Omega(n)}$, while arbitrary gates with $c<1/48$ give $\order{\log n}$ OEE in the worst case. (4) In higher dimensions, these bounds yield uniform-in-depth $\mathrm{poly}(n)$ average boundary-bond dimensions for projected entangled pair operators~(PEPO) across every cut -- under depolarizing noise at either absolute or relative accuracy, and under general noise with strong contraction at absolute accuracy. Our results establish a rigorous connection between certain noise models, circuit types, and their classical simulability.
\end{abstract}

\maketitle

\section{Introduction}
The boundary of classical simulability for noisy quantum dynamics is a central question of the NISQ era. Local noise suppresses coherences and drives the dynamics toward lower-complexity structures, reducing the cost of classical tensor-network (TN) methods~\cite{berezutskii2025tensor,noh2020efficient,cheng2021simulating,wellnitz2022rise,zhang2022entanglement,li2023entanglement,wei2026noise,yan2025limitations,lee2025classical}.
Whether the dynamics is representable by a MPO of moderate bond dimension, however, depends on the error criterion and on the noise model~\cite{vidal2004efficient,verstraete2004mpdo,zwolak2004mixed,schuch2008entropy,schollwock2011density,orus2014practical,wellnitz2022rise}. The simulability boundary is therefore a family of thresholds, jointly set by accuracy target and noise class.

Operator entanglement serves as the natural diagnostic for these thresholds. Vectorizing a density operator as a pure state in the doubled Hilbert space maps an MPO to an matrix product state (MPS), whose bond dimension across a bipartition is controlled by the entropy of the operator Schmidt coefficients~\cite{noh2020efficient,dowling2024operational,dowling2025operatorbridge}. Two distinct entropies are relevant here: the unnormalized operator entanglement $S_{OE}$, which incorporates purity decay and governs absolute Hilbert-Schmidt truncation, and its purity-rescaled normalized counterpart $\widetilde S_{OE}$, which governs relative-error truncation~\cite{noh2020efficient}.

Recent work on the classical simulability of noisy quantum circuits has followed two complementary lines. The first uses operator entanglement as the cost proxy for mixed-state TN simulation and studies its behavior in specific noise models and circuit classes~\cite{noh2020efficient,wellnitz2022rise,zhang2022entanglement,li2023entanglement,dowling2024operational,dowling2025operatorbridge,cheng2021simulating,dowling2026classical,wei2026noise}. The second, built around quasiprobability decompositions, Pauli-path methods, and observable-backpropagation techniques, substantially enlarges the accessible class of noisy circuits but targets expectation values rather than the full density operator~\cite{bennink2017unbiased,aharonov2023polynomial,gonzalezgarcia2025pauli,shao2024obppp,angrisani2025arbitrarynoise,fontana2025variational,shao2025diagnosing,mele2026noise,gao2018efficient,angrisani2025classically,schuster2025polynomial,martinez2025efficient,denzler2026simulation,dowling2026noise}. In this work we contribute to the first line by proving rigorous operator-entanglement bounds — across multiple noise classes and both absolute and relative accuracy criteria — and translating these into explicit MPO and PEPO bond dimension estimates.

\begin{figure}[t]
    \centering
    \includegraphics[width=0.45\textwidth]{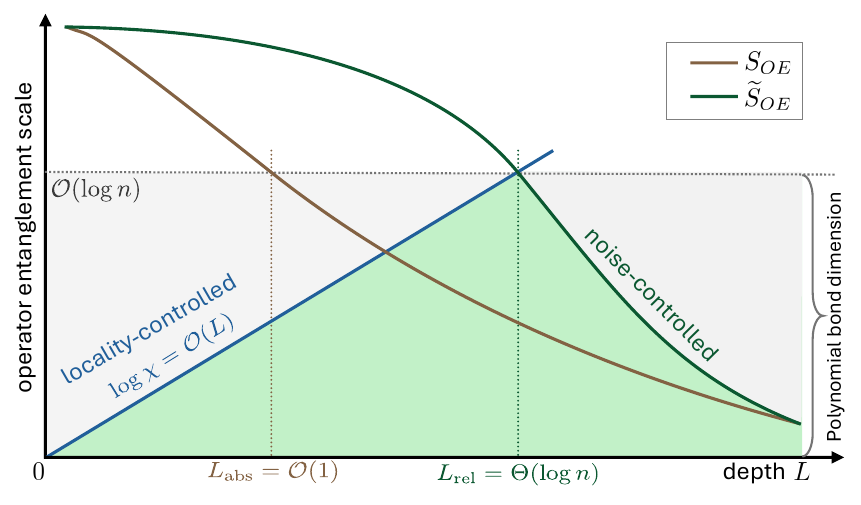}
    \caption{Depolarizing noise guarantees polynomial bond dimension of TN simulation at all circuit depths.
    For arbitrary local circuits, the bond dimension is controlled by locality before the respective crossover and by the noise-induced OEE thresholds after it:
    $L_{\mathrm{abs}}=\order{1}$ for absolute Hilbert-Schmidt error and $L_{\mathrm{rel}}=\Theta(\log n)$ for relative error, both geometry independent.
    }
    \label{fig:main_thresholds}
\end{figure}

Taking operator entanglement as the unified diagnostic, we prove rigorous bounds across two axes — the error criterion (absolute vs.\ relative Hilbert-Schmidt) and the noise/circuit class — as summarized in \cref{tab:main_results} and proved in the following sections (Thms.~\ref{thm:main_depolarizing_thresholds},~\ref{thm:main_average_case_plateau},~\ref{thm:main_worst_case_log} and Prop.~\ref{prop:main_whole_trajectory_rel_depolarizing}). In higher dimensions the same diagnostic yields average PEPO boundary-bond dimension estimates.

\begin{table*}[t]
\footnotesize
\centering
\setlength{\tabcolsep}{4pt}
\renewcommand{\arraystretch}{1.15}
\begin{tabular}{@{}lll@{}}
    \hline\hline
    \textbf{Noise Conditions} & \textbf{Circuit \& Gate Conditions} & \textbf{Main Bounds for Simulability} \\
    \hline
    \multicolumn{3}{c}{\textbf{Operator entanglement entropy}} \\
    \hline
    single-qubit depolarizing & any gates (geometry independent) &
    \begin{tabular}[t]{@{}l@{}}
    $S_{OE}=\order{\log n}$ for $L\geq \order{1}$;\\
    $\widetilde S_{OE}=\order{\log n}$ for $L\geq \order{\log n}$; Thm.~\ref{thm:main_depolarizing_thresholds}
    \end{tabular} \\
    general single-qubit, $c<1/3$ & 1D brickwall, 2-design gates & $S_{OE}=\order{1}$; Thm.~\ref{thm:main_average_case_plateau} \\
    general single-qubit, unique FP, $c<1/48$ & 1D brickwall, arbitrary gates & $S_{OE}=\order{\log n}$; Thm.~\ref{thm:main_worst_case_log} \\
    \hline\hline
    \multicolumn{3}{c}{\textbf{Average boundary-bond dimension $\overline\chi_\partial$ in higher dimensions}} \\
    \hline
    single-qubit depolarizing & local circuits & abs/rel: $\log\overline\chi_\partial=\order{\log n}$  \\
    general single-qubit, unique FP, $c<1/48$ & local circuits & abs: $\log\overline\chi_\partial=\order{\log n}$ \\
    \hline\hline
\end{tabular}
\caption{Main results for operator-entanglement bounds. The upper block gives the OEE scaling for each noise class and circuit condition, which controls the MPO bond dimension required for classical simulation in 1D. The lower block records the corresponding average PEPO boundary-bond dimension $\overline\chi_\partial$ in higher dimensions.}
\label{tab:main_results}
\end{table*}

\section{Preliminaries}
We consider an $n$-qubit density matrix $\rho$ and a bipartition $A\mid B$. The operator Schmidt decomposition of $\rho$ across this cut reads
\begin{equation}\label{eq:prelim_operator_schmidt}
    \rho
    =
    \sum_{\alpha=1}^{r}
    \lambda_\alpha L_{\alpha}^{[A]} \otimes R_{\alpha}^{[B]},
\end{equation}
where $\{L_{\alpha}^{[A]}\}\subset\mathcal{B}(\mathcal{H}_A)$ and $\{R_{\alpha}^{[B]}\}\subset\mathcal{B}(\mathcal{H}_B)$ are Hilbert-Schmidt orthonormal operators on $A$ and $B$, $r$ is the operator Schmidt rank and $\lambda_\alpha>0$ are the Schmidt coefficients. 

Following Ref.~\cite{wellnitz2022rise}, the operator entanglement entropy~(OEE) is defined as
\begin{equation}\label{eq:prelim_oee}
    S_{OE}(\rho)
    =
    -\sum_{\alpha=1}^{r}
    \left(\lambda_\alpha\right)^2
    \log_2 \left(\lambda_\alpha\right)^2 .
\end{equation}
Normalizing this spectrum by the purity gives the normalized OEE
\begin{equation}
    \widetilde{S}_{OE}(\rho)
    =
    -\sum_{\alpha=1}^{r} p_\alpha \log_2 p_\alpha,
    \quad
    p_\alpha=\frac{\lambda_\alpha^2}{\tr{\rho^2}},
\end{equation}
which is the MPO entanglement entropy of Ref.~\cite{noh2020efficient}; it removes the overall purity loss and measures only the structure of the operator weight. 
Equivalently, $\tilde{S}_{OE}(\rho)$ is the entanglement entropy of the vectorized state $\tr{\rho^2}^{-1/2}\|\rho\rangle\!\rangle$ in the doubled Hilbert space.  
The two entropies obey
\begin{equation}\label{eq:prelim_oee_relation}
    S_{OE}(\rho) = \tr{\rho^2} \widetilde{S}_{OE}(\rho) - \tr{\rho^2}\log_2 \tr{\rho^2}.
\end{equation}

This spectrum $\lambda_\alpha$ is also the full-environment of tensor networks across A and B, the two entropies $S_{OE}$ and $\widetilde S_{OE}$ thus optimally govern the absolute and relative truncation error of the operator Schmidt decomposition, respectively~\cite{vidal2003efficient,vidal2004efficient,zwolak2004mixed,schuch2008entropy,schollwock2011density,orus2014practical,dowling2026classical}. 
As shown in \sm, a relative truncation error $\delta\in(0,1)$ is guaranteed by any bond dimension $\chi$ satisfying
\begin{equation}\label{eq:prelim_bond_dimension_scaling_rel}
    \chi \geq \max\left\{1,\left\lceil (1-\delta)2^{\frac{\widetilde{S}_{OE}(\rho)}{\delta}}\right\rceil\right\},
\end{equation}
whereas a fixed absolute truncation error $0<\varepsilon<\tr{\rho^2}$ is guaranteed by any
\begin{equation}\label{eq:prelim_bond_dimension_scaling_abs}
    \chi \geq \max\left\{1,\left\lceil \left(\tr{\rho^2}-\varepsilon\right) 2^{\frac{S_{OE}(\rho)}{\varepsilon}}\right\rceil\right\}.
\end{equation}
Roughly, once the error tolerance is fixed, the required bond dimension scales exponentially in the relevant operator-entanglement scale,
\begin{equation}\label{eq:prelim_bond_dimension_scaling}
    \chi_{\mathrm{rel}} = \order{2^{\alpha \widetilde{S}_{OE}}}, \qquad
    \chi_{\mathrm{abs}} = \order{2^{\alpha S_{OE}}}.
\end{equation}
up to constants set by the chosen tolerance.

Polynomial-$\chi$ tensor-network therefore requires at most logarithmic OEE. 
However, the OEE of a local circuit in general case grows linearly with the circuit depth~\cite{nahum2017quantum,vonkeyserlingk2018operator}. 

By \textit{local circuit} we mean a quantum circuit on a bounded-degree interaction graph where each gate acts on $\order{1}$ qubits.
Consequently, for any bipartition $A\mid A^c$, each layer contains at most $\order{a(A)}$ gates cross the cut, where $a(A)$ is the number of boundary bonds. 

In the following, we investigate under what conditions the operator entanglement would be bounded to this logarithmic scale.

\section{Dissipation driven by depolarizing noise}\label{sec:depolarizing}
We consider an $n$-qubit circuit of depth $L$ built from noisy layers
\begin{equation}
    \Phi_{\ell}=\mathcal D_{\lambda}^{\otimes n}\circ\mathcal U_{\ell},
    \qquad \ell=1,\ldots,L,
\end{equation}
where $\mathcal U_{\ell}(\cdot)=U_{\ell}(\cdot)U_{\ell}^{\dagger}$ is the unitary layer and $\mathcal D_{\lambda}^{\otimes n}$ is independent single-qubit depolarizing noise, with
\begin{equation}
    \mathcal{D}_{\lambda}(\sigma)=(1-\lambda)\sigma+\lambda\frac{\mathbb{I}}{2}.
\end{equation}
For an initial $n$-qubit state $\rho_0$, we write
\begin{equation}
    \rho_{\ell}=\Phi_{\ell}(\rho_{\ell-1}),
    \qquad \ell=1,\ldots,L,
\end{equation}
so that $\rho_\ell$ is the state after the $\ell$-th unitary layer and the following noise layer. The final state is
\begin{equation}
    \rho_L=\left(\Phi_L\circ\cdots\circ\Phi_1\right)(\rho_0).
\end{equation}
At this stage the unitary layers are otherwise unrestricted; locality, randomness or brickwall structure will be specified only when it is used.

Depolarizing noise simultaneously dissipates OEE and reduces purity. The two effects enter approximation differently: absolute error benefits directly from the purity decay, whereas relative error asks only whether the normalized operator structure has simplified. The two accuracy criteria therefore reach efficient simulability at asymptotically different depths.
\begin{theorem}[Depolarization-induced separation of simulation thresholds]\label{thm:main_depolarizing_thresholds}
For the above depth-$L$ depolarizing circuit with fixed strength $\lambda>0$, fix an arbitrary bipartition $A\mid B$ of the qubits. The OEE of $\rho_\ell$ exhibits two separated crossover depths,
\begin{equation}
    L_{\mathrm{abs}} = \order{1}
    \qquad
    L_{\mathrm{rel}} = \Theta(\log n),
\end{equation}
such that $S_{OE}(\rho_\ell)=\order{\log n}$ when $\ell\gtrsim L_{\mathrm{abs}}$, whereas $\widetilde{S}_{OE}(\rho_\ell)=\order{\log n}$ when $\ell\gtrsim L_{\mathrm{rel}}$. Moreover, the logarithmic scale of $L_{\mathrm{rel}}$ is optimal for achieving logarithmic $\widetilde{S}_{OE}(\rho_\ell)$.
\end{theorem}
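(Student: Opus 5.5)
The plan is to work in the Pauli basis and track the Hilbert–Schmidt weight of each Pauli string, using the fact that conjugation by a unitary layer preserves $\|\cdot\|_2$. Expand
\[
\rho_\ell=2^{-n}\sum_{P}c_P^{(\ell)}P .
\]
The depolarizing layer multiplies each coefficient $c_P$ by $(1-\lambda)^{|P|}$, where $|P|$ is the weight of $P$. Group Pauli strings by the pair of local weights $(w_A,w_B)$ on the two sides of the cut. A string with $w_A=0$ or $w_B=0$ is a product across $A\mid B$, so it contributes only a small, controlled amount of operator Schmidt rank once grouped.

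\textbf{Step 1: bound the cross-cut weight.} Decompose $\rho_\ell=\rho^{\mathrm{prod}}+\rho^{\mathrm{corr}}$, where $\rho^{\mathrm{prod}}$ collects the strings with $w_A=0$ or $w_B=0$. Then
\[
\rho^{\mathrm{prod}}=2^{-n}\Bigl(\mathbb I+X_A\otimes\mathbb I+\mathbb I\otimes Y_B\Bigr)
\]
has operator Schmidt rank at most $3$. The correlated part consists of strings of weight $\ge 2$. The strategy is to bound the Hilbert–Schmidt norm of weight-$w$ components after one noise layer.

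\textbf{Step 2: the absolute threshold $L_{\mathrm{abs}}=\order{1}$.} Every string in $\rho^{\mathrm{corr}}$ has weight $\ge 2$, and the noise has just acted. Hence
\[
\|\rho^{\mathrm{corr}}\|_2^2\le(1-\lambda)^{4}\,\|\mathcal U_\ell(\rho_{\ell-1})\|_2^2\le(1-\lambda)^4 .
\]
This is not small enough by itself, so more care is needed. The refinement is to iterate the purity contraction, $\tr{\rho_\ell^2}\le 2^{-n}+(1-\lambda)^{2}\bigl(\tr{\rho_{\ell-1}^2}-2^{-n}\bigr)$, which holds because every non-identity string has weight $\ge1$. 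This shows that the purity decays toward $2^{-n}$, but only geometrically in $\ell$.

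To get $S_{OE}=\order{\log n}$, apply concavity. For a squared spectrum with total mass $s=\tr{\rho^2}$ spread over at most $4^n$ terms,
\[
S_{OE}\le s\log_2(4^n/s)=s(2n-\log_2 s).
\]
So $S_{OE}=\order{\log n}$ as soon as $s\lesssim (\log n)/n$. The per-qubit contraction is stronger than this global bound. For a product-like weight distribution, one has $s\le\prod_i\bigl(\tfrac12+\tfrac12(1-\lambda)^{2\ell}\cdot\ldots\bigr)$, which is exponentially small in $n$ once $\ell=\order{1}$ makes the per-qubit factor strictly less than $1$. The general claim is $\tr{\rho_\ell^2}\le\bigl(\tfrac{1+(1-\lambda)^2}{2}\cdot\ldots\bigr)^n$. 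Whether this holds is the point I am least sure about: a pure input can stay pure under the unitaries, but each noisy layer moves every qubit's local Bloch weight. I would prove it via the hypercontractive/sub-multiplicative bound $\tr{\mathcal D_\lambda^{\otimes n}(\sigma)^2}\le\bigl(\tfrac{1+(1-\lambda)^2}{2}\bigr)^{n}\cdot 2^{\,n}\cdot 2^{-n}$ for pure $\sigma$, obtained from the single-qubit bound on a Schmidt-type chain argument. Granting this, after a single layer we have $s=e^{-\Omega(n)}$, so $S_{OE}\le s\cdot 2n=o(1)\subset\order{\log n}$.

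\textbf{Step 3: the relative threshold.} By \eqref{eq:prelim_oee_relation}, purity no longer helps, so the argument must truncate directly. Keep only the strings with $w_A,w_B\le k$. After $\ell$ layers, any component created by a cross-cut correlation at earlier times either has been damped by at least $(1-\lambda)^{2\ell}$ relative to the product part, or is supported on low-weight strings near the cut. The normalized spectrum then splits into a dominant block of rank $\mathrm{poly}(n)$ and a tail of relative weight $\epsilon_\ell\sim n(1-\lambda)^{2\ell}$, where the factor $n$ counts the possible positions of the single-site discrepancies. Using the splitting bound $\widetilde S\le h(\epsilon)+(1-\epsilon)\log r_1+\epsilon\log r_2$ with $\log r_2\le 2n$, the tail contributes $\epsilon\cdot 2n$. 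This term is $\order{\log n}$ once $\ell\ge C\log n/\log\frac1{1-\lambda}$, which gives $L_{\mathrm{rel}}=\order{\log n}$.

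\textbf{Step 4: optimality.} Construct a circuit in which $\widetilde S_{OE}$ stays linear until depth $c\log n$. Take $n/2$ Bell pairs across the cut, refreshed by the unitaries so that noise acts freshly on each pair (or simply apply identity layers). After $\ell$ layers each pair is a Werner state, and its normalized operator spectrum has entropy $\approx 3(1-\lambda)^{4\ell}\log(1/(1-\lambda)^{4\ell})$. Summing over $n/2$ pairs gives $\widetilde S_{OE}\approx n(1-\lambda)^{4\ell}\cdot\ell$, which is $\omega(\log n)$ unless $\ell\ge c\log n$. This matches the upper bound and proves $\Theta(\log n)$.

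The main obstacle is making Step 3 rigorous for arbitrary (nonlocal) unitaries: the tail weight estimate must not rely on locality, so it has to come purely from the Pauli-weight damping bookkeeping across the cut.
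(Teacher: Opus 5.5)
\paragraph{Step 3 has a genuine gap.} It does not prove the relative-error bound, and that bound is the heart of the theorem.

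Your assertion that ``purity no longer helps'' is mistaken. The argument you put in its place (components ``near the cut'', single-site discrepancies, a dominant block of rank $\mathrm{poly}(n)$) has no meaning for the arbitrary, non-local gates the theorem allows. The tail estimate $\epsilon_\ell\sim n(1-\lambda)^{2\ell}$ is asserted, never derived.

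The missing idea is that purity \emph{does} control the normalized spectrum once it is close to its floor $2^{-n}$.
\begin{itemize}
\item Testing $\rho_\ell$ against $\mathbb{I}_A/\sqrt{D_A}\otimes\mathbb{I}_B/\sqrt{D_B}$ gives $\lambda_1^2\ge 2^{-n}$. So the normalized weight off the top Schmidt vector satisfies $\epsilon_\ell\le 1-\bigl(2^n\mathrm{tr}\,\rho_\ell^2\bigr)^{-1}$.
\item You then need a geometry-independent estimate $2^n\mathrm{tr}\,\rho_\ell^2\le 2^{y_\ell}$ with $y_\ell=O\bigl(n(1-\lambda)^{2\ell}\bigr)$.
\item The paper obtains this from King's hypercontractivity, $\|\mathcal D_\lambda^{\otimes n}(X)\|_{q,\tau}\le\|X\|_{p,\tau}$ with $p=1+(1-\lambda)^2(q-1)$. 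It is iterated backwards from $q=2$ through all layers, using that unitaries preserve the normalized Schatten norms.
\item The result is $\mathrm{tr}\,\rho_\ell^2\le 2^{-n(1-x_\ell)/(1+x_\ell)}$ with $x_\ell=(1-\lambda)^{2\ell}$, i.e. $y_\ell=2nx_\ell/(1+x_\ell)$.
\end{itemize}
Feed $\epsilon_\ell\le 1-2^{-y_\ell}\le y_\ell$ into your own splitting bound with $r_1=1$. This gives $\widetilde S_{OE}(\rho_\ell)\le h(\epsilon_\ell)+2n_{\min}\epsilon_\ell$. For $\ell\ge(2\ln n-\ln\ln n)/(-2\ln(1-\lambda))$ one has $y_\ell\le 2\ln n/n$, hence $\widetilde S_{OE}(\rho_\ell)\le 1+2\ln n$. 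The scaling of $y_\ell$ is exactly what you guessed for $\epsilon_\ell$, but nothing in your proposal establishes it.

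The tools you do have cannot replace hypercontractivity here. Combine a one-layer bound $2^n\mathrm{tr}\,\rho_1^2\le(1+(1-\lambda)^2)^n$ with your contraction $\mathrm{tr}\,\rho_\ell^2-2^{-n}\le(1-\lambda)^2(\mathrm{tr}\,\rho_{\ell-1}^2-2^{-n})$. This only gives $2^n\mathrm{tr}\,\rho_\ell^2-1\le(1-\lambda)^{2(\ell-1)}(1+(1-\lambda)^2)^n$, which becomes small only after $\Omega(n)$ layers.

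\paragraph{Step 2.} The bound you are unsure of is true for every input state and every single layer, and has a short proof:
\begin{itemize}
\item Write $\mathrm{tr}[\mathcal D_\lambda^{\otimes n}(\sigma)^2]=\mathrm{tr}[\sigma\,\mathcal D_{\lambda'}^{\otimes n}(\sigma)]$ with $1-\lambda'=(1-\lambda)^2$.
\item Expand $\mathcal D_{\lambda'}=(1-\lambda')\,\mathrm{id}+\lambda'\,\mathrm{tr}(\cdot)\,\mathbb{I}/2$ over the set $S$ of replaced qubits. This gives $\sum_S(1-\lambda')^{n-|S|}(\lambda'/2)^{|S|}\,\mathrm{tr}\,\sigma_{S^c}^2\le\bigl(\tfrac{1+(1-\lambda)^2}{2}\bigr)^n$.
\end{itemize}
Since $\rho_\ell=\mathcal D_\lambda^{\otimes n}(\mathcal U_\ell(\rho_{\ell-1}))$, this yields $S_{OE}(\rho_\ell)=O(n)\,e^{-\Omega(n)}$ for all $\ell\ge 1$. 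That is an elementary substitute for the paper's hypercontractive estimate in the absolute part; the ``Schmidt-type chain argument'' you hint at is not needed.

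\paragraph{Step 4.} This is the paper's own sharpness construction (Bell pairs across a balanced cut, identity layers) and is fine. It only yields $\Theta(\log n)$, however, once the Step 3 upper bound is actually proved.
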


The proof has two main steps. First, hypercontractivity~\cite{king2014hypercontractivity} together with unitary invariance of Schatten norms yields a geometry-independent description of purity decay. Second, purity-controlled maximum-OEE bounds convert this decay into logarithmic OEE bounds at the crossover depths.

These thresholds immediately yield a full trajectory simulability in 1D: a depth-$L$ local circuit has operator Schmidt rank $\order{\exp{L}}$ across any cut. Combined with \cref{thm:main_depolarizing_thresholds}, this ensures that the relevant OEEs remain at most $\order{\log n}$ throughout the trajectory, as illustrated in \cref{fig:main_thresholds}.  The remaining challenge is to control truncation error under repeated compression, especially at relative accuracy, where error accumulation is not simply additive.

\begin{prop}[Whole-trajectory noisy circuit simulation at fixed error]
\label{prop:main_whole_trajectory_rel_depolarizing}
Consider an $n$-qubit noisy trajectory on a one-dimensional local circuit,
\begin{equation}
    \rho_{\ell}=\mathcal D_{\lambda}^{\otimes n}\circ \mathcal U_{\ell}(\rho_{\ell-1}),
    \qquad
    \rho_0=\ketbra{0}{0}^{\otimes n},
\end{equation}
where $\lambda\in(0,1)$ is a fixed depolarizing strength, $\mathcal U_{\ell}(\cdot)=U_{\ell}(\cdot)U_{\ell}^{\dagger}$, and each $U_{\ell}$ is a product of disjoint gates from the $l$-th layer. 
Fixed an arbitrary cut of the chain and a relative error tolerance $\varepsilon\in(0,1)$.
Then there exists a sequential approximation $\{\widehat\rho_{\ell}\}$ with each $\widehat\rho_{\ell}$ having $\mathrm{poly}(n)$ bond dimension across the prescribed cut, such that, for every $\ell$,
\begin{equation}
    \norm{\rho_{\ell}-\widehat\rho_{\ell}}_2^2\leq\varepsilon\norm{\rho_\ell}_2^2,
\end{equation}

In particular, since $\norm{\rho_\ell}_2^2\leq1$, the same construction also gives fixed absolute error $\norm{\rho_{\ell}-\widehat\rho_{\ell}}_2^2\leq\varepsilon$.
\end{prop}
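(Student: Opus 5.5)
We split the trajectory at a depth $L_\ast=\Theta(\log n)$ chosen as in \cref{thm:main_depolarizing_thresholds}, so that $\widetilde S_{OE}(\rho_\ell)=\order{\log n}$ for every $\ell\ge L_\ast$. For the early segment $\ell\le L_\ast$ we use no truncation at all and set $\widehat\rho_\ell=\rho_\ell$. The initial state $\rho_0$ has operator Schmidt rank one across the cut. In a 1D local circuit each layer places $\order{1}$ gates across the cut, and each such gate raises the operator Schmidt rank by at most a constant factor $d_0$. Single-qubit depolarizing channels act locally and cannot raise the rank. Hence $\mathrm{rank}(\rho_\ell)\le d_0^{\order{\ell}}$, and for $\ell\le L_\ast=\order{\log n}$ this is $\mathrm{poly}(n)$. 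The early segment is therefore exact with polynomial bond dimension.

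For the late segment we do not truncate the previously truncated state repeatedly. Doing so would make the relative errors compound, and they would not simply add. Instead we define $\widehat\rho_\ell$ as the optimal truncation of the exact state $\rho_\ell$ to bond dimension $\chi$. This is a legitimate sequential approximation for the existence claim. Moreover, for a fixed cut the exact $\rho_\ell$ is never needed globally: since $\rho_\ell=\Phi_\ell(\rho_{\ell-1})$, the quantity we care about is only the operator Schmidt spectrum at the cut.

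With this choice, \eqref{eq:prelim_bond_dimension_scaling_rel} applied with tolerance $\delta=\varepsilon$ gives
\begin{equation}
\norm{\rho_\ell-\widehat\rho_\ell}_2^2\le\varepsilon\norm{\rho_\ell}_2^2
\quad\text{whenever}\quad
\chi\ge\left\lceil(1-\varepsilon)\,2^{\widetilde S_{OE}(\rho_\ell)/\varepsilon}\right\rceil .
\end{equation}
Since $\widetilde S_{OE}(\rho_\ell)\le C\log n$ uniformly for all $\ell\ge L_\ast$, this requires only $\chi=n^{\order{C/\varepsilon}}=\mathrm{poly}(n)$. The bound is uniform in $\ell$, so it holds for arbitrarily large depth.

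If a genuinely sequential scheme is required, with $\widehat\rho_\ell=\mathrm{Tr}_\chi\,\Phi_\ell(\widehat\rho_{\ell-1})$, the error must be controlled differently. Each $\Phi_\ell$ is a unital channel and so is a Hilbert--Schmidt contraction. On traceless operators it contracts by a factor of at least $(1-\lambda)$ per layer, because every nonidentity Pauli string has weight at least one. The error difference is traceless provided the truncations preserve the trace (we add back the $\mathbb I/2^n$ component). The accumulated error is then at most
\begin{equation}
\sum_k (1-\lambda)^{\ell-k}\,\eta_k\le \frac{\max_k \eta_k}{\lambda},
\end{equation}
where $\eta_k$ is the error introduced by the truncation at step $k$.

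This is where the main obstacle arises. The relative requirement compares this sum against $\norm{\rho_\ell}_2^2$, which can be as small as $2^{-n}$. The fix I would try is to show that the purity ratio $\norm{\rho_\ell}_2/\norm{\rho_{\ell-1}}_2$ is bounded below by a constant. Per-step relative truncation errors of size $\varepsilon'=\Theta(\lambda\varepsilon)$ would then sum geometrically into a total relative error of $\varepsilon$, at the cost of only $\chi=n^{\order{1/(\lambda\varepsilon)}}$. Failing that, I would fall back on the truncate-exact-state construction described above, which already suffices for existence. The absolute-error corollary is immediate from $\norm{\rho_\ell}_2^2\le1$.
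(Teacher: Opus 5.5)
Your exact early segment up to $L_\ast=\Theta(\log n)$ is the same as the paper's. Your first construction for the late segment, taking $\widehat\rho_\ell$ to be the optimal rank-$\chi$ truncation of the exact $\rho_\ell$, does satisfy the displayed inequality with $\chi=\mathrm{poly}(n)$. But it does so only by reducing the proposition to \cref{thm:main_depolarizing_thresholds} plus \cref{thm:D_rel_error}, which the main text treats as settled before stating the proposition.

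The content of the proposition is a \emph{sequential} trajectory. Each $\widehat\rho_\ell$ is obtained by evolving the previous approximation through noisy layers and recompressing, so the exact state is never formed; its operator Schmidt rank across the cut is generically exponential in $\min(\ell,n_{\min})$. Your remark that only the Schmidt spectrum of $\rho_\ell$ at the cut is needed does not help, since computing that spectrum is what requires the exponentially large object. The substance therefore lies in your second scheme, which you leave open at two points.

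First, the fix you propose, a per-step lower bound on $\norm{\rho_\ell}_2/\norm{\rho_{\ell-1}}_2$, is false in general. A single noisy layer acting on $\ketbra{0}{0}^{\otimes n}$ can multiply the purity by $((1+q^2)/2)^n$, with $q=1-\lambda$. Moreover, a constant per-step ratio $r$ would only make your geometric sum converge if $r>q$. What holds, and what the paper uses, is the \emph{uniform} pinning $2^{-n}\le\norm{\rho_\ell}_2^2\le A\,2^{-n}$ for a constant $A$ and all $\ell\ge L_\ast$. The lower bound is true for every $n$-qubit density matrix. The upper bound is \cref{lemma:hypercontractive_purity_bound} with $n\beta_\ell\le 1$, where $\beta_\ell=2q^{2\ell}/(1+q^{2\ell})$. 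This pinning converts the relative criterion into an absolute one at the fixed scale $2^{-n/2}$. It even gives $\norm{\rho_\ell-\mathbb{I}/2^n}_2^2=\norm{\rho_\ell}_2^2-2^{-n}\le(2^{n\beta_\ell}-1)\norm{\rho_\ell}_2^2=\order{\log n/n}\norm{\rho_\ell}_2^2$. So for $n$ large in terms of $\varepsilon$, the fixed point $\mathbb{I}/2^n$ is itself an admissible $\widehat\rho_\ell$ for $\ell\ge L_\ast$.

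Second, your bound on each injected error is unjustified. You invoke an $\order{\log n}$ normalized OEE to truncate at relative error $\varepsilon'$, but the operator you truncate is $\Phi_\ell(\widehat\rho_{\ell-1})$, not $\rho_\ell$. Its distance from $\mathbb{I}/2^n$ is controlled only by the accumulated error, so a purity-controlled bound yields merely $\order{n}$. The continuity bound \cref{cor:l2_distance_n_S_OE} does not help either, because its rank parameter $\chi_0$ is exponential.

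The paper instead uses the Eckart--Young comparison of \cref{lemma:operator_schmidt_decomposition_eckart_young}. Let $\Psi_k$ be the $k$-th block of $b$ noisy layers and $\tau=\Psi_k(\widehat\rho_{(k-1)b})$. If $\sigma$ is a rank-$\chi$ approximant of the \emph{exact} $\rho_{kb}$ with $\norm{\rho_{kb}-\sigma}_2\le\alpha\norm{\rho_{kb}}_2$, then the best rank-$\chi$ truncation $\widehat\tau$ of $\tau$ satisfies $\norm{\tau-\widehat\tau}_2\le\norm{\tau-\rho_{kb}}_2+\alpha\norm{\rho_{kb}}_2$. This couples each injected error to the accumulated one. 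Your bound $\sum_k(1-\lambda)^{\ell-k}\eta_k\le\max_k\eta_k/\lambda$ treats the $\eta_k$ as exogenous, so it does not close as written.

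With the trace correction, the block-endpoint error obeys $e\mapsto 3q^b e+2\alpha\sqrt{A}\,2^{-n/2}$, which is stable only if $3q^b<1$. In this analysis, truncating after every layer would require $q<1/3$. Hence the paper compresses only every $b=\order{1}$ layers with $3q^b\le\tfrac12$ and propagates the intermediate layers exactly. It maintains $e\le\mu\,2^{-n/2}$ by choosing $\alpha\le\mu/(4\sqrt{A})$, then takes $\mu\le\sqrt{\varepsilon}$ and uses $\norm{\rho_\ell}_2^2\ge 2^{-n}$ to return to the relative criterion.
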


The proposition ensures that the accumulated error is bounded throughout the trajectory by an MPO with $\mathrm{poly}(n)$ bond dimension.
The proof splits the trajectory into an exact initial segment of depth $\order{\log n}$ followed by a long-time compressed evolution. the latter repeatedly evolves the current approximate density matrix through the next $\order{1}$ noisy layers and truncates the result to polynomial bond dimension, while maintaining relative error at most $\varepsilon$.
The subtlety is that each compression step injects a new error; after trace correction, the residual error is trace-zero and is contracted by the depolarizing noise, keeping the accumulated error bounded.

\section{Dissipation driven by general noise}
Unlike depolarization, generic local noise is neither isotropic nor necessarily purity-reducing, so purity arguments alone are insufficient. Adapting the contraction and auxiliary-orbit ideas of Ref.~\cite{mele2026noise}, we prove that operator entanglement remains bounded in 1D brickwall circuits, both on average and in the worst case.

Each time step applies $\Phi_\ell=\mathcal{N}^{\otimes n}\circ\mathcal{U}_\ell$, with $\mathcal{U}_\ell$ a layer of nearest-neighbor two-qubit gates and the same single-qubit channel $\mathcal N$ applied to every site. Starting from a product $\rho_0$, the trajectory is $\rho_L=\Phi_L\circ\cdots\circ\Phi_1(\rho_0)$.

After pre- and post-unitary rotations $\mathcal{N}(\cdot) = U \mathcal{N}'(V^\dagger (\cdot) V) U^\dagger$, any single-qubit channel admits the canonical form~\cite{king2001minimal,ruskai2002analysis} of the Pauli transfer matrix $\mathcal{S}(\mathcal{N}')_{ij}=\tr{\mathcal{N}'(\sigma_i)\,\sigma_j},$ with $\sigma_i\in\{\mathbb{I},X,Y,Z\}$,
\begin{equation}
    \mathcal{S}(\mathcal{N}')=\begin{pmatrix}
        1 & 0 & 0 & 0 \\
        t_X & D_X & 0 & 0 \\
        t_Y & 0 & D_Y & 0 \\
        t_Z & 0 & 0 & D_Z
    \end{pmatrix},
\end{equation}
The noise strength is quantified by the contraction coefficient~\cite{mele2026noise}
\begin{equation}
    c(\mathcal{N})=\frac{1}{3}\left(t_X^2+t_Y^2+t_Z^2+D_X^2+D_Y^2+D_Z^2\right),
\end{equation}
$D_i$ are damping coefficients, $t_i$ are displacements, and all $D_i$ share the same sign. It satisfies $c(\mathcal{N})\leq 1$, with equality iff $\mathcal{N}$ is unitary; smaller $c$ means stronger suppression of Pauli coefficients per layer.

\textit{Average case}.---
For the average-case dynamics, we consider a one-dimensional noisy brickwall circuit — a concrete instance of a local circuit — built from staggered layers of nearest-neighbour two-qubit gates. The randomness is local: every two-qubit gate is drawn independently from a unitary $2$-design, after which the same single-qubit channel $\mathcal N$ is applied on all sites. 
This geometry gives the unitary dynamics a ballistic light cone~\cite{nachtergaele2010lieb,poulin2010lieb,barthel2012quasilocality}, while the noise continually contracts the local operator components carried along it. For sufficiently strong contraction, these two effects pin the OEE to a system-size independent constant.

\begin{theorem}[Average-case OEE plateau in brickwall circuits]\label{thm:main_average_case_plateau}
Consider a 1D noisy brickwall circuit with i.i.d.~unitary $2$-design two-qubit gates and an fixed arbitrary cut of the chain.
If $c(\mathcal{N})<\frac{1}{3}$, then with probability at least $1-Le^{-\Omega(n)}$ over the choice of gates,
\begin{equation}
    \sup_{\ell \in \{1,\dots,L\}} S_{OE}(\rho_\ell) = \order{1}.
\end{equation}
\end{theorem}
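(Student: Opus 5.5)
We expand $\rho_\ell$ in the Pauli basis, $\rho_\ell = 2^{-n}\sum_{P} a_\ell(P)\,P$, and split it according to the support of each Pauli string relative to the cut. Normalized Pauli strings factorize across $A\mid B$ and are Hilbert--Schmidt orthonormal. Hence $S_{OE}$ is controlled by how the Pauli weight $\norm{\rho_\ell}_2^2 = 2^{-n}\sum_P a_\ell(P)^2$ spreads over pairs $(P_A,P_B)$. The bound we aim for has the form
\begin{equation}
    S_{OE}(\rho_\ell)\le -\sum_{k}w_k\log_2 w_k+\sum_k w_k\log_2 r_k ,
\end{equation}
where the operator is decomposed into blocks indexed by the distance $k$ of the nontrivial Pauli support from the cut. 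Here $w_k$ is the Hilbert--Schmidt weight of block $k$ and $r_k\le 4^{O(k)}$ bounds its operator Schmidt rank across the cut. Strictly, such a block decomposition only bounds entropy through a concavity/grouping argument, since the blocks are not operator-Schmidt orthogonal in general. I would instead truncate: keep only Pauli strings whose support restricted to a window of radius $k$ around the cut is arbitrary and which are identity or product elsewhere. The discarded tail then feeds an entropy bound via a Fannes-type continuity estimate for the squared Schmidt spectrum. If the weights decay geometrically, $w_k\le C q^k$ with $q$ chosen so that $q\,4^{O(1)}<1$ after accounting for rank growth, both sums are $O(1)$ uniformly in $n$ and $\ell$.

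To obtain geometric decay I would adapt the Pauli-path picture of Ref.~\cite{mele2026noise}. Under a 2-design two-qubit gate, the second moment of the Pauli coefficients evolves as a Markov chain on supports. A non-identity local Pauli is mapped uniformly to the 15 non-identity two-qubit Paulis, and the identity is fixed. The noise then multiplies the squared coefficient of each non-identity site by at most $D_i^2$, and the displacements $t_i$ feed weight into lower-weight strings. Averaging the noise action over the uniform mixing yields an effective per-site factor $\frac13\sum_i (D_i^2+t_i^2)=c$. Thus $\mathbb E[\sum_{P:\,\mathrm{supp}\ni x}a_\ell(P)^2]$ obeys a linear recursion in which each non-identity site costs a factor $c$. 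A single gate can enlarge the support by one site at combinatorial multiplicity at most $3$ (the $9/15$ versus $6/15$ branching of supports). This is the source of the condition $c<1/3$: the expected weight of a string that has to reach distance $k$ from the cut, or carry $k$ non-identity sites there, decays like $(3c)^k$. The displacement terms generate the fixed-point part (an effective product state), which is $O(1)$-entangled, so the initial product state and the fixed point contribute no entanglement of their own. I would write this as a transfer-matrix bound $\mathbb E\,w_k(\ell)\le C(3c)^k$ uniformly in $\ell$, proved by induction over layers.

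To go from expectation to the probabilistic statement, I would apply Markov's inequality to the tail sum beyond $k\sim \alpha n$ and to the normalized bulk. Since the tail expectation is $(3c)^{\Omega(n)}=e^{-\Omega(n)}$, for each fixed $\ell$ the event that the geometric profile fails to hold with a slightly worse constant has probability $e^{-\Omega(n)}$. A union bound over $\ell=1,\dots,L$ then gives probability $1-Le^{-\Omega(n)}$. The main obstacle is the middle range of $k$: Markov's inequality applied scale by scale gives only constant-probability control, not $e^{-\Omega(n)}$. I would handle this by bounding a single weighted functional $F_\ell=\sum_k \mu^k w_k(\ell)$ with $1<\mu<1/(3c)$. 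This gives $\mathbb E F_\ell=O(1)$, and $F_\ell$ controls the entropy bound above deterministically. To reach the stated exponential probability I would combine this with a concentration bound for $F_\ell$, which is a Lipschitz function of independent gates (McDiarmid-type, using that each gate changes $F_\ell$ by a bounded amount). I expect this concentration step to be the delicate part of the argument.
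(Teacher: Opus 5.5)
There is a genuine gap at the step you flag as delicate. It arises because you have misplaced where the exponential smallness comes from.

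\textbf{The concentration step.} The McDiarmid argument cannot work as stated. A bounded-difference constant for one gate means controlling that gate's influence on $F_\ell$ uniformly over all other gate choices. That is a worst-case statement, and it is not available under $c<1/3$ alone; the paper gets worst-case control only under $c<1/48$, via $W_1$ contraction. A priori the differences are not even $O(1)$, since $F_\ell$ weights $w_k\le 1$ by $\mu^k$ with $k$ up to order $n$. Even granting $O(1)$ differences over $\Theta(nL)$ independent gates, the tail $\exp(-2s^2/\sum_i c_i^2)$ is $e^{-\Omega(n)}$ only for deviations $s=\Omega(n\sqrt{L})$, not $s=O(1)$.

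\textbf{Other unsupported steps.} The bound $\mathbb{E}\,w_k(\ell)\le C(3c)^k$ is asserted rather than derived. The displacements are also not handled: they feed weight from identity sites into non-identity ones, so the Pauli coefficients of $\rho_\ell$ itself do not obey a $c^{\abs{P}}$ bound. The estimate of Ref.~\cite{mele2026noise} is for differences of two trajectories. Finally, truncating to strings that are ``identity or product elsewhere'' does not bound the operator Schmidt rank unless that product is a fixed reference operator.

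\textbf{The missing idea.} Compare $\rho_\ell$ against a low-rank reference trajectory rather than a spatial profile: the auxiliary orbit $\sigma_\ell^{(m)}=\Phi_\ell\circ\cdots\circ\Phi_{\ell-m+1}(\ketbra{0}{0}^{\otimes n})$. The difference $\rho_\ell-\sigma_\ell^{(m)}$ is the last $m$ layers applied to a traceless operator. So the Mele bound $\mathbb{E}\,\tr{P\Phi(\rho-\sigma)}^2\le 4c^{\abs{P}+m-1}$ applies. Summing over all $4^n$ Pauli strings with the $2^{-n}$ normalization gives
\[
\mathbb{E}\norm{\rho_\ell-\sigma_\ell^{(m)}}_2^2\le 4c^{m-1}\Bigl(\tfrac{1+3c}{2}\Bigr)^n .
\]
The threshold $c<1/3$ is exactly $(1+3c)/2<1$. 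It comes from a global count (three non-identity Paulis per site against the factor $2^{-n}$), not from branching away from the cut.

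\textbf{How the proof then closes.} Because the bound is already exponentially small in $n$, you can take $m=1$.
\begin{itemize}
\item One Markov inequality at threshold $1/n$ fails with probability at most $4n\bigl(\tfrac{1+3c}{2}\bigr)^n=e^{-\Omega(n)}$.
\item $\sigma_\ell^{(1)}$ has Schmidt rank at most $4$ across the cut.
\item Eckart--Young plus the tail-controlled entropy bound then gives $S_{OE}(\rho_\ell)\le 2(1-\tfrac1n)+1+h_2(\tfrac1n)\le 4$.
\item A union bound over $\ell$ finishes.
\end{itemize}
Your ``middle range of $k$'' obstacle is an artifact of the spatial decomposition. Measured against such a reference, every block weight carries the $2^{-n}$ factor and is exponentially small on average, so no concentration inequality is needed.
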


Theorem~\ref{thm:main_average_case_plateau} shows that, for $L = \order{\mathrm{poly}(n)}$, there is an area law of OEE with high probability across the full parameter space, induced by strong contraction of noise.

This route to low OEE differs from the depolarizing case by not requiring relaxation to the maximally mixed state. Here, random gates spread operator components while noise contracts them; when $c<1/3$, only a constant-depth recent circuit segment contributes to the Schmidt spectrum, and the argument extends to non-unital channels.

\textit{Worst case}.---
The $\order{1}$ plateau relies on 2-design averaging to explore all operator directions uniformly. Without averaging, an adversarial gate sequence can align with the least-contracting direction of $\mathcal N$ across consecutive layers, resonantly amplifying operator components that would be harmless on average. Preventing this for every gate choice requires stronger contraction, $c(\mathcal{N})<\frac{1}{48}$. When the noise also has a unique fixed point, this strong contraction imposes a finite memory independent of the gates, yielding a logarithmic bound.

\begin{theorem}[Worst-case logarithmic OEE under strong contraction]\label{thm:main_worst_case_log}
Consider a 1D brickwall circuit with arbitrary two-qubit gates (no longer 2-design) and  an fixed arbitrary cut of the chain. Assume that the single-qubit channel $\mathcal{N}$ has a unique fixed point (FP) and satisfies $c(\mathcal{N})<\frac{1}{48}$. Then
\begin{equation}
    S_{OE}(\rho_\ell)=\order{\log n},
\end{equation}
for every choice of gates and every $\ell$.
\end{theorem}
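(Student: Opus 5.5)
We plan to split $\rho_\ell$ into a component that depends only on the last $k=\order{\log n}$ layers and a remainder that is exponentially small in Hilbert--Schmidt norm. The OEE of the first component is then bounded by its operator Schmidt rank, and the remainder is controlled by a continuity estimate for $S_{OE}$.

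\emph{Step 1: reference state.} Let $\sigma$ be the unique fixed point of $\mathcal N$. For the cut at hand, define the reference operator $\tau_\ell^{(k)}$ by replacing $\rho_{\ell-k}$ with $\sigma^{\otimes n}$ and evolving it exactly through the last $k$ noisy layers:
\begin{equation}
\tau_\ell^{(k)}=\Phi_\ell\circ\cdots\circ\Phi_{\ell-k+1}(\sigma^{\otimes n}).
\end{equation}
Since $\sigma^{\otimes n}$ is a product state and only a brickwall light cone of depth $k$ crosses the cut, $\tau_\ell^{(k)}$ has operator Schmidt rank at most $16^{k}$. Hence $S_{OE}(\tau_\ell^{(k)})\le 4k$, which is $\order{\log n}$ when $k=\order{\log n}$.

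\emph{Step 2: worst-case contraction (main obstacle).} We need the bound
\begin{equation}
\norm{\rho_\ell-\tau_\ell^{(k)}}_2\le \mathrm{poly}(n)\,\gamma^{k},
\end{equation}
for some $\gamma<1$, uniformly over all gate choices. Following the auxiliary-orbit idea of Ref.~\cite{mele2026noise}:
\begin{itemize}
\item Expand $\Delta=\rho_{\ell-k}-\sigma^{\otimes n}$ in the Pauli basis centred at $\sigma$, i.e.\ in tensor products of $\sigma$ and the traceless directions of the canonical form. Every term of $\Delta$ contains at least one traceless factor.
\item Uniqueness of the fixed point makes the affine shift $t$ harmless: fixed-point-centred Pauli strings with a traceless factor stay in the traceless sector.
\item Each noisy layer then multiplies the weight on a non-identity site by at most an effective factor built from $D_i$ and $t_i$. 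An arbitrary two-qubit gate can redistribute weight onto at most two sites adversarially, and the gate can align with the worst direction of $\mathcal N$.
\item Bounding the operator norm of the per-gate two-site transfer map in terms of $c$ should give roughly $\gamma^2\lesssim 16\cdot 3c$ per brick. This is where the constant $48$ should come from, and $c<1/48$ gives $\gamma<1$.
\end{itemize}
The difficulty is that the counting over supports may grow combinatorially, producing a factor like $2^{\text{support}}$. I would try to absorb it using the prefactor $\norm{\Delta}_2\le \sqrt{2}$ and the Hilbert--Schmidt norm directly, instead of summing over Pauli strings.

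\emph{Step 3: continuity.} Use a Fannes-type inequality for the vectorized, unnormalized spectrum. The operator Schmidt coefficients are singular values of the reshaped matrix, so Mirsky's inequality bounds the $\ell_2$ perturbation of the spectrum by $\norm{\rho_\ell-\tau_\ell^{(k)}}_2$. For squared weights summing to at most $1$, on a space of dimension at most $4^n$, the entropy difference is $\order{\delta\, n+h(\delta)}$, where $\delta$ is the perturbation in squared weights. Choosing $k=C\log n$ with $C$ large makes $\delta\le 1/n$, so $S_{OE}(\rho_\ell)\le 4k+\order{1}=\order{\log n}$.

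For $\ell<k$ the trajectory starts from a product state and the circuit has depth less than $k$, so $S_{OE}(\rho_\ell)\le 4\ell=\order{\log n}$ directly. Combining the two ranges gives the bound for every $\ell$ and every choice of gates.
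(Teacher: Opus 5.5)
Your Steps 1 and 3 and the $\ell<k$ case are fine and match the paper. The rank bound $16^k$ is valid: there is at most one crossing gate per layer, and each multiplies the operator Schmidt rank by at most $16$. Resetting to $\omega^{\otimes n}$ (your $\sigma^{\otimes n}$, with $\omega$ the fixed point) instead of $|0\rangle\langle 0|^{\otimes n}$ is immaterial. Fannes--Audenaert for subnormalized spectra plus Mirsky's inequality is exactly the paper's continuity estimate, and requiring the perturbation to be at most $1/n$ is the right choice.

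The gap is Step~2, which carries the whole content of the theorem. Your mechanism closes only for unital $\mathcal N$. Then gates are Hilbert--Schmidt isometries preserving tracelessness, and $\mathcal N^{\otimes n}$ shrinks every non-identity Pauli string, so $\norm{\rho_\ell-\tau_\ell^{(k)}}_2\le\sqrt2\,(\max_i|D_i|)^k$.

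For $\vec t\neq0$, which is the case the theorem is about, one has $\mathcal N(I)=I+\vec t\cdot\vec\sigma$ in canonical form. Take $W=Z\otimes I^{\otimes(n-1)}$, which is proportional to the difference of the states $2^{-n}(I+W)$ and $2^{-n}I$. Then $\norm{\mathcal N^{\otimes n}(W)}_2/\norm{W}_2=|D_Z|\,(1+\norm{\vec t}_2^2)^{(n-1)/2}$, which is exponentially large in $n$ as soon as $\vec t\neq0$ and $D_Z\neq0$. This is compatible with $c<1/48$, e.g.\ for $\mathcal N(A)=pA+(1-p)\,\omega\,\mathrm{tr}(A)$ with $p$ and the Bloch vector of $\omega$ small but nonzero. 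So there is no $n$-independent per-layer Hilbert--Schmidt contraction with which to propagate $\norm{\Delta}_2\le\sqrt2$. The trace norm is no better, since the trace-distance contraction coefficient of $\mathcal N^{\otimes n}$ can approach $1$ as $n$ grows.

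Re-centring at $\omega$ diagonalizes the noise but breaks the gates instead. The basis $\{\omega,X,Y,Z\}$ is not Hilbert--Schmidt orthonormal, so gates are no longer isometries of the coefficient norm. Also $U(\omega\otimes\omega)U^\dagger\neq\omega\otimes\omega$, so every gate in every layer, however far from the current support, injects fresh traceless weight. There is no light cone in that expansion; this is the combinatorial growth you anticipate. The per-brick estimate $\gamma^2\lesssim16\cdot3c$ has no derivation behind it.

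The missing idea is to run the contraction in the quantum Wasserstein distance of order $1$ of De Palma et al. It has exactly the properties Step~2 needs:
\begin{itemize}
\item[(i)] it tensorizes: $\norm{\mathcal N^{\otimes n}}_{W_1}\le\eta:=\norm{\mathcal N-\mathcal E}_\diamond$ with $\mathcal E(A)=\omega\,\mathrm{tr}(A)$, independently of $n$;
\item[(ii)] a layer of disjoint two-qubit gates has light cones of size $2$, so it expands $W_1$ by at most a factor $3$, whatever the gates;
\item[(iii)] any two states are within $W_1$-distance $n$, and $\norm{\Delta}_2\le\norm{\Delta}_1\le2\norm{\Delta}_{W_1}$ for traceless Hermitian $\Delta$.
\end{itemize}
Together these give $\norm{\rho_\ell-\tau_\ell^{(k)}}_2\le2n(3\eta)^k$ uniformly over gate choices.

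The threshold then comes from a single-qubit estimate, $\eta\le\sqrt{3c}/(1-\sqrt{3c})$. To get it, write $\mathcal N-\mathcal E$ as a difference of two maps:
\begin{itemize}
\item the diagonal map $I\mapsto0$, $P\mapsto D_PP$ for $P\in\{X,Y,Z\}$, whose diamond norm is at most $\norm{\vec D}_2$ by its Choi eigenvalues;
\item the map $A\mapsto\frac12\mathrm{tr}(A)\,(D\vec v_0)\cdot\vec\sigma$, with $\vec v_0=(I-D)^{-1}\vec t$ the Bloch vector of $\omega$, whose diamond norm is at most $\norm{D(I-D)^{-1}\vec t}_2$.
\end{itemize}
Since $c<1/48$ means $\sqrt{3c}<1/4$, this gives $\eta<1/3$ and hence $3\eta<1$. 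So the constant $48$ is fixed by requiring $3\eta<1$, with $3$ the $W_1$ light-cone factor, through this channel bound. It does not come from a per-brick Hilbert--Schmidt transfer norm. Choosing $k=O(\log n)$ with $2n(3\eta)^k\le1/n$, your Steps 1 and 3 then complete the proof.
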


Strong contraction imposes a finite memory even without averaging over gates — the evolution stays close to an auxiliary orbit in which the distant past is reset to the fixed point (measured by a Wasserstein-1 distance), leaving only the most recent $\order{\log n}$ layers relevant.
One-dimensional locality then limits the operator Schmidt rank generated by this recent segment, while the continuity bound transfers the Wasserstein closeness back to $S_{OE}$. The price for removing the averaging over gates is precisely the loss from a constant plateau to a logarithmic worst-case scale.

\section{Implications for average PEPO boundary-bond dimension}
The higher dimensional analogue of MPO is the projected entangled-pair operator~(PEPO) \cite{Ciracetal2020}. Consider a PEPO on a graph and a bipartition $A\mid A^c$. Let $a(A)$ be the number of virtual PEPO bonds crossing the cut. 
Let $R$ denote the operator Schmidt rank across $A\mid A^c$, we define the average boundary-bond dimension as
\begin{equation}
    \overline\chi_{\partial}(A)=R^{1/a(A)} .
\end{equation}
This is the operator-space analogue of the area-law of the entanglement entropy in the projected entangled-pair states~(PEPS) with boundary bond dimension $\chi_{\partial}$, where the Schmidt-rank is at most $\chi_{\partial}^{a(A)}$ across the cut~\cite{Ciracetal2020,EntanglementAreaLaws2010}.
In the PEPO setting, $\overline\chi_{\partial}(A)$ should therefore be understood as a cutwise representability scale for the vectorized density operator.
At fixed truncation tolerance, the OEE-to-rank truncation bounds (\cref{eq:prelim_bond_dimension_scaling_abs,eq:prelim_bond_dimension_scaling_rel}) imply that such an approximate representation can be achieved with average boundary-bond dimension scale
\begin{equation}
    \log \overline\chi_{\partial}(A)=\order{\widetilde S_{OE}(A)/a(A)}
\end{equation}
for fixed relative accuracy, and with $S_{OE}$ in place of $\widetilde S_{OE}$ for fixed absolute accuracy.

For single-qubit depolarizing noises, the geometry-independent entropy estimates of \cref{thm:main_depolarizing_thresholds} apply to any PEPO cut and retain the same separation between absolute and relative crossover depths.
This gives a uniform-in-depth boundary-dimension bound once combined with locality before the crossover.
Indeed, a depth-$L$ local circuit can increase $\log\overline\chi_{\partial}(A)$ by at most $\order{L}$ per boundary bond; the pre-crossover windows needed are $L_{\mathrm{abs}}=\order{1}$ for absolute accuracy and $L_{\mathrm{rel}}=\order{\log n}$ for relative accuracy.
After the corresponding crossover, the entropy estimates give $\log\overline\chi_{\partial}(A)=\order{\log n/a(A)}\leq \order{\log n}$.
Hence, for every depth, every nontrivial cut $A\mid A^c$, and either fixed error criterion,
\begin{equation}
    \overline\chi_{\partial}^{\mathrm{abs/rel}}(A) = \mathrm{poly}(n).
\end{equation}

For general local noise, the worst-case OEE bound under the same strong-contraction assumption (unique FP and $c(\mathcal{N})<\frac{1}{48}$) gives $S_{OE}(\rho_\ell)=\order{a(A)\log n}$.
Therefore fixed absolute Hilbert-Schmidt accuracy requires
\begin{equation}
    \overline\chi_{\partial}^{\mathrm{abs}}(A)=\mathrm{poly}(n),
\end{equation}
for every depth and every nontrivial cut.

Unlike 1D, the existence of a PEPO with $\mathrm{poly}(n)$ average boundary-bond dimension does not directly imply an efficient simulation. Nevertheless, this result remains significant as it provides the potential of a compact representation for entanglement structure of many-body systems.

\section{Discussion}
Our results establish a rigorous connection between noise, circuits, operator entanglement, and classical simulability via TN. For depolarizing noise this gives geometry-independent thresholds: $\order{1}$ depth for absolute accuracy and $\order{\log n}$ depth for relative accuracy. Consequently, it allows efficient MPO simulation of the whole noisy evolution. This is a worst-case bound for fixed circuits and a fixed input state, in contrast to previous Pauli-path-based methods that typically require averaging over circuit or input ensembles~\cite{aharonov2023polynomial,schuster2025polynomial}. 

For general single-qubit noise in 1D it yields an $\order{1}$ OEE plateau under $2$-design gates ($c<1/3$) and an $\order{\log n}$ worst-case bound for arbitrary gates ($c<1/48$), with the contraction mechanism partially extending to PEPO estimates in higher dimensions. Both bounds follow from an auxiliary-orbit construction: retain only the last $m$ circuit layers and replace the distant past with a fixed point of the noise. The auxiliary state has OEE at most $2m$; when contraction is strong enough, the true state stays close to it and OEE continuity transfers control back. For $2$-design gates the $l_2$ contraction condition $c<1/3$ gives the $\order{1}$ plateau. For arbitrary gates the Wasserstein-$1$ distance is needed instead, at the price of a stricter threshold $c<1/48$ and $m=\order{\log n}$. The physical picture is clear: strong contraction prevents information from propagating far in depth, rendering the circuit effectively shallow. 

For depolarizing noise the proof uses hypercontractivity instead of an auxiliary orbit: King's inequality controls purity decay layer by layer, and unitary invariance of Schatten norms renders the bound geometry-independent~\cite{king2014hypercontractivity}. Iterating from the final layer backward yields the geometry-independent purity estimate $\tr\rho_L^2 \leq 2^{-n\tanh\mu}$ with $\mu=-L\log(1-\lambda)$. The maximum-entropy theorem then converts this into OEE bounds. The whole-trajectory MPO simulation reflects the same structure: an exact short-time segment of depth $\order{\log n}$ followed by repeated compress-and-evolve steps, with the depolarizing noise contracting the residual error injected at each truncation.

Several caveats are worth noting. The thresholds $c<1/3$ and $c<1/48$ may not be optimal. The $\order{\log n}$ scaling carries implicit prefactors that may be large for weak noise or demanding tolerances, potentially exceeding feasible bond dimensions even for moderate $n$. The general-noise analysis is restricted to single-qubit noise. Long-range interactions and coherent noise remain open. Finally, our depolarizing results assume i.i.d.~noise.

Future work could explore several directions. Identifying channel-level features—fixed points, anisotropy, contraction directions—that dissipate operator entanglement would clarify which noise structures simplify classical simulation. On the algorithmic side, the trajectory proof guarantees the existence of stable relative-error MPO compression, but a scheme that certifiably controls accumulated truncation errors across a full simulation is still missing. At the conceptual level, determining whether the thresholds $c=1/3$ and $c=1/48$ are fundamental would map out a sharper complexity phase diagram for noisy dynamics.  The OEE framework also needs to extend beyond the full density operator to few-observable tasks, for instance, shallow tomography where only expectation values of a handful of observables are needed, before it becomes directly relevant to near-term applications. Progress on these questions would move noisy TN simulability from a collection of model-specific bounds toward a more unified theory of how dissipation reshapes quantum complexity.

\acknowledgments
We thank Zi-wen Liu and Fuchuan Wei for helpful comments on the manuscript.
S.C. was supported by the National Science Foundation of China (Grant No. 12004205 and Grant No. 12574253). 
Z.L. was supported by NKPs (Grant No. 2020YFA0713000).
Y.S., Z.Z. and Z.L. were supported by BMSTC and ACZSP (Grant No. Z221100002722017). 
S.C. and Z.L. were supported by Beijing Natural Science Foundation (Grant No. Z220002).

\bibliography{main}

@article{wellnitz2022rise,
  title = {Rise and Fall, and Slow Rise Again, of Operator Entanglement under Dephasing},
  author = {Wellnitz, D. and Preisser, G. and Alba, V. and Dubail, J. and Schachenmayer, J.},
  journal = {Phys. Rev. Lett.},
  volume = {129},
  issue = {17},
  pages = {170401},
  numpages = {8},
  year = {2022},
  month = {Oct},
  publisher = {American Physical Society},
  doi = {10.1103/PhysRevLett.129.170401},
  url = {https://link.aps.org/doi/10.1103/PhysRevLett.129.170401}
}

@article{eckart1936approximation,
  title={The approximation of one matrix by another of lower rank},
  author={Eckart, Carl and Young, Gale},
  journal={Psychometrika},
  volume={1},
  number={3},
  pages={211--218},
  year={1936},
  publisher={Springer-Verlag},
  doi={10.1007/BF02288367},
  url={https://doi.org/10.1007/BF02288367}
}

@article{vidal2003efficient,
  title = {Efficient Classical Simulation of Slightly Entangled Quantum Computations},
  author = {Vidal, Guifr\'{e}},
  journal = {Phys. Rev. Lett.},
  volume = {91},
  issue = {14},
  pages = {147902},
  year = {2003},
  month = {Oct},
  publisher = {American Physical Society},
  doi = {10.1103/PhysRevLett.91.147902},
  url = {https://link.aps.org/doi/10.1103/PhysRevLett.91.147902}
}

@article{vidal2004efficient,
  title = {Efficient Simulation of One-Dimensional Quantum Many-Body Systems},
  author = {Vidal, Guifr\'{e}},
  journal = {Phys. Rev. Lett.},
  volume = {93},
  issue = {4},
  pages = {040502},
  year = {2004},
  month = {Jul},
  publisher = {American Physical Society},
  doi = {10.1103/PhysRevLett.93.040502},
  url = {https://link.aps.org/doi/10.1103/PhysRevLett.93.040502}
}

@article{zwolak2004mixed,
  title = {Mixed-State Dynamics in One-Dimensional Quantum Lattice Systems: A Time-Dependent Superoperator Renormalization Algorithm},
  author = {Zwolak, Michael and Vidal, Guifr\'{e}},
  journal = {Phys. Rev. Lett.},
  volume = {93},
  issue = {20},
  pages = {207205},
  year = {2004},
  month = {Nov},
  publisher = {American Physical Society},
  doi = {10.1103/PhysRevLett.93.207205},
  url = {https://link.aps.org/doi/10.1103/PhysRevLett.93.207205}
}

@article{verstraete2004mpdo,
  title = {Matrix Product Density Operators: Simulation of Finite-Temperature and Dissipative Systems},
  author = {Verstraete, Frank and Garc\'{i}a-Ripoll, Juan J. and Cirac, J. Ignacio},
  journal = {Phys. Rev. Lett.},
  volume = {93},
  issue = {20},
  pages = {207204},
  year = {2004},
  month = {Nov},
  publisher = {American Physical Society},
  doi = {10.1103/PhysRevLett.93.207204},
  url = {https://link.aps.org/doi/10.1103/PhysRevLett.93.207204}
}

@article{schuch2008entropy,
  title = {Entropy Scaling and Simulability by Matrix Product States},
  author = {Schuch, Norbert and Wolf, Michael M. and Verstraete, Frank and Cirac, J. Ignacio},
  journal = {Phys. Rev. Lett.},
  volume = {100},
  issue = {3},
  pages = {030504},
  year = {2008},
  month = {Jan},
  publisher = {American Physical Society},
  doi = {10.1103/PhysRevLett.100.030504},
  url = {https://link.aps.org/doi/10.1103/PhysRevLett.100.030504}
}

@article{schollwock2011density,
  title = {The density-matrix renormalization group in the age of matrix product states},
  author = {Schollw{\"o}ck, Ulrich},
  journal = {Annals of Physics},
  volume = {326},
  number = {1},
  pages = {96--192},
  year = {2011},
  doi = {10.1016/j.aop.2010.09.012},
  url = {https://doi.org/10.1016/j.aop.2010.09.012}
}

@article{orus2014practical,
  title = {A Practical Introduction to Tensor Networks: Matrix Product States and Projected Entangled Pair States},
  author = {Or\'{u}s, Rom\'{a}n},
  journal = {Annals of Physics},
  volume = {349},
  pages = {117--158},
  year = {2014},
  doi = {10.1016/j.aop.2014.06.013},
  url = {https://doi.org/10.1016/j.aop.2014.06.013}
}

@article{kshetrimayum2017simple,
  title = {A simple tensor network algorithm for two-dimensional steady states},
  author = {Kshetrimayum, Augustine and Weimer, Hendrik and Or\'{u}s, Rom\'{a}n},
  journal = {Nature Communications},
  volume = {8},
  number = {1},
  pages = {1291},
  year = {2017},
  doi = {10.1038/s41467-017-01511-6},
  url = {https://doi.org/10.1038/s41467-017-01511-6}
}

@article{noh2020efficient,
  author = {Noh, Kyungjoo and Jiang, Liang and Fefferman, Bill},
  title = {Efficient Classical Simulation of Noisy Random Quantum Circuits in One Dimension},
  journal = {Quantum},
  volume = {4},
  pages = {318},
  year = {2020},
  url = {https://quantum-journal.org/papers/q-2020-09-11-318/},
  urldate = {2026-03-17},
  doi = {10.22331/q-2020-09-11-318}
}

@article{bennink2017unbiased,
  author = {Bennink, Ryan S. and Ferragut, Erik M. and Humble, Travis S. and Laska, Jason A. and Nutaro, James J. and Pleszkoch, Mark G. and Pooser, Raphael C.},
  title = {Unbiased simulation of near-Clifford quantum circuits},
  journal = {Phys. Rev. A},
  volume = {95},
  issue = {6},
  pages = {062337},
  year = {2017},
  month = {Jun},
  publisher = {American Physical Society},
  doi = {10.1103/PhysRevA.95.062337},
  url = {https://doi.org/10.1103/PhysRevA.95.062337}
}

@inproceedings{aharonov2023polynomial,
  author = {Aharonov, Dorit and Gao, Xun and Landau, Zeph and Liu, Yunchao and Vazirani, Umesh},
  title = {A Polynomial-Time Classical Algorithm for Noisy Random Circuit Sampling},
  booktitle = {Proceedings of the 55th Annual ACM Symposium on Theory of Computing},
  series = {STOC '23},
  pages = {945--957},
  year = {2023},
  publisher = {Association for Computing Machinery},
  doi = {10.1145/3564246.3585234},
  url = {https://doi.org/10.1145/3564246.3585234}
}

@article{gonzalezgarcia2025pauli,
  author = {Gonz\'alez-Garc\'ia, Guillermo and Cirac, J. Ignacio and Trivedi, Rahul},
  title = {Pauli path simulations of noisy quantum circuits beyond average case},
  journal = {Quantum},
  volume = {9},
  pages = {1730},
  year = {2025},
  doi = {10.22331/q-2025-05-05-1730},
  url = {https://doi.org/10.22331/q-2025-05-05-1730}
}

@article{angrisani2025arbitrarynoise,
  title = {Simulating Quantum Circuits with Arbitrary Local Noise Using Pauli Propagation},
  author = {Angrisani, Armando and Mele, Antonio A. and Rudolph, Manuel S. and Cerezo, M. and Holmes, Zo\"e},
  journal = {PRX Quantum},
  volume = {7},
  issue = {2},
  pages = {020313},
  numpages = {46},
  year = {2026},
  month = {Apr},
  publisher = {American Physical Society},
  doi = {10.1103/fb28-wlv2},
  url = {https://link.aps.org/doi/10.1103/fb28-wlv2}
}

@article{fontana2025variational,
  author = {Fontana, Enrico and Rudolph, Manuel S. and Duncan, Ross and Rungger, Ivan and C\^{\i}rstoiu, Cristina},
  title = {Classical simulations of noisy variational quantum circuits},
  journal = {npj Quantum Information},
  volume = {11},
  pages = {84},
  year = {2025},
  doi = {10.1038/s41534-024-00955-1},
  url = {https://doi.org/10.1038/s41534-024-00955-1}
}

@article{shao2024obppp,
  author = {Shao, Yuguo and Wei, Fuchuan and Cheng, Song and Liu, Zhengwei},
  title = {Simulating Noisy Variational Quantum Algorithms: A Polynomial Approach},
  journal = {Phys. Rev. Lett.},
  volume = {133},
  issue = {12},
  pages = {120603},
  year = {2024},
  month = {Sep},
  publisher = {American Physical Society},
  doi = {10.1103/PhysRevLett.133.120603},
  url = {https://doi.org/10.1103/PhysRevLett.133.120603}
}

@article{shao2025diagnosing,
  author = {Shao, Yuguo and Chen, Zhengyu and Wei, Zhaohui and Liu, Zhengwei},
  title = {Diagnosing Quantum Circuits: Noise Robustness, Trainability, and Expressibility},
  journal = {arXiv preprint arXiv:2509.11307},
  year = {2025},
  doi = {10.48550/arXiv.2509.11307},
  url = {https://arxiv.org/abs/2509.11307}
}

@article{dowling2024operational,
  author = {Dowling, Neil and Modi, Kavan},
  title = {Operational Metric for Quantum Chaos and the Corresponding Spatiotemporal-Entanglement Structure},
  journal = {PRX Quantum},
  volume = {5},
  pages = {010314},
  year = {2024},
  doi = {10.1103/PRXQuantum.5.010314},
  url = {https://doi.org/10.1103/PRXQuantum.5.010314}
}

@article{dowling2025operatorbridge,
  author = {Dowling, Neil and Modi, Kavan and White, Gregory A. L.},
  title = {Bridging Entanglement and Magic Resources within Operator Space},
  journal = {Phys. Rev. Lett.},
  volume = {135},
  pages = {160201},
  year = {2025},
  doi = {10.1103/c7k1-xcwy},
  url = {https://doi.org/10.1103/c7k1-xcwy}
}

@article{dowling2026noise,
  title={Noise-induced Simulability Transition from Operator Scrambling},
  author={Dowling, Neil and Turkeshi, Xhek and De Nardis, Jacopo and Lami, Guglielmo},
  journal={arXiv preprint arXiv:2605.18943},
  year={2026},
  doi={10.48550/arXiv.2605.18943},
  url={https://arxiv.org/abs/2605.18943}
}

@article{dowling2026classical,
  title={Classical simulability from operator entanglement scaling},
  author={Dowling, Neil},
  journal={arXiv preprint arXiv:2603.05656},
  year={2026},
  doi={10.48550/arXiv.2603.05656},
  url={https://arxiv.org/abs/2603.05656}
}

@article{DuttaFaulkner2021ReflectedEntropy,
    title={A canonical purification for the entanglement wedge cross-section},
    volume={2021},
    ISSN={1029-8479},
    url={http://dx.doi.org/10.1007/JHEP03(2021)178},
    DOI={10.1007/jhep03(2021)178},
    number={3},
    journal={Journal of High Energy Physics},
    publisher={Springer Science and Business Media LLC},
    author={Dutta, Souvik and Faulkner, Thomas},
    year={2021},
    month={mar},
    pages={178}
}

@article{king2001minimal,
  author = {King, C. and Ruskai, M. B.},
  title = {Minimal entropy of states emerging from noisy quantum channels},
  year = {2001},
  issue_date = {January 2001},
  publisher = {IEEE Press},
  volume = {47},
  number = {1},
  issn = {0018-9448},
  url = {https://doi.org/10.1109/18.904522},
  doi = {10.1109/18.904522},
  journal = {IEEE Trans. Inf. Theor.},
  month = jan,
  pages = {192–209},
  numpages = {18}
}

@article{ruskai2002analysis,
  title={An analysis of completely-positive trace-preserving maps on M2},
  author={Ruskai, Mary Beth and Szarek, Stanislaw and Werner, Elisabeth},
  journal={Linear algebra and its applications},
  volume={347},
  number={1-3},
  pages={159--187},
  year={2002},
  publisher={Elsevier},
  doi={10.1016/S0024-3795(01)00547-X},
  url={https://doi.org/10.1016/S0024-3795(01)00547-X}
}

@article{mele2026noise,
  title={Noise-induced shallow circuits and the absence of barren plateaus},
  author={Mele, Antonio Anna and Angrisani, Armando and Ghosh, Soumik and Khatri, Sumeet and Eisert, Jens and Stilck Fran{\c{c}}a, Daniel and Quek, Yihui},
  journal={Nature Physics},
  pages={1--6},
  year={2026},
  publisher={Nature Publishing Group UK London},
  doi={10.1038/s41567-026-03245-z},
  url={https://doi.org/10.1038/s41567-026-03245-z}
}

@article{audenaert2007,
  title={A sharp continuity estimate for the von Neumann entropy},
  author={Audenaert, Koenraad M. R.},
  journal={Journal of Physics A: Mathematical and Theoretical},
  volume={40},
  pages={8127},
  year={2007},
  publisher={IOP Publishing},
  doi={10.1088/1751-8113/40/28/S18},
  url={https://doi.org/10.1088/1751-8113/40/28/S18}
}

@article{Mirsky1960,
  author  = {Mirsky, L.},
  title   = {Symmetric Gauge Functions and Unitarily Invariant Norms},
  journal = {The Quarterly Journal of Mathematics},
  year    = {1960},
  volume  = {11},
  pages   = {50--59},
  doi     = {10.1093/qmath/11.1.50},
  url     = {https://doi.org/10.1093/qmath/11.1.50}
}

@article{DePalmaetal2021,
  author  = {De Palma, Giacomo and Marvian, Milad and Trevisan, Dario and Lloyd, Seth},
  title   = {The Quantum Wasserstein Distance of Order 1},
  journal = {IEEE Transactions on Information Theory},
  year    = {2021},
  volume  = {67},
  number  = {10},
  pages   = {6627--6643},
  doi     = {10.1109/tit.2021.3076442},
  url     = {https://doi.org/10.1109/tit.2021.3076442}
}

@article{Fannes1973,
  author  = {Fannes, Mark},
  title   = {A continuity property of the entropy density for spin lattice systems},
  journal = {Communications in Mathematical Physics},
  year    = {1973},
  volume  = {31},
  number  = {4},
  pages   = {291--294},
  doi     = {10.1007/BF01646490},
  url     = {https://doi.org/10.1007/BF01646490}
}

@book{watrous2018TQI,
  author = {Watrous, John},
  title = {The Theory of Quantum Information},
  year = {2018},
  edition = {1},
  publisher = {Cambridge University Press},
  address = {Cambridge},
  doi = {10.1017/9781316848142},
  url = {https://doi.org/10.1017/9781316848142}
}

@article{poulin2010lieb,
  title = {Lieb-Robinson Bound and Locality for General Markovian Quantum Dynamics},
  author = {Poulin, David},
  journal = {Physical Review Letters},
  volume = {104},
  number = {19},
  pages = {190401},
  year = {2010},
  publisher = {American Physical Society},
  doi = {10.1103/PhysRevLett.104.190401},
  url = {https://link.aps.org/doi/10.1103/PhysRevLett.104.190401}
}

@article{barthel2012quasilocality,
  title = {Quasilocality and Efficient Simulation of Markovian Quantum Dynamics},
  author = {Barthel, Thomas and Kliesch, Martin},
  journal = {Physical Review Letters},
  volume = {108},
  number = {23},
  pages = {230504},
  year = {2012},
  publisher = {American Physical Society},
  doi = {10.1103/PhysRevLett.108.230504},
  url = {https://link.aps.org/doi/10.1103/PhysRevLett.108.230504}
}

@incollection{nachtergaele2010lieb,
  title = {Lieb-Robinson Bounds in Quantum Many-Body Physics},
  author = {Nachtergaele, Bruno and Sims, Robert},
  booktitle = {Entropy and the Quantum},
  series = {Contemporary Mathematics},
  volume = {529},
  pages = {141--176},
  year = {2010},
  publisher = {American Mathematical Society},
  doi = {10.1090/conm/529/10429},
  url = {https://doi.org/10.1090/conm/529}
}

@article{king2014hypercontractivity,
  title={Hypercontractivity for semigroups of unital qubit channels},
  author={King, Christopher},
  journal={Communications in Mathematical Physics},
  volume={328},
  number={1},
  pages={285--301},
  year={2014},
  publisher={Springer},
  doi={10.1007/s00220-014-1982-4},
  url={https://doi.org/10.1007/s00220-014-1982-4}
}

@article{Ciracetal2020,
  author  = {Cirac, Ignacio and Perez-Garcia, David and Schuch, Norbert and Verstraete, Frank},
  title   = {Matrix Product States and Projected Entangled Pair States: Concepts, Symmetries, and Theorems},
  journal = {Rev. Mod. Phys. 93, 045003 (2021)},
  year    = {2021},
  doi     = {10.1103/RevModPhys.93.045003},
  url     = {https://doi.org/10.1103/RevModPhys.93.045003}
}

@article{EntanglementAreaLaws2010,
  title = {Colloquium: Area laws for the entanglement entropy},
  author = {Eisert, J. and Cramer, M. and Plenio, M. B.},
  journal = {Rev. Mod. Phys.},
  volume = {82},
  issue = {1},
  pages = {277--306},
  numpages = {0},
  year = {2010},
  month = {Feb},
  publisher = {American Physical Society},
  doi = {10.1103/RevModPhys.82.277},
  url = {https://link.aps.org/doi/10.1103/RevModPhys.82.277}
}

@article{cheng2021simulating,
  title={Simulating noisy quantum circuits with matrix product density operators},
  author={Cheng, Song and Cao, Chenfeng and Zhang, Chao and Liu, Yongxiang and Hou, Shi-Yao and Xu, Pengxiang and Zeng, Bei},
  journal={Physical Review Research},
  volume={3},
  number={2},
  pages={023005},
  year={2021},
  publisher={APS},
  doi={10.1103/physrevresearch.3.023005},
  url={https://doi.org/10.1103/physrevresearch.3.023005}
}

@article{dankert2009exact,
  title = {Exact and approximate unitary 2-designs and their application to fidelity estimation},
  author = {Dankert, Christoph and Cleve, Richard and Emerson, Joseph and Livine, Etera},
  journal = {Physical Review A},
  volume = {80},
  number = {1},
  pages = {012304},
  year = {2009},
  doi = {10.1103/PhysRevA.80.012304},
  url = {https://doi.org/10.1103/PhysRevA.80.012304}
}

@article{gao2018efficient,
  title={Efficient classical simulation of noisy quantum computation},
  author={Gao, Xun and Duan, Luming},
  journal={arXiv preprint arXiv:1810.03176},
  year={2018},
  doi={10.48550/arXiv.1810.03176},
  url={https://arxiv.org/abs/1810.03176}
}

@article{schuster2025polynomial,
  title={A polynomial-time classical algorithm for noisy quantum circuits},
  author={Schuster, Thomas and Yin, Chao and Gao, Xun and Yao, Norman Y},
  journal={Physical Review X},
  volume={15},
  number={4},
  pages={041018},
  year={2025},
  publisher={APS},
  doi={10.1103/xct1-7kf2},
  url={https://doi.org/10.1103/xct1-7kf2}
}

@article{angrisani2025classically,
  title={Classically estimating observables of noiseless quantum circuits},
  author={Angrisani, Armando and Schmidhuber, Alexander and Rudolph, Manuel S and Cerezo, Marco and Holmes, Zo{\"e} and Huang, Hsin-Yuan},
  journal={Physical review letters},
  volume={135},
  number={17},
  pages={170602},
  year={2025},
  publisher={APS},
  doi={10.1103/lh6x-7rc3},
  url={https://doi.org/10.1103/lh6x-7rc3}
}

@article{martinez2025efficient,
  title={Efficient simulation of parametrized quantum circuits under nonunital noise through pauli backpropagation},
  author={Martinez, Victor and Angrisani, Armando and Pankovets, Ekaterina and Fawzi, Omar and Stilck Fran{\c{c}}a, Daniel},
  journal={Physical Review Letters},
  volume={134},
  number={25},
  pages={250602},
  year={2025},
  publisher={APS},
  doi={10.1103/j1gg-s6zb},
  url={https://doi.org/10.1103/j1gg-s6zb}
}

@article{nahum2017quantum,
  title = {Quantum Entanglement Growth under Random Unitary Dynamics},
  author = {Nahum, Adam and Vijay, Sagar and Haah, Jeongwan},
  journal = {Phys. Rev. X},
  volume = {7},
  issue = {3},
  pages = {031016},
  numpages = {14},
  year = {2017},
  month = {Aug},
  publisher = {American Physical Society},
  doi = {10.1103/PhysRevX.7.031016},
  url = {https://link.aps.org/doi/10.1103/PhysRevX.7.031016}
}

@article{vonkeyserlingk2018operator,
  title = {Operator Hydrodynamics, OTOCs, and Entanglement Growth in Systems without Conservation Laws},
  author = {von Keyserlingk, C. W. and Rakovszky, T. and Pollmann, F. and Sondhi, S. L.},
  journal = {Phys. Rev. X},
  volume = {8},
  issue = {2},
  pages = {021013},
  numpages = {22},
  year = {2018},
  month = {Apr},
  publisher = {American Physical Society},
  doi = {10.1103/PhysRevX.8.021013},
  url = {https://link.aps.org/doi/10.1103/PhysRevX.8.021013}
}

@article{wei2026noise,
  title={Noise-induced contraction of MPO truncation errors in noisy random circuits and Lindbladian dynamics},
  author={Wei, Zhi-Yuan and Rajakumar, Joel and Nelson, Jon and Malz, Daniel and Gullans, Michael J and Gorshkov, Alexey V},
  journal={arXiv preprint arXiv:2603.20400},
  year={2026},
  doi={10.48550/arXiv.2603.20400},
  url={https://arxiv.org/abs/2603.20400}
}

@article{berezutskii2025tensor,
  title={Tensor networks for quantum computing},
  author={Berezutskii, Aleksandr and Liu, Minzhao and Acharya, Atithi and Ellerbrock, Roman and Gray, Johnnie and Haghshenas, Reza and He, Zichang and Khan, Abid and Kuzmin, Viacheslav and Lyakh, Dmitry and others},
  journal={Nature Reviews Physics},
  volume={7},
  number={10},
  pages={581--593},
  year={2025},
  publisher={Nature Publishing Group UK London},
  doi={10.1038/s42254-025-00853-1},
  url={https://doi.org/10.1038/s42254-025-00853-1}
}

@article{li2023entanglement,
  title={Entanglement dynamics of noisy random circuits},
  author={Li, Zhi and Sang, Shengqi and Hsieh, Timothy H},
  journal={Physical Review B},
  volume={107},
  number={1},
  pages={014307},
  year={2023},
  publisher={APS},
  doi={10.1103/PhysRevB.107.014307},
  url={https://doi.org/10.1103/PhysRevB.107.014307}
}

@article{yan2025limitations,
  title={Limitations of noisy quantum devices in computing and entangling power},
  author={Yan, Yuxuan and Du, Zhenyu and Chen, Junjie and Ma, Xiongfeng},
  journal={npj Quantum Information},
  volume={11},
  number={1},
  pages={188},
  year={2025},
  publisher={Nature Publishing Group UK London},
  doi={10.1038/s41534-025-01136-4},
  url={https://doi.org/10.1038/s41534-025-01136-4}
}

@article{zhang2022entanglement,
  title={Entanglement entropy scaling of noisy random quantum circuits in two dimensions},
  author={Zhang, Meng and Wang, Chao and Dong, Shaojun and Zhang, Hao and Han, Yongjian and He, Lixin},
  journal={Physical Review A},
  volume={106},
  number={5},
  pages={052430},
  year={2022},
  publisher={APS},
  doi={10.1103/PhysRevA.106.052430},
  url={https://doi.org/10.1103/PhysRevA.106.052430}
}

@article{lee2025classical,
  title={Classical simulation of noisy random circuits from exponential decay of correlation},
  author={Lee, Su-un and Ghosh, Soumik and Oh, Changhun and Noh, Kyungjoo and Fefferman, Bill and Jiang, Liang},
  journal={arXiv preprint arXiv:2510.06328},
  year={2025},
  doi={10.48550/arXiv.2510.06328},
  url={https://arxiv.org/abs/2510.06328}
}

@article{denzler2026simulation,
  title={Simulation of noisy quantum circuits using frame representations},
  author={Denzler, Janek and Carrasco, Jose and Eisert, Jens and Guaita, Tommaso},
  journal={arXiv preprint arXiv:2601.05131},
  year={2026},
  doi={10.48550/arXiv.2601.05131},
  url={https://arxiv.org/abs/2601.05131}
}

\clearpage
\widetext
\appendix

\tableofcontents

\clearpage

\section{Operator entanglement entropy}
Consider a bipartition of an $n$-qubit system into subsystems $A$ and $B$, with $n_A$ and $n_B$ qubits respectively ($n=n_A+n_B$).
The density matrix $\rho$ can be expressed as:
\begin{equation}
    \rho = \sum_{i=1}^{R} \lambda_i \tau_i^{[A]} \otimes \tau_i^{[B]},
\end{equation}
where $\left\{\tau_i^{[A]} \right\}$ and $\left\{\tau_i^{[B]} \right\}$ are orthonormal operator bases for subsystems $A$ and $B$, with $\tr{\tau_i^{[A]} \tau_j^{[A]}} = \tr{\tau_i^{[B]} \tau_j^{[B]}} = \delta_{ij}$. 
We call $\lambda_i$ the Schmidt coefficients, and $R$ the Schmidt rank. 

The bipartite entropy of this decomposition is given by the \textit{operator space entanglement entropy} or simply \textit{operator entanglement entropy} (OEE) defined as~\cite{wellnitz2022rise}:
\begin{equation}\label{eq:oe_entropy}
    S_{OE}(\rho) = -\sum_{i=1}^{R} \lambda_i^2 \log_2 \lambda_i^2.
\end{equation}

Furthermore, we can define the \textit{normalized operator entanglement entropy} as follows: 
\begin{equation}
    \widetilde{S}_{OE}(\rho) = -\sum_{i=1}^{R} \frac{\lambda^2_i}{t} \log_2 \frac{\lambda^2_i}{t},
\end{equation}
where $t = \tr{\rho^2} = \sum_{i=1}^{R} \lambda_i^2$ is the purity of $\rho$. 
The relation between the normalized operator entanglement entropy $\widetilde{S}_{OE}(\rho)$ and unnormalized operator entanglement entropy $S_{OE}(\rho)$ is given by:
\begin{lemma}\label{lemma:normalized_oe_entropy}
    For any bipartite state $\rho$ with purity $t = \tr{\rho^2}$, the normalized operator entanglement entropy $\widetilde{S}_{OE}(\rho)$ can be expressed in terms of the unnormalized operator entanglement entropy $S_{OE}(\rho)$ as follows:
    \begin{equation}
        \widetilde{S}_{OE}(\rho) = \frac{S_{OE}(\rho)}{t} + \log_2 t.
    \end{equation}
\end{lemma}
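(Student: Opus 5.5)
The identity is a direct computation from the two definitions. Write $p_i=\lambda_i^2/t$ with $t=\sum_i\lambda_i^2=\tr{\rho^2}$; the second equality holds because the Hilbert–Schmidt norm squared of $\rho$ equals the sum of squared operator Schmidt coefficients, by orthonormality of $\{\tau_i^{[A]}\otimes\tau_i^{[B]}\}$. I will substitute this into the definition of $\widetilde S_{OE}$ and split the logarithm as $\log_2(\lambda_i^2/t)=\log_2\lambda_i^2-\log_2 t$.

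This gives
\begin{equation*}
\widetilde S_{OE}(\rho)=-\frac{1}{t}\sum_{i=1}^R\lambda_i^2\log_2\lambda_i^2+\frac{\log_2 t}{t}\sum_{i=1}^R\lambda_i^2 .
\end{equation*}
The first term is $S_{OE}(\rho)/t$ by \cref{eq:oe_entropy}. The second term collapses to $\log_2 t$ because $\sum_i\lambda_i^2=t$. Together these yield the claimed formula.

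The only point needing care is the normalization fact $\sum_i\lambda_i^2=\tr{\rho^2}$. I will verify it by expanding $\tr{\rho^2}=\sum_{i,j}\lambda_i\lambda_j\tr{\tau_i^{[A]}\tau_j^{[A]}}\tr{\tau_i^{[B]}\tau_j^{[B]}}=\sum_i\lambda_i^2$. This step uses the orthonormality conditions as stated and the fact that $\rho$ is Hermitian, so the Schmidt operators may be chosen Hermitian and no adjoints appear.

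Finally, multiplying the identity by $t$ recovers the relation \cref{eq:prelim_oee_relation} from the main text, which serves as a consistency check. I expect no real obstacle here.
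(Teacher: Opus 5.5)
Your proposal is correct and matches the paper's proof: substitute $p_i=\lambda_i^2/t$, split $\log_2(\lambda_i^2/t)=\log_2\lambda_i^2-\log_2 t$, and use $\sum_i\lambda_i^2=t$. Your extra check of that normalization, via orthonormality of the Schmidt operators, is a harmless addition, since the paper takes it as part of the definition of $t$.
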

\begin{proof}
    We have:
    \begin{equation}
        \begin{aligned}
                \widetilde{S}_{OE}(\rho) & = -\sum_{i=1}^{R} \frac{\lambda^2_i}{t} \log_2 \frac{\lambda^2_i}{t} \\
                & = -\sum_{i=1}^{R} \frac{\lambda^2_i}{t} \log_2 \lambda^2_i + \sum_{i=1}^{R} \frac{\lambda^2_i}{t} \log_2 t \\
                & = -\frac{1}{t} \sum_{i=1}^{R} \lambda^2_i \log_2 \lambda^2_i + \log_2 t \\
                & = \frac{S_{OE}(\rho)}{t} + \log_2 t.
        \end{aligned}
    \end{equation}
\end{proof}

This normalized operator entanglement entropy appeared in \cite{noh2020efficient} under the name \emph{MPO entanglement entropy}, and is used as a metric for the computational power of MPO representations of $\rho$.

\section{Bounding the bond dimension via operator entanglement entropy}\label{sec:complexity_tensor_network}

Operator entanglement becomes a complexity diagnostic only after the computational task is specified.
Consider a density operator $\rho$ admitting an operator Schmidt decomposition across a bipartition,
\begin{equation}
    \rho
    =
    \sum_{\alpha=1}^{R}
    \lambda_\alpha L_{\alpha}^{[A]}\otimes R_{\alpha}^{[B]},
\end{equation}
where $\{L_{\alpha}^{[A]}\}$ and $\{R_{\alpha}^{[B]}\}$ are Hilbert-Schmidt orthonormal operator families.
Let $\{\lambda_\alpha\}_{\alpha=1}^R$ be sorted in descending order, such that $\lambda_1 \geq \lambda_2 \geq \dots \geq \lambda_R > 0$. 
We denote the square of the unnormalized Schmidt weights as $p_\alpha = \lambda_\alpha^2$.
Following the notation established in the preliminaries, we denote purity as $t = \tr\{\rho^2\} = \sum_{\alpha=1}^R p_\alpha \leq 1$, unnormalized operator entanglement entropy as $S_{OE}(\rho) = -\sum_{\alpha=1}^R p_\alpha \log_2 p_\alpha$, normalized operator entanglement entropy as $\widetilde{S}_{OE}(\rho) = -\sum_{\alpha=1}^R \frac{p_\alpha}{t} \log_2 \frac{p_\alpha}{t}$. 
The approximation $\widehat\rho$ used below is the ordered Schmidt truncation. For a target bond dimension $\chi$, define
\begin{equation}
    \widehat\rho_\chi
    =
    \sum_{\alpha=1}^{\chi}
    \lambda_\alpha L_{\alpha}^{[A]}\otimes R_{\alpha}^{[B]},
\end{equation}
with $\widehat\rho_0=0$ when the zero-rank approximation is allowed. Since the operators $L_{\alpha}^{[A]}\otimes R_{\alpha}^{[B]}$ are Hilbert-Schmidt orthonormal, the squared truncation error is exactly the discarded tail,
\begin{equation}
    e_\chi
    =
    \norm{\rho-\widehat\rho_\chi}_{2}^{2}
    =
    \sum_{\alpha=\chi+1}^R p_\alpha .
\end{equation}
The absolute criterion asks for an additive tolerance,
\begin{equation}
    e_\chi=\norm{\rho-\widehat\rho_\chi}_{2}^{2}\leq \varepsilon .
\end{equation}
This is the natural specification when the downstream task only requires additive accuracy in the unnormalized state, or in a finite set of linear functionals with controlled Hilbert-Schmidt norm. For any such observable $O$,
\begin{equation}
    \abs{\tr[(\rho-\widehat\rho_\chi)O]}
    \leq
    \norm{\rho-\widehat\rho_\chi}_{2}\norm{O}_{2}
    \leq
    \sqrt{\varepsilon}\norm{O}_{2}.
\end{equation}
Absolute precision therefore resolves only the part of the noisy state that lies above an externally imposed additive error floor. In particular, if $t\leq\varepsilon$, discarding the whole operator already satisfies the criterion; the nontrivial absolute-error regime is $0<\varepsilon<t$.

The relative criterion instead asks for a fixed fraction of the remaining Hilbert-Schmidt weight,
\begin{equation}
    \frac{e_\chi}{t}
    =
    \frac{\norm{\rho-\widehat\rho_\chi}_{2}^{2}}{\norm{\rho}_{2}^{2}}
    \leq \delta ,
    \qquad
    \delta=\sum_{\alpha=\chi+1}^R\frac{p_\alpha}{t}.
\end{equation}
This is appropriate when the object of interest is the normalized vectorized state $\ket{\rho}/\norm{\rho}_{2}$, the normalized operator Schmidt spectrum, or any quantity obtained only after a later normalization step. It is also the conservative requirement when states of different purity must be compared at the same fractional accuracy.

Thus the two criteria probe different structures in a noisy mixed state. Fixed relative accuracy remains sensitive to the internal Schmidt distribution of the surviving operator weight, and is governed by $\widetilde S_{OE}$. Fixed absolute accuracy also sees the decay of the total Hilbert-Schmidt weight, and is governed by $S_{OE}$. The following bounds make this separation quantitative.

\begin{theorem}\label{thm:D_abs_error}
    Let $\rho$ be a bipartite state with purity $t=\tr\{\rho^2\}\leq 1$, and let $0<\varepsilon<t$.
    Let $R_{\mathrm{abs}}(\varepsilon)$ be the smallest integer $\chi$ such that truncating the ordered operator Schmidt decomposition of $\rho$ to its largest $\chi$ coefficients has absolute squared Hilbert-Schmidt error at most $\varepsilon$.
    Then
    \begin{equation}
        R_{\mathrm{abs}}(\varepsilon)
        \leq
        \max\left\{1,\left\lceil (t-\varepsilon)2^{S_{OE}(\rho)/\varepsilon}\right\rceil\right\}
        = \order{(t-\varepsilon)2^{S_{OE}(\rho)/\varepsilon}}.
    \end{equation}
    If $\varepsilon\geq t$, the zero-rank approximation already has squared Hilbert-Schmidt error at most $\varepsilon$.
\end{theorem}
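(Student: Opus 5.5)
The plan is the standard ``tail-mass versus smallest-retained-weight'' argument from entropy-based truncation bounds, applied to the unnormalized Schmidt weights $p_\alpha=\lambda_\alpha^2$. The case $\varepsilon\geq t$ needs nothing: $\widehat\rho_0=0$ has error $e_0=t\leq\varepsilon$. So assume $0<\varepsilon<t$ and set $\chi=R_{\mathrm{abs}}(\varepsilon)$. If $\chi=1$ the claim holds because of the $\max\{1,\cdot\}$, so assume $\chi\geq 2$.

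By minimality of $\chi$, truncation to $\chi-1$ terms fails. Hence the tail mass
\begin{equation}
T:=e_{\chi-1}=\sum_{\alpha\geq\chi}p_\alpha
\end{equation}
satisfies $T>\varepsilon$, and the retained head satisfies $\sum_{\beta<\chi}p_\beta=t-T<t-\varepsilon$. Because the weights are sorted, every tail weight satisfies $p_\alpha\leq p_\chi$ for $\alpha\geq\chi$. In addition, $p_\chi$ is at most the average of the first $\chi$ weights:
\begin{equation}
\chi\,p_\chi\leq\sum_{\beta\leq\chi}p_\beta=t-T+p_\chi .
\end{equation}
Every term $-p_\alpha\log_2 p_\alpha$ is nonnegative since $p_\alpha\leq t\leq 1$. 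Discarding the head terms and using $-\log_2 p_\alpha\geq -\log_2 p_\chi$ on the tail gives
\begin{equation}
S_{OE}(\rho)\geq\sum_{\alpha\geq\chi}p_\alpha\log_2\frac{1}{p_\alpha}\geq T\log_2\frac{1}{p_\chi}\geq\varepsilon\log_2\frac{1}{p_\chi}.
\end{equation}
This yields $p_\chi\geq 2^{-S_{OE}/\varepsilon}$. Combining it with an upper bound of the form $p_\chi\leq (t-\varepsilon)/\chi$ would give $\chi\leq(t-\varepsilon)2^{S_{OE}/\varepsilon}$. Since $\chi$ is an integer, the ceiling form follows, and the $\order{\cdot}$ form is immediate.

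The main obstacle is getting exactly the constant $(t-\varepsilon)$ rather than $(t-\varepsilon)/(\chi-1)$ or $t/\chi$. The naive head bound only gives $(\chi-1)p_\chi\leq t-T<t-\varepsilon$, which leads to the slightly weaker $\chi\leq 1+(t-\varepsilon)2^{S_{OE}/\varepsilon}$. To close the gap, I would not discard the head entropy entirely. Each of the $\chi-1$ head weights is at least $p_\chi$ and at most $t$, so each contributes a nonnegative amount to $S_{OE}$. I would then optimise jointly over $T\in(\varepsilon,t]$ and $p_\chi$ subject to the averaging constraint $(\chi-1)p_\chi\leq t-T$. The target is a bound of the form $S_{OE}\geq\varepsilon\log_2\bigl(\chi/(t-\varepsilon)\bigr)$; to get it I would try to show that the extreme case is a flat head with $T\downarrow\varepsilon$.

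If this sharpening resists, the fallback is the bound with the extra additive $1$. It is equivalent up to a constant, so it still justifies the stated $\order{(t-\varepsilon)2^{S_{OE}(\rho)/\varepsilon}}$ scaling, and I would adjust the ceiling accordingly. The bound is attained by the ordered Schmidt truncation, whose error is exactly the discarded tail $e_\chi$. The same argument with $p_\alpha/t$ in place of $p_\alpha$ and $\delta$ in place of $\varepsilon$ gives the relative version in \cref{eq:prelim_bond_dimension_scaling_rel}.
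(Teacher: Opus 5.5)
Your first chain of inequalities is exactly the paper's proof, and it already proves the statement as written. The paper takes a failing truncation $e_k>\varepsilon$, bounds the first discarded weight by the average retained weight, $p_{k+1}<(t-\varepsilon)/k$, and concludes $k<(t-\varepsilon)2^{S_{OE}(\rho)/\varepsilon}$; you do the same with $k=\chi-1$.

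The ``obstacle'' comes from discarding a strict inequality you wrote down yourself. Combining $(\chi-1)p_\chi\le t-T<t-\varepsilon$ with $p_\chi\ge 2^{-S_{OE}(\rho)/\varepsilon}$ gives $\chi-1<(t-\varepsilon)2^{S_{OE}(\rho)/\varepsilon}$, strictly. For an integer $\chi$, the strict inequality $\chi-1<x$ is equivalent to $\chi\le\lceil x\rceil$. That is precisely the claimed $R_{\mathrm{abs}}(\varepsilon)\le\max\{1,\lceil(t-\varepsilon)2^{S_{OE}(\rho)/\varepsilon}\rceil\}$. The ceiling is what absorbs the $-1$, so there is nothing to adjust and no need for the fallback.

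You should also drop the proposed sharpening, because its target is false. You aim for $\chi\le(t-\varepsilon)2^{S_{OE}(\rho)/\varepsilon}$, equivalently $S_{OE}\ge\varepsilon\log_2(\chi/(t-\varepsilon))$. A single Bell pair across the cut is a counterexample:
its operator Schmidt weights are $p_\alpha=1/4$ for $\alpha=1,\dots,4$, so $t=1$ and $S_{OE}=2$. Take $\varepsilon=0.74$. Then $e_1=3/4>\varepsilon\ge e_2=1/2$, so $R_{\mathrm{abs}}=2$. Meanwhile $(t-\varepsilon)2^{S_{OE}/\varepsilon}=0.26\cdot 2^{2/0.74}\approx 1.69$. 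The stated bound $\lceil 1.69\rceil=2$ is attained here, so no optimisation over the head (flat or otherwise) can replace the ceiling by the bare expression.
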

\begin{proof}
    Write $p_\alpha=\lambda_\alpha^2$ and assume $p_1\geq p_2\geq\cdots$.
    For an integer $k\geq1$, let
    \begin{equation}
        e_k=\sum_{\alpha>k}p_\alpha
    \end{equation}
    be the truncation error after keeping the largest $k$ weights.
    We prove the contrapositive.
    Suppose $e_k>\varepsilon$.
    Then the retained mass satisfies
    \begin{equation}
        \sum_{\alpha=1}^{k}p_\alpha=t-e_k<t-\varepsilon .
    \end{equation}
    Since the sequence is non-increasing, the first discarded weight is bounded by the average retained weight,
    \begin{equation}
        p_{k+1}
        \leq \frac{1}{k}\sum_{\alpha=1}^{k}p_\alpha
        < \frac{t-\varepsilon}{k}.
    \end{equation}
    Therefore $p_\alpha\leq p_{k+1}< (t-\varepsilon)/k$ for every $\alpha>k$.
    Because $0<t-\varepsilon<1$ and $k\geq1$, the logarithm below is positive, and the tail entropy obeys
    \begin{equation}
        \begin{aligned}
        S_{OE}(\rho)
        &\geq -\sum_{\alpha>k}p_\alpha\log_2 p_\alpha  \\
        &> e_k\log_2\left(\frac{k}{t-\varepsilon}\right)
        > \varepsilon\log_2\left(\frac{k}{t-\varepsilon}\right).
        \end{aligned}
    \end{equation}
    Hence $k<(t-\varepsilon)2^{S_{OE}(\rho)/\varepsilon}$ whenever the tail exceeds $\varepsilon$.
    Consequently, every integer
    \begin{equation}
        k\geq \max\left\{1,\left\lceil (t-\varepsilon)2^{S_{OE}(\rho)/\varepsilon}\right\rceil\right\}
    \end{equation}
    has $e_k\leq\varepsilon$, which proves the claim.
\end{proof}

\begin{theorem}\label{thm:D_rel_error}
    Let $\rho$ be a bipartite state with nonzero purity $t=\tr\{\rho^2\}$, and let $\delta\in(0,1)$.
    Let $R_{\mathrm{rel}}(\delta)$ be the smallest integer $\chi$ such that truncating the ordered operator Schmidt decomposition of $\rho$ to its largest $\chi$ coefficients has relative squared Hilbert-Schmidt error at most $\delta$.
    Then
    \begin{equation}
        R_{\mathrm{rel}}(\delta)
        \leq
        \max\left\{1,\left\lceil (1-\delta)2^{\widetilde{S}_{OE}(\rho)/\delta}\right\rceil\right\} = \order{(1-\delta)2^{\widetilde{S}_{OE}(\rho)/\delta}}.
    \end{equation}
\end{theorem}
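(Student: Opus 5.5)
The plan is to reduce this statement to \cref{thm:D_abs_error} by normalizing the spectrum. Set $q_\alpha=p_\alpha/t$, which is a probability distribution with $\sum_\alpha q_\alpha=1$. Its Shannon entropy is $-\sum_\alpha q_\alpha\log_2 q_\alpha=\widetilde S_{OE}(\rho)$, and the relative truncation error after keeping $k$ terms is $e_k/t=\sum_{\alpha>k}q_\alpha$. The ordering $q_1\geq q_2\geq\cdots$ is inherited from the $p_\alpha$, so the relative-error problem is exactly the absolute-error problem for a normalized weight sequence of total mass $1$ and tolerance $\delta<1$.

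I would then repeat the contrapositive argument of \cref{thm:D_abs_error} directly on the $q_\alpha$, rather than invoking that theorem verbatim. Its statement is phrased for the spectrum of a state, and the proof only uses that the sequence is non-increasing and has total mass in $(0,1]$. Concretely, fix $k\geq1$ and suppose the relative tail $\delta_k=\sum_{\alpha>k}q_\alpha$ exceeds $\delta$. Then the retained mass is $1-\delta_k<1-\delta$. Monotonicity gives
\[
q_\alpha\leq q_{k+1}\leq \frac1k\sum_{\beta\leq k}q_\beta<\frac{1-\delta}{k}
\qquad\text{for all }\alpha>k .
\]
Since $1-\delta\in(0,1)$ and $k\geq1$, we have $\log_2\bigl(k/(1-\delta)\bigr)>0$. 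Dropping the nonnegative head terms of the entropy then yields
\[
\widetilde S_{OE}(\rho)\geq-\sum_{\alpha>k}q_\alpha\log_2 q_\alpha>\delta_k\log_2\frac{k}{1-\delta}>\delta\log_2\frac{k}{1-\delta}.
\]
This forces $k<(1-\delta)2^{\widetilde S_{OE}(\rho)/\delta}$. Hence every integer $k\geq\max\{1,\lceil(1-\delta)2^{\widetilde S_{OE}(\rho)/\delta}\rceil\}$ satisfies $\delta_k\leq\delta$, which bounds $R_{\mathrm{rel}}(\delta)$ as claimed.

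There is no real obstacle. The only points to check are the following.
\begin{itemize}
\item The case $R$ finite with $k\geq R$ is harmless, since then the tail is empty and has error $0$.
\item Strictness of the inequality chain requires $\delta_k>\delta>0$, which gives a positive factor multiplying a positive logarithm.
\item The $\max\{1,\cdot\}$ covers the degenerate case where the exponential bound falls below $1$. This is consistent because $\delta<1$ excludes the zero-rank approximation.
\end{itemize}
Finally, the $\order{\cdot}$ form is immediate, since the ceiling adds at most $1$.
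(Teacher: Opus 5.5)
Your proposal is correct and follows essentially the same argument as the paper. Both normalize the Schmidt weights to $q_\alpha=p_\alpha/t$ and argue by contrapositive: a relative tail exceeding $\delta$ forces every discarded weight below $(1-\delta)/k$, and the resulting lower bound on the tail entropy gives $k<(1-\delta)2^{\widetilde{S}_{OE}(\rho)/\delta}$. Your remarks on the degenerate cases and on the $\order{\cdot}$ form are consistent with the paper's proof and slightly more explicit than it.
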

\begin{proof}
    Let $\tilde p_\alpha=p_\alpha/t$, sorted in non-increasing order, and define
    \begin{equation}
        \tilde e_k=\sum_{\alpha>k}\tilde p_\alpha .
    \end{equation}
    Suppose $\tilde e_k>\delta$.
    Then the retained normalized mass is less than $1-\delta$, and monotonicity gives
    \begin{equation}
        \tilde p_{k+1}
        \leq
        \frac{1}{k}\sum_{\alpha=1}^{k}\tilde p_\alpha
        <
        \frac{1-\delta}{k}.
    \end{equation}
    Hence every discarded normalized weight is smaller than $(1-\delta)/k$, and
    \begin{equation}
        \begin{aligned}
        \widetilde S_{OE}(\rho)
        &\geq -\sum_{\alpha>k}\tilde p_\alpha\log_2\tilde p_\alpha \\
        &> \tilde e_k\log_2\left(\frac{k}{1-\delta}\right)
        > \delta\log_2\left(\frac{k}{1-\delta}\right).
        \end{aligned}
    \end{equation}
    Therefore $\tilde e_k>\delta$ implies
    \begin{equation}
        k<(1-\delta)2^{\widetilde S_{OE}(\rho)/\delta}.
    \end{equation}
    Thus every integer
    \begin{equation}
        k\geq \max\left\{1,\left\lceil (1-\delta)2^{\widetilde S_{OE}(\rho)/\delta}\right\rceil\right\}
    \end{equation}
    has relative tail at most $\delta$.
\end{proof}

\section{Entropy Bounds}\label{appendix:entropy}

\subsection{Purity-controlled bound}

Before proceeding, we consider a useful 3-block state family and derive its purity and schmidt coefficients as follows:
\begin{lemma}\label{lemma:3_block_state}
    Let $D_A$ and $D_B$ be the dimensions of subsystems $A$ and $B$ respectively, without loss of generality, we assume that $D_A=m \leq n = D_B$.
    Consider a 3-block state family defined as:
    \begin{equation}
        \hat{\rho}(p_0, p_1, p_2) = p_0 \ketbra{\Phi_m}{\Phi_m} + \frac{p_1}{m^2 - 1} \left( \mathbb{I}_{A\otimes B_0} - \ketbra{\Phi_m}{\Phi_m} \right) + \frac{p_2}{m (n - m)} \left( \mathbb{I}_{A\otimes B_0^{\perp}}\right),
    \end{equation}
    where $p_0, p_1, p_2 \geq 0$, $p_0 + p_1 + p_2 = 1$, $B_0$ is a subspace of $B$ with dimension $m$, $B_0^{\perp}$ is the orthogonal complement of $B_0$, $\ket{\Phi_m} = \frac{1}{\sqrt{m}} \sum_{i=1}^{m} \ket{ii}$ is the maximally entangled state on the subspace $A \otimes B_0$.

    The purity of this state is given by:
    \begin{equation}
        \tr{\hat{\rho}(p_0, p_1, p_2)^2} = p_0^2 + \frac{p_1^2}{m^2 - 1} + \frac{p_2^2}{m (n - m)}.
    \end{equation}
    The schmidt coefficients of this state are:
    \begin{equation}
        \lambda_1 = \sqrt{\frac{(1-p_2)^2}{m^2} + \frac{p_2^2}{m(n - m)}}, \quad \lambda_i = \abs{\frac{p_0-\frac{p_1}{m^2 - 1}}{m}}, \quad \text{for } i = 2, \ldots, m^2.
    \end{equation}
\end{lemma}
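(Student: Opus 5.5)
The plan is to compute both quantities directly by writing $\hat\rho$ in a form adapted to each.

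\textbf{Purity.} The three operators $\ketbra{\Phi_m}{\Phi_m}$, $\mathbb{I}_{A\otimes B_0}-\ketbra{\Phi_m}{\Phi_m}$ and $\mathbb{I}_{A\otimes B_0^\perp}$ are mutually orthogonal projectors. Their ranks are $1$, $m^2-1$ and $m(n-m)$, and they sum to the identity. Hence $\hat\rho$ is diagonal in this decomposition, with eigenvalues $p_0$, $p_1/(m^2-1)$ and $p_2/(m(n-m))$ of these multiplicities. Summing the squared eigenvalues gives
\[
p_0^2+\frac{p_1^2}{m^2-1}+\frac{p_2^2}{m(n-m)}
\]
immediately.

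\textbf{Schmidt coefficients.} Set
\[
b=\frac{p_1}{m^2-1},\qquad a=p_0-b,\qquad c=\frac{p_2}{m(n-m)} .
\]
Write $P_0$ and $P_\perp$ for the projectors onto $B_0$ and $B_0^\perp$, so that $\mathbb{I}_{A\otimes B_0}=\mathbb{I}_A\otimes P_0$. Then
\[
\hat\rho=a\ketbra{\Phi_m}{\Phi_m}+\mathbb{I}_A\otimes\left(bP_0+cP_\perp\right).
\]
Identify $B_0$ with a copy of $A$. Then
\[
\ketbra{\Phi_m}{\Phi_m}=\frac1m\sum_{i,j}E_{ij}\otimes E_{ij},
\]
and $\sum_{ij}E_{ij}\otimes E_{ij}=\sum_\alpha G_\alpha\otimes\overline{G_\alpha}$ for any Hilbert-Schmidt orthonormal basis $\{G_\alpha\}_{\alpha=1}^{m^2}$ of $\mathcal B(\mathcal H_A)$. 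This holds because the expression is the canonical basis-independent element $\sum_\alpha G_\alpha\otimes\overline{G_\alpha}$; the conjugate operators $\overline{G_\alpha}$ are embedded in $B_0$.

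I choose $G_1=\mathbb{I}_A/\sqrt m$ and take $G_2,\dots,G_{m^2}$ traceless. Collecting everything proportional to $\mathbb{I}_A$ gives
\[
\hat\rho=\frac{\mathbb{I}_A}{\sqrt m}\otimes\sqrt m\left[\left(\frac{a}{m^2}+b\right)P_0+cP_\perp\right]+\frac{a}{m}\sum_{\alpha\ge2}G_\alpha\otimes\overline{G_\alpha}.
\]
The $A$-side operators $\{G_\alpha\}$ are orthonormal. On the $B$ side, the $\overline{G_\alpha}$ with $\alpha\ge2$ are orthonormal and traceless on $B_0$, and they vanish on $B_0^\perp$. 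They are therefore Hilbert-Schmidt orthogonal to the first $B$-operator, which is a combination of $P_0$ and $P_\perp$.

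It follows that this is already an operator Schmidt decomposition, after normalizing the first $B$-operator and absorbing the sign of $a$ into the $B$-operators. The coefficients $\lambda_i=|a|/m=\left|p_0-\frac{p_1}{m^2-1}\right|/m$ for $i=2,\dots,m^2$ follow directly.

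For $\lambda_1$, note that $a+m^2b=p_0+p_1=1-p_2$, so $\frac{a}{m^2}+b=\frac{1-p_2}{m^2}$. The squared Hilbert-Schmidt norm of the first $B$-operator is
\[
m\left(\frac{1-p_2}{m^2}\right)^2 m+m\,c^2(n-m)=\frac{(1-p_2)^2}{m^2}+\frac{p_2^2}{m(n-m)} .
\]
This is $\lambda_1^2$, as claimed.

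\textbf{Main obstacle.} The only delicate step is justifying the basis-independence identity for the maximally entangled projector, including the complex conjugation on the $B_0$ side. I would verify it by expanding each $G_\alpha$ in the $E_{ij}$ basis and using the unitarity of the change-of-basis matrix. Everything else is bookkeeping.
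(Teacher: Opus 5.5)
Your proposal is correct and takes essentially the same route as the paper. The purity follows from the mutual orthogonality of the three blocks. For the Schmidt coefficients, you expand $\ketbra{\Phi_m}{\Phi_m}$ in an orthonormal operator basis whose first element is $\mathbb{I}_A/\sqrt{m}$, collect all $\mathbb{I}_A$-proportional terms into a single $B$-operator built from $P_0$ and $P_\perp$, and evaluate its norm using $p_0+p_1=1-p_2$. Your explicit pairing $G_\alpha\otimes\overline{G_\alpha}$ also fixes a point the paper leaves implicit: its expansion $\ketbra{\Phi_m}{\Phi_m}=\sum_i \frac{1}{m}F_i\otimes G_i$ holds only when the $B_0$-basis is chosen as the complex conjugate of the $A$-basis.
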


\begin{proof}
    The purity can be calculated as:
    \begin{equation}
        \begin{aligned}
            \tr{\hat{\rho}(p_0, p_1, p_2)^2} & = p_0^2 \tr{\ketbra{\Phi_m}{\Phi_m}^2} + \left(\frac{p_1}{m^2 - 1}\right)^2 \tr{\left( \mathbb{I}_{A\otimes B_0} - \ketbra{\Phi_m}{\Phi_m} \right)^2} + \left(\frac{p_2}{m (n - m)}\right)^2 \tr{\left( \mathbb{I}_{A\otimes B_0^{\perp}}\right)^2}\\
            & = p_0^2 + \frac{p_1^2}{m^2 - 1} + \frac{p_2^2}{m (n - m)},
        \end{aligned}
    \end{equation}
    since $\ketbra{\Phi_m}{\Phi_m}$, $\mathbb{I}_{A\otimes B_0} - \ketbra{\Phi_m}{\Phi_m}$ and $\mathbb{I}_{A\otimes B_0^{\perp}}$ are orthogonal to each other.

    To calculate the schmidt coefficients, we define a set of orthonormal operator bases for subsystem $A$ as $\{F_i\}_{i=1}^{m^2}$, where:
    \begin{equation}
        F_1 = \frac{1}{\sqrt{m}} \mathbb{I}_A.
    \end{equation}
    And a set of orthonormal operator bases for subsystem $B_0$ as $\{G_j\}_{j=1}^{m^2}$, where: 
    \begin{equation}
        G_1 = \frac{1}{\sqrt{m}} \mathbb{I}_{B_0}.
    \end{equation}
    A set of orthonormal operator bases for subsystem $B_0^{\perp}$ can be defined as $\{H_k\}_{k=1}^{m(n - m)}$, where:
    \begin{equation}
        H_1 = \frac{1}{\sqrt{n - m}} \mathbb{I}_{B_0^{\perp}}.
    \end{equation}

    The sets $\{F_i\}_{i=1}^{m^2}$, $\{G_j\}_{j=1}^{m^2}$ and $\{H_k\}_{k=1}^{m(n - m)}$  satisfy the orthonormality conditions:
    \begin{equation}
        \tr{F_i F_j^\dagger} = \delta_{ij}, \quad \tr{G_i G_j^\dagger} = \delta_{ij}, \quad \tr{H_i H_j^\dagger} = \delta_{ij}.
    \end{equation}

    Then we can express the state $\ketbra{\Phi_m}{\Phi_m}$ as:
    \begin{equation}
        \ketbra{\Phi_m}{\Phi_m} = \sum_{i=1}^{m^2} \frac{1}{m} F_i \otimes G_i.
    \end{equation}
    The state $\mathbb{I}_{A\otimes B_0}$ can be expressed as:
    \begin{equation}
        \mathbb{I}_{A\otimes B_0} = m F_1 \otimes G_1.
    \end{equation}
    The state $\mathbb{I}_{A\otimes B_0^{\perp}}$ can be expressed as:
    \begin{equation}
        \mathbb{I}_{A\otimes B_0^{\perp}} = \sqrt{m(n - m)} F_1 \otimes H_1.
    \end{equation}

    Thus, the state $\hat{\rho}(p_0, p_1, p_2)$ can be expressed as:
    \begin{equation}\label{ap:eq:3_block_schmidt_decomposition}
        \begin{aligned}
            \hat{\rho}(p_0, p_1, p_2) & = p_0 \sum_{i=1}^{m^2} \frac{1}{m} F_i \otimes G_i + \frac{p_1}{m^2 - 1} \left( m F_1 \otimes G_1 - \sum_{i=1}^{m^2} \frac{1}{m} F_i \otimes G_i \right) + \frac{p_2}{m (n - m)} \left( \sqrt{m(n - m)} F_1 \otimes H_1 \right)\\
            & = \sum_{i=1}^{m^2} \left( \frac{p_0}{m} - \frac{p_1}{m (m^2 - 1)} \right) F_i \otimes G_i + \left( \frac{p_1 m}{m^2 - 1} \right) F_1 \otimes G_1 + \frac{p_2}{\sqrt{m (n - m)}} F_1 \otimes H_1\\
            & = \left( \frac{p_0}{m} - \frac{p_1}{m (m^2 - 1)} + \frac{p_1 m}{m^2 - 1}\right) F_1 \otimes G_1 + \sum_{i=2}^{m^2} \left( \frac{p_0}{m} - \frac{p_1}{m (m^2 - 1)} \right) F_i \otimes G_i + \frac{p_2}{\sqrt{m (n - m)}} F_1 \otimes H_1\\
            & =  F_1 \otimes \left[\left( \frac{p_0}{m} - \frac{p_1}{m (m^2 - 1)} + \frac{p_1 m}{m^2 - 1}\right) G_1 + \frac{p_2}{\sqrt{m (n - m)}} H_1 \right] + \sum_{i=2}^{m^2} \left( \frac{p_0}{m} - \frac{p_1}{m (m^2 - 1)} \right) F_i \otimes G_i.
        \end{aligned}
    \end{equation}
    It is easy to see that the set $\left\{ \left( \frac{p_0}{m} - \frac{p_1}{m (m^2 - 1)} + \frac{p_1 m}{m^2 - 1}\right) G_1 + \frac{p_2}{\sqrt{m (n - m)}} H_1, G_2, \ldots, G_{m^2} \right\}$ are orthogonal to each other, thus \cref{ap:eq:3_block_schmidt_decomposition} is the schmidt decomposition of the state $\hat{\rho}(p_0, p_1, p_2)$.

    On the other hand, we have:
    \begin{equation}
        \begin{aligned}
            & \tr{\left[\left( \frac{p_0}{m} - \frac{p_1}{m (m^2 - 1)} + \frac{p_1 m}{m^2 - 1}\right) G_1 + \frac{p_2}{\sqrt{m (n - m)}} H_1\right]^2} \\
            = & \left( \frac{p_0}{m} - \frac{p_1}{m (m^2 - 1)} + \frac{p_1 m}{m^2 - 1}\right)^2 \tr{G_1^2} + \left(\frac{p_2}{\sqrt{m (n - m)}}\right)^2 \tr{H_1^2} \\
            = & \left( \frac{p_0}{m} - \frac{p_1}{m (m^2 - 1)} + \frac{p_1 m}{m^2 - 1}\right)^2 + \frac{p_2^2}{m (n - m)} \\
            = & \frac{(1-p_2)^2}{m^2} + \frac{p_2^2}{m(n - m)},
        \end{aligned}
    \end{equation}
    where we have used $p_0 + p_1 + p_2 = 1$. 
    
    Thus the schmidt coefficients of this state are:
    \begin{equation}
        \lambda_1 = \sqrt{\frac{(1-p_2)^2}{m^2} + \frac{p_2^2}{m(n - m)}}, \quad \lambda_i = \abs{\frac{p_0-\frac{p_1}{m^2 - 1}}{m}}, \quad \text{for } i = 2, \ldots, m^2.
    \end{equation}
\end{proof}

For a given purity $t = \tr{\rho^2}$, we can characterize the maximum operator entanglement entropy as follows:
\begin{theorem}\label{theorem:maximum_entropy}
    For any bipartite state $\rho$ with purity $t = \tr{\rho^2}$ on subsystems $A$ and $B$ of dimension $D_A$ and $D_B$ respectively.
    The operator entanglement entropy is upper bounded by $S_{OE}(\rho) \leq S_{OE}^{\max}(t)$, where
    \begin{equation}
        S_{OE}^{\max}(t) 
        \begin{cases}
        = t\log_2 D_A D_B + \left(t - \frac{1}{D_A D_B}\right) \log_2 \left(\frac{D_{\min}^2-1}{tD_A D_B - 1}\right), & \text{if } t \leq \frac{1}{D_{\min}D_{\max}} + \frac{D_{\min}^2 - 1}{D_{\max}^2}, \\
        \leq t\log_2 D_A D_B + \left(t - \frac{1}{D_A D_B}\right) \log_2 \left(\frac{D_{\min}^2-1}{tD_A D_B - 1}\right), & \text{if } \frac{1}{D_{\min}D_{\max}} + \frac{D_{\min}^2 - 1}{D_{\max}^2} \leq t \leq \frac{D_{\min}}{D_{\max}}, \\
        \leq t\log_2 \frac{D_{\min}^2}{t}, & \text{if } \frac{D_{\min}}{D_{\max}} \leq t \leq 1,
        \end{cases}
    \end{equation}
    where $D_{\min} = \min(D_A, D_B)$ and $D_{\max} = \max(D_A, D_B)$.
    Note that the whole system has dimension $D_A D_B$, and the purity $t \in [1/(D_A D_B), 1]$.

    Similarly, the normalized operator entanglement entropy $\widetilde{S}_{OE}(\rho)$ is upper bounded by $\widetilde{S}_{OE}^{\max}(t) = \frac{1}{t} S_{OE}^{\max}(t) + \log_2 t$, which can be expressed as:
    \begin{equation}
        \widetilde{S}_{OE}^{\max}(t) 
        \begin{cases}
        = \log_2 tD_A D_B + \left(1 - \frac{1}{tD_A D_B}\right) \log_2 \left(\frac{D_{\min}^2-1}{tD_A D_B - 1}\right), & \text{if } t \leq \frac{1}{D_{\min}D_{\max}} + \frac{D_{\min}^2 - 1}{D_{\max}^2}, \\
        \leq \log_2 tD_A D_B + \left(1 - \frac{1}{tD_A D_B}\right) \log_2 \left(\frac{D_{\min}^2-1}{tD_A D_B - 1}\right), & \text{if } \frac{1}{D_{\min}D_{\max}} + \frac{D_{\min}^2 - 1}{D_{\max}^2} \leq t \leq \frac{D_{\min}}{D_{\max}}, \\
        \leq \log_2 D_{\min}^2, & \text{if } \frac{D_{\min}}{D_{\max}} \leq t \leq 1.
        \end{cases}
    \end{equation}
\end{theorem}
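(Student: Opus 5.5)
The bound will come from a constrained entropy maximization over the operator Schmidt spectrum, using \cref{lemma:3_block_state} to show it is attained in the low-purity regime. Write $p_\alpha=\lambda_\alpha^2$, so that $\sum_\alpha p_\alpha=t$ and there are at most $D_{\min}^2$ nonzero weights. The constraint that $\rho$ is a state adds one more fact. Expanding in orthonormal bases with first elements $\mathbb{I}_A/\sqrt{D_A}$ and $\mathbb{I}_B/\sqrt{D_B}$, the coefficient of $\mathbb{I}/\sqrt{D_AD_B}$ is $\tr\rho/\sqrt{D_AD_B}=1/\sqrt{D_AD_B}$. The largest Schmidt coefficient is the operator norm of the coefficient matrix, so it dominates every entry and
\begin{equation}
p_1\geq \frac{1}{D_AD_B}.
\end{equation}
The relaxed problem is therefore: maximize $-\sum_\alpha p_\alpha\log_2p_\alpha$ subject to $\sum_\alpha p_\alpha=t$, at most $D_{\min}^2$ terms, $p_1\geq 1/(D_AD_B)$, and $p_1\geq p_\alpha$.

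\emph{Solving the relaxation.} By concavity, for a fixed $p_1$ the optimum spreads the remaining weight equally, giving $p_\alpha=(t-p_1)/(D_{\min}^2-1)$ for $\alpha\geq 2$. The resulting one-variable function of $p_1$ has derivative $\log_2\bigl[(t-p_1)/((D_{\min}^2-1)p_1)\bigr]$, so it decreases in $p_1$ once $p_1\geq t/D_{\min}^2$.

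In the low-purity regime, $t\le D_{\min}^2/(D_AD_B)$, the lower bound $p_1= 1/(D_AD_B)$ binds. Substituting gives
\begin{equation}
-\tfrac{1}{D_AD_B}\log_2\tfrac{1}{D_AD_B}-\bigl(t-\tfrac{1}{D_AD_B}\bigr)\log_2\frac{t-1/(D_AD_B)}{D_{\min}^2-1},
\end{equation}
which rearranges algebraically to $t\log_2D_AD_B+(t-\tfrac{1}{D_AD_B})\log_2\frac{D_{\min}^2-1}{tD_AD_B-1}$.

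In the high-purity regime the unconstrained optimum $p_\alpha=t/D_{\min}^2$ is feasible and gives $t\log_2(D_{\min}^2/t)$. I still need to check that the case thresholds in the statement are compatible with this; if not, I will treat the middle case by monotonicity, since there only an upper bound is claimed.

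\emph{Attainment in the first case.} Equality requires an actual state realizing this spectrum, and \cref{lemma:3_block_state} supplies it (with $m=D_{\min}$). Its top coefficient is $\lambda_1^2=(1-p_2)^2/m^2+p_2^2/(m(n-m))$, which equals $1/(mn)$ exactly when $p_2=(n-m)/n$. That is the uniform weight on $B_0^\perp$, making the state's restriction there proportional to the identity. The remaining $m^2-1$ coefficients are equal, and I choose $p_0,p_1$ with $p_0+p_1=m/n$ to hit purity $t$. The main obstacle is checking that this can be done with $p_0,p_1\ge 0$, and that the equal coefficient does not exceed $\lambda_1$. I expect these conditions to give exactly the threshold $t\le \frac{1}{D_{\min}D_{\max}}+\frac{D_{\min}^2-1}{D_{\max}^2}$, with the equal coefficient maximal at $p_0=m/n$. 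This is the computation I would do first.

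\emph{Normalized entropy.} The bounds for $\widetilde S_{OE}$ follow directly from \cref{lemma:normalized_oe_entropy}, since $\widetilde S=S/t+\log_2t$ is monotone in $S$ at fixed $t$. Dividing each case by $t$ and adding $\log_2 t$ gives the stated expressions.
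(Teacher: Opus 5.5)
Your proposal is correct and follows the paper's proof essentially step for step: the same lower bound $\lambda_1^2\geq 1/(D_AD_B)$ from the identity component, the same equal spreading of the remaining weight leading to a concave one-variable function of $p_1$, attainment by the 3-block state of \cref{lemma:3_block_state} with the same choice $p_2=(n-m)/n$, and the normalized bound via \cref{lemma:normalized_oe_entropy}. The checks you deferred close as you expect. With $p_0+p_1=m/n$, the quantity $p_0-p_1/(m^2-1)=(m^2p_0-m/n)/(m^2-1)$ sweeps $[-\tfrac{m}{n(m^2-1)},\tfrac{m}{n}]$, so purity $t$ is reachable iff $m^2(t-\tfrac{1}{mn})/(m^2-1)\leq m^2/n^2$, which is exactly $t\leq \tfrac{1}{mn}+\tfrac{m^2-1}{n^2}$; and the high-purity feasibility condition $t/D_{\min}^2\geq 1/(D_AD_B)$ is literally $t\geq D_{\min}/D_{\max}$, so the middle case needs no monotonicity argument, only the relaxation bound.
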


\begin{proof}
First we note that the purity $t = \tr{\rho^2}$ can be expressed in terms of the Schmidt coefficients $\lambda_i$ as:
\begin{equation}
    t =  \tr{\rho^2} = \sum_{i,j=1}^{\chi} \lambda_i \lambda_j \tr{\tau_i^{[A]} \tau_j^{[A]}} \tr{\tau_i^{[B]} \tau_j^{[B]}} = \sum_{i=1}^{\chi} \lambda_i^2. 
\end{equation}
To maximize the operator entanglement entropy $S_{OE}(\rho) = -\sum_{i=1}^{\chi} \lambda_i^2 \log_2 \lambda_i^2$ under the constraints $\sum_{i=1}^{\chi} \lambda_i^2 = t$, we can use the method of Lagrange multipliers.
Without loss of generality, we assume that $\lambda_1$ is the largest Schmidt coefficient $\lambda_1 \geq \lambda_i$ for all $i$, and takes the value of $\lambda_1^*$, and thus we only need to optimize over the remaining variables $\{\lambda_i\}_{i=2}^{\chi}$ with the constraint $\sum_{i=2}^{\chi} \lambda_i^2 = t - (\lambda_1^*)^2$.
The Lagrangian function is given by:
\begin{equation}
    \mathcal{L} = -\sum_{i=2}^{\chi} \lambda_i^2 \log_2 \lambda_i^2 + \alpha \left(\sum_{i=2}^{\chi} \lambda_i^2 - t + (\lambda_1^*)^2 \right),
\end{equation}
where $\alpha$ is the Lagrange multiplier.
Taking the partial derivatives of $\mathcal{L}$ with respect to $\lambda_i$ and setting them to zero gives:
\begin{equation}
    \frac{\partial \mathcal{L}}{\partial \lambda_i} = -2 \lambda_i \log_2 \lambda_i^2 - \frac{2}{\ln 2} \lambda_i + 2 \alpha \lambda_i = 0.
\end{equation}
Solving this equation, we find that the optimal $\lambda_i$ take the form:
\begin{equation}
    \lambda_i^2 = 2^{\alpha - \frac{1}{\ln 2}} \  \text{or} \  0, \quad \text{for } i = 2, \ldots, \chi.
\end{equation}
Assume that there are $n_0$ non-zero $\lambda_i$ among $\{\lambda_i\}_{i=2}^{\chi}$, then we have:
\begin{equation}
    n_0 \lambda_i^2 = t - (\lambda_1^*)^2 \implies \lambda_i^2 = \frac{t - (\lambda_1^*)^2}{n_0}, \quad \text{for } i = 2, \ldots, n_0 + 1.
\end{equation}
Substituting these optimal values back into the expression for $S_{OE}(\rho)$, we obtain:
\begin{equation}\label{ap:eq:oe_entropy_n0_lambda1}
    S_{OE}(\rho) = -(\lambda_1^*)^2 \log_2 (\lambda_1^*)^2 - (t - (\lambda_1^*)^2) \log_2 \frac{t - (\lambda_1^*)^2}{n_0},
\end{equation}
which achieves its maximum when $n_0 = D_{\min}^2 - 1$.

Considering the function $g(x) = -x \log_2 x - (t - x) \log_2 \frac{t - x}{D_{\min}^2 - 1}$ for $x \in [0, t]$, we find that it is concave in this interval, by taking the second derivative:
\begin{equation}
    g''(x) = -\frac{1}{x \ln 2} - \frac{1}{(t - x) \ln 2} = \frac{- t}{x (t - x) \ln 2} \leq 0.
\end{equation}
Thus, the maximum of $g(x)$ occurs at the critical point where $g'(x) = 0$.
Calculating the first derivative:
\begin{equation}
    g'(x) = -\log_2 x - \frac{1}{\ln 2} + \log_2 (t - x) + \frac{1}{\ln 2} - \log_2 (D_{\min}^2 - 1) = \log_2 \frac{t - x}{x(D_{\min}^2 - 1)}.
\end{equation}
Setting $g'(x) = 0$ gives:
\begin{equation}
    \frac{t - x}{x(D_{\min}^2 - 1)} = 1 \implies x = \frac{t}{D_{\min}^2}.
\end{equation}
On the other hand, the maximum Schmidt coefficient $\lambda_1^*$ can be expressed as:
\begin{equation}
    \lambda_{\max}(\rho) = \max_{\norm{A}_2 = 1, \norm{B}_2 = 1} \tr{\rho A \otimes B}.
\end{equation}
When we choose $A = \frac{1}{\sqrt{D_A}} \mathbb{I}_{D_A}$ and $B = \frac{1}{\sqrt{D_B}} \mathbb{I}_{D_B}$, we have:
\begin{equation}
    \lambda_{\max}(\rho) \geq \tr{\rho \frac{1}{\sqrt{D_A}} \mathbb{I}_{D_A} \otimes \frac{1}{\sqrt{D_B}} \mathbb{I}_{D_B}} = \frac{1}{\sqrt{D_A D_B}} \tr{\rho} = \frac{1}{\sqrt{D_A D_B}}.
\end{equation}
Thus, we have $\lambda_1^* \geq \frac{1}{\sqrt{D_A D_B}}$.

When $t \leq \frac{D_{\min}}{D_{\max}}$, we have $\frac{t}{D_{\min}^2} \leq \frac{1}{D_A D_B} \leq \lambda_1^{*2}$, and the maximum of $S_{OE}(\rho)$ is upper
bounded when $\lambda_i^2$ take the values:
\begin{equation}\label{ap:eq:lambda_values_case1}
    \lambda_1^2 = \frac{1}{D_A D_B}, \quad \lambda_i^2 = \frac{t - \frac{1}{D_A D_B}}{D_{\min}^2 - 1}, \quad \text{for } i = 2, \ldots, D_{\min}^2.
\end{equation}
Latter we will show that these values can be achieved using 3-block state family defined in \cref{lemma:3_block_state}.

Substituting these values back into the expression for $S_{OE}(\rho)$, we obtain:
\begin{equation}
    S_{OE}^{\max}(t) \leq t\log_2 D_A D_B + \left(t - \frac{1}{D_A D_B}\right) \log_2 \left(\frac{D_{\min}^2-1}{tD_A D_B - 1}\right).
\end{equation}
    
When $\frac{D_{\min}}{D_{\max}} \leq t \leq 1$, we have $\frac{t}{D_{\min}^2} \geq \frac{1}{D_A D_B}$, and the maximum of $S_{OE}(\rho)$ can be upper bounded when $\lambda_i^2$ take the values:
\begin{equation}\label{ap:eq:lambda_values_case2}
    \lambda_i^2 = \frac{t}{D_{\min}^2}, \quad \text{for } i = 1, \ldots, D_{\min}^2.
\end{equation}
Substituting these values back into the expression for $S_{OE}(\rho)$, we obtain:
\begin{equation}
    S_{OE}^{\max}(t) \leq t\log_2 \frac{D_{\min}^2}{t}.
\end{equation}

The rest of the proof is to show that the Schmidt coefficients given in \cref{ap:eq:lambda_values_case1} can be achieved by an physical state when $t \leq \frac{1}{D_{\min}D_{\max}} + \frac{D_{\min}^2 - 1}{D_{\max}^2}$.
Without loss of generality, we assume that $D_A = m \leq n = D_B$, and consider the 3-block state family defined in \cref{lemma:3_block_state} with parameters $p_0, p_1, p_2$.
From \cref{lemma:3_block_state}, we know that the purity of this state is given by:
\begin{equation}
    \tr{\hat{\rho}(p_0, p_1, p_2)^2} = p_0^2 + \frac{p_1^2}{m^2 - 1} + \frac{p_2^2}{m (n - m)}.
\end{equation}
The schmidt coefficients of this state are:
\begin{equation}
    \lambda_1 = \sqrt{\frac{(1-p_2)^2}{m^2} + \frac{p_2^2}{m(n - m)}}, \quad \lambda_i = \abs{\frac{p_0-\frac{p_1}{m^2 - 1}}{m}}, \quad \text{for } i = 2, \ldots, m^2,
\end{equation}
where $p_0, p_1, p_2 \geq 0$, $p_0 + p_1 + p_2 = 1$.

To achieve the schmidt coefficients in \cref{ap:eq:lambda_values_case1}, we set:
\begin{equation}
    p_0 = \frac{1}{mn} + \frac{m^2 - 1}{m} \sqrt{\frac{t-\frac{1}{mn}}{m^2 - 1}}, \quad p_1 = \frac{m^2 - 1}{m} \left( \frac{1}{n} - \sqrt{\frac{t-\frac{1}{mn}}{m^2 - 1}} \right), \quad p_2 = \frac{n-m}{n},
\end{equation}
which is a genuine probability distribution when $\frac{1}{mn} \leq t \leq \frac{1}{mn} + \frac{m^2 - 1}{n^2}$.

Furthermore, for normalized operator entanglement entropy, we have \cref{lemma:normalized_oe_entropy} which states that $\widetilde{S}_{OE}(\rho) = \frac{S_{OE}(\rho)}{t} + \log_2 t$, thus we have $\widetilde{S}_{OE}(\rho) \leq \frac{1}{t} S_{OE}^{\max}(t) + \log_2 t$.

\end{proof}

For the case where $D_A = D_B$, we can still derive the maximum operator entanglement entropy as follows:
\begin{corollary}\label{cor:maximum_entropy_equal_dimension}
    For any bipartite state $\rho$ with purity $t = \tr{\rho^2}$ on subsystems $A$ and $B$ with equal dimension $D_*=D_A=D_B$.
    The maximum operator entanglement entropy is given by:
    \begin{equation}
        S_{OE}^{\max}(t) = t\log_2 D_*^2 + \left(t - \frac{1}{D_*^2}\right) \log_2 \left(\frac{D_*^2-1}{tD_*^2 - 1}\right),
    \end{equation}
    and its normalized version:
    \begin{equation}
        \widetilde{S}_{OE}^{\max}(t) = \log_2 tD_*^2 + \left(1 - \frac{1}{tD_*^2}\right) \log_2 \left(\frac{D_*^2-1}{tD_*^2 - 1}\right).
    \end{equation}
\end{corollary}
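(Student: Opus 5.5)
The plan is to specialize \cref{theorem:maximum_entropy} to $D_A=D_B=D_*$ and show that the whole admissible purity range $t\in[1/D_*^2,1]$ falls into the first (equality) case. With $D_{\min}=D_{\max}=D_*$, the threshold of the first case is
\begin{equation}
    \frac{1}{D_{\min}D_{\max}}+\frac{D_{\min}^2-1}{D_{\max}^2}=\frac{1}{D_*^2}+\frac{D_*^2-1}{D_*^2}=1 .
\end{equation}
So every physical purity satisfies the condition. The intermediate case is then empty, and the third case collapses to the single point $t=D_{\min}/D_{\max}=1$. At $t=1$ the formula gives $\log_2 D_*^2 + (1-1/D_*^2)\log_2 1=\log_2 D_*^2$, which agrees with $t\log_2(D_{\min}^2/t)$ there. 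The stated formula therefore holds on the full range.

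Since the case analysis in \cref{theorem:maximum_entropy} has two halves, I would re-verify both directly for equal dimensions rather than only quote the case split.

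\emph{Upper bound.} Recall $\lambda_1^{2}\ge 1/D_*^2$ from testing $\rho$ against $\mathbb{I}/\sqrt{D_*}\otimes\mathbb{I}/\sqrt{D_*}$. The unconstrained maximizer of the concave function $g(x)=-x\log_2x-(t-x)\log_2\frac{t-x}{D_*^2-1}$ is $x=t/D_*^2\le 1/D_*^2$. Hence $g$ is non-increasing on $[1/D_*^2,t]$, and its maximum is attained at $x=\lambda_1^2=1/D_*^2$, with the remaining $D_*^2-1$ weights equal to $(t-1/D_*^2)/(D_*^2-1)$. Substituting gives exactly the claimed expression as an upper bound.

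\emph{Achievability.} This is the step requiring care, because \cref{lemma:3_block_state} contains the factor $m(n-m)$, which vanishes when $m=n=D_*$. I would set $p_2=0$, which matches the prescribed $p_2=(n-m)/n=0$. The third block is then absent: $B_0=B$ and $B_0^\perp=\{0\}$, so one uses the two-block state
\begin{equation}
    p_0\ketbra{\Phi_{D_*}}{\Phi_{D_*}}+\frac{p_1}{D_*^2-1}\left(\mathbb{I}-\ketbra{\Phi_{D_*}}{\Phi_{D_*}}\right).
\end{equation}
Its Schmidt data are those of the lemma with the $p_2$ terms dropped, namely $\lambda_1=1/D_*$ and $\lambda_i=\abs{p_0-p_1/(D_*^2-1)}/D_*$ for $i\ge 2$. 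One checks this directly from the decomposition in the lemma's proof, since the $H_1$ term never appears.

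The choice of $p_0,p_1$ from the theorem's proof, with $m=n=D_*$, is
\begin{equation}
    p_0=\frac{1}{D_*^2}+\frac{D_*^2-1}{D_*}s,\qquad p_1=\frac{D_*^2-1}{D_*}\left(\frac{1}{D_*}-s\right),\qquad s=\sqrt{\frac{t-1/D_*^2}{D_*^2-1}} .
\end{equation}
Three checks remain. The sum is $p_0+p_1=1$. Nonnegativity $p_1\ge 0$ is equivalent to $s\le 1/D_*$, i.e.\ $t\le 1$, so it holds on the whole range. Finally, $\lambda_i^2=s^2$ for $i\ge 2$, which reproduces the extremal weights.

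\emph{Normalized version.} Applying \cref{lemma:normalized_oe_entropy}, $\widetilde S_{OE}^{\max}(t)=S_{OE}^{\max}(t)/t+\log_2 t$, gives the second formula after simplification. The only nontrivial point is the degenerate $m=n$ instance of the three-block lemma; the rest is substitution.
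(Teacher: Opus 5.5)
Your proposal is correct and follows the paper's route: the corollary is \cref{theorem:maximum_entropy} specialized to $D_{\min}=D_{\max}=D_*$, where the first-case threshold becomes $1$, so the equality branch covers every admissible purity. Your handling of the degenerate $m=n$ instance of \cref{lemma:3_block_state} (taking $p_2=0$ and dropping the $B_0^{\perp}$ block, which removes the ill-defined factor $p_2/(m(n-m))$) is a piece of rigor the paper leaves implicit. One small correction: the intermediate case is not empty but reduces to the single point $t=1$, where all three branches agree anyway.
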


\subsection{Property of the purity-controlled bound}
Here we present some useful properties of the maximum operator entanglement entropy $S_{OE}^{\max}(t)$ and $\widetilde{S}_{OE}^{\max}(t)$.
Some people might be concerned that the purity upper bound for normalized operator entanglement entropy $\widetilde{S}_{OE}(\rho)$ is trivial, as it is purity dependent while in the definition, the purity scale has been normalized out. 

To address this concern, we take $D_A = D_B = 4$ as an example, and plot the maximum operator entanglement entropy $S_{OE}^{\max}(t)$ and its normalized version $\widetilde{S}_{OE}^{\max}(t)$ with respect to the purity $t$ in \cref{ap:fig:maximum_entropy}.
This figure shows that both $S_{OE}^{\max}(t)$ and $\widetilde{S}_{OE}^{\max}(t)$ are monotonically increasing and concave functions with respect to the purity $t$. When the purity $t$ is close to the lower bound $\frac{1}{D_A D_B}$, $\widetilde{S}_{OE}^{\max}(t)$ is close to zero, which means that $\widetilde{S}_{OE}^{\max}(t)$ is a non-trivial function and can be used to characterize the maximum operator entanglement entropy for states with different purity.
\begin{figure}[hbp]
    \centering
    \includegraphics[width=0.5\textwidth]{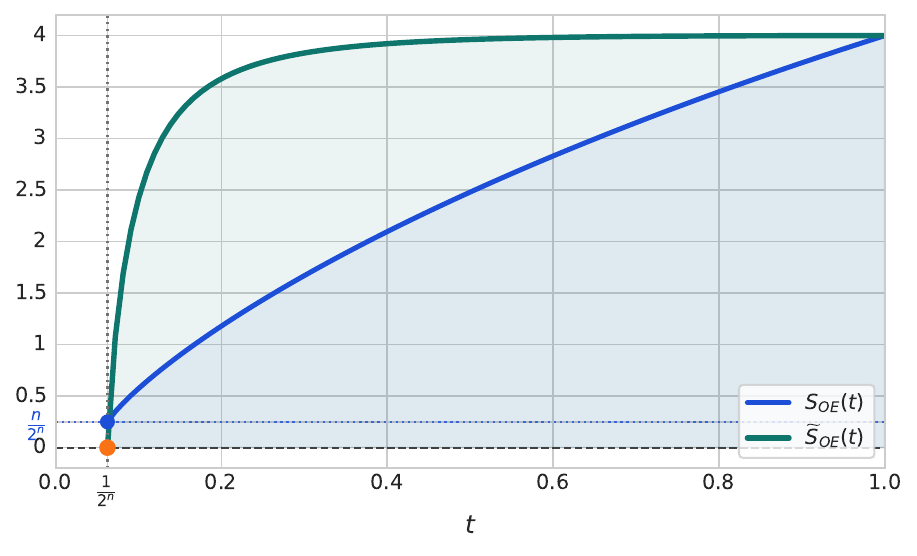}
    \caption{The maximum operator entanglement entropy $S_{OE}^{\max}(t)$ and its normalized version $\widetilde{S}_{OE}^{\max}(t)$ with respect to the purity $t$ when $D_A = D_B = 4$.}
    \label{ap:fig:maximum_entropy}
\end{figure}

In the following, we will give a rigorous proof about these properties of $S_{OE}^{\max}(t)$ and $\widetilde{S}_{OE}^{\max}(t)$.
For simplicity, we define $S_{OE_1}^{\max}(t) =  t\log_2 D_A D_B + \left(t - \frac{1}{D_A D_B}\right) \log_2 \left(\frac{D_{\min}^2-1}{tD_A D_B - 1}\right), \text{if } \frac{1}{D_A D_B} \leq t \leq \frac{D_{\min}}{D_{\max}}$ and $S_{OE_2}^{\max}(t) = t\log_2 \frac{D_{\min}^2}{t}, \text{if } \frac{D_{\min}}{D_{\max}} \leq t \leq 1$.
\begin{theorem}\label{theorem:property_maximum_entropy_part1}
    The maximum operator entanglement entropy $S_{OE_1}^{\max}(t)$ and its normalized version $\widetilde{S}_{OE_1}^{\max}(t) = \frac{1}{t} S_{OE_1}^{\max}(t) + \log_2 t$ are monotonically increasing and concave functions with respect to the purity $t \in \left[ \frac{1}{D_A D_B}, \frac{D_{\min}}{D_{\max}} \right]$. Their limits at the lower bound of purity are given by $\lim_{t \to \frac{1}{D_A D_B}} S_{OE_1}^{\max}(t) = \frac{\log_2 D_AD_B}{D_A D_B}$ and $\lim_{t \to \frac{1}{D_A D_B}} \widetilde{S}_{OE_1}^{\max}(t) = 0$.
\end{theorem}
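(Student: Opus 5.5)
The plan is to reparametrize both functions by a single variable in which they become elementary, and then read off monotonicity and concavity from the sign of the first and second derivatives. Write $N=D_AD_B$ and $K=D_{\min}^2-1$, and set
\begin{equation}
    u = tN-1 .
\end{equation}
Since $N\cdot D_{\min}/D_{\max}=D_{\min}^2$, the interval $t\in[\frac{1}{N},\frac{D_{\min}}{D_{\max}}]$ maps affinely (with positive slope $N$) onto $u\in[0,K]$. Monotonicity and concavity in $t$ are therefore equivalent to the same properties in $u$. In this variable,
\begin{equation}
    S_{OE_1}^{\max}=t\log_2 N+\frac{u}{N}\log_2\frac{K}{u},
    \qquad
    \widetilde S_{OE_1}^{\max}=\log_2(1+u)+\frac{u}{1+u}\log_2\frac{K}{u}.
\end{equation}
The second expression uses $\frac{S_{OE_1}^{\max}}{t}=\log_2 N+\frac{u}{tN}\log_2\frac{K}{u}$, the identity $tN=1+u$, and the relation $\log_2 N+\log_2 t=\log_2(1+u)$.

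For $S_{OE_1}^{\max}$, differentiate in $t$ using $du/dt=N$. This gives
\begin{equation}
    \frac{d S_{OE_1}^{\max}}{dt}=\log_2 N+\log_2\frac{K}{u}-\frac{1}{\ln 2},
    \qquad
    \frac{d^2 S_{OE_1}^{\max}}{dt^2}=-\frac{N}{u\ln 2}<0 .
\end{equation}
Concavity on the open interval follows at once. Because the first derivative is decreasing, its minimum is at the right endpoint $u=K$, where it equals $\log_2 N-1/\ln 2$. This is nonnegative as soon as $N\geq e$, which always holds for a nontrivial bipartition of qubits, since then $N\geq 4$. Hence $S_{OE_1}^{\max}$ is increasing.

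For $\widetilde S_{OE_1}^{\max}$, differentiate in $u$. The $1/\ln 2$ terms cancel, leaving
\begin{equation}
    \frac{d\widetilde S_{OE_1}^{\max}}{du}=\frac{1}{(1+u)^2}\log_2\frac{K}{u}\geq 0 \quad\text{on }(0,K],
\end{equation}
which gives monotonicity. This derivative is a product of two nonnegative factors, each decreasing in $u$: the first is $(1+u)^{-2}$ and the second is $\log_2(K/u)$, which stays $\geq 0$ because $u\le K$. The derivative is therefore decreasing, so $\widetilde S_{OE_1}^{\max}$ is concave in $u$, and hence in $t$.

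The limits follow from $u\log u\to 0$ as $u\to 0^+$. For the unnormalized function, $S_{OE_1}^{\max}\to \frac{1}{N}\log_2 N$. For the normalized one, $\widetilde S_{OE_1}^{\max}\to\log_2 1+0=0$.

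The only delicate point is the left endpoint $u=0$. There both first derivatives diverge to $+\infty$ and the second derivative of $S_{OE_1}^{\max}$ is unbounded. I will handle this by proving the derivative statements on the open interval $(0,K)$ and extending monotonicity and concavity to the closed interval by continuity of both functions at the endpoints, which follows from the limits just computed. I will also state the mild side condition $N\geq e$ explicitly where the sign of the derivative at the right endpoint is used.
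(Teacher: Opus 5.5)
Your proof is correct and follows essentially the same route as the paper: both compute the first and second derivatives explicitly (your substitution $u=tD_AD_B-1$ only rescales the paper's $t$-derivatives by $D_AD_B$), and both take the limits via $u\log u\to 0$. The differences are minor. You bound the first derivative of $S_{OE_1}^{\max}$ at the actual right endpoint, which gives $\log_2(D_AD_B/e)$, whereas the paper bounds it over $t\le 1$ and gets $\log_2((D_{\min}^2-1)/e)$. You get concavity of $\widetilde S_{OE_1}^{\max}$ by noting that its derivative is a product of two nonnegative decreasing factors, rather than computing the second derivative. Finally, you handle the singular left endpoint explicitly by continuity, which the paper does not address.
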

\begin{proof}
Taking the first derivative of $S_{OE_1}^{\max}(t)$ with respect to $t$, we have:
\begin{equation}
    \derivative{S_{OE_1}^{\max}(t)}{t}= \log_2 D_A D_B + \log_2 \left(\frac{D_{\min}^2 - 1}{t D_A D_B - 1}\right) - \frac{1}{\ln 2} = \frac{\ln \left(\frac{D_{\min}^2 - 1}{t - \frac{1}{D_A D_B}}\right) - 1}{\ln 2} = \log_2 \left(\frac{D_{\min}^2 - 1}{\left(t - \frac{1}{D_A D_B}\right)e}\right).
\end{equation}
For the normalized version $\widetilde{S}_{OE_1}^{\max}(t) = \frac{1}{t} S_{OE_1}^{\max}(t) + \log_2 t = \log_2\left(tD_AD_B\right) + \left(1 - \frac{1}{tD_A D_B}\right) \log_2 \left(\frac{D_{\min}^2-1}{tD_A D_B - 1}\right)$, we have:
\begin{equation}
    \begin{aligned}
        \derivative{\widetilde{S}_{OE_1}^{\max}(t)}{t} & = \frac{1}{t \ln 2} + \frac{1}{t^2 D_A D_B} \log_2 \left(\frac{D_{\min}^2-1}{tD_A D_B - 1}\right) - \left(1 - \frac{1}{tD_A D_B}\right) \frac{D_A D_B}{\left(tD_A D_B - 1\right) \ln 2} \\
         & = \frac{1}{t \ln 2} + \frac{1}{t^2 D_A D_B} \log_2 \left(\frac{D_{\min}^2-1}{tD_A D_B - 1}\right) - \frac{1}{t \ln 2} \\
         & = \frac{1}{t^2 D_A D_B} \log_2 \left(\frac{D_{\min}^2-1}{tD_A D_B - 1}\right).
    \end{aligned}
\end{equation}

Since $t \in [\frac{1}{D_A D_B}, 1]$, we have $\frac{D_{\min}^2 - 1}{\left(t - \frac{1}{D_A D_B}\right)e} \geq \frac{D_{\min}^2 - 1}{e}$.
When $D_{\min} \geq 2$, we have $\frac{D_{\min}^2 - 1}{e} > 1$, thus $\derivative{S_{OE_1}^{\max}(t)}{t} > 0$ and $\derivative{\widetilde{S}_{OE_1}^{\max}(t)}{t} > 0$ for $t \in [\frac{1}{D_A D_B}, 1]$, which means that $S_{OE_1}^{\max}(t)$ and $\widetilde{S}_{OE_1}^{\max}(t)$ are monotonically increasing functions with respect to $t$.

Taking the second derivative of $S_{OE_1}^{\max}(t)$ and $\widetilde{S}_{OE_1}^{\max}(t)$ with respect to $t$, we have:
\begin{equation}
    \frac{d^2 S_{OE_1}^{\max}(t)}{d t^2} = -\frac{D_A D_B}{(t D_A D_B - 1) \ln 2} = -\frac{1}{\left(t - \frac{1}{D_A D_B}\right) \ln 2} < 0,
\end{equation}
and
\begin{equation}
    \frac{d^2 \widetilde{S}_{OE_1}^{\max}(t)}{d t^2} = -\frac{2}{t^3 D_A D_B} \log_2 \left(\frac{D_{\min}^2-1}{tD_A D_B - 1}\right) - \frac{1}{t^2 \left(tD_A D_B - 1\right) \ln 2} < 0,
\end{equation}
where the first term is negative since $\frac{D_{\min}^2-1}{tD_A D_B - 1} \geq 1$, with $t\leq \frac{D_{\min}}{D_{\max}}$.

Thus, $S_{OE_1}^{\max}(t)$ and $\widetilde{S}_{OE_1}^{\max}(t)$ are concave functions with respect to $t$.

Finally, we calculate the limits of $S_{OE_1}^{\max}(t)$ and $\widetilde{S}_{OE_1}^{\max}(t)$ at the lower bound of purity.
Let $x = t- \frac{1}{D_A D_B}$, we have:
\begin{equation}
    \lim_{t \to \frac{1}{D_A D_B}} S_{OE_1}^{\max}(t) = \lim_{x \to 0} \left[\left(x + \frac{1}{D_A D_B}\right) \log_2 D_A D_B + x \log_2 \left(\frac{D_{\min}^2-1}{x D_A D_B}\right) \right] = \frac{\log_2 D_A D_B}{D_A D_B} - \lim_{x \to 0} x \log_2 x = \frac{\log_2 D_A D_B}{D_A D_B},
\end{equation}
and
\begin{equation}
    \lim_{t \to \frac{1}{D_A D_B}} \widetilde{S}_{OE_1}^{\max}(t) = \lim_{x \to 0} \frac{1}{x + \frac{1}{D_A D_B}} S_{OE_1}^{\max}\left(x + \frac{1}{D_A D_B}\right) + \log_2 \left(x + \frac{1}{D_A D_B}\right) = 0.
\end{equation}

\end{proof}

\begin{theorem}\label{theorem:property_maximum_entropy_part2}
    The maximum operator entanglement entropy $S_{OE_2}^{\max}(t)$ and its normalized version $\widetilde{S}_{OE_2}^{\max}(t) = \frac{1}{t} S_{OE_2}^{\max}(t) + \log_2 t$ are monotonically increasing and concave functions with respect to the purity $t \in \left[ \frac{D_{\min}}{D_{\max}}, 1 \right]$.
\end{theorem}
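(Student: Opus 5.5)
We verify the claim by direct differentiation, as in the proof of \cref{theorem:property_maximum_entropy_part1}, on the interval $t\in\left[\frac{D_{\min}}{D_{\max}},1\right]$.

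For the unnormalized bound, write $S_{OE_2}^{\max}(t)=t\log_2 D_{\min}^2 - t\log_2 t$. Then
\begin{equation}
    \derivative{S_{OE_2}^{\max}(t)}{t}=\log_2 D_{\min}^2-\log_2 t-\frac{1}{\ln 2}=\log_2\left(\frac{D_{\min}^2}{e\,t}\right),
    \qquad
    \frac{d^2 S_{OE_2}^{\max}(t)}{dt^2}=-\frac{1}{t\ln 2}<0 .
\end{equation}
Concavity is immediate.

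Monotonicity requires $D_{\min}^2> e\,t$ on the interval. Since $t\leq 1$, it suffices that $D_{\min}^2>e$. This holds for every nontrivial cut, because $D_{\min}\geq 2$ gives $D_{\min}^2\geq 4>e$. This is the only place a dimensional assumption enters, namely $D_{\min}\geq2$ (at least one qubit on each side), exactly as in \cref{theorem:property_maximum_entropy_part1}. It is the one point that needs care, since for $D_{\min}=1$ the derivative would be negative.

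For the normalized bound, substitute into $\widetilde S_{OE_2}^{\max}(t)=\frac1t S_{OE_2}^{\max}(t)+\log_2 t$:
\begin{equation}
    \widetilde S_{OE_2}^{\max}(t)=\log_2\frac{D_{\min}^2}{t}+\log_2 t=\log_2 D_{\min}^2 .
\end{equation}
This is constant in $t$. It is therefore monotonically non-decreasing and concave, both in the weak sense, with first and second derivatives identically zero.

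The statement for $\widetilde S_{OE_2}^{\max}$ thus holds in the non-strict sense. This is consistent with \cref{theorem:maximum_entropy}, where the third case already reads $\widetilde S_{OE}\leq\log_2 D_{\min}^2$. As a consistency check, at $t=\frac{D_{\min}}{D_{\max}}$ the value $\widetilde S_{OE_1}^{\max}$ from the first branch should not exceed $\log_2 D_{\min}^2$, so the piecewise bound does not jump upward across the branches.
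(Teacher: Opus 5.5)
Your proof is correct and follows the paper's argument essentially verbatim: differentiate $t\log_2(D_{\min}^2/t)$ to get $\log_2\bigl(D_{\min}^2/(et)\bigr)>0$ using $t\le 1$ and $D_{\min}\ge 2$, note the second derivative $-1/(t\ln 2)<0$, and observe that the normalized bound collapses to the constant $\log_2 D_{\min}^2$, which is monotone and concave in the weak sense. Two minor asides: for $D_{\min}=1$ the derivative is negative only for $t>1/e$ (which still breaks monotonicity), and your closing consistency check holds with equality, since at $t=D_{\min}/D_{\max}$ both branches of $\widetilde S^{\max}_{OE}$ equal $\log_2 D_{\min}^2$.
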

\begin{proof}
Taking the first derivative of $S_{OE_2}^{\max}(t)$ with respect to $t$, we have:
\begin{equation}
    \derivative{S_{OE_2}^{\max}(t)}{t} = \log_2 \frac{D_{\min}^2}{t} - \frac{1}{\ln 2} = \log_2 \frac{D_{\min}^2}{t e}.
\end{equation}
Since $t \in [\frac{D_{\min}}{D_{\max}}, 1]$, we have $\frac{D_{\min}^2}{t e} \geq \frac{D_{\min}^2 }{ e}$.
When $D_{\min} \geq 2$, we have $\frac{D_{\min}^2 }{ e} > 1$, thus $\derivative{S_{OE_2}^{\max}(t)}{t} > 0$ for $t \in [\frac{D_{\min}}{D_{\max}}, 1]$, which means that $S_{OE_2}^{\max}(t)$ is a monotonically increasing function with respect to $t$.

Taking the second derivative of $S_{OE_2}^{\max}(t)$ with respect to $t$, we have:
\begin{equation}
    \frac{d^2 S_{OE_2}^{\max}(t)}{d t^2} = -\frac{1}{t \ln 2} < 0.
\end{equation}
Thus, $S_{OE_2}^{\max}(t)$ is a concave function with respect to $t$.

For the normalized version $\widetilde{S}_{OE_2}^{\max}(t) = \frac{1}{t} S_{OE_2}^{\max}(t) + \log_2 t = \log_2 D_{\min}^2$, it is a constant function with respect to $t$, thus it is also monotonically increasing and concave with respect to $t$.
\end{proof}

\subsection{Operator entanglement entropy with tail control}

Eckart-Young theorem~\cite{eckart1936approximation} states that the best rank-$r$ approximation of a matrix $M$ in terms of the Frobenius norm is given by the sum of the first $r$ singular values and corresponding singular vectors of $M$.
Similarly result is also true for operator Schmidt decomposition, as shown by the following lemma.
\begin{lemma}[Eckart-Young theorem]\label{lemma:operator_schmidt_decomposition_eckart_young}
    For a bipartite operator $\rho$ with operator Schmidt decomposition $\rho = \sum_{i=1}^{\chi} \lambda_i \tau_i^{[A]} \otimes \tau_i^{[B]}$,with Schmidt coefficients $\{\lambda_i\}_{i=1}^{\chi}$ sorted in non-increasing order, the best rank-$r$ approximation of $\rho$ in terms of the Frobenius norm is given by:
    \begin{equation}
        \rho_r = \sum_{i=1}^{r} \lambda_i \tau_i^{[A]} \otimes \tau_i^{[B]},
    \end{equation}
and the deviation between $\rho$ and $\rho_r$ is the minimum among all rank-$r$ approximations of $\rho$, which can be expressed as:
\begin{equation}
    \min_{\text{Schmidt rank}(\sigma) \leq r} \norm{\rho - \sigma}_2 = \norm{\rho - \rho_r}_2 = \sqrt{\sum_{i=r+1}^{\chi} \lambda_i^2}.
\end{equation}
\end{lemma}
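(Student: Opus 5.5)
The plan is to reduce the statement to the ordinary matrix Eckart--Young theorem through vectorization (operator--state correspondence), which carries the Hilbert--Schmidt geometry of operators on $\mathcal H_A\otimes\mathcal H_B$ to the Frobenius geometry of matrices indexed by (operator basis of $A$) $\times$ (operator basis of $B$).

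\textbf{Step 1: define the realignment map.} Fix Hilbert--Schmidt orthonormal bases $\{E_a\}$ of $\mathcal B(\mathcal H_A)$ and $\{F_b\}$ of $\mathcal B(\mathcal H_B)$. Every operator $X$ on $\mathcal H_A\otimes\mathcal H_B$ expands uniquely as
\[
X=\sum_{a,b}M_{ab}(X)\,E_a\otimes F_b,
\qquad M_{ab}(X)=\tr\{(E_a\otimes F_b)^\dagger X\}.
\]
Since $\{E_a\otimes F_b\}$ is orthonormal, the linear map $X\mapsto M(X)$ is an isometry from the Hilbert--Schmidt norm to the Frobenius norm:
\[
\norm{X}_2=\norm{M(X)}_F .
\]

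\textbf{Step 2: match the two decompositions.} The map $X\mapsto M(X)$ sends operator Schmidt decompositions to singular value decompositions. Writing $\tau_i^{[A]}=\sum_a u_{ai}E_a$ and $\tau_i^{[B]}=\sum_b v_{bi}F_b$, orthonormality of the $\tau$'s is equivalent to orthonormality of the columns $u_i$, $v_i$. We then get $M(\rho)=\sum_i\lambda_i u_i v_i^{T}$, which is an SVD up to conjugation conventions; one uses $\bar v_i$ as needed, and with the paper's definition $\tr\{\tau_i\tau_j\}=\delta_{ij}$ one must be careful whether the inner product carries a dagger. So the $\lambda_i$ are the singular values of $M(\rho)$.

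Moreover, the operator Schmidt rank of any $\sigma$ equals $\operatorname{rank}M(\sigma)$. The reason is that $\sigma=\sum_{k\le r}A_k\otimes B_k$ if and only if $M(\sigma)=\sum_{k\le r}\alpha_k\beta_k^{T}$, where $\alpha_k,\beta_k$ are the coordinate vectors of $A_k,B_k$.

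\textbf{Step 3: transfer the matrix statement.} The matrix Eckart--Young theorem says that $\min_{\operatorname{rank}N\le r}\norm{M(\rho)-N}_F$ is attained at the truncated SVD and equals $\sqrt{\sum_{i>r}\lambda_i^2}$. Because $M$ is a linear bijective isometry that preserves rank, minimizing over Schmidt rank $\le r$ is the same as minimizing over matrix rank $\le r$. The minimizer pulls back to $\rho_r=\sum_{i\le r}\lambda_i\tau_i^{[A]}\otimes\tau_i^{[B]}$. The value $\norm{\rho-\rho_r}_2^2=\sum_{i>r}\lambda_i^2$ also follows directly from Pythagoras, since the terms $\tau_i^{[A]}\otimes\tau_i^{[B]}$ are orthonormal.

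\textbf{Main obstacle.} The only delicate point is bookkeeping: making sure the conjugation conventions in Step 2 match, so that the coordinate vectors are genuinely orthonormal and the realigned matrix is exactly an SVD. For Hermitian $\rho$ with Hermitian bases, such as Pauli bases, everything is real and the issue disappears.

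If a self-contained argument is preferred, one can instead prove the matrix Eckart--Young inequality directly. For rank-$r$ $N$, the rank--nullity theorem gives a unit vector $w$ in $\ker N\cap\operatorname{span}\{v_1,\dots,v_{r+1}\}$. This yields Weyl's inequality $s_{i+r}(M)\le s_i(M-N)$, and summing squares gives $\norm{M-N}_F^2\ge\sum_{i>r}\lambda_i^2$.
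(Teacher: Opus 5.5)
Your proposal is correct and follows essentially the same route as the paper: expand in a product Hilbert--Schmidt orthonormal operator basis, note that the coefficient map is a Hilbert--Schmidt-to-Frobenius isometry that sends operator Schmidt rank to matrix rank and Schmidt coefficients to singular values, and then invoke the matrix Eckart--Young theorem. You are more careful than the paper about complex coefficients and conjugation conventions (the paper writes $(M_{ij}-N_{ij})^2$ rather than $|M_{ij}-N_{ij}|^2$). The one quibble is in your optional self-contained sketch: the intersection $\ker N\cap\operatorname{span}\{v_1,\dots,v_{r+1}\}$ yields only the $i=1$ case of $s_{i+r}(M)\le s_i(M-N)$, and for general $i$ you need $\operatorname{span}\{v_1,\dots,v_{i+r}\}$, further intersected with the orthogonal complement of the top $i-1$ right singular vectors of $M-N$.
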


\begin{proof}
    Taking normalized orthogonal bases $\{A_i\}_{i=1}^{D_A}$ and 
    $\{B_i\}_{i=1}^{D_B}$ for the subsystems $A$ and $B$, we can express $\rho$ and $\sigma$ in terms of these bases as:
\begin{equation}
    \begin{aligned}
            \rho = \sum_{i=1}^{D_A} \sum_{j=1}^{D_B} M_{ij} A_i \otimes B_j, \\
            \sigma = \sum_{i=1}^{D_A} \sum_{j=1}^{D_B} N_{ij} A_i \otimes B_j.
    \end{aligned}
\end{equation}
Denoting the coefficient matrices as $M$ and $N$, we have:
\begin{equation}
    \norm{\rho - \sigma}_2^2 = \tr{(\rho - \sigma)^2} = \sum_{i=1}^{D_A} \sum_{j=1}^{D_B} (M_{ij} - N_{ij})^2 = \norm{M - N}_F^2,
\end{equation}
where $\norm{\cdot}_F$ is the Frobenius norm.

On the other hand, the operator Schmidt rank of $\rho$ is equal to the rank of the coefficient matrix $M$, and the schmidt coefficients $\lambda_i$ are equal to the singular values of $M$.
Thus, the problem of finding the best rank-$r$ approximation of $\rho$ in terms of the Frobenius norm is equivalent to finding the best rank-$r$ approximation of the matrix $M$ in terms of the Frobenius norm:
\begin{equation}
    \min_{\text{Schmidt rank}(\sigma) \leq r} \norm{\rho - \sigma}_2 = \min_{\text{rank}(N) \leq r} \norm{M - N}_F,
\end{equation}
where the right side is the original Eckart-Young theorem, which states that the best rank-$r$ approximation of a matrix $M$ in terms of the Frobenius norm is given by the sum of the first $r$ singular values and corresponding singular vectors of $M$.
Thus, we have:
\begin{equation}
    \min_{\text{Schmidt rank}(\sigma) \leq r} \norm{\rho - \sigma}_2 = \norm{M - M_r}_F = \sqrt{\sum_{i=r+1}^{\chi} \lambda_i^2},
\end{equation}
where $M_r$ is the best rank-$r$ approximation of $M$ given by the sum of the first $r$ singular values and corresponding singular vectors of $M$, which corresponds to the operator $\rho_r$ given by the sum of the first $r$ Schmidt coefficients and corresponding Schmidt operators of $\rho$.

\end{proof}

\begin{lemma}\label{lemma:entropy_tail_control}
    For any bipartite state $\rho$ with Schmidt coefficients $\{\lambda_i\}_{i=1}^{\chi}$ sorted in non-increasing order, and any $r \in \{1, \ldots, \chi\}$, suppose the purity is $t = \tr{\rho^2}=\sum_{i=1}^{\chi} \lambda_i^2$ and the tail is controlled by $p = \sum_{i=r+1}^{\chi} \lambda_i^2$, then we have:
    \begin{equation}
        S_{OE}(\rho) \leq (t-p) \log_2 r - (t-p) \log_2 (t-p) + p \log_2 (D_{\min}^2) - p \log_2 p,
    \end{equation}
    and its normalized version:
    \begin{equation}
        \widetilde{S}_{OE}(\rho) \leq \left( 1 - \delta \right) \log_2 r - \left(1-\delta\right) \log_2 (1-\delta) - \delta\log_2 \delta + \delta \log_2 (D_{\min}^2),
    \end{equation}
    where $\delta = \frac{p}{t}$ is the normalized tail.
\end{lemma}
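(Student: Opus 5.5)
Split the Schmidt weights $p_i=\lambda_i^2$ into a head $\{1,\dots,r\}$ with mass $t-p$ and a tail $\{r+1,\dots,\chi\}$ with mass $p$, and maximize the entropy of each block separately under its mass constraint. Write
\begin{equation*}
    S_{OE}(\rho) = -\sum_{i=1}^{r} p_i\log_2 p_i \;-\; \sum_{i=r+1}^{\chi} p_i\log_2 p_i .
\end{equation*}
For the head, factor out the mass. With $q_i=p_i/(t-p)$, a probability vector of length $r$, we get
\begin{equation*}
    -\sum_{i\le r}p_i\log_2 p_i=(t-p)H(q)-(t-p)\log_2(t-p),
\end{equation*}
and $H(q)\le\log_2 r$, which is the maximum entropy of $r$ outcomes. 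This gives the first two terms.

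For the tail, do the same with $q'_i=p_i/p$. The contribution becomes $pH(q')-p\log_2 p$. Since the operator Schmidt rank $\chi$ is at most $D_{\min}^2$ (the operator spaces have dimensions $D_A^2$ and $D_B^2$), the tail has at most $D_{\min}^2-r\le D_{\min}^2$ entries. Hence $H(q')\le\log_2 D_{\min}^2$, which yields the last two terms. The degenerate cases $p=0$ and $p=t$ follow with the convention $0\log 0=0$.

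For the normalized version, apply \cref{lemma:normalized_oe_entropy}: $\widetilde S_{OE}=S_{OE}/t+\log_2 t$. Divide the bound by $t$ and substitute $p=\delta t$. The terms $-(1-\delta)\log_2((1-\delta)t)-\delta\log_2(\delta t)$ contribute $-\log_2 t$ in total, which cancels the $+\log_2 t$, and we are left with exactly the stated expression. Equivalently, one can rerun the same block-splitting argument directly on the normalized weights $p_i/t$, which sum to $1$.

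The only step needing care is justifying the tail cardinality bound $\chi\le D_{\min}^2$. This holds because the Schmidt coefficients are the singular values of the $D_A^2\times D_B^2$ coefficient matrix, as in the proof of \cref{lemma:operator_schmidt_decomposition_eckart_young}. The rest is routine. The bound could be sharpened to $\log_2(D_{\min}^2-r)$, but the stated form does not need it.
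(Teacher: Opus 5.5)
Your proof is correct and follows essentially the same route as the paper. Both split the weights into a head $\{1,\dots,r\}$ of mass $t-p$ and a tail of mass $p$, bound each block by its uniform (maximum-entropy) value, use $\chi\le D_{\min}^2$ for the tail, and get the normalized bound from $\widetilde S_{OE}=S_{OE}/t+\log_2 t$. The one difference is that the paper finds the uniform extremizers through a Lagrange-multiplier stationarity computation. You instead invoke $H(q)\le\log_2(\text{support size})$ directly, which gives global optimality without a separate concavity argument.
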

\begin{proof}
    To maximize the operator entanglement entropy $S_{OE}(\rho) = -\sum_{i=1}^{\chi} \lambda_i^2 \log_2 \lambda_i^2$ under the constraints $\sum_{i=1}^{\chi} \lambda_i^2 = t$ and $\sum_{i=r+1}^{\chi} \lambda_i^2 = p$, we can use the method of Lagrange multipliers.
    Without loss of generality, we assume that $\lambda_1 \geq \lambda_i$ for all $i$, and thus we only need to optimize over the variables $\{\lambda_i\}_{i=1}^{r}$ with the constraints $\sum_{i=1}^{r} \lambda_i^2 = t - p$ and $\lambda_i^2 \geq 0$ for $i = 1, \ldots, r$.
    The Lagrangian function is given by:
    \begin{equation}
        \mathcal{L} = -\sum_{i=1}^{\chi} \lambda_i^2 \log_2 \lambda_i^2 + \alpha \left(\sum_{i=1}^{r} \lambda_i^2 - t + p \right) + \beta \left(\sum_{i=r+1}^{\chi} \lambda_i^2 - p \right),
    \end{equation}
    where $\alpha$ and $\beta$ are the Lagrange multipliers.
    Taking the partial derivatives of $\mathcal{L}$ with respect to $\lambda_i$ and setting them to zero gives:
    \begin{equation}
        \frac{\partial \mathcal{L}}{\partial \lambda_i} = -2 \lambda_i \log_2 \lambda_i^2 - \frac{2}{\ln 2} \lambda_i + 2 \alpha \lambda_i = 0, \quad \text{for } i = 1, \ldots, r,
    \end{equation}
    and
    \begin{equation}
        \frac{\partial \mathcal{L}}{\partial \lambda_i} = -2 \lambda_i \log_2 \lambda_i^2 - \frac{2}{\ln 2} \lambda_i + 2 \beta \lambda_i = 0, \quad \text{for } i = r+1, \ldots, \chi.
    \end{equation}
    Solving these equations, we find that the optimal $\lambda_i$ take the form:
    \begin{equation}
        \begin{aligned}
                & \lambda_i^2 = 2^{\alpha - \frac{1}{\ln 2}} \  \text{for } i = 1, \ldots, r, \\
                & \lambda_i^2 = 2^{\beta - \frac{1}{\ln 2}} \  \text{for } i = r+1, \ldots, \chi.
        \end{aligned}
    \end{equation}
    Thus, we have:
    \begin{equation}
        r \cdot 2^{\alpha - \frac{1}{\ln 2}} = t - p, \quad (\chi - r) \cdot 2^{\beta - \frac{1}{\ln 2}} = p.
    \end{equation}
    Substituting these optimal values back into the expression for $S_{OE}(\rho)$, we obtain:
    \begin{equation}
        S_{OE}(\rho) = -(t-p) \log_2 \frac{t-p}{r} - p \log_2 \frac{p}{\chi - r} \leq (t-p) \log_2 r - (t-p) \log_2 (t-p) + p \log_2 (D_{\min}^2) - p \log_2 p,
    \end{equation}
    where we have used the fact that $\chi \leq D_{\min}^2$.

    Using the relation between $S_{OE}(\rho)$ and $\widetilde{S}_{OE}(\rho)$ given in \cref{lemma:normalized_oe_entropy}, we can also obtain the upper bound on $\widetilde{S}_{OE}(\rho)$:
    \begin{equation}
        \begin{aligned}
            \widetilde{S}_{OE}(\rho) = \frac{1}{t} S_{OE}(\rho) + \log_2 t &\leq \frac{t-p}{t} \log_2 r - \frac{t-p}{t} \log_2 (t-p) + \frac{p}{t} \log_2 (D_{\min}^2) - \frac{p}{t} \log_2 p + \log_2 t\\
            & = \left( 1 - \frac{p}{t} \right) \log_2 r - \frac{t-p}{t} \log_2 (\frac{t-p}{t}) - \frac{p}{t} \log_2 \left(\frac{p}{t}\right) + \frac{p}{t} \log_2 (D_{\min}^2) \\
            & = \left( 1 - \delta \right) \log_2 r - \left(1-\delta\right) \log_2 (1-\delta) - \delta\log_2 \delta + \delta \log_2 (D_{\min}^2),
        \end{aligned}
    \end{equation}
    where $\delta = \frac{p}{t}$ is the fraction of the tail in the total purity.
\end{proof}

\begin{lemma}\label{lemma:monotonicity_of_F}
    Let $F(t,p,D_{\min}) = (t-p) \log_2 r - (t-p) \log_2 (t-p) + p \log_2 (D_{\min}^2) - p \log_2 p$ be the upper bound on the operator entanglement entropy given in \cref{lemma:entropy_tail_control}.
    If $r \geq 3$, then $F$ is a monotonically increasing function with respect to $D_{\min}$, $t \in [0,1]$ and $p \in [0,\frac{t}{2}]$.
\end{lemma}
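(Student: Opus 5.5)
The plan is to treat $F$ as a smooth function of the three variables on the interior of the stated domain, show that each partial derivative is nonnegative there, and handle the boundary by continuity. Throughout I use that $r$ is a truncation index of the operator Schmidt decomposition, so $3\le r\le\chi\le D_{\min}^2$ as in \cref{lemma:entropy_tail_control}. I also use the convention $0\log_2 0=0$, under which $F$ is continuous on $0\le p\le t/2$, $t\le 1$.

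First, for $D_{\min}$ I treat $D_{\min}$ as a continuous variable; only the term $p\log_2 D_{\min}^2$ depends on it. Its derivative is
\begin{equation}
\frac{\partial F}{\partial D_{\min}}=\frac{2p}{D_{\min}\ln 2}\ge 0 .
\end{equation}

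Second, for $t$ at fixed $p$, differentiating $(t-p)\log_2 r-(t-p)\log_2(t-p)$ gives
\begin{equation}
\frac{\partial F}{\partial t}=\log_2 r-\log_2(t-p)-\frac{1}{\ln 2}=\log_2\frac{r}{e\,(t-p)} .
\end{equation}
Since $0<t-p\le t\le 1$ and $r\ge 3>e$, the argument of the logarithm exceeds $1$. This is exactly where the hypothesis $r\ge3$ enters.

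Third, for $p$ at fixed $t$, the $\pm 1/\ln 2$ contributions from the two entropy-like terms cancel, leaving
\begin{equation}
\frac{\partial F}{\partial p}=\log_2\frac{D_{\min}^2\,(t-p)}{r\,p}.
\end{equation}
On $p\in(0,t/2]$ we have $t-p\ge p$, so the ratio is at least $D_{\min}^2/r\ge1$, because $r\le\chi\le D_{\min}^2$.

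Finally, I extend to the endpoints $p=0$ and $t=p$. At $p=0$ the derivative in $p$ tends to $+\infty$, and continuity of $F$ at the boundary preserves monotonicity.

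The main obstacle is the $p$-derivative. Its sign is not automatic: it relies both on the restriction $p\le t/2$ and on the implicit constraint $r\le D_{\min}^2$. I would state that constraint explicitly, inherited from the setting of \cref{lemma:entropy_tail_control}, rather than leave it hidden.
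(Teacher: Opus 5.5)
Your proposal is correct and follows the paper's proof essentially line by line: it uses the same three partial derivatives, with $r\ge 3>e$ giving $\partial_t F>0$, and $p\le t-p$ together with $r\le D_{\min}^2$ giving $\partial_p F\ge 0$. Your explicit statement of the inherited constraint $r\le\chi\le D_{\min}^2$ and your boundary/continuity remark are slightly more careful than the paper's version, which uses that constraint without stating it and writes $\partial_{D_{\min}}F>0$ even though it vanishes at $p=0$.
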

\begin{proof}
    It is straightforward to verify that $\frac{\partial F}{\partial D_{\min}} = \frac{2p}{D_{\min} \ln 2} > 0$, thus $F$ is a monotonically increasing function with respect to $D_{\min}$.
    For $r \geq 3$, we also have $\frac{\partial F}{\partial t} = \log_2 r - \log_2 (t-p) - \frac{1}{\ln 2} > 0$.
    thus $F$ is a monotonically increasing function with respect to $t$.
    Finally, we have $\frac{\partial F}{\partial p} = -\log_2 r + \log_2 (t-p) + \log_2 (D_{\min}^2) - \log_2 p = \log_2 \frac{(t-p)D_{\min}^2}{rp}$. 
    Since $r \leq D_{\min}^2$ and $p \leq t-p$, we have $\frac{(t-p)D_{\min}^2}{rp} \geq 1$, thus $\frac{\partial F}{\partial p} \geq 0$, which means that $F$ is a monotonically increasing function with respect to $p$.
\end{proof}

\begin{corollary}\label{cor:l2_distance_rank_r_approximation}
    For any $n$-qubit bipartite state $\rho$ and $\sigma_r$, if the Schmidt rank of $\sigma_r$ is $r \geq 3$, and the squared $l_2$ distance between $\rho$ and $\sigma_r$ is less than $p\in [0,\frac{1}{2}]$, i.e., $\norm{\rho - \sigma_r}_2^2 \leq p$, then we have the following upper bound on the operator entanglement entropy of $\rho$:
    \begin{equation}
        S_{OE}(\rho) \leq (1-p) \log_2 r - (1-p) \log_2 (1-p) + np - p \log_2 p.
    \end{equation}
\end{corollary}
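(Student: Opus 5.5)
This corollary should follow from \cref{lemma:operator_schmidt_decomposition_eckart_young}, \cref{lemma:entropy_tail_control} and \cref{lemma:monotonicity_of_F}. The idea is to replace the given rank-$r$ approximant $\sigma_r$ by the optimal one, read off a tail bound, and then push the parameters to their extremes using monotonicity.

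\textbf{Step 1 (tail bound).} Let $\{\lambda_i\}$ be the ordered Schmidt coefficients of $\rho$, and let $\rho_r$ be its ordered rank-$r$ truncation. Since $\sigma_r$ has Schmidt rank $r$, \cref{lemma:operator_schmidt_decomposition_eckart_young} gives
\begin{equation}
    p' := \sum_{i>r}\lambda_i^2 = \norm{\rho-\rho_r}_2^2 \leq \norm{\rho-\sigma_r}_2^2 \leq p .
\end{equation}
If the Schmidt rank of $\rho$ is at most $r$, then $p'=0$. This case is covered by the same formula with the convention $0\log_2 0=0$, or directly, since then $S_{OE}(\rho)\leq t\log_2(r/t)$.

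\textbf{Step 2 (tail-controlled entropy).} Apply \cref{lemma:entropy_tail_control} with the actual purity $t=\tr{\rho^2}\leq 1$ and the actual tail $p'$. This yields
\begin{equation}
    S_{OE}(\rho)\leq F(t,p',D_{\min}).
\end{equation}
For an $n$-qubit system we have $D_{\min}^2\leq 2^{n}$, because $D_{\min}\leq 2^{n/2}$. Hence $p'\log_2 D_{\min}^2\leq np'$, and by the $D_{\min}$-monotonicity in \cref{lemma:monotonicity_of_F} we may replace $D_{\min}^2$ by $2^n$.

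\textbf{Step 3 (monotonicity in $t$ and $p'$).} This is the main obstacle. Applying \cref{lemma:monotonicity_of_F} (valid since $r\geq 3$), increasing $t$ to $1$ and $p'$ to $p$ only increases $F$, which gives exactly the claimed bound.

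The lemma's monotonicity in $p'$ is stated on $[0,t/2]$. We need $p'\leq p\leq 1/2$ to lie in the admissible range along the path. To arrange this, first raise $t$ to $1$ with $p'$ fixed. This is allowed because $\partial_t F=\log_2\frac{r}{(t-p')e}>0$ for $r\geq3$, which holds for any $p'$. Then raise $p'$ to $p$ with $t=1$, where the admissible range is $[0,1/2]\ni p$. So performing the steps in this order, $t$ first and then $p$, avoids the issue.

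On the second leg, $\partial_p F=\log_2\frac{(1-p)2^n}{rp}\geq 0$ requires $r\leq 2^n$ and $p\leq 1-p$. Both hold: $r\leq D_{\min}^2\leq 2^n$, and $p\leq 1/2$. This completes the chain
\begin{equation}
    S_{OE}(\rho)\leq F(t,p',D_{\min})\leq F(1,p',2^{n/2})\leq F(1,p,2^{n/2}),
\end{equation}
whose right-hand side is exactly $(1-p)\log_2 r-(1-p)\log_2(1-p)+np-p\log_2 p$.
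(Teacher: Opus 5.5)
Your proposal is correct and follows the paper's proof essentially step for step: you use Eckart--Young to bound the tail by $p$, then the tail-controlled bound $S_{OE}(\rho)\leq F(t,p',D_{\min})$, then monotonicity to push $t\to 1$ and $D_{\min}^2\to 2^n$, and finally raise the tail from $p'$ to $p$, in the same order as the paper. Two points the paper passes over silently are made precise in your version: you check that $\partial_t F=\log_2\frac{r}{(t-p')e}>0$ for every tail $p'$, so raising $t$ first is legitimate even when $p'>t/2$, and you handle the case where $\rho$ itself has Schmidt rank at most $r$.
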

\begin{proof}
    By \cref{lemma:operator_schmidt_decomposition_eckart_young}, assuming the Schmidt coefficients of $\rho$ are sorted in non-increasing order as $\lambda_1 \geq \lambda_2 \geq \ldots \geq \lambda_{\chi}$, we have $\sum_{i=r+1}^{\chi} \lambda_i^2 =\min_{\text{Schmidt rank}(\sigma) \leq r} \norm{\rho - \sigma}_2^2 \leq \norm{\rho - \sigma_r}_2^2 \leq p$.
    Thus, by \cref{lemma:entropy_tail_control}, we have $S_{OE}(\rho) \leq F(t,\sum_{i=1}^{\chi} \lambda_i^2,D_{\min})$.

    By \cref{lemma:monotonicity_of_F} and $r \geq 3$, we have $F(t,\sum_{i=r+1}^{\chi} \lambda_i^2,D_{\min}) \leq F(1,\sum_{i=r+1}^{\chi} \lambda_i^2, \sqrt{2^n})$.
    If $p \leq \frac{1}{2}$, we have $F(1,\sum_{i=r+1}^{\chi} \lambda_i^2, \sqrt{2^n}) \leq F(1,p, \sqrt{2^n}) = (1-p) \log_2 r - (1-p) \log_2 (1-p) + np - p \log_2 p$, which completes the proof.
\end{proof}

\subsection{Operator entanglement and mutual information}

When $\rho$ is a pure state, we have 
\begin{equation}
    \widetilde{S}_{OE}(\rho) = 2S(A)_{\rho},
\end{equation}
where $S(A)_{\rho} = -\tr{\rho_A \log \rho_A}$ is the von Neumann entropy of the reduced density matrix $\rho_A = \tr_B \{\rho\}$. 
We now generalize this relation to mixed states, obtaining the following inequality:
\begin{theorem}\label{thm::eqn::OEandMutualInfo}
    For any bipartite state $\rho$ of system $AB$ with purity $t = \tr{\rho^2}$, the normalized operator entanglement entropy $\widetilde{S}_{OE}(\rho)$ satisfies
    \begin{equation}
        \widetilde{S}_{OE}(\rho) \geq I(A:B)_{\rho^2/t}.
    \end{equation}
\end{theorem}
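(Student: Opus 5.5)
The plan is to pass to the vectorized (doubled) picture, where $\widetilde{S}_{OE}(\rho)$ becomes an ordinary entanglement entropy of a pure state, and then to obtain the inequality from strong subadditivity. Introduce copies $A',B'$ of $A,B$ and define the normalized vector
\begin{equation}
    \ket{\psi}=\frac{1}{\sqrt t}\sum_{i,j}\rho_{ij}\ket{i}_{AB}\ket{j}_{A'B'},
\end{equation}
where $i$ runs over a product basis of $AB$ and $j$ over the matching basis of $A'B'$. By construction $\braket{\psi}{\psi}=\tr{\rho^2}/t=1$. The operator Schmidt decomposition of $\rho$ across $A\mid B$ is exactly the ordinary Schmidt decomposition of $\ket{\psi}$ across $AA'\mid BB'$, with squared coefficients $\lambda_\alpha^2/t$. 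This is the identification already stated in the preliminaries, and it gives $\widetilde{S}_{OE}(\rho)=S(AA')_\psi=S(BB')_\psi$.

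Next I will compute the marginals of $\ket{\psi}$ on the "row" and "column" systems.
\begin{itemize}
\item Tracing out $A'B'$ gives $\psi_{AB}=\rho\rho^\dagger/t=\rho^2/t=:\sigma$. So the pure state $\ket{\psi}$ purifies $\sigma$, and hence $S(AB)_\psi=S(A'B')_\psi=S(AB)_\sigma$.
\item Tracing out $AB$ gives $\psi_{A'B'}=(\rho^\dagger\rho)^T/t=\sigma^T$.
\item Taking a further partial trace over $A'$, the marginal on $B'$ is $(\sigma_B)^T$. It has the same spectrum as $\sigma_B$, so $S(B')_\psi=S(B)_\sigma$. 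Likewise $S(A)_\psi=S(A)_\sigma$ and $S(B)_\psi=S(B)_\sigma$.
\end{itemize}
This transpose bookkeeping, which identifies $S(B')$ with $S(B)_\sigma$, is the only slightly delicate point. I will check it index by index: $\psi_{B'}$ has matrix elements $\sum\rho_{i,(a'b')}\overline{\rho_{i,(a'\tilde b')}}/t$, and these are the entries of $(\tr_A\sigma)^T$.

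Finally, apply strong subadditivity $S(XY)+S(YZ)\ge S(Y)+S(XYZ)$ to $\ket{\psi}$ with $Y=A$, $X=A'$ and $Z=B$:
\begin{equation}
    S(AA')_\psi+S(AB)_\psi\ \ge\ S(A)_\psi+S(AA'B)_\psi .
\end{equation}
Since $\ket{\psi}$ is pure on $AA'BB'$, we have $S(AA'B)_\psi=S(B')_\psi=S(B)_\sigma$. Substituting the identifications from the previous step gives
\begin{equation}
    \widetilde{S}_{OE}(\rho)\ \ge\ S(A)_\sigma+S(B)_\sigma-S(AB)_\sigma=I(A:B)_{\rho^2/t}.
\end{equation}
As a consistency check, for pure $\rho$ we have $\sigma=\rho$, and the bound reads $2S(A)\ge 2S(A)$, i.e.\ it holds with equality and reproduces the known relation.
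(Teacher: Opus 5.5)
Your proof is correct and follows essentially the same route as the paper: vectorize $\rho$ into a canonical purification of $\rho^2/t$, identify $\widetilde{S}_{OE}(\rho)$ with $S(AA')$, and close with strong subadditivity. The only differences are cosmetic. The paper uses the mirror-image instance $I(A:\overline{B}\mid\overline{A})\geq 0$. It also matches the primed and unprimed marginals through a unitary equivalence, whereas you track the transpose explicitly; your bookkeeping is, if anything, the more careful of the two, and both give the same spectra.
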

Here, $I(A:B)_{\rho^2/t} = S(A)_{\rho^2/t} + S(B)_{\rho^2/t} - S(AB)_{\rho^2/t}$ is the mutual information of the bipartite state $\rho^2/t$. 
When $\rho$ is pure, we have $\rho^2/t = \rho$, and $I(A:B)_{\rho} = S(A)_{\rho} + S(B)_{\rho} = 2S(A)_{\rho}$, therefore recovering the equality discussed above. 
Note that by \cref{lemma:normalized_oe_entropy}, the inequality above can be equivalently expressed as: 
\begin{equation}
    S_{OE}(\rho) \geq t I(A:B)_{\rho^2/t} - t \log_2 t.
\end{equation}

To prove the theorem, we rely on the vectorization of $\rho$. 
Let $\ket{i}$ and $\ket{j}$ be the orthonormal bases of the Hilbert spaces $\mathcal{H}_A$ and $\mathcal{H}_B$. 
Let $\ket{\bar{i}}$ and $\ket{\bar{j}}$ be the corresponding orthonormal bases of the auxiliary Hilbert spaces $\mathcal{H}_{\overline{A}}$ and $\mathcal{H}_{\overline{B}}$. 
We define the vectorization of $\rho$ as the following vector in the doubled Hilbert space $\mathcal{H}_{AB\overline{A}\overline{B}}$:
\begin{equation}
   | \rho \rangle\!\rangle = \sum_{i,j,i',j'} [\rho ]_{i'j',ij}\ket{ij}\otimes \ket{\bar{i'} \bar{j'}}, 
\end{equation}
where $[\rho ]_{i'j',ij} = \bra{i' j'} \rho \ket{i j}$ are the matrix elements of $\rho$ in the basis $\ket{i j}$. 
The vectorization $| \rho \rangle\!\rangle$ satisfies
\begin{equation}
    \langle\!\langle \rho | \rho \rangle\!\rangle = \tr{\rho^2} = t.
\end{equation}
Thus, we can define the normalized vectorization of $\rho$ as the following pure state:
\begin{equation}
    |\tilde{\rho}\rangle\!\rangle = \frac{1}{\sqrt{t}} |\rho\rangle\!\rangle.
\end{equation}
Now we examine various reduced density matrices of the pure state $|\tilde{\rho}\rangle\!\rangle$.
First, we have $\tr_{\overline{AB}}\{ |\tilde{\rho}\rangle\!\rangle \langle\!\langle \tilde{\rho} | \} = \rho^2/t$. 
Therefore, we have 
\begin{equation}
    \tr_{A\overline{AB}}\{ |\tilde{\rho}\rangle\!\rangle \langle\!\langle \tilde{\rho} | \} = \tr_{A}\{\rho^2/t\},\quad \tr_{B\overline{AB}}\{ |\tilde{\rho}\rangle\!\rangle \langle\!\langle \tilde{\rho} | \} = \tr_{B}\{\rho^2/t\}
\end{equation} 
\begin{equation}
    \tr_{AB}\{ |\tilde{\rho}\rangle\!\rangle \langle\!\langle \tilde{\rho} | \} = \sum_{i,j,i',j'}\sum_{i'',j''} [\rho]_{ij,i''j''}\overline{[\rho]_{i'j',i''j''}} \ket{\bar{i}\bar{j}}\bra{\bar{i'}\bar{j'}}. 
\end{equation}
Define the unitary operators $U_{A}: \mathcal{H}_{A} \to \mathcal{H}_{\overline{A}}$ and $U_{B}: \mathcal{H}_{B} \to \mathcal{H}_{\overline{B}}$ by $U_A\ket{i} = \ket{\bar{i}}$ and $U_B\ket{j} = \ket{\bar{j}}$. 
Then it is straightforward to verify that
\begin{equation}\label{eqn:: symmetry between AB and overline{AB}}
    \tr_{AB}\{ |\tilde{\rho}\rangle\!\rangle \langle\!\langle \tilde{\rho} | \} = (U_A\otimes U_B)\tr_{\overline{AB}}\{ |\tilde{\rho}\rangle\!\rangle \langle\!\langle \tilde{\rho} | \}(U_A\otimes U_B)^\dagger.
\end{equation}
Consequently, we have 
\begin{equation}\label{eqn:: symmetry between A and overline{A}}
    \tr_{AB\overline{B}}\{ |\tilde{\rho}\rangle\!\rangle \langle\!\langle \tilde{\rho} | \} = U_A \tr_{B\overline{AB}}\{ |\tilde{\rho}\rangle\!\rangle \langle\!\langle \tilde{\rho} | \} U_A^\dagger = U_A \tr_{B}\{\rho^2/t\} U_A^\dagger. 
\end{equation}
\begin{proof}Proof of \cref{thm::eqn::OEandMutualInfo}
    Let us denote $|\tau^{[A]}_i\rangle\!\rangle$ in $\mathcal{H}_{A\overline{A}}$ and $|\tau^{[B]}_i\rangle\!\rangle$ in $\mathcal{H}_{B\overline{B}}$ as the vectorization of the Schmidt operators $\tau^{[A]}_i$ and $\tau^{[B]}_i$, respectively. 
    Following the operator Schmidt decomposition, we have can expand $|\tilde{\rho}\rangle\!\rangle$ as
    \begin{equation}
        |\tilde{\rho}\rangle\!\rangle = \sum_{i=1}^{\chi} \frac{\lambda_i}{\sqrt{t}} |\tau^{[A]}_i\rangle\!\rangle \otimes |\tau^{[B]}_i\rangle\!\rangle.
    \end{equation}
    It follows immediately that the reduced density matrix of $|\tilde{\rho}\rangle\!\rangle$ on the subsystem $A\overline{A}$ is given by
    \begin{equation}
        \tr_{B\overline{B}}\{ |\tilde{\rho}\rangle\!\rangle \langle\!\langle \tilde{\rho} | \} = \sum_{i=1}^{\chi} \frac{\lambda_i^2}{t} |\tau^{[A]}_i\rangle\!\rangle \langle\!\langle \tau^{[A]}_i |.
    \end{equation}
    Therefore, we have 
    \begin{equation}
        S(A\overline{A})_{|\tilde{\rho}\rangle\!\rangle \langle\!\langle \tilde{\rho} |} = -\sum_{i=1}^{\chi} \frac{\lambda_i^2}{t} \log_2 \frac{\lambda_i^2}{t} = \widetilde{S}_{OE}(\rho).
    \end{equation}
    By \eqref{eqn:: symmetry between AB and overline{AB}}, we have $S(\overline{AB})_{|\tilde{\rho}\rangle\!\rangle \langle\!\langle \tilde{\rho} |} = S(\overline{AB})_{|\tilde{\rho}\rangle\!\rangle \langle\!\langle \tilde{\rho} |} = S(AB)_{\rho^2/t}$. 
    By \eqref{eqn:: symmetry between A and overline{A}}, we have $S(\overline{A})_{|\tilde{\rho}\rangle\!\rangle \langle\!\langle \tilde{\rho} |} = S(A)_{|\tilde{\rho}\rangle\!\rangle \langle\!\langle \tilde{\rho} |}$.
    Therefore, by strong subadditivity of von Neumann entropy, we have:
    \begin{equation}
        \begin{aligned}
            0&\leq I(A:\overline{B}\vert \overline{A})_{|\tilde{\rho}\rangle\!\rangle \langle\!\langle \tilde{\rho} |} = S(A\overline{A})_{|\tilde{\rho}\rangle\!\rangle \langle\!\langle \tilde{\rho} |} + S(\overline{AB})_{|\tilde{\rho}\rangle\!\rangle \langle\!\langle \tilde{\rho} |} - S(\overline{A})_{|\tilde{\rho}\rangle\!\rangle \langle\!\langle \tilde{\rho} |} - S(A\overline{A}\overline{B})_{|\tilde{\rho}\rangle\!\rangle \langle\!\langle \tilde{\rho} |}\\
            & = \widetilde{S}_{OE}(\rho) + S(AB)_{\rho^2/t} - S(A)_{\rho^2/t} - S(B)_{\rho^2/t}. 
        \end{aligned}
    \end{equation}
    where in the second equality we have the fact that $S(A\overline{A}\overline{B})_{|\tilde{\rho}\rangle\!\rangle \langle\!\langle \tilde{\rho} |} = S(B)_{\rho^2/t}$, which holds since $|\tilde{\rho}\rangle\!\rangle \langle\!\langle \tilde{\rho} |$ is a pure state. 
    Rearranging the above inequality, we obtain $\widetilde{S}_{OE}(\rho) \geq I(A:B)_{\rho^2/t}$, which completes the proof.  
\end{proof}

\begin{remark}
    From the above analysis it seems that an eqaully natural quantity is the normalized operator entanglement entropy of $\rho^{\frac{1}{2}}$. 
    This is because the vectorization $|\rho^{\frac{1}{2}}\rangle\!\rangle$ is autumatically normalized, and satisfies $\tr_{\overline{AB}} |\rho^{\frac{1}{2}}\rangle\!\rangle \langle\!\langle \rho^{\frac{1}{2}} | = \rho$. 
    Then following the same argument, we deduce that 
    \begin{equation}
        \widetilde{S}_{OE}(\rho^{\frac{1}{2}}) \geq I(A:B)_{\rho}. 
    \end{equation}
    We note that the quantity $\widetilde{S}_{OE}(\rho^{\frac{1}{2}})$ has been introduced in \cite{DuttaFaulkner2021ReflectedEntropy} under the name \emph{reflected entropy}. 
\end{remark}

By a similar strategy, we now derive an upper bound on the operator entanglement. 

\begin{lemma}
    For any bipartite state $\rho$ of system $AB$ with purity $t = \tr{\rho^2}$, we have the following upper bound on the operator entanglement entropy:
    \begin{equation}
        S_{OE}(\rho) \leq 2t S(A)_{\rho^2/t} - t \log_2 t.
    \end{equation}
\end{lemma}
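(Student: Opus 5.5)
The bound follows from the relation between $S_{OE}$ and $\widetilde S_{OE}$ together with the vectorization picture used in the proof of \cref{thm::eqn::OEandMutualInfo}. By \cref{lemma:normalized_oe_entropy}, $S_{OE}(\rho)=t\,\widetilde{S}_{OE}(\rho)-t\log_2 t$. It therefore suffices to prove
\begin{equation}
    \widetilde{S}_{OE}(\rho)\leq 2S(A)_{\rho^2/t}.
\end{equation}

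Work with the normalized pure state $|\tilde\rho\rangle\!\rangle$ on $AB\overline{A}\overline{B}$. As shown above, $\widetilde{S}_{OE}(\rho)=S(A\overline{A})_{|\tilde\rho\rangle\!\rangle\langle\!\langle\tilde\rho|}$.

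Subadditivity of the von Neumann entropy gives $S(A\overline{A})\leq S(A)+S(\overline{A})$. The marginal on $A$ is $\tr_{B\overline{AB}}|\tilde\rho\rangle\!\rangle\langle\!\langle\tilde\rho| = \tr_B\{\rho^2/t\}$, so $S(A)=S(A)_{\rho^2/t}$. By \eqref{eqn:: symmetry between A and overline{A}}, the marginal on $\overline{A}$ is unitarily equivalent to that on $A$, so $S(\overline{A})=S(A)_{\rho^2/t}$ as well. Combining these gives $\widetilde{S}_{OE}(\rho)\leq 2S(A)_{\rho^2/t}$. Multiplying by $t$ and subtracting $t\log_2 t$ yields the claim.

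The only delicate step is checking the symmetry between the $A$ and $\overline{A}$ marginals. One must confirm that tracing out $AB\overline{B}$ gives the same spectrum as tracing out $B\overline{AB}$, up to the unitary $U_A$. This is exactly \eqref{eqn:: symmetry between A and overline{A}}, which follows from \eqref{eqn:: symmetry between AB and overline{AB}} by further tracing out $B$ (respectively $\overline{B}$). I would double-check this with the explicit index formula for $\tr_{AB}$, noting that it uses $\rho=\rho^\dagger$: the reduced state on $\overline{AB}$ is the transpose/conjugate of $\rho^2/t$, which has the same spectrum, as do its marginals. By symmetry one could equally use $B$, which gives the bound with $\min\{S(A),S(B)\}_{\rho^2/t}$.
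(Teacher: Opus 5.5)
Your proposal is correct and matches the paper's proof. It reduces the claim to $\widetilde S_{OE}(\rho)\le 2S(A)_{\rho^2/t}$ via \cref{lemma:normalized_oe_entropy}, applies subadditivity (the paper writes it as $I(A:\overline{A})\ge 0$) to the canonical purification $|\tilde\rho\rangle\!\rangle$, and uses the unitary equivalence of the $A$ and $\overline{A}$ marginals. Your side remarks also hold: the possible transpose in the marginals does not change spectra, and since $S(A\overline{A})=S(B\overline{B})$ for a pure state, the bound sharpens to $2\min\{S(A),S(B)\}_{\rho^2/t}$.
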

\begin{proof}
    Let us first prove that for any $\rho_{AB}$, the normalized operator entanglement entropy satisfies 
    \begin{equation}
        \widetilde{S}_{OE}(\rho) \leq 2S(A)_{\rho^2/t}. 
    \end{equation}
    Consider again the canonical purification $|\tilde{\rho}\rangle\!\rangle$ of $\rho$ in $\mathcal{H}_{AB\overline{AB}}$. 
    Then by the subadditivity of von Neumann entropy, we have 
    \begin{equation}
       S(A)_{|\tilde{\rho}\rangle\!\rangle \langle\!\langle \tilde{\rho} |} + S(\overline{A})_{|\tilde{\rho}\rangle\!\rangle \langle\!\langle \tilde{\rho} |} - S(A\overline{A})_{|\tilde{\rho}\rangle\!\rangle \langle\!\langle \tilde{\rho} |} = I(A:\overline{A})_{|\tilde{\rho}\rangle\!\rangle \langle\!\langle \tilde{\rho} |} \geq 0.  
    \end{equation}
Since the reduced density matrix of $|\tilde{\rho}\rangle\!\rangle$ on the subsystem $A\overline{A}$ is unitarily equivalent to that on $A$, we deduce that 
\begin{equation}
    2S(A)_{\rho^2/t} = 2 S(A)_{|\tilde{\rho}\rangle\!\rangle \langle\!\langle \tilde{\rho} |}\geq \widetilde{S}_{OE}(\rho). 
\end{equation}
By \cref{lemma:normalized_oe_entropy}, we then have 
\begin{equation}
    2t S(A)_{\rho^2/t} \geq S_{OE}(\rho) + t \log_2 t,
\end{equation}
which completes the proof. 
\end{proof}

\subsection{Continuity bound of Operator Entanglement}

In this section, we establish the following continuity bound on the difference in operator entanglement entropy between two bipartite quantum states in terms of their $l_2$ distance. 

\begin{lemma}\label{lemma:: Fannes-Audenaert for subnormalized distribution}
    Let $P = \{p_i\}^m_{i=1}$ and $Q = \{q_i\}^m_{i=1}$ be two non-negative sequence, with $t_P = \sum_{i=1}^m p_i$ and $t_Q = \sum_{i=1}^m q_i$ both less than or equal to $1$. 
    Define $\tilde{T} = \frac{1}{2}\left( \norm{P-Q}_1 + \abs{t_P-t_Q} \right)$. 
    Then we have the following bound on the difference in their Shannon entropy:
    \begin{equation}
        \abs{H(P) - H(Q)}\leq \tilde{T}\log_2(m) + h(\tilde{T}) + \abs{\eta(1-t_P) - \eta(1-t_Q)}.
    \end{equation}
    Here, $h(t) = -t \log_2 t - (1-t) \log_2 (1-t)$ is the binary entropy function, and $\eta(x) = -x \log_2 x$ is the entropy function. 
\end{lemma}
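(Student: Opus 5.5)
The plan is to reduce to the standard Fannes–Audenaert inequality by padding each subnormalized sequence to a genuine probability distribution. Define the $(m+1)$-component distributions $\hat P=(p_1,\dots,p_m,1-t_P)$ and $\hat Q=(q_1,\dots,q_m,1-t_Q)$. Both are normalized because $t_P,t_Q\leq 1$. Their Shannon entropies satisfy
\begin{equation}
H(\hat P)=H(P)+\eta(1-t_P),\qquad H(\hat Q)=H(Q)+\eta(1-t_Q).
\end{equation}
Their trace distance is
\begin{equation}
T(\hat P,\hat Q)=\tfrac12\left(\norm{P-Q}_1+\abs{t_P-t_Q}\right)=\tilde T.
\end{equation}
The triangle inequality then gives
\begin{equation}
\abs{H(P)-H(Q)}\leq \abs{H(\hat P)-H(\hat Q)}+\abs{\eta(1-t_P)-\eta(1-t_Q)},
\end{equation}
so it remains to bound the entropy difference of the padded distributions by $\tilde T\log_2 m+h(\tilde T)$.

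Applying the sharp Audenaert bound~\cite{audenaert2007} directly in dimension $m+1$ gives $\tilde T\log_2(m)+h(\tilde T)$, because Audenaert's form is $T\log_2(d-1)+h(T)$ and here $d-1=m$. This is exactly the stated constant, so no further loss occurs.

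The main point to check is the regime of validity. Audenaert's inequality holds for all $T\in[0,1]$ in the form $T\log_2(d-1)+h(T)$, so no restriction like $T\leq 1-1/d$ is needed. I will verify this by citing the theorem precisely. If only the restricted version is available, I will handle $\tilde T>1-1/(m+1)$ separately by noting that the bound is then at least $\log_2(m+1)\geq\max H$.

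The remaining details are bookkeeping. The padded component contributes $\abs{t_P-t_Q}$ to the $\ell_1$ distance, and the case $t_P=1$ or $t_Q=1$ is covered by the convention $\eta(0)=0$.
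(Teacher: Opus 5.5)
Your proposal is correct and matches the paper's proof: pad $P,Q$ with $1-t_P$ and $1-t_Q$, observe that the padded trace distance is $\tilde T$, apply Audenaert's bound in dimension $m+1$, and separate the $\eta$ terms by the triangle inequality. Your fallback for $\tilde T>1-1/(m+1)$ is wrong, because $T\log_2 m+h(T)$ peaks at $T=m/(m+1)$ with value $\log_2(m+1)$ and is strictly smaller beyond that point; the fallback is not needed, however, since Audenaert's theorem holds for all $T\in[0,1]$.
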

\begin{proof}
    Consider the following two probability distributions:
    \begin{equation}
        P' = p_1,\ldots, p_m, 1-t_P, \quad Q' = q_1, \ldots, q_m, 1-t_Q.
    \end{equation}
    Then we have $\frac{1}{2}\norm{P'-Q'}_1 = \frac{1}{2}\left( \norm{P-Q}_1 + \abs{t_P-t_Q} \right) = \tilde{T}$. 
    Thus, by Fannes-Audenaert inequality \cite{audenaert2007}, 
    \begin{align}
        \abs{H(P') - H(Q')}\leq \tilde{T}\log_2(m+1-1) + h(\tilde{T}) = \tilde{T}\log_2(m) + h(\tilde{T}).
    \end{align}
    Meanwhile, we have $H(P') = H(P) + \eta(1-t_P)$ and $H(Q') = H(Q) + \eta(1-t_Q)$. 
    Therefore, we have 
    \begin{equation}
        \begin{aligned}
            \abs{H(P) - H(Q)} & = \abs{H(P') - H(Q') + \eta(1-t_P) - \eta(1-t_Q)} \\
            & \leq \abs{H(P') - H(Q')} + \abs{\eta(1-t_P) - \eta(1-t_Q)} \\
            & \leq \tilde{T}\log_2(m) + h(\tilde{T}) + \abs{\eta(1-t_P) - \eta(1-t_Q)}. 
        \end{aligned}
    \end{equation}
    This completes the proof. 
\end{proof}

\begin{lemma}\label{lemma:: bound purity difference by l2 distance}
    Let $\rho$ and $\sigma$ be two quantum states with purities $t_{\rho}$ and $t_{\sigma}$, respectively. 
    Then we have the following bound on the difference in purities:
    \begin{equation}
        \abs{t_{\rho} - t_{\sigma}} \leq \left( \sqrt{t_{\rho}} + \sqrt{t_{\sigma}} \right) \norm{\rho - \sigma}_2. 
    \end{equation}
\end{lemma}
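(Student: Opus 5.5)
The plan is to write both purities as squared Hilbert--Schmidt norms and then factor their difference as a difference of squares. Set $t_\rho=\norm{\rho}_2^2$ and $t_\sigma=\norm{\sigma}_2^2$. Then
\begin{equation}
    t_\rho - t_\sigma = \left(\norm{\rho}_2-\norm{\sigma}_2\right)\left(\norm{\rho}_2+\norm{\sigma}_2\right) = \left(\norm{\rho}_2-\norm{\sigma}_2\right)\left(\sqrt{t_\rho}+\sqrt{t_\sigma}\right).
\end{equation}

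The single step that carries content is bounding the first factor. We do this with the reverse triangle inequality for the norm $\norm{\cdot}_2$:
\begin{equation}
    \abs{\norm{\rho}_2-\norm{\sigma}_2}\leq\norm{\rho-\sigma}_2 .
\end{equation}

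Taking absolute values in the factorization and inserting this bound gives
\begin{equation}
    \abs{t_\rho-t_\sigma}\leq\left(\sqrt{t_\rho}+\sqrt{t_\sigma}\right)\norm{\rho-\sigma}_2,
\end{equation}
which is the claimed inequality.

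As a cross-check, one can also expand
\begin{equation}
    t_\rho-t_\sigma=\tr{(\rho-\sigma)\rho}+\tr{\sigma(\rho-\sigma)}
\end{equation}
and apply Cauchy--Schwarz in the Hilbert--Schmidt inner product to each term. This yields the same bound.

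No step here is a genuine obstacle. The only care needed is to use the fact that $\norm{\cdot}_2$ is a norm on Hermitian operators. Positivity and unit trace of $\rho$ and $\sigma$ are not used, so the argument holds for arbitrary Hermitian (indeed arbitrary) operators.
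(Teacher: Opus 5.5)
Your proof is correct and is exactly the paper's argument: write $t_\rho=\norm{\rho}_2^2$ and $t_\sigma=\norm{\sigma}_2^2$, factor the difference of squares, and bound $\abs{\norm{\rho}_2-\norm{\sigma}_2}$ by $\norm{\rho-\sigma}_2$ with the (reverse) triangle inequality. One small caveat on your closing aside: for non-Hermitian operators $\tr{\rho^2}\neq\norm{\rho}_2^2$ in general, so the extension to arbitrary operators holds only if purity is read as $\norm{\rho}_2^2=\tr(\rho^\dagger\rho)$.
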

\begin{proof}
    By definition, we have 
    \begin{equation}
        \abs{t_{\rho} - t_{\sigma}} = \abs{\tr{\rho^2} - \tr{\sigma^2}} = \abs{\norm{\rho}^2_2 - \norm{\sigma}^2_2} = (\norm{\rho}_2 + \norm{\sigma}_2)\abs{\norm{\rho}_2 - \norm{\sigma}_2}. 
    \end{equation}
    By triangle inequality, we have $\abs{\norm{\rho}_2 - \norm{\sigma}_2} \leq \norm{\rho - \sigma}_2$. 
    So the result follows. 
\end{proof}

\begin{prop}\label{prop::continuity_bound of S_OE}
    Let $\rho$ and $\sigma$ be two bipartite quantum states of system $AB$. 
    Suppose the operator Schmidt ranks of both states are upper-bounded by $\chi_0$. 
    Suppose $\norm{\rho - \sigma}_2\leq \frac{1}{4}$, then the difference in their operator entanglement entropy can be upper bounded as follows:
    \begin{equation}
        \abs{S_{OE}(\rho) - S_{OE}(\sigma)} \leq 2\norm{\rho - \sigma}_2 (\log_2(\chi_0)+2) + h(2\norm{\rho - \sigma}_2) + \eta(2\norm{\rho - \sigma}_2). 
    \end{equation}
    where $h(t) = -t \log_2 t - (1 - t) \log_2 (1 - t)$ is the binary entropy function, and $\eta(x) = -x \log_2 x$ is the entropy function. 
\end{prop}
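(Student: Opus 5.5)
The plan is to reduce the statement to \cref{lemma:: Fannes-Audenaert for subnormalized distribution}, applied to the squared Schmidt spectra. Let $\lambda_1\geq\lambda_2\geq\cdots$ and $\mu_1\geq\mu_2\geq\cdots$ be the operator Schmidt coefficients of $\rho$ and $\sigma$, padded with zeros to length $\chi_0$. Put $p_i=\lambda_i^2$ and $q_i=\mu_i^2$. Then $S_{OE}(\rho)=H(P)$ and $S_{OE}(\sigma)=H(Q)$, with total masses $t_P=t_\rho\leq 1$ and $t_Q=t_\sigma\leq1$. The lemma with $m=\chi_0$ gives
\begin{equation}
    \abs{S_{OE}(\rho)-S_{OE}(\sigma)}\leq \tilde T\log_2\chi_0+h(\tilde T)+\abs{\eta(1-t_\rho)-\eta(1-t_\sigma)},
\end{equation}
so what remains is to bound $\tilde T$ and the $\eta$-difference by $d:=\norm{\rho-\sigma}_2$.

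First I bound $\norm{P-Q}_1$. As in the proof of \cref{lemma:operator_schmidt_decomposition_eckart_young}, the Schmidt coefficients are the singular values of the coefficient matrices $M,N$, and $\norm{M-N}_F=d$. Mirsky's inequality \cite{Mirsky1960} then gives $\big(\sum_i(\lambda_i-\mu_i)^2\big)^{1/2}\leq d$. Writing $\abs{\lambda_i^2-\mu_i^2}=\abs{\lambda_i-\mu_i}(\lambda_i+\mu_i)$ and applying Cauchy--Schwarz,
\begin{equation}
    \norm{P-Q}_1\leq d\,\Big(\sum_i(\lambda_i+\mu_i)^2\Big)^{1/2}\leq d\,(\sqrt{t_\rho}+\sqrt{t_\sigma})\leq 2d .
\end{equation}
\Cref{lemma:: bound purity difference by l2 distance} gives $\abs{t_\rho-t_\sigma}\leq 2d$. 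Hence $\tilde T\leq 2d\leq 1/2$. On $[0,1/2]$ the binary entropy $h$ is increasing, so $h(\tilde T)\leq h(2d)$, and the first term is at most $2d\log_2\chi_0$.

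The main obstacle is the term $\abs{\eta(1-t_\rho)-\eta(1-t_\sigma)}$, since $\eta$ is not Lipschitz near $0$. I would use the standard estimate that for $x,y\in[0,1]$ with $\abs{x-y}\leq\delta\leq 1/2$,
\begin{equation}
    \abs{\eta(x)-\eta(y)}\leq \max\{\eta(\delta),\ \delta\log_2 e\}.
\end{equation}
This holds because $\eta$ is concave with $\eta(0)=0$, so its increments over an interval of length $\delta$ are maximized at an endpoint $0$ or $1$. With $\delta=2d$, the increment at $0$ is $\eta(2d)$. The increment at $1$ is at most $\abs{\eta'(1)}\cdot 2d=2d\log_2 e<2d\cdot 2$. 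Bounding the maximum by the sum $\eta(2d)+4d$ gives the extra $+2\cdot 2d=4d$ that is absorbed into the factor $(\log_2\chi_0+2)$. Adding the three contributions yields $2d(\log_2\chi_0+2)+h(2d)+\eta(2d)$. The hypothesis $d\leq 1/4$ is used exactly to keep $2d\leq 1/2$, which places both $h$ and $\eta$ in their monotone regime.
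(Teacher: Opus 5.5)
Your proof is correct and follows the paper's argument step for step: the subnormalized Fannes--Audenaert lemma with $m=\chi_0$, then Mirsky plus Cauchy--Schwarz and the purity lemma to get $\tilde T\leq 2d\leq 1/2$, then monotonicity of $h$ on $[0,1/2]$. The only variation is the $\eta$-term. The paper cites Fannes' lemma in the form $2\abs{x-y}+\eta(\abs{x-y})$ and then uses $\eta(\abs{t_\rho-t_\sigma})\leq\eta(2d)$, which strictly holds only for $2d\leq 1/e$; your concavity estimate $\max\{\eta(\delta),\delta\log_2 e\}$ is valid for all $2d\leq 1/2$. You should drop your closing claim that $\eta$ is monotone on $[0,1/2]$: it increases only up to $1/e$, and your argument never actually needs that monotonicity.
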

\begin{proof}
    For a bipartite quantum state $\rho$ of system $AB$, we define $\mathcal{F}(\rho)$ to be the linear map from $A\overline{A}$ to $B\overline{B}$ obtained by reshaping the operator $\rho$, sothat the operator Schmidt coefficients of $\rho$ are the singular values of $\mathcal{F}(\rho)$. 
    Since reshaping only shuffles the matrix entries of an operator, it preserves the Hilbert-Schmidt norm: 
    \begin{equation}
        \norm{\mathcal{F}(\rho) - \mathcal{F}(\sigma)}_2 = \norm{\rho - \sigma}_2. 
    \end{equation}
    Denote by $\overrightarrow{\lambda} = \lambda_1\geq \lambda_2\geq \cdots\geq \lambda_{\chi_0}$ and $\overrightarrow{\mu} =\mu_1\geq \mu_2\geq \cdots\geq \mu_{\chi_0}$ the operator Schmidt coefficients of $\rho$ and $\sigma$, respectively. 
    Consider the two sub-normalized distributions $\overrightarrow{\lambda^2} = \{\lambda^2_i\}$ and $\overrightarrow{\mu^2} = \{\mu^2_i\}$. 
    We have $\sum_{i=1}^{\chi_0} \lambda_i^2 = t_{\rho}$ and $\sum_{i=1}^{\chi_0} \mu_i^2 = t_{\sigma}$. 
    At the same time, $H(\overrightarrow{\lambda^2}) = S_{OE}(\rho)$ and $H(\overrightarrow{\mu^2}) = S_{OE}(\sigma)$. 
    Thus, apply \cref{lemma:: Fannes-Audenaert for subnormalized distribution} and we obtain 
    \begin{equation}
        \abs{S_{OE}(\rho) - S_{OE}(\sigma)} = \abs{H(\overrightarrow{\lambda^2}) - H(\overrightarrow{\mu^2})}\leq \tilde{T}\log_2(\chi_0) + h(\tilde{T}) + \abs{\eta(1-t_{\rho}) - \eta(1-t_{\sigma})}, 
    \end{equation}
    where $\tilde{T} = \frac{1}{2}\left( \norm{\overrightarrow{\lambda^2}-\overrightarrow{\mu^2}}_1 + \abs{t_{\rho}-t_{\sigma}} \right)$. 
    In the following, we bound the first two terms then the last term in the above inequality. 
    By \cref{lemma:: bound purity difference by l2 distance}, we have 
    \begin{equation}
        \abs{t_{\rho}-t_{\sigma}} \leq 2 \norm{\rho - \sigma}_2
    \end{equation} 
    For the term $\norm{\overrightarrow{\lambda^2}-\overrightarrow{\mu^2}}_1 $, by Cauchy-Schwarz inequality, 
    \begin{align}
        \norm{\overrightarrow{\lambda^2}-\overrightarrow{\mu^2}}_1 & = \sum_{i=1}^{\chi_0} \abs{\lambda_i^2 - \mu_i^2} = \sum_{i=1}^{\chi_0} \abs{\lambda_i - \mu_i} \abs{\lambda_i + \mu_i} \leq \norm{\overrightarrow{\lambda}-\overrightarrow{\mu}}_2 \cdot \norm{\overrightarrow{\lambda}+\overrightarrow{\mu}}_2. 
    \end{align}
    By triangle inequality, we have 
    \begin{equation}
        \norm{\overrightarrow{\lambda}+\overrightarrow{\mu}}_2 \leq \norm{\overrightarrow{\lambda}}_2 + \norm{\overrightarrow{\mu}}_2\leq \sqrt{t_{\rho}} + \sqrt{t_{\sigma}} \leq 2.
    \end{equation}
    For the term $\norm{\overrightarrow{\lambda}-\overrightarrow{\mu}}_2$, Mirsky's inequality \cite{Mirsky1960} together with the fact that $\mathcal{F}$ preserves the Hilbert-Schmidt norm yield
    \begin{equation}
        \norm{\overrightarrow{\lambda}-\overrightarrow{\mu}}_2 \leq \norm{\mathcal{F}(\rho) - \mathcal{F}(\sigma)}_2 = \norm{\rho - \sigma}_2. 
    \end{equation}
    Therefore, we obtain 
    \begin{equation}
        \tilde{T} = \frac{1}{2}\left( \norm{\overrightarrow{\lambda^2}-\overrightarrow{\mu^2}}_1 + \abs{t_{\rho}-t_{\sigma}} \right)\leq \frac{1}{2}\left( 2\norm{\rho - \sigma}_2 + 2\norm{\rho - \sigma}_2 \right) = 2\norm{\rho - \sigma}_2.
    \end{equation}
    When $\norm{\rho - \sigma}_2 \leq 1/4$, we have $0\leq \tilde{T}\leq 2\norm{\rho - \sigma}_2 \leq 1/2$. 
    Since the binary entropy function $h(x)$ is monotonically increasing for $x \in [0, 1/2]$, we have
    \begin{equation}\label{eqn:: bound the first two terms}
       \tilde{T}\log_2(\chi_0-1) + h(\tilde{T})\leq 2\norm{\rho - \sigma}_2 \log_2(\chi_0) + h(2\norm{\rho - \sigma}_2).
    \end{equation}
    Lastly, by Fannes' inequality \cite{Fannes1973}[Lemma 2], we have 
    \begin{equation}\label{eqn:: bound the last term}
        \begin{aligned}
            \abs{\eta(1-t_{\rho}) - \eta(1-t_{\sigma})} &\leq 2\abs{(1-t_{\rho}) - (1-t_{\sigma})} + \eta((1-t_{\rho}) - (1-t_{\sigma})) \\
            &= 2\abs{t_{\rho} - t_{\sigma}} + \eta(\abs{t_{\rho} - t_{\sigma}}) \leq 4 \norm{\rho - \sigma}_2 + \eta(2\norm{\rho - \sigma}_2).
        \end{aligned}
    \end{equation} 
    Now combining \eqref{eqn:: bound the first two terms} and \eqref{eqn:: bound the last term}, we obtain the desired inequality. 
\end{proof}

In the same vein, we can derive a  $l_2$-continuity bound for the normalized operator entanglement entropy $\widetilde{S}_{OE}$. 

\begin{corollary}\label{cor:l2_distance_n_S_OE}
    Let $\rho$ and $\sigma$ be two bipartite quantum states of system $AB$ with purities $t_{\rho}$ and $t_{\sigma}$, respectively. 
    Suppose the operator Schmidt ranks of both states are upper-bounded by $\chi_0$. 
    Suppose $\tilde{T} = \frac{\sqrt{t_{\rho}} + \sqrt{t_{\sigma}}}{\max\{t_{\rho}, t_{\sigma}\}} \norm{\rho - \sigma}_2$ satisfies $\tilde{T} \leq 1/2$.  Then the difference in the normalized operator entanglement entropy can be upper bounded as follows:
    \begin{equation}
        \abs{\tilde{S}_{OE}(\rho) - \tilde{S}_{OE}(\sigma)} \leq \tilde{T} \log_2(\chi_0) + h(\tilde{T}).
    \end{equation}
\end{corollary}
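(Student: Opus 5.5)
The plan is to treat the normalized spectra as genuine probability distributions and apply the Fannes--Audenaert inequality directly; no sub-normalization correction is needed here. Let $\lambda_1\geq\cdots\geq\lambda_{\chi_0}$ and $\mu_1\geq\cdots\geq\mu_{\chi_0}$ be the operator Schmidt coefficients of $\rho$ and $\sigma$, padded with zeros. Set $P=\{\lambda_i^2/t_\rho\}$ and $Q=\{\mu_i^2/t_\sigma\}$. Both are normalized distributions on $\chi_0$ points, with $H(P)=\widetilde S_{OE}(\rho)$ and $H(Q)=\widetilde S_{OE}(\sigma)$. By Audenaert's sharp bound \cite{audenaert2007}, writing $T=\frac12\norm{P-Q}_1$, we get
\begin{equation*}
\abs{\widetilde S_{OE}(\rho)-\widetilde S_{OE}(\sigma)}\leq T\log_2(\chi_0-1)+h(T).
\end{equation*}
It then suffices to show $T\leq\tilde T$. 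Since $\tilde T\leq 1/2$ and $h$ is increasing on $[0,1/2]$, we may then replace $T$ by $\tilde T$ and $\log_2(\chi_0-1)$ by $\log_2\chi_0$.

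To bound $T$, assume without loss of generality $t_\rho\geq t_\sigma$, so $\max\{t_\rho,t_\sigma\}=t_\rho$. The key step is to avoid comparing $P$ and $Q$ directly. Instead I compare the rescaled vectors $\lambda_i^2/t_\rho$ and $\mu_i^2/t_\rho$ and account separately for the purity mismatch. By the triangle inequality,
\begin{equation*}
\norm{P-Q}_1\leq \frac{1}{t_\rho}\sum_i\abs{\lambda_i^2-\mu_i^2}+\sum_i\mu_i^2\left(\frac{1}{t_\sigma}-\frac{1}{t_\rho}\right)=\frac{1}{t_\rho}\left(\norm{\overrightarrow{\lambda^2}-\overrightarrow{\mu^2}}_1+(t_\rho-t_\sigma)\right).
\end{equation*}
Both terms are handled exactly as in the proof of \cref{prop::continuity_bound of S_OE}. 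Cauchy--Schwarz together with Mirsky's inequality, and the fact that the reshaping $\mathcal F$ preserves the Hilbert--Schmidt norm, give $\norm{\overrightarrow{\lambda^2}-\overrightarrow{\mu^2}}_1\leq(\sqrt{t_\rho}+\sqrt{t_\sigma})\norm{\rho-\sigma}_2$. Here I keep the sharper factor $\sqrt{t_\rho}+\sqrt{t_\sigma}$ rather than $2$. \Cref{lemma:: bound purity difference by l2 distance} gives $t_\rho-t_\sigma\leq(\sqrt{t_\rho}+\sqrt{t_\sigma})\norm{\rho-\sigma}_2$. Adding these and halving yields
\begin{equation*}
T\leq\frac{\sqrt{t_\rho}+\sqrt{t_\sigma}}{t_\rho}\norm{\rho-\sigma}_2=\tilde T,
\end{equation*}
which is precisely the quantity in the statement.

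The main obstacle is this normalization mismatch: naively, $\norm{P-Q}_1$ involves $1/t_\sigma$, which could be worse. The choice to divide by the larger purity and to charge the excess mass $\sum_i\mu_i^2(1/t_\sigma-1/t_\rho)=1-t_\sigma/t_\rho$ to the purity difference is what makes the denominator $\max\{t_\rho,t_\sigma\}$ appear. I will double-check this identity, since it is where the factor of $\frac12$ in $T$ must exactly absorb the two contributions. The remaining steps are routine monotonicity checks.
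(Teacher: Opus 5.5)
Your proposal is correct and follows the paper's proof essentially step for step. Both arguments apply Fannes--Audenaert to the normalized spectra $\overrightarrow{\lambda^2}/t_\rho$ and $\overrightarrow{\mu^2}/t_\sigma$, split the $\ell_1$ distance by the triangle inequality into $\frac{1}{t_\rho}\bigl(\sum_i\abs{\lambda_i^2-\mu_i^2}+\abs{t_\rho-t_\sigma}\bigr)$, and bound each piece by $(\sqrt{t_\rho}+\sqrt{t_\sigma})\norm{\rho-\sigma}_2$ via Cauchy--Schwarz/Mirsky and the purity-difference lemma. The only cosmetic difference is how $\max\{t_\rho,t_\sigma\}$ arises: the paper derives the split for either denominator and invokes symmetry, whereas you fix $t_\rho\geq t_\sigma$ at the outset, and your identity $\sum_i\mu_i^2(1/t_\sigma-1/t_\rho)=(t_\rho-t_\sigma)/t_\rho$ is correct.
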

\begin{proof}
    Consider the two probabilty distributions $\overrightarrow{\lambda^2}/t_{\rho}$ and $\overrightarrow{\mu^2}/t_{\sigma}$. 
    By Fannes-Audenaert inequality, we have
    \begin{equation}
        \abs{\tilde{S}_{OE}(\rho) - \tilde{S}_{OE}(\sigma)} = \abs{H(\overrightarrow{\lambda^2}/t_{\rho}) - H(\overrightarrow{\mu^2}/t_{\sigma})} \leq \frac{1}{2} \norm{\overrightarrow{\lambda^2}/t_{\rho} - \overrightarrow{\mu^2}/t_{\sigma}}_1 \log_2(\chi_0) + h\left( \frac{1}{2} \norm{\overrightarrow{\lambda^2}/t_{\rho} - \overrightarrow{\mu^2}/t_{\sigma}}_1 \right). 
    \end{equation}
    To bound the $l_1$ distance, note that 
    \begin{equation}
        \begin{aligned}
            \sum^{\chi_0}_{i=1}\abs{\frac{\lambda_i^2}{t_{\rho}} - \frac{\mu_i^2}{t_{\sigma}}} &= \frac{1}{t_{\rho}}\sum^{\chi_0}_{i=1}\abs{\lambda_i^2 - \mu^2_i + \mu^2_i - \frac{t_{\rho}\mu_i^2}{t_{\sigma}}}\\
            &\leq \frac{1}{t_{\rho}}\sum^{\chi_0}_{i=1}\abs{\lambda_i^2 - \mu^2_i} + \frac{1}{t_{\rho}}\sum^{\chi_0}_{i=1}\abs{\mu^2_i - \frac{t_{\rho}\mu_i^2}{t_{\sigma}}}\\
            &= \frac{1}{t_{\rho}}\left( \sum^{\chi_0}_{i=1}\abs{\lambda_i^2 - \mu^2_i} + \abs{t_{\rho} - t_{\sigma}} \right). 
        \end{aligned}
    \end{equation}
    By symmetry between $\rho$ and $\sigma$, the inequality also holds with $t_{\sigma}$ in the denominator. 
    Thus we have
    \begin{equation}
        \sum^{\chi_0}_{i=1}\abs{\frac{\lambda_i^2}{t_{\rho}} - \frac{\mu_i^2}{t_{\sigma}}} \leq \frac{1}{\max\{t_{\rho}, t_{\sigma}\}}\left( \sum^{\chi_0}_{i=1}\abs{\lambda_i^2 - \mu^2_i} + \abs{t_{\rho} - t_{\sigma}} \right).
    \end{equation}
    As was done in the proof of \cref{prop::continuity_bound of S_OE}, we have 
    \begin{equation}
        \begin{aligned}
            \sum^{\chi_0}_{i=1}\abs{\lambda_i^2 - \mu^2_i}&\leq \left( \sqrt{t_{\rho}} + \sqrt{t_{\sigma}} \right)\norm{\rho - \sigma}_2,\\
            \abs{t_{\rho} - t_{\sigma}} &\leq \left( \sqrt{t_{\rho}} + \sqrt{t_{\sigma}} \right) \norm{\rho - \sigma}_2. 
        \end{aligned}
    \end{equation}
    Thus we obtain:
    \begin{equation}
        \frac{1}{2}\sum^{\chi_0}_{i=1}\abs{\frac{\lambda_i^2}{t_{\rho}} - \frac{\mu_i^2}{t_{\sigma}}} \leq \frac{\left( \sqrt{t_{\rho}} + \sqrt{t_{\sigma}} \right)}{\max\{t_{\rho}, t_{\sigma}\}} \norm{\rho - \sigma}_2. 
    \end{equation}
    Subsititute this back in the Fannes-Audenaert bound, we obtain the desired inequality. 
\end{proof}

\section{Dissipation driven by depolarizing noise}

\subsection{Purity decay under depolarizing noise}

We first consider single-qubit depolarizing noise channel defined as:
\begin{equation}
    \mathcal{N}(\sigma) = (1 - \lambda) \sigma + \lambda \frac{\mathbb{I}}{2},
\end{equation}
where $\lambda \in [0, 1]$ is the depolarizing probability and $\sigma$ is a single-qubit state.
For an $n$-qubit state $\rho$, if each qubit undergoes independent depolarizing noise channel, the overall noise channel can be expressed as $\mathcal{N}^{\otimes n}(\rho)$.

For a $n$-qubit circuit with $L$ layers, if depolarizing noise is applied after each layer, the state $\rho_L$ after the $L$-th layer can be expressed as:
\begin{equation}
    \rho_L = \mathcal{N}^{\otimes n} \circ \mathcal{U}_L \circ \cdots \circ \mathcal{N}^{\otimes n} \circ \mathcal{U}_1 (\rho),
\end{equation}
where $\mathcal{U}_i$ is the unitary channel corresponding to the $i$-th layer of the circuit.

Using the hypercontractivity inequality for the depolarizing noise channel, we can derive an upper bound on the purity of the state $\rho_L$ after $L$ layers of depolarizing noise, as stated in the following lemma.

\begin{lemma}\label{lemma:hypercontractive_purity_bound}
Let
\begin{equation}
    \eta \coloneq 1-\lambda,
    \quad
    x_L \coloneq \eta^{2L}=(1-\lambda)^{2L},
    \quad
    D \coloneq 2^n .
\end{equation}
Then the purity of the state $\rho_L$, $\tr\{\rho_L^2\}$, is upper bounded as: 
\begin{equation}
    \tr\{\rho_L^2\}
    \leq
    D^{1-\frac{2}{1+x_L}}
    \left(
        \tr (\rho^{1+x_L})
    \right)^{\frac{2}{1+x_L}} .
\end{equation}
In particular, if $\rho$ is a pure state, we have
\begin{equation}
    \tr\{\rho_L^2\}
    \leq
    D^{-\frac{1-x_L}{1+x_L}}
    =
    2^{-n\frac{1-(1-\lambda)^{2L}}{1+(1-\lambda)^{2L}}}.
\end{equation}
Equivalently, defining
\begin{equation}
    \mu \coloneq -L\log(1-\lambda),
\end{equation}
we have
\begin{equation}
    \tr\{\rho_L^2\}
    \leq
    2^{-n\tanh \mu}.
\end{equation}
\end{lemma}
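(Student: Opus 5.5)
The plan is to iterate a hypercontractive inequality backward through the circuit, using unitary invariance of Schatten norms to pass through each unitary layer at no cost. We work with the normalized Schatten norms $\norm{X}_{p,\mathrm{nor}} \coloneq \left(D^{-1}\tr\abs{X}^p\right)^{1/p} = D^{-1/p}\norm{X}_p$, which are the norms in which hypercontractivity is naturally stated.

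\textbf{Input.} The single-qubit depolarizing channel $\mathcal{D}_\lambda$ is a unital qubit semigroup: with $\eta = 1-\lambda = e^{-s}$, it equals $\Phi_s = e^{-s\mathcal{L}}$, where $\mathcal{L}(X) = X - \tfrac{1}{2}\tr(X)\mathbb{I}$. We invoke King's theorem~\cite{king2014hypercontractivity} in the form we need: for every $n$ and every $1 < p \leq q$ satisfying $q-1 \leq e^{2s}(p-1) = \eta^{-2}(p-1)$,
\begin{equation}
    \norm{\mathcal{D}_\lambda^{\otimes n}(X)}_{q,\mathrm{nor}} \leq \norm{X}_{p,\mathrm{nor}} .
\end{equation}
The tensor-power (dimension-free) version is the key content of King's result, and we take it as given.

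\textbf{Iteration.} Start from $q_0 = 2$. Define $q_{k} - 1 = \eta^{2}(q_{k-1}-1)$, so that $q_k = 1+\eta^{2k}$. Peel the last layer. Since $\rho_L = \mathcal{D}_\lambda^{\otimes n}(U_L\rho_{L-1}U_L^\dagger)$, we get
\begin{equation}
    \norm{\rho_L}_{2,\mathrm{nor}} \leq \norm{U_L\rho_{L-1}U_L^\dagger}_{q_1,\mathrm{nor}} = \norm{\rho_{L-1}}_{q_1,\mathrm{nor}} .
\end{equation}
The equality holds because Schatten norms are unitarily invariant. Repeating the step $L$ times, each time with the pair $(q_k, q_{k-1})$ (which satisfies King's condition with equality), yields
\begin{equation}
    \norm{\rho_L}_{2,\mathrm{nor}} \leq \norm{\rho}_{1+x_L,\mathrm{nor}}, \qquad x_L = \eta^{2L}.
\end{equation}
No geometry of the gates enters at any point, which is why the bound is geometry independent.

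\textbf{Unnormalizing.} Undoing the normalization gives
\begin{equation}
    D^{-1/2}\left(\tr\rho_L^2\right)^{1/2} \leq D^{-1/(1+x_L)}\left(\tr\rho^{1+x_L}\right)^{1/(1+x_L)} .
\end{equation}
Squaring both sides gives the claimed bound $\tr\rho_L^2 \leq D^{1-\frac{2}{1+x_L}}\left(\tr\rho^{1+x_L}\right)^{\frac{2}{1+x_L}}$.

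\textbf{Special cases.} For a pure input, $\tr\rho^{1+x_L} = 1$ and $1-\tfrac{2}{1+x_L} = -\tfrac{1-x_L}{1+x_L}$, which gives the second display. Finally, set $\mu = -L\ln(1-\lambda)$, with the natural logarithm, so that $x_L = e^{-2\mu}$. Then $\tfrac{1-e^{-2\mu}}{1+e^{-2\mu}} = \tanh\mu$, which yields $\tr\rho_L^2 \leq 2^{-n\tanh\mu}$.

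\textbf{Main obstacle.} The delicate step is the exact form of the hypercontractive input. We need King's inequality for $n$-fold tensor powers of the depolarizing channel, with normalized norms, for all index pairs $1 < p \leq q$ obeying $q-1 \leq \eta^{-2}(p-1)$, and not only for pairs starting from $p=2$. This is what allows the exponents to be chained down to $1+\eta^{2L}$. If the cited theorem is stated only through a log-Sobolev constant, we would first derive this $(p,q)$ family from the Gross equivalence and the tensorization of the log-Sobolev constant. We would also verify that the edge case $p \to 1^+$ is harmless: $x_L > 0$ for any finite $L$ and $\lambda < 1$, while $\lambda = 1$ gives the maximally mixed state directly.
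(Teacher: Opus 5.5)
Your proposal is correct and follows the paper's proof essentially step for step. Both use normalized Schatten norms and King's tensor-power hypercontractivity for the depolarizing channel with the optimal input exponent $p-1=(1-\lambda)^2(q-1)$, iterate backward from $q=2$ through all $L$ layers via unitary invariance to reach $\norm{\rho}_{1+x_L,\tau}$, and then unnormalize and specialize to pure inputs and $x_L=e^{-2\mu}$. The range issue you flag is also taken for granted in the paper, and it is harmless: the window $1<p\le q\le 2$ you need follows by H\"older duality from $2\le p\le q$, because the depolarizing channel is self-adjoint and the condition $q-1\le (1-\lambda)^{-2}(p-1)$ is invariant under $(p,q)\mapsto(q',p')$.
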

\begin{proof}
Let $D=2^n$ and define
\begin{equation}
    \norm{X}_{p,\tau}
    =
    \left(
        \frac{1}{D} \tr \{\abs{X}^p\}
    \right)^{1/p}.
\end{equation}
The hypercontractivity inequality~\cite{king2014hypercontractivity} for the product depolarizing channel states that
\begin{equation}
    \norm{\mathcal N_\lambda^{\otimes n}(X)}_{q,\tau}
    \leq
    \norm{X}_{p,\tau},
\end{equation}
whenever $1<p\leq q$ and $p-1 \geq (1-\lambda)^2(q-1)$, and we can take $p_*=1+(1-\lambda)^2(q-1)$ to be the optimal input exponent for a given output exponent $q$.

Since each unitary channel $\mathcal U_l(X)=U_l XU_l^\dagger$ preserves the Schatten norms, we can apply this inequality layer by layer from the last layer to the first.

Let $r_L \coloneq 2$, the hypercontractivity inequality gives
\begin{equation}
    \norm{\mathcal N_\lambda^{\otimes n}\circ\mathcal U_L(\rho_{L-1})}_{r_L,\tau}
    \leq
    \norm{\mathcal U_L(\rho_{L-1})}_{r_{L-1},\tau},
\end{equation}
for $r_{L-1}=1+(1-\lambda)^2(r_L-1)=1+(1-\lambda)^2$.

The unitary layer does not change the norm $\norm{\mathcal U_L(\rho_{L-1})}_{r_{L-1},\tau}=\norm{\rho_{L-1}}_{r_{L-1},\tau}$, thus
\begin{equation}
    \norm{\rho_L}_{2,\tau}
    \leq
    \norm{\rho_{L-1}}_{1+(1-\lambda)^2,\tau}.
\end{equation}

We now repeat the same argument for the $(L-1)$-th depolarizing layer, which gives
\begin{equation}
    \norm{\rho_L}_{2,\tau}
    \leq
    \norm{\rho}_{1+(1-\lambda)^{2L},\tau}.
\end{equation}

Now repeat the same argument all the way to the first layer. 
We then obtain the following upper bound on the purity: 
\begin{equation}
    \tr\{\rho_L^2\}
    \leq
    D\norm{\rho}_{1+x_L,\tau}^2.
\end{equation}

Finally, since $\rho$ is a density operator, it is positive semidefinite.
Therefore $\abs{\rho}=\rho$, and
\begin{equation}
    \norm{\rho}_{1+x_L,\tau}
    =
    \left(
        \frac{1}{D} \tr(\rho^{1+x_L})
    \right)^{\frac{1}{1+x_L}}.
\end{equation}
Substituting this into the purity bound gives
\begin{equation}
    \tr\{\rho_L^2\}
    \leq
    D^{1-\frac{2}{1+x_L}}
    \left(
        \tr(\rho^{1+x_L})
    \right)^{\frac{2}{1+x_L}}.
\end{equation}

In particular, if $\rho$ is pure then $\tr(\rho^{1+x_L})=1$, hence
\begin{equation}
    \tr\{\rho_L^2\}
    \leq
    D^{1-\frac{2}{1+x_L}}
    =
    D^{-\frac{1-x_L}{1+x_L}}.
\end{equation}
Substituting $D=2^n$ and $x_L=(1-\lambda)^{2L}$ gives the stated bound.
Finally, since
\begin{equation}
    x_L=(1-\lambda)^{2L}=e^{-2\mu},
    \quad
    \mu=-L\log(1-\lambda),
\end{equation}
we have
\begin{equation}
    \frac{1-x_L}{1+x_L}
    =
    \frac{1-e^{-2\mu}}{1+e^{-2\mu}}
    =
    \tanh\mu.
\end{equation}
This completes the proof.
\end{proof}

\subsection{Operator entanglement bounds under depolarizing noise}

Combining the hypercontractive purity estimate with the purity-controlled OEE bound (\cref{theorem:maximum_entropy}) gives the following bounds for the OEE.
\begin{lemma}[OEE bounds from hypercontractivity]
\label{lemma:asymptotic_normalized_OE_bound_hypercontractivity}
Assume that the input state is pure, and fix a bipartition $A\mid B$. Let
\begin{equation}
    n_{\min}=\min\{\abs{A},\abs{B}\},
    \qquad
    x_L=(1-\lambda)^{2L},
    \qquad
    \alpha_L = \frac{1-x_L}{1+x_L},
    \qquad
    \beta_L = \frac{2x_L}{1+x_L},
    \qquad
    y_L=n\beta_L .
\end{equation}
If $n_{\min}=0$, then $S_{OE}(\rho_L)=\widetilde S_{OE}(\rho_L)=0$. If $n_{\min}\geq 1$ and $y_L\leq 2n_{\min}$, then
\begin{equation}\label{eqn:OE_bound_hypercontractivity}
    S_{OE}(\rho_L)\leq n2^{-n\alpha_L} + \left(2^{-n\alpha_L}-2^{-n}\right)\log_2\left(\frac{2^{2n_{\min}}-1}{2^{y_L}-1}\right),
\end{equation}
and
\begin{equation}\label{eqn:normalized_OE_bound_hypercontractivity}
    \widetilde S_{OE}(\rho_L) \leq y_L+ \left(1-2^{-y_L}\right) \log_2\left(\frac{2^{2n_{\min}}-1}{2^{y_L}-1}\right).
\end{equation}
If $n_{\min} \geq 1$ and $y_L\geq 2 n_{\min}$, then
\begin{equation}
    S_{OE}(\rho_L) \leq 2^{-n\alpha_L}\left(2n_{\min}+n\alpha_L\right),\qquad \widetilde S_{OE}(\rho_L) \leq 2n_{\min} .
\end{equation}
In particular, for a balanced cut $n_{\min}=\frac{n}{2}$, the first case always applies and gives
\begin{equation}
    S_{OE}(\rho_L) \leq \left(\log_2 \left(\frac{1}{2^{n(1-\alpha_L)}-1}\right)+2n\right) 2^{-n\alpha_L},
\end{equation}
and
\begin{equation}
    \widetilde S_{OE}(\rho_L) \leq n\beta_L+\left(1-2^{-n\beta_L}\right) \log_2\left(\frac{2^n-1}{2^{n\beta_L}-1}\right).
\end{equation}
\end{lemma}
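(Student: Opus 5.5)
The plan is to combine the hypercontractive purity estimate (\cref{lemma:hypercontractive_purity_bound}) with the purity-controlled maximum-OEE bound (\cref{theorem:maximum_entropy}), and to use the monotonicity of the extremal curves (\cref{theorem:property_maximum_entropy_part1,theorem:property_maximum_entropy_part2}) to replace the unknown purity by its upper bound.

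\textbf{Dictionary and purity bound.} First I fix notation: $D_AD_B=2^n$, $D_{\min}=2^{n_{\min}}$ and $D_{\max}=2^{n-n_{\min}}$. Since the input is pure, \cref{lemma:hypercontractive_purity_bound} gives
\begin{equation}
t=\tr\{\rho_L^2\}\leq t_*:=2^{-n\alpha_L}.
\end{equation}
Because $1-\alpha_L=\beta_L$, this is equivalent to $tD_AD_B\leq 2^{n\beta_L}=2^{y_L}$. The degenerate cut $n_{\min}=0$ has operator Schmidt rank one, so there is a single Schmidt coefficient. I will treat it as the trivial-cut convention and dispose of it first.

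\textbf{Case $y_L\leq 2n_{\min}$.} The threshold $D_{\min}/D_{\max}=2^{2n_{\min}-n}$ between the regimes of \cref{theorem:maximum_entropy} satisfies $t_*\leq 2^{2n_{\min}-n}$ exactly when $-n\alpha_L\leq 2n_{\min}-n$, i.e.\ when $n\beta_L\leq 2n_{\min}$. This is precisely the hypothesis $y_L\leq 2n_{\min}$. So $t$ and $t_*$ both lie in $[2^{-n},D_{\min}/D_{\max}]$.
\begin{itemize}
\item In the first two regimes of \cref{theorem:maximum_entropy}, $S_{OE}(\rho_L)\leq S^{\max}_{OE_1}(t)$.
\item Since $n_{\min}\geq1$ gives $D_{\min}\geq2$, \cref{theorem:property_maximum_entropy_part1} says $S^{\max}_{OE_1}$ and $\widetilde S^{\max}_{OE_1}$ are increasing on this interval.
\item Hence $S_{OE}(\rho_L)\leq S^{\max}_{OE_1}(t_*)$ and $\widetilde S_{OE}(\rho_L)\leq\widetilde S^{\max}_{OE_1}(t_*)$.
\end{itemize}
Substituting $t_*\log_2 D_AD_B=n2^{-n\alpha_L}$, $t_*D_AD_B=2^{y_L}$ and $1-1/(t_*D_AD_B)=1-2^{-y_L}$ yields \eqref{eqn:OE_bound_hypercontractivity} and \eqref{eqn:normalized_OE_bound_hypercontractivity} directly.

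\textbf{Case $y_L\geq 2n_{\min}$.} Here $t$ may sit in either regime, so I will use a regime-free bound. The Schmidt spectrum $\{\lambda_i^2\}$ has at most $D_{\min}^2$ entries summing to $t$. Its Shannon entropy is therefore at most $t\log_2(D_{\min}^2/t)$, and equivalently $\widetilde S_{OE}\leq\log_2 D_{\min}^2=2n_{\min}$. The function $t\mapsto t\log_2(D_{\min}^2/t)$ has derivative $\log_2(D_{\min}^2/(te))>0$ for $t\leq1$ and $D_{\min}\geq2$, so it is increasing. Evaluating at $t_*$ gives $2^{-n\alpha_L}(2n_{\min}+n\alpha_L)$.

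\textbf{Balanced cut.} For $n_{\min}=n/2$ we have $y_L=n\beta_L\leq n=2n_{\min}$ because $\beta_L\leq1$, so the first case always applies. The normalized formula is immediate. For the unnormalized one:
\begin{itemize}
\item Since $y_L\leq n$, the logarithm $\log_2\bigl((2^n-1)/(2^{y_L}-1)\bigr)$ is nonnegative, and $y_L>0$ keeps the denominator positive.
\item So I may bound the prefactor $2^{-n\alpha_L}-2^{-n}$ by $2^{-n\alpha_L}$, and $2^n-1$ by $2^n$.
\item This gives $n2^{-n\alpha_L}+2^{-n\alpha_L}\bigl(n+\log_2(1/(2^{n(1-\alpha_L)}-1))\bigr)$, which is the stated form.
\end{itemize}

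The main obstacle is the case split itself. I must confirm that the hypothesis $y_L\leq 2n_{\min}$ places both the true purity and its bound $t_*$ inside the interval where the $S^{\max}_{OE_1}$ branch is a valid upper bound and monotone. Otherwise, replacing $t$ by $t_*$ could cross a regime boundary where the explicit formula is only an inequality. I will check this interval inclusion carefully, including the $D_{\min}\geq 2$ condition needed for monotonicity, before doing any substitution.
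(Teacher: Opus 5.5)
Your proposal is correct and follows the paper's proof. It uses the same hypercontractive purity bound $t\le t_*=2^{-n\alpha_L}$, the same equivalence $y_L\le 2n_{\min}\iff t_*\le D_{\min}/D_{\max}$, monotonicity of the first-branch extremal curves, and, for the balanced cut, the same two relaxations $2^{-n\alpha_L}-2^{-n}\le 2^{-n\alpha_L}$ and $2^n-1\le 2^n$ that the paper applies implicitly. The one variation is for $y_L\ge 2n_{\min}$: you use the regime-free bound $S_{OE}\le t\log_2(D_{\min}^2/t)$, valid for every $t$, instead of the paper's argument that the two branches meet continuously at $D_{\min}/D_{\max}$, which is slightly cleaner and does not even need the case hypothesis.

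Your hedge on $n_{\min}=0$ is warranted. Rank one forces $\widetilde S_{OE}(\rho_L)=0$, but $S_{OE}(\rho_L)=-t\log_2 t$, which is nonzero whenever $\rho_L$ is mixed. So the claim $S_{OE}(\rho_L)=0$ there, which the paper's proof asserts without comment, holds only for pure $\rho_L$ or as a convention.
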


\begin{proof}
Let $D=2^n$ and
\begin{equation}
    T_L\coloneq 2^{-n\alpha_L}.
\end{equation}
By the hypercontractive purity boxund for layerwise product depolarizing noise, given by \cref{lemma:hypercontractive_purity_bound}, we have
\begin{equation}
    \tr\{\rho_L^2\} \leq T_L.
\end{equation}
The operator Schmidt rank across $A\mid B$ is at most $D_{\min}^2=2^{2n_{\min}}$, where $D_{\min}=2^{n_{\min}}$ and $D_{\max}=2^{n-n_{\min}}$. If $n_{\min}=0$, the rank is one and both entropies vanish.

Assume $n_{\min}\geq 1$. Since
\begin{equation}
    T_L D = 2^{n(1-\alpha_L)} = 2^{y_L},
\end{equation}
the condition $y_L \leq 2n_{\min}$ is equivalent to
\begin{equation}
    T_L \leq \frac{D_{\min}}{D_{\max}} .
\end{equation}
In this regime, monotonicity of the first-two-regime bound in \cref{theorem:property_maximum_entropy_part1} allows us to substitute the purity upper bound $T_L$ into \cref{theorem:maximum_entropy}. 
This gives
\begin{equation}
    \begin{aligned}
    S_{OE}(\rho_L) &\leq T_L\log_2D+ \left(T_L-\frac1D\right) \log_2\left(\frac{D_{\min}^2-1}{T_LD-1}\right)\\
    &= n2^{-n\alpha_L} + \left(2^{-n\alpha_L}-2^{-n}\right) \log_2\left(\frac{2^{2n_{\min}}-1}{2^{y_L}-1}\right),
    \end{aligned}
\end{equation}
and the normalized bound follows similarly:
\begin{equation}
    \widetilde S_{OE}(\rho_L) \leq
    \log_2(T_LD) + \left(1-\frac{1}{T_LD}\right) \log_2\left(\frac{D_{\min}^2-1}{T_LD-1}\right),
\end{equation}
which is the displayed expression because $T_LD = 2^{y_L}$.

If $y_L \geq 2n_{\min}$, then $T_L\geq \frac{D_{\min}}{D_{\max}}$. The two purity regimes in \cref{theorem:maximum_entropy} meet continuously at $D_{\min}/D_{\max}$, and the corresponding bounds are monotone on both sides by \cref{theorem:property_maximum_entropy_part1,theorem:property_maximum_entropy_part2}. Hence evaluating the third-regime expression at $T_L$ gives
\begin{equation}
    S_{OE}(\rho_L) \leq T_L\log_2\frac{D_{\min}^2}{T_L} = 2^{-n\alpha_L}(2n_{\min}+n\alpha_L),
\end{equation}
and
\begin{equation}
    \widetilde S_{OE}(\rho_L) \leq \log_2 D_{\min}^2 = 2n_{\min}.
\end{equation}

For a balanced cut, $2n_{\min}=n$ and $y_L=n\beta_L\leq n$, so the first case always applies. Substituting $n_{\min}=n/2$ into the first-case bounds and using
\begin{equation}
    n2^{-n\alpha_L} + 2^{-n\alpha_L}\log_2\left(\frac{2^n}{2^{n(1-\alpha_L)}-1}\right) = \left(\log_2 \left(\frac{1}{2^{n(1-\alpha_L)}-1}\right)+2n\right) 2^{-n\alpha_L}
\end{equation}
gives the stated balanced-cut specialization.
\end{proof}

Considering the computational complexity of Tensor network has an exponential dependence on the (normalized) operator entanglement, as disscussed in \cref{sec:complexity_tensor_network}, the above result implies that the computational complexity of simulating the noisy quantum circuit with depolarizing noise can be significantly reduced compared to the noiseless case, especially when the number of layers $L$ is large.
Specifically, we have the following theorem:
\begin{theorem}[Depolarization-induced separation of simulation thresholds (\cref{thm:main_depolarizing_thresholds} in the main text)]
Consider an $n$-qubit circuit of depth $L$, with independent single-qubit depolarizing noise of fixed strength $\lambda>0$ applied after each layer, and let $\rho_L$ denote the output state. 
Fixing any bipartition $A \mid B$ and let
\begin{equation}
    n_{\min}=\min\{\abs{A},\abs{B}\}.
\end{equation} 
If $n_{\min}=0$, then $S_{OE}(\rho_L)=\widetilde S_{OE}(\rho_L)=0$. Otherwise, the OEEs of $\rho_L$ exhibit two distinct crossover depths,
\begin{equation}
    L_{\mathrm{abs}} = \order{1}
    \qquad
    L_{\mathrm{rel}} = \order{\log n}
\end{equation}
such that $S_{OE}(\rho_L)= \order{\log{n}}$ for $L \gtrsim L_{\mathrm{abs}}$, whereas the same scaling for $\widetilde{S}_{OE}(\rho_L)$ emerges at $L \gtrsim L_{\mathrm{rel}}$.
\end{theorem}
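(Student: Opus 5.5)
The plan is to apply \cref{lemma:asymptotic_normalized_OE_bound_hypercontractivity} directly and read off the depth scales at which the two bounds become $\order{\log n}$. The input state is taken to be pure; for a mixed input the general form of \cref{lemma:hypercontractive_purity_bound} only lowers the purity further, and the same monotonicity argument applies. Write $x_L=(1-\lambda)^{2L}$, $\alpha_L=\frac{1-x_L}{1+x_L}$, $\beta_L=\frac{2x_L}{1+x_L}\le 2x_L$ and $y_L=n\beta_L$. All that is needed is to examine how $\alpha_L$ and $y_L$ depend on $L$ at fixed $\lambda>0$.

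\emph{Absolute threshold.} First choose a constant $L_0=\order{1}$ with $x_{L_0}\le 1/3$, so that $\alpha_L\ge 1/2$ for all $L\ge L_0$. In the regime $y_L\ge 2n_{\min}$ the lemma gives $S_{OE}\le 2^{-n\alpha_L}(2n_{\min}+n\alpha_L)\le 3n\,2^{-n/2}=o(1)$. In the regime $y_L\le 2n_{\min}$ the first term $n2^{-n\alpha_L}$ is $o(1)$. The second term is at most $2^{-n\alpha_L}\log_2\frac{2^{2n_{\min}}}{2^{y_L}-1}$. If $y_L\ge 1$, this is bounded by $2^{-n/2}\cdot 2n$, which is again $o(1)$. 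If $y_L<1$, then $2^{y_L}-1\ge y_L\ln 2$, and the term is controlled by $2^{-n\alpha_L}\bigl(2n+\log_2(1/y_L)\bigr)$. Here $\log_2(1/y_L)=\order{L}+\order{1}$, while $2^{-n\alpha_L}\le 2^{-n/2}$ uniformly in $L$, and $\alpha_L\to1$ as $L$ grows. The one subtlety is that $\log(1/y_L)$ grows linearly in $L$ while the prefactor stays exponentially small in $n$. Using the factor $(2^{-n\alpha_L}-2^{-n})$, which behaves like $2^{-n}\cdot n x_L\ln 2$ once $\alpha_L$ is close to $1$, the product $x_L\log(1/x_L)$ remains bounded. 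Hence $S_{OE}(\rho_L)=\order{1}\subseteq\order{\log n}$ for every $L\ge L_{\mathrm{abs}}:=L_0$.

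\emph{Relative threshold.} Choose $L_{\mathrm{rel}}=\lceil \log_2 n/(2\log\frac{1}{1-\lambda})\rceil+\order{1}=\Theta(\log n)$, so that $x_L\le 1/n$ and therefore $y_L\le 2$ for $L\ge L_{\mathrm{rel}}$. If $y_L\ge 2n_{\min}$, then $\widetilde S_{OE}\le 2n_{\min}\le y_L\le 2$. Otherwise the bound $y_L+(1-2^{-y_L})\log_2\frac{2^{2n_{\min}}-1}{2^{y_L}-1}$ applies. Using $1-2^{-y}\le y\ln 2$ and $2^{y}-1\ge y\ln 2$, it is at most $y_L+y_L\ln2\,(2n_{\min}+\log_2\frac{1}{y_L\ln 2})$. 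With $y_L\le 2n/n$-scale, $2n_{\min}y_L=\order{1}$ once $x_L\lesssim 1/n$. Since $y\log(1/y)$ is bounded, this gives $\widetilde S_{OE}=\order{1}\subseteq\order{\log n}$. In fact it already suffices to have $y_L=\order{\log n/n}\cdot n$, i.e.\ $x_L\le c\log n/n$, which still requires $L=\Theta(\log n)$.

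\emph{Main obstacle: optimality.} The last part of the claim is that $L_{\mathrm{rel}}$ cannot be $o(\log n)$. For this a matching lower bound is needed, i.e.\ an explicit circuit. I would take pairs of qubits straddling a balanced cut and prepare Bell pairs with the first layer, then apply identity gates. Each pair evolves to a two-qubit Werner-type state of the form $\mathcal D^{\otimes2}(\Phi)$, whose Pauli weight is $1/4(\mathbb I\mathbb I)$ plus $(1-\lambda)^{2L}$-weighted $XX,YY,ZZ$ terms. Computing the normalized OEE per pair gives a per-pair entropy $\Theta(x_L\log(1/x_L))$. The normalized OEE is additive over independent pairs, so the total is $\gtrsim n\,x_L$, which exceeds $\log n$ unless $x_L\lesssim \log n/n$, i.e.\ unless $L\ge \Omega(\log n)$. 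The delicate points are verifying this per-pair computation and checking additivity of $\widetilde S_{OE}$ under tensor products (its product form makes the normalized spectra multiply).
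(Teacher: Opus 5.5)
\textbf{Gap in the relative threshold.} This step fails as written because your depth scale is off by a factor of $n$ in $x_L$. Choosing $L_{\mathrm{rel}}$ so that $x_L\le 1/n$ only gives $y_L=n\beta_L\le 2nx_L\le 2$. The dominant term of your own bound, $y_L\ln 2\cdot 2n_{\min}$, is then of order $n$ for a balanced cut, not $\order{1}$. The claim that $2n_{\min}y_L=\order{1}$ once $x_L\lesssim 1/n$ (and likewise ``$y_L=\order{\log n/n}\cdot n$'') confuses $y_L$ with $\beta_L$.

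In this bound the term $(1-2^{-y_L})\log_2 2^{2n_{\min}}$ is of size $\Theta(n_{\min}y_L)$. Keeping it $\order{\log n}$ therefore requires $y_L=\order{\log n/n}$, i.e.\ $x_L\lesssim \log n/n^2$. This is exactly what the paper imposes. With $c_\lambda=-2\ln(1-\lambda)$ and $L_{\mathrm{rel}}=\lceil(2\ln n-\ln\ln n)/c_\lambda\rceil$ one gets $x_L\le \ln n/n^2$, hence $y_L\le 2\ln n/n$. The same elementary inequalities you use ($1-2^{-y}\le y$, $2^{y}-1\ge y\ln 2$, boundedness of $y\log(1/y)$) then give $\widetilde S_{OE}(\rho_L)=\order{\log n}$.

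The repair roughly doubles the constant in your $L_{\mathrm{rel}}$, so $L_{\mathrm{rel}}=\order{\log n}$ survives. However, what you obtain is $\order{\log n}$, not the $\order{1}$ you claim; an $\order{1}$ bound along this route would need $x_L\lesssim 1/n^2$.

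\textbf{Absolute threshold.} Your argument is correct but longer than the paper's. The paper never invokes the refined hypercontractive OEE lemma here. From $\widetilde S_{OE}\le 2n_{\min}\le n$ and $S_{OE}=t\widetilde S_{OE}-t\log_2 t$ it gets $S_{OE}(\rho_L)\le nt_L-t_L\log_2 t_L$, which is increasing in $t_L$. Substituting $t_L\le 2^{-n\alpha_{L_{\mathrm{abs}}}}$ gives $n(1+\alpha_{L_{\mathrm{abs}}})2^{-n\alpha_{L_{\mathrm{abs}}}}$ uniformly for $L\ge L_{\mathrm{abs}}$, so the growth of $\log(1/y_L)$ with $L$ never arises.

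In your route that subtlety must be closed rigorously, and ``behaves like'' is not enough for a bound uniform in $L$. Use the exact identity $2^{-n\alpha_L}-2^{-n}=2^{-n}(2^{y_L}-1)$, which holds since $n(1-\alpha_L)=y_L$. It turns the second term into $2^{-n}u\log_2\frac{2^{2n_{\min}}-1}{u}$ with $u=2^{y_L}-1$, and this is bounded uniformly in $L$.

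\textbf{Optimality.} This is not part of the statement; the paper proves it in a separate proposition using essentially your Bell-pair construction, with identity layers. In your sketch, the normalized Schmidt weights of a noisy pair are proportional to $x_L^2$, not $x_L$. So the per-pair entropy is $\Theta(x_L^2\log(1/x_L))$, and super-logarithmic normalized OEE persists while $x_L\ge n^{-(1-\delta)/2}$. The $\Omega(\log n)$ conclusion is unchanged, only the constant.
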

\begin{proof}
Let
\begin{equation}
    D_{\min}=2^{n_{\min}}.
\end{equation}
The product depolarizing purity estimate is independent of the bipartition. Thus, with
\begin{equation}
    x_L=(1-\lambda)^{2L}, \quad \alpha_L=\frac{1-x_L}{1+x_L}, \quad \beta_L=\frac{2x_L}{1+x_L},
\end{equation}
using \cref{lemma:hypercontractive_purity_bound}, we have
\begin{equation}
    t_L\coloneq \tr\{\rho_L^2\} \leq 2^{-n\alpha_L}.
\end{equation}

We first prove the absolute-error, unnormalized-OEE statement. Across $A\mid B$, the normalized Schmidt distribution has support size at most $D_{\min}^2$, so
\begin{equation}
    \widetilde S_{OE}(\rho_L) \leq \log_2 D_{\min}^2 = 2n_{\min} \leq n .
\end{equation}
By \cref{lemma:normalized_oe_entropy},
\begin{equation}
    S_{OE}(\rho_L)= t_L\widetilde S_{OE}(\rho_L) - t_L\log_2 t_L \leq n t_L - t_L\log_2 t_L .
\end{equation}
Choose any constant depth $L_{\mathrm{abs}}$ for which $\alpha_{L_{\mathrm{abs}}}>0$. 
For all $L\geq L_{\mathrm{abs}}$, $t_L\leq T_n\coloneq 2^{-n\alpha_{L_{\mathrm{abs}}}}$. 
For sufficiently large $n$, $T_n\leq e^{-1}$, and the function $-u\log_2u$ is increasing on $0<u\leq e^{-1}$. Hence
\begin{equation}
    -t_L\log_2 t_L \leq -T_n \log_2 T_n = n\alpha_{L_{\mathrm{abs}}}T_n .
\end{equation}
Therefore
\begin{equation}
    S_{OE}(\rho_L) \leq n(1+\alpha_{L_{\mathrm{abs}}})2^{-n\alpha_{L_{\mathrm{abs}}}} = \order{\log n },
\end{equation}
after absorbing finitely many small values of $n$ into the constant. This proves the constant-depth upper bound for every bipartition.

We now prove the normalized-OEE statement. Let
\begin{equation}
    c_\lambda=-2\ln(1-\lambda), \qquad L_{\mathrm{rel}} = \left\lceil \frac{2\ln n-\ln\ln n}{c_\lambda} \right\rceil,
\end{equation}
and set $y_L=n\beta_L$. 
There is
\begin{equation}
    \beta_L= \frac{2e^{-c_\lambda L}}{1+e^{-c_\lambda L}}
    \leq 2e^{-c_\lambda L}.
\end{equation}
By the choice of $L_{\mathrm{rel}}$, for any $L\geq L_{\mathrm{rel}}$ we have
\begin{equation}
    e^{-c_\lambda L} \leq \frac{\ln n}{n^2},
\end{equation}
and therefore
\begin{equation}
    \beta_L \leq \frac{2\ln n}{n^2},
\end{equation}
which implies that 
\begin{equation}
    y_L = n\beta_L \leq \frac{2\ln n}{n}.
\end{equation}
Since the function $\frac{2\ln x}{x}$ is strictly decreasing for $x\geq 1$, for any positive integer $n$, we have $0<y_L\leq1$. Since $n_{\min}\geq1$, we also have $y_L\leq 2n_{\min}$, and therefore \cref{eqn:normalized_OE_bound_hypercontractivity} of \cref{lemma:asymptotic_normalized_OE_bound_hypercontractivity} applies:
\begin{equation}
    \widetilde S_{OE}(\rho_L)\leq y_L + \left(1-2^{-y_L}\right)\log_2\left(\frac{D_{\min}^2-1}{2^{y_L}-1} \right).
\end{equation}
Because $D_{\min}^2\leq2^n$, this is upper bounded by the balanced-cut expression
\begin{equation}
    \widetilde S_{OE}(\rho_L)\leq y_L + \left(1-2^{-y_L}\right)\log_2\left(\frac{2^n-1}{2^{y_L}-1}\right).
\end{equation}
Using $2^y-1\geq y\ln2$ and $1-2^{-y}\leq y$ for $0<y\leq1$, we obtain
\begin{equation}
    \widetilde S_{OE}(\rho_L) \leq y_L + y_L \log_2\left(\frac{2^n}{y_L \ln2}\right)
    = y_L + y_L \left(n+\log_2\frac1{y_L}-\log_2\ln2\right) = \order{y_L \left[ n+\log_2\frac1{y_L} \right]}.
\end{equation}
Using
\begin{equation}
    y_L=n\beta_L \leq \frac{2\ln n}{n},
\end{equation}
we obtain
\begin{equation}
    y_L n = \order{\log n}.
\end{equation}
Moreover,
\begin{equation}
    y_L\log_2\frac1{y_L} = \order{1}
\end{equation}
Hence
\begin{equation}
    \widetilde S_{OE}(\rho_L) = \order{\log n }
\end{equation}
for all $L\geq L_{\mathrm{rel}}$.

Finally, since $c_\lambda>0$ is a constant for fixed $\lambda>0$, we have
\begin{equation}
    L_{\mathrm{rel}}
    = \left\lceil \frac{2\ln n-\ln\ln n}{c_\lambda}\right\rceil = \order{\log n}.
\end{equation}
This proves the arbitrary-bipartition normalized-OEE upper bound at the same logarithmic depth scale.
\end{proof}

\begin{prop}[Sharpness of the logarithmic relative-error scale]\label{prop:main_depolarizing_log_scale_sharp}
The logarithmic depth scale in the preceding theorem is optimal at the level of normalized operator entanglement. Let $n$ be even and consider the balanced bipartition $A\mid B$ with $\abs{A}=\abs{B}=n/2$.
Let
\begin{equation}
    \Phi=\ketbra{\Phi^+}{\Phi^+},\qquad \ket{\Phi^+}=\frac{\ket{00}+\ket{11}}{\sqrt2},
\end{equation}
and initialize
\begin{equation}
    \rho_0=\Phi^{\otimes n/2},
\end{equation}
namely a product of Bell pairs across $A|B$. For fixed $\lambda\in(0,1)$, define
\begin{equation}
    \rho_L=(\mathcal D_\lambda^{\otimes n})^L(\rho_0), \qquad x_L=(1-\lambda)^{4L}.
\end{equation}
This is an admissible instance of the preceding noisy-circuit setting, obtained by taking every unitary layer to be the identity.
Then the normalized operator entanglement across $A|B$ is exactly
\begin{equation}\label{eq:main_product_bell_exact_noe}
    \widetilde S_{OE}(\rho_L) = \frac{n}{2} \left[ \log_2(1+3x_L) - \frac{3x_L}{1+3x_L}\log_2 x_L \right].
\end{equation}
Consequently, for every fixed $\delta\in(0,1)$, there exists $n_0(\delta,\lambda)$ such that for all $n\ge n_0(\delta,\lambda)$ and all
\begin{equation}\label{eq:main_product_bell_log_window}
    L\leq \frac{1-\delta}{4|\log(1-\lambda)|}\log n,
\end{equation}
one has
\begin{equation}\label{eq:main_product_bell_noe_lower}
    \widetilde S_{OE}(\rho_L)
    \ge
    \frac34(1-\delta)n^\delta\log_2 n.
\end{equation}
Thus $\widetilde S_{OE}(\rho_L)$ remains super-logarithmic throughout a logarithmic initial depth window.
Therefore the relative-error crossover in any uniform normalized-OEE upper bound cannot occur below the scale $L=o(\log n)$.
\end{prop}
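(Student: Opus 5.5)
The plan is to reduce everything to a single Bell pair, using the fact that normalized OEE is additive under tensor products across the cut, and then to do a one-variable monotonicity estimate.

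\emph{Step 1 (factorization).} Since every unitary layer is the identity and the noise is a product channel, $\rho_L=\big((\mathcal D_\lambda^{\otimes 2})^L(\Phi)\big)^{\otimes n/2}$. Each factor straddles the cut, with one qubit in $A$ and one in $B$. If $\sigma=\sum_\alpha \lambda_\alpha L_\alpha\otimes R_\alpha$ and $\sigma'=\sum_\beta \lambda'_\beta L'_\beta\otimes R'_\beta$ are operator Schmidt decompositions across their respective cuts, then $\sigma\otimes\sigma'$ has the Schmidt decomposition with coefficients $\lambda_\alpha\lambda'_\beta$. This holds because the operators $L_\alpha\otimes L'_\beta$ are Hilbert-Schmidt orthonormal on $A$, and likewise on $B$. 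The normalized weights $p_\alpha=\lambda_\alpha^2/\tr\sigma^2$ therefore form a product distribution, and $\widetilde S_{OE}$ is additive. Hence $\widetilde S_{OE}(\rho_L)=\frac n2\,\widetilde S_{OE}(\tau_L)$, where $\tau_L$ is a single noisy Bell pair.

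\emph{Step 2 (single pair).} Write $\Phi=\frac14(\mathbb I\otimes\mathbb I+X\otimes X-Y\otimes Y+Z\otimes Z)$. The composed single-qubit channel $\mathcal D_\lambda^L$ fixes $\mathbb I$ and multiplies each Pauli by $(1-\lambda)^L$. Therefore
\[
\tau_L=\tfrac14\big(\mathbb I\otimes\mathbb I+(1-\lambda)^{2L}(X\otimes X-Y\otimes Y+Z\otimes Z)\big).
\]
In the orthonormal basis $\sigma_i/\sqrt2$ on each side, the Schmidt coefficients are $\frac12$ and $\frac12(1-\lambda)^{2L}$, the latter with multiplicity three. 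The squared weights are therefore $\frac14$ and $\frac{x_L}{4}$ (three times), with $x_L=(1-\lambda)^{4L}$. Normalizing gives the probabilities $\frac1{1+3x_L}$ and $\frac{x_L}{1+3x_L}$ (three times). A direct entropy computation yields $f(x_L)$ with
\[
f(x)=\log_2(1+3x)-\tfrac{3x}{1+3x}\log_2 x,
\]
which is \eqref{eq:main_product_bell_exact_noe}.

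\emph{Step 3 (lower bound).} Differentiating gives $f'(x)=-\frac{3\log_2 x}{(1+3x)^2}$; the two $\frac{3}{(1+3x)\ln 2}$ terms cancel. Hence $f$ is increasing on $(0,1)$. On the window \eqref{eq:main_product_bell_log_window}, reading $\log$ as the natural log consistently, we have
\[
x_L=e^{-4L|\ln(1-\lambda)|}\ge n^{-(1-\delta)}.
\]
Monotonicity then gives $f(x_L)\ge f(n^{-(1-\delta)})$. Dropping the nonnegative term $\log_2(1+3x)$ leaves
\[
f(n^{-(1-\delta)})\ge\frac{3n^{-(1-\delta)}}{1+3n^{-(1-\delta)}}\,(1-\delta)\log_2 n .
\]
Multiplying by $n/2$ and choosing $n_0(\delta,\lambda)$ so that $3n^{-(1-\delta)}\le 1$ bounds the denominator by $2$. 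This gives \eqref{eq:main_product_bell_noe_lower}. The quantity $n^\delta\log_2 n$ is super-logarithmic, so no uniform $\order{\log n}$ normalized-OEE bound can take effect at depth $o(\log n)$.

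I expect no step to be a real obstacle. The only points needing care are checking the orthonormality and multiplicativity claim in Step 1, which makes $\widetilde S_{OE}$ additive, and keeping the base of the logarithm in the depth window consistent with the definition of $x_L$.
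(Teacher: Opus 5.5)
Your proposal is correct and follows the paper's proof essentially step for step. Both arguments use the same ingredients: the factorization $\rho_L=\tau_L^{\otimes n/2}$ with additivity of $\widetilde S_{OE}$, the single-pair Schmidt coefficients $\tfrac12$ and $\tfrac12(1-\lambda)^{2L}$ (threefold), monotonicity of $f$ from $f'(x)=-3\log_2 x/(1+3x)^2$, and the bound $f(x)\ge \tfrac32 x\log_2(1/x)$ for $x\le 1/3$ evaluated at $x=n^{-(1-\delta)}$. The only difference is that you explicitly justify the multiplicativity of the Schmidt coefficients under tensor products, which the paper merely asserts.
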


\begin{proof}
We first compute the normalized operator entanglement of one noisy Bell pair. The Bell state has the Pauli expansion
\begin{equation}
    \Phi = \frac{1}{4} \left( I\otimes I+X\otimes X-Y\otimes Y+Z\otimes Z \right).
\end{equation}
The single-qubit depolarizing channel satisfies
\begin{equation}
    \mathcal D_\lambda(I)=I,\qquad
    \mathcal D_\lambda(P)=(1-\lambda)P,
\end{equation}
for any Pauli operator $P\in\{X,Y,Z\}$.
Hence, if
\begin{equation}
    \tau_L=(\mathcal D_\lambda^{\otimes2})^L(\Phi),
    \qquad
    p_L=(1-\lambda)^{2L},
\end{equation}
then
\begin{equation}
    \tau_L=\frac{1}{4} \left(I\otimes I+p_L X\otimes X-p_L Y\otimes Y+p_L Z\otimes Z\right).
\end{equation}
Introduce the Hilbert-Schmidt orthonormal basis
\begin{equation}
    F_0=\frac{I}{\sqrt2},\qquad
    F_1=\frac{X}{\sqrt2},\qquad
    F_2=\frac{Y}{\sqrt2},\qquad
    F_3=\frac{Z}{\sqrt2}.
\end{equation}
Then
\begin{equation}
    \tau_L=\frac{1}{2} F_0\otimes F_0 +\frac{p_L}{2}F_1\otimes F_1-\frac{p_L}{2}F_2\otimes F_2+\frac{p_L}{2}F_3\otimes F_3 .
\end{equation}
The sign in the $Y\otimes Y$ term may be absorbed into one Schmidt vector, so the operator Schmidt coefficients of $\tau_L$ across $A_i \mid B_i$ are
\begin{equation}
    s_0=\frac{1}{2},\qquad
    s_1=s_2=s_3=\frac{p_L}{2}.
\end{equation}
Therefore
\begin{equation}
    \tr(\tau_L^2)=\sum_{\alpha=0}^3s_\alpha^2=\frac14(1+3p_L^2),
\end{equation}
and the normalized operator-Schmidt probabilities are
\begin{equation}
    q_0=\frac{1}{1+3p_L^2}, \quad q_1=q_2=q_3=\frac{p_L^2}{1+3p_L^2}.
\end{equation}
With $x_L=p_L^2=(1-\lambda)^{4L}$, the one-pair normalized OEE is
\begin{equation}
    \widetilde S_{OE}(\tau_L) = -\sum_{\alpha=0}^3q_\alpha\log_2 q_\alpha
    = \log_2(1+3x_L) - \frac{3x_L}{1+3x_L}\log_2x_L .
\end{equation}

For the full system, $\rho_0=\Phi^{\otimes n/2}$ and the product depolarizing channel factorizes over qubits, hence
\begin{equation}
    \rho_L=\tau_L^{\otimes n/2}.
\end{equation}
Under tensor products across the same bipartition, the normalized operator-Schmidt probabilities multiply. The Shannon entropy of a product distribution is additive, so
\begin{equation}
    \widetilde S_{OE}(\rho_L) =
    \frac {n}{2} \widetilde S_{OE}(\tau_L),
\end{equation}
which proves \eqref{eq:main_product_bell_exact_noe}.

It remains to prove the lower bound. Define
\begin{equation}
    f(x)=\log_2(1+3x)-\frac{3x}{1+3x}\log_2x, \quad 0<x \leq 1 .
\end{equation}
Since
\begin{equation}
    f'(x) = -\frac{3\log_2x}{(1+3x)^2}\geq 0 \qquad (0<x\le1),
\end{equation}
$f$ is increasing on $(0,1]$. Also, because the first term in $f$ is nonnegative,
\begin{equation}
    f(x) \geq \frac{3x}{1+3x}\log_2\frac1x .
\end{equation}
If $0< x \leq 1/3$, then $1+3x \leq 2$, and therefore
\begin{equation}\label{eq:main_product_bell_f_lower}
    f(x)\geq \frac{3}{2} x\log_2\frac{1}{x} .
\end{equation}
Assume \eqref{eq:main_product_bell_log_window}. Since $\log(1-\lambda)<0$, this gives
\begin{equation}
    x_L=(1-\lambda)^{4L} \geq n^{-(1-\delta)}.
\end{equation}
Set $y_n=n^{-(1-\delta)}$. For all sufficiently large $n$, $0< y_n \leq \frac{1}{3}$. By monotonicity of $f$ and by \eqref{eq:main_product_bell_f_lower},
\begin{equation}
    f(x_L) \geq f(y_n) \geq \frac{3}{2}y_n\log_2\frac1{y_n} = \frac{3}{2}n^{-(1-\delta)}(1-\delta)\log_2n .
\end{equation}
Combining this with \eqref{eq:main_product_bell_exact_noe} yields
\begin{equation}
    \widetilde S_{OE}(\rho_L) = \frac {n}{2}f(x_L) \geq \frac{3}{4}(1-\delta)n^\delta\log_2 n,
\end{equation}
which proves \eqref{eq:main_product_bell_noe_lower}. Since $n^\delta\log n$ grows faster than $\log n$ for every fixed $\delta>0$, the normalized OEE is super-logarithmic throughout the window \eqref{eq:main_product_bell_log_window}. This establishes the claimed $\Theta(\log n)$ lower scale for any uniform normalized-OEE crossover.
\end{proof}

Let $\chi_{\mathrm{abs}}(\rho_L)$ and $\chi_{\mathrm{rel}}(\rho_L)$ denote the bond dimensions required for fixed absolute error and fixed relative error, respectively.
As discussed in \cref{sec:complexity_tensor_network}, the bond dimension for fixed absolute error scales as $\chi_{\mathrm{abs}}(\rho_L) \sim \exp{S_{OE}(\rho_L)}$, while the bond dimension for fixed relative error scales as $\chi_{\mathrm{rel}}(\rho_L) \sim \exp{\widetilde{S}_{OE}(\rho_L)}$.
This separation reveals a noise-induced entanglement barrier: local noise suppresses the operator entanglement relevant to absolute accuracy much earlier than it suppresses the normalized operator entanglement relevant to relative accuracy.

\subsection{Efficient whole-trajectory simulation for 1D noisy circuits}

\begin{prop}[Whole-trajectory noisy circuit simulation at fixed error (\cref{prop:main_whole_trajectory_rel_depolarizing} in the main text)]
\label{prop:whole_trajectory_rel_efficiency_general_circuit}
Consider an $n$-qubit noisy trajectory on a one-dimensional architecture,
\begin{equation}
    \rho_{\ell}=\mathcal D_{\lambda}^{\otimes n}\circ \mathcal U_{\ell}(\rho_{\ell-1}),
    \qquad
    \rho_0=\ketbra{0}{0}^{\otimes n},
\end{equation}
where $\lambda\in(0,1)$ is a fixed depolarizing strength, $\mathcal U_{\ell}(\cdot)=U_{\ell}(\cdot)U_{\ell}^{\dagger}$, and each $U_{\ell}$ is a product of disjoint local gates, each acting on at most $\kappa=\order{1}$ consecutive qubits.
Fix a cut of the chain and a relative error tolerance $\varepsilon\in(0,1)$.
Then there exists a sequential trajectory $\{\widehat\rho_{\ell}\}$ such that, for every $\ell$,
\begin{equation}
    \norm{\rho_{\ell}-\widehat\rho_{\ell}}_2^2\leq\varepsilon\norm{\rho_\ell}_2^2,
\end{equation}
and each $\widehat\rho_{\ell}$ has polynomial bond dimension across the prescribed cut.
\end{prop}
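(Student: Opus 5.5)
The plan is to split the trajectory at a depth $L_0=\Theta(\log n)$. Before $L_0$ I would represent $\rho_\ell$ exactly. After $L_0$ I would run a compress-and-evolve scheme whose errors are traceless and hence contracted by the noise.

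\emph{Exact initial segment.} Since $\rho_0$ is a product state and each $U_\ell$ consists of disjoint gates on at most $\kappa$ consecutive qubits, at most one gate per layer straddles the fixed cut. Such a gate multiplies the operator Schmidt rank by at most $4^{\kappa}$. The single-qubit depolarizing layer does not increase it. So $\rho_\ell$ has operator Schmidt rank at most $4^{\kappa\ell}$. Taking $L_0=\lceil(2\ln n-\ln\ln n)/c_\lambda\rceil$ as in the proof of \cref{thm:main_depolarizing_thresholds}, I set $\widehat\rho_\ell=\rho_\ell$ for $\ell\le L_0$. This has bond dimension $4^{\kappa L_0}=\mathrm{poly}(n)$ and zero error.

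\emph{Purity pinning after $L_0$.} The key observation is that for $\ell\ge L_0$ the purity is essentially fixed. On one side, $\tr\rho_\ell^2\ge 2^{-n}$. On the other, \cref{lemma:hypercontractive_purity_bound} gives $\tr\rho_\ell^2\le 2^{-n}2^{y_\ell}$ with $y_\ell\le 2\ln n/n\le 1$. Hence $\norm{\rho_\ell}_2^2\in[2^{-n},2^{1-n}]$ uniformly in $\ell\ge L_0$. A relative error criterion is therefore equivalent, up to a factor of $2$, to the absolute criterion $\norm{\rho_\ell-\widehat\rho_\ell}_2^2\lesssim\varepsilon 2^{-n}$ with a threshold that does not depend on $\ell$. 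This sidesteps the difficulty that purity can drop by exponential factors in a single layer at early times.

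\emph{Compress-and-evolve.} For $\ell>L_0$ I would set $\sigma_\ell=\Phi_\ell(\widehat\rho_{\ell-1})$ and let $\widehat\rho_\ell$ be the ordered Schmidt truncation of $\sigma_\ell$. This truncation would be followed by a trace correction: add $\frac{\tr(\sigma_\ell-\cdot)}{2^n}\mathbb I$ so that $\tr\widehat\rho_\ell=1$. The correction is the orthogonal projection of the truncation error onto the traceless subspace. It therefore does not increase that error's HS norm and raises the rank by at most one.

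Write $e_\ell=\rho_\ell-\widehat\rho_\ell$, which is traceless by induction because $\Phi_\ell$ is trace preserving. For traceless $X$, conjugation by $U_\ell$ preserves tracelessness and $\norm{\cdot}_2$. In the Pauli basis, $\mathcal D_\lambda^{\otimes n}$ multiplies every non-identity string by $(1-\lambda)^{|P|}\le 1-\lambda$. Hence $\norm{\Phi_\ell(X)}_2\le(1-\lambda)\norm{X}_2$. This gives the recursion
\begin{equation}
\norm{e_\ell}_2\le(1-\lambda)\norm{e_{\ell-1}}_2+\tau_\ell,
\end{equation}
where $\tau_\ell$ is the new truncation error. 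Imposing $\tau_\ell\le\tau:=\lambda\sqrt{\varepsilon 2^{-n}}$ yields $\norm{e_\ell}_2\le\tau/\lambda$ for all $\ell$. Combined with the purity pinning, this gives $\norm{e_\ell}_2^2\le\varepsilon\norm{\rho_\ell}_2^2$.

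\emph{Bond dimension, the main obstacle.} The remaining task is to show that achieving $\tau_\ell\le\tau$ needs only $\mathrm{poly}(n)$ bond dimension. In relative terms, the requirement is $\tau^2/\norm{\sigma_\ell}_2^2\ge\lambda^2\varepsilon/2(1+o(1))=:\delta$, a constant. By \cref{thm:D_rel_error}, it suffices that $\widetilde S_{OE}(\sigma_\ell)=\order{\log n}$.

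I would transfer this from $\rho_\ell$, for which \cref{thm:main_depolarizing_thresholds} gives $\widetilde S_{OE}(\rho_\ell)=\order{\log n}$ at $\ell\ge L_0$, using the normalized continuity bound \cref{cor:l2_distance_n_S_OE}. We have $\norm{\rho_\ell-\sigma_\ell}_2\le(1-\lambda)\tau/\lambda$, and both purities are of order $2^{-n}$. The parameter $\tilde T$ in that corollary is then of order $\sqrt{\varepsilon}$, which is at most $1/2$ after rescaling $\varepsilon$ by a constant. The penalty $\tilde T\log_2\chi_0$ is a problem with $\chi_0\le 2^{n}$, since it would only give $\order{\sqrt\varepsilon\, n}$.

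To avoid this I would take $\chi_0$ to be the actual ranks, $\mathrm{rank}(\sigma_\ell)\le(4^\kappa)(\chi+1)$ and $\mathrm{rank}(\rho_\ell)$. The latter may be large, which is the delicate point. If this fails, my fallback is a more direct route: bound the truncation tail of $\sigma_\ell$ via Eckart--Young (\cref{lemma:operator_schmidt_decomposition_eckart_young}). The tail of $\sigma_\ell$ at rank $r$ is at most the tail of $\rho_\ell$ at rank $r$ plus $\norm{\rho_\ell-\sigma_\ell}_2$, and the tail of $\rho_\ell$ at a poly rank is at most $\delta'\norm{\rho_\ell}_2^2$ by \cref{thm:D_rel_error}. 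Splitting the budget $\tau$ between these two contributions, by choosing the constants in the recursion accordingly, closes the induction with $\chi=\mathrm{poly}(n)$.
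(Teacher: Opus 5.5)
Most of your plan matches the paper's proof: exact evolution up to $L_0=\order{\log n}$, purity pinned at $\Theta(2^{-n})$, trace correction, and contraction of traceless errors by the noise. Your fallback compares the truncation of the evolved approximant with a $\mathrm{poly}(n)$-rank truncation of the \emph{exact} state, which is the paper's key Eckart--Young step. You are right that the continuity route fails on the $\tilde T\log_2\chi_0$ term. The gap is that you compress after \emph{every} layer, and with that choice the fallback cannot close the recursion.

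Your only control on the new truncation error is $\tau_\ell\le\norm{\sigma_\ell-\rho_\ell}_2+\sqrt{\delta'}\norm{\rho_\ell}_2$. (It is a square root there; you mixed squared and unsquared tails.) The first term is the inherited error $\norm{\Phi_\ell(e_{\ell-1})}_2$, and your only bound on it is $(1-\lambda)\norm{e_{\ell-1}}_2$. Substituting into your recursion gives
\begin{equation*}
\norm{e_\ell}_2\le 2(1-\lambda)\norm{e_{\ell-1}}_2+\sqrt{\delta'}\norm{\rho_\ell}_2 .
\end{equation*}
Suppose you maintain an invariant $\norm{e_{\ell-1}}_2\le E$ with per-step budget $\tau_\ell\le\lambda E$. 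Then you need $(1-\lambda)E+\sqrt{\delta'}\norm{\rho_\ell}_2\le\lambda E$. This is impossible for $\lambda\le1/2$ however you split the constants, because the inherited error alone already exceeds the per-step budget. Your argument therefore covers only $\lambda>1/2$, while the statement asserts the result for every fixed $\lambda\in(0,1)$.

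The missing idea is to compress only at block endpoints.
\begin{itemize}
\item Choose a constant $b$ with $2(1-\lambda)^b<1$.
\item Evolve $b$ layers exactly; the rank grows by at most $4^{\kappa b}=\order{1}$.
\item Truncate $\eta_{kb}=\Psi_k(\widehat\rho_{(k-1)b})$ to rank $r$ and apply your trace correction, which is the projection onto traceless operators.
\end{itemize}
Your own estimates then give
\begin{equation*}
\norm{e_{kb}}_2\le\norm{\rho_{kb}-\widehat\eta_{kb}}_2\le 2(1-\lambda)^b\norm{e_{(k-1)b}}_2+\sqrt{\delta'}\norm{\rho_{kb}}_2 ,
\end{equation*}
which is now contractive. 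Take $\delta'=\bigl(1-2(1-\lambda)^b\bigr)^2\varepsilon/2$ and use $\norm{\rho_{kb}}_2^2\le2^{1-n}$. The induction then closes at $\norm{e_{kb}}_2\le\sqrt{\varepsilon}\,2^{-n/2}$ with $r=2^{\order{\log n/\delta'}}=\mathrm{poly}(n)$. Intermediate layers keep the bound because they only contract the traceless error.

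This is the paper's construction. The paper requires $3q^b\le1/2$ because it charges a factor $2$ to the trace correction; your projection observation shows that $2q^b<1$ suffices.
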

\begin{proof}
For convenience, we set 
\begin{equation}
    \Phi_\ell
    =
    \mathcal D_\lambda^{\otimes n}\circ \mathcal U_\ell,
    \qquad
    q=1-\lambda.
\end{equation}

We divide the simulation into an exact short-time part and a truncated long-time part.
The cross-over depth $L_\star$ is chosen so that
\begin{equation}
    q^{2L_\star} \leq \frac{c\log n}{n^2}
\end{equation}
for a sufficiently small constant $c>0$, where $q=1-\lambda$.
Equivalently,
\begin{equation}
    L_\star = \order{\log n}
\end{equation}

First, simulate exactly up to depth $L_\star$. Since $L_\star=\order{\log n}$ and the circuit is one-dimensional with local gates of bounded range, the exact MPO bond dimension up to this time is bounded by
\begin{equation}
    \exp{\order{L_\star}}=\mathrm{poly}(n).
\end{equation}
Hence we may set $\widehat\rho_{\ell}=\rho_{\ell}$ for all $\ell\leq L_\star$.

For $\ell\geq L_\star$, set
\begin{equation}
    x_\ell=q^{2\ell},
    \qquad
    \beta_\ell=\frac{2x_\ell}{1+x_\ell}.
\end{equation}
For $\ell\geq L_\star$, the choice of $L_\star$ implies
\begin{equation}
    \beta_\ell \leq 2x_\ell = 2q^{2\ell} \leq \frac{2c\log n}{n^2}.
\end{equation}
Setting $y_\ell=n\beta_\ell$, we have
\begin{equation}
    y_\ell \leq \frac{2c\log n}{n}.
\end{equation}
In particular, for large enough $n$, $0< y_\ell \leq1$. Since $n_{\min}\geq1$, we also have $y_\ell \leq 2n_{\min}$, and therefore \cref{eqn:normalized_OE_bound_hypercontractivity} of \cref{lemma:asymptotic_normalized_OE_bound_hypercontractivity} applies:
\begin{equation}
    \widetilde S_{OE}(y_\ell)\leq y_\ell + \left(1-2^{-y_\ell}\right)\log_2\left(\frac{D_{\min}^2-1}{2^{y_\ell}-1} \right).
\end{equation}
Because $D_{\min}^2\leq2^n$, this is bounded by the balanced-cut expression
\begin{equation}
    \widetilde S_{OE}(\rho_\ell)\leq y_\ell + \left(1-2^{-y_\ell}\right)\log_2\left(\frac{2^n-1}{2^{y_\ell}-1}\right).
\end{equation}

For large enough $n$, we have $0<y_\ell\leq1$. Using $1-2^{-y_\ell}\leq y_\ell$ and
$2^{y_\ell}-1\geq (\ln2)y_\ell$, which hold for $0<y_\ell\leq1$, and substituting into the above gives
\begin{equation}
\begin{aligned}
    \widetilde S_{OE}(\rho_\ell)
    &\leq y_\ell + y_\ell \log_2 \left(\frac{2^n}{(\ln2)y_\ell}\right) \\
    &= y_\ell + y_\ell n + y_\ell\log_2\frac1{y_\ell} + y_\ell\log_2\frac1{\ln2}.
\end{aligned}
\end{equation}
Since $y_\ell n = n^2\beta_\ell \leq 2c\log n$ and $y_\ell\log_2\frac1{y_\ell} = \order{1}$,
we conclude that $\widetilde S_{OE}(\rho_\ell)=\order{\log n}$ for $\ell\geq L_\star$. Hence there is a constant $C_\rho>0$ such that
\begin{equation}
    \widetilde S_{OE}(\rho_\ell) \leq C_\rho\log n
\end{equation}
uniformly for all $\ell\geq L_\star$.

The hypercontractive purity bound (\cref{lemma:hypercontractive_purity_bound}) gives, for $\ell\geq L_\star$,
\begin{equation}
    \norm{\rho_\ell}_2^2
    \leq 2^{-n\frac{1-(1-\lambda)^{2\ell}}{1+(1-\lambda)^{2\ell}}} =
    2^{-n}2^{n\beta_\ell}.
\end{equation}
Since
\begin{equation}
    n\beta_\ell \leq \frac{2c\log n}{n} \leq 1
\end{equation}
for large enough $n$, we may take a universal constant $A>0$ such that
\begin{equation}
    \norm{\rho_\ell}_2^2 \leq A2^{-n}
\end{equation}
for all $\ell\geq L_\star$.

Let $b=\order{1}$ be a block length satisfying
\begin{equation}
    3q^b\leq \frac{1}{2} .
\end{equation}
We will perform block truncation every $b$ layers, starting from the exact state at depth $L_\star$.
For simplicity, assume that $L_\star$ is a block endpoint; otherwise enlarge it by at most $b=\order{1}$ layers, which does not affect the asymptotics.

Let
\begin{equation}
    \Psi_k=\Phi_{kb}\circ\cdots\circ\Phi_{(k-1)b+1}
\end{equation}
denote the $k$-th block channel. 
Since each layer contracts traceless operators in Hilbert-Schmidt norm by at least a factor $q$~\cite{shao2024obppp}, the block channel satisfies
\begin{equation}
    \norm{\Psi_k(X)}_2\leq q^b\norm{X}_2
\end{equation}
for every traceless operator $X$.

We will prove by induction over block endpoints that for $kb\geq L_\star$, there is $\widehat\rho_{kb}$ with polynomial bond dimension such that
\begin{equation}
    \norm{\rho_{kb}-\widehat\rho_{kb}}_2
    \leq
    \mu 2^{-n/2},
\end{equation}
where $\mu>0$ is a constant chosen below and $\widehat\rho_{kb}$ is the approximate state at the $k$-th block endpoint with polynomial bond dimension.

Assume the bound holds at the previous block endpoint $\norm{\rho_{(k-1)b}-\widehat\rho_{(k-1)b}}_2 \leq \mu 2^{-n/2}$, which is true for $k=\frac{L_\star}{b}+1$ since $\widehat\rho_{L_\star}=\rho_{L_\star}$.
We will construct $\widehat\rho_{kb}$ and show that the bound also holds at the next block endpoint $kb$.

Define the pre-truncation approximate state at the next block endpoint by
\begin{equation}
    \eta_{kb} = \Psi_k(\widehat\rho_{(k-1)b}).
\end{equation}
Since the error at the block start is traceless, we have
\begin{equation}
\begin{aligned}
    \norm{\eta_{kb}-\rho_{kb}}_2
    &=\norm{\Psi_k(\widehat\rho_{(k-1)b})-\Psi_k(\rho_{(k-1)b})}_2 \\
    &\leq q^b \norm{\widehat\rho_{(k-1)b}-\rho_{(k-1)b}}_2 \\
    &\leq q^b\mu 2^{-n/2}.
\end{aligned}
\end{equation}

By the exact-state normalized-OEE bound,
\begin{equation}
    \widetilde S_{OE}(\rho_{kb})\leq C_\rho\log n .
\end{equation}
Therefore, for any fixed constant $\alpha>0$, \cref{thm:D_rel_error} implies that there exists an MPO $\sigma_{kb}$ of bond dimension
\begin{equation}
    \chi_\sigma \leq 2^{\order{C_\rho\log n/\alpha^2}} = \mathrm{poly}(n)
\end{equation}
such that
\begin{equation}
    \norm{\rho_{kb}-\sigma_{kb}}_2 \leq \alpha\norm{\rho_{kb}}_2 .
\end{equation}
Using $\norm{\rho_{kb}}_2\leq \sqrt{A}2^{-n/2}$, we get
\begin{equation}
    \norm{\rho_{kb}-\sigma_{kb}}_2 \leq \alpha\sqrt{A}2^{-n/2}.
\end{equation}

Thus
\begin{equation}
\begin{aligned}
    \norm{\eta_{kb}-\sigma_{kb}}_2
    &\leq
    \norm{\eta_{kb}-\rho_{kb}}_2
    +
    \norm{\rho_{kb}-\sigma_{kb}}_2 \\
    &\leq
    \left(
        q^b\mu+\alpha\sqrt A
    \right)2^{-n/2}.
\end{aligned}
\end{equation}

Now we construct $\widehat\rho_{kb}$ to be a best Hilbert-Schmidt MPO approximation to $\eta_{kb}$ with bond dimension at most $\chi_\sigma$, followed by the trace correction,
\begin{equation}
    \widehat\rho_{kb} = \widehat\eta_{kb}+\frac{1-\tr\widehat\eta_{kb}}{2^n}I,
\end{equation}
where $\widehat\eta_{kb}$ is the best MPO approximation with $\chi(\widehat\eta_{kb}) = \chi_\sigma$.
Using \cref{lemma:operator_schmidt_decomposition_eckart_young}, $\widehat\eta_{kb}$ is given by the truncation of the operator Schmidt decomposition of $\eta_{kb}$, which is the best approximation in Hilbert-Schmidt norm among all MPOs with bond dimension at most $\chi_\sigma$.

On the other hand, $\sigma_{kb}$ is an MPO with bond dimension at most $\chi_\sigma$, so
\begin{equation}
    \norm{\eta_{kb}-\widehat\eta_{kb}}_2
    \leq
    \norm{\eta_{kb}-\sigma_{kb}}_2
    \leq
    \left(
        q^b\mu+\alpha\sqrt A
    \right)2^{-n/2}.
\end{equation}
Because $\tr\eta_{kb}=\tr\widehat\rho_{(k-1)b}=1$, we have $\tr\eta_{kb}-\tr\widehat\eta_{kb} = 1-\tr\widehat\eta_{kb}$, and by Cauchy-Schwarz inequality, there is
\begin{equation}
    \abs{1-\tr\widehat\eta_{kb}} = \tr\eta_{kb}-\tr\widehat\eta_{kb} \leq 2^{n/2}\norm{\eta_{kb}-\widehat\eta_{kb}}_2 .
\end{equation}
Thus
\begin{equation}
\begin{aligned}
    \norm{\eta_{kb}-\widehat\rho_{kb}}_2 
    &\leq \norm{\eta_{kb}-\widehat\eta_{kb}}_2 + \norm{\widehat\eta_{kb}-\widehat\rho_{kb}}_2 \\
    &= \norm{\eta_{kb}-\widehat\eta_{kb}}_2 + \norm{\frac{1-\tr\widehat\eta_{kb}}{2^n}I}_2 \\
    &=\norm{\eta_{kb}-\widehat\eta_{kb}}_2 + \abs{1-\tr\widehat\eta_{kb}}2^{-n/2} \\
    &\leq 2\norm{\eta_{kb}-\widehat\eta_{kb}}_2 \\
    &\leq 2\left(q^b\mu+\alpha\sqrt A\right)2^{-n/2}.
\end{aligned}
\end{equation}
As a result, the error at the block endpoint is bounded by
\begin{equation}
\begin{aligned}
    \norm{\rho_{kb}-\widehat\rho_{kb}}_2
    &\leq
    \norm{\rho_{kb}-\eta_{kb}}_2
    +
    \norm{\eta_{kb}-\widehat\rho_{kb}}_2 \\
    &\leq
    q^b\mu 2^{-n/2}
    +
    2\left(
        q^b\mu+\alpha\sqrt A
    \right)2^{-n/2} \\
    &=
    \left(
        3q^b\mu+2\alpha\sqrt A
    \right)2^{-n/2}.
\end{aligned}
\end{equation}

By the choice of block length,
\begin{equation}
    3q^b\leq \frac12.
\end{equation}
Now choose
\begin{equation}
    \alpha
    \leq
    \frac{\mu}{4\sqrt A}.
\end{equation}
Then
\begin{equation}
    3q^b\mu+2\alpha\sqrt A
    \leq
    \frac{\mu}{2}+\frac{\mu}{2}
    =
    \mu.
\end{equation}
Thus the induction closes:
\begin{equation}
    \norm{\rho_{kb}-\widehat\rho_{kb}}_2
    \leq
    \mu 2^{-n/2}.
\end{equation}
For the bond dimension, we have $\chi(\widehat\rho_{kb})\leq \chi_\sigma + 1 = \mathrm{poly}(n)$ at block endpoints, because the trace correction only adds a rank-one identity operator.

Now choose
\begin{equation}
    \mu\leq \sqrt\varepsilon .
\end{equation}
Since every $n$-qubit density matrix satisfies
\begin{equation}
    \norm{\rho_\ell}_2^2\geq 2^{-n},
\end{equation}
we get, at every block endpoint,
\begin{equation}
    \norm{\rho_{kb}-\widehat\rho_{kb}}_2^2\leq \varepsilon 2^{-n}
    \leq \varepsilon\norm{\rho_{kb}}_2^2 .
\end{equation}

It remains to control the intermediate times inside a block. Let
\begin{equation}
    \ell=kb+s,\quad 0\leq s<b.
\end{equation}
The approximate state $\widehat\rho_\ell$ is obtained by applying the next $s$ noisy layers exactly to $\widehat\rho_{kb}$, without truncation.
Since these layers are Hilbert-Schmidt contractions on trace-zero errors,
\begin{equation}
    \norm{\rho_\ell-\widehat\rho_\ell}_2 \leq \norm{\rho_{kb}-\widehat\rho_{kb}}_2 \leq \mu 2^{-n/2}.
\end{equation}
Therefore
\begin{equation}
    \norm{\rho_\ell-\widehat\rho_\ell}_2^2
    \leq
    \varepsilon 2^{-n}
    \leq
    \varepsilon\norm{\rho_\ell}_2^2.
\end{equation}

Finally, the bond dimensions remain polynomial. 
At block endpoints, the truncation and trace correction enforce bond dimension at most $\chi_\sigma+1=\mathrm{poly}(n)$.
Inside each block, only $b=\order{1}$ local layers are applied without truncation, so the bond dimension increases by at most a constant factor.
Since the number of blocks is polynomial and each step acts on polynomial-size MPOs, the whole trajectory is simulated with polynomial bond dimension. This completes the proof.
\end{proof}

\begin{corollary}[Fixed absolute error from relative-error simulation]
\label{prop:whole_trajectory_abs_efficiency_general_circuit}
Under the assumptions of \cref{prop:whole_trajectory_rel_efficiency_general_circuit}, the same sequential MPO trajectory also achieves any fixed absolute error tolerance $\varepsilon\in(0,1)$ with polynomial bond dimension:
\begin{equation}
    \norm{\rho_\ell-\widehat\rho_\ell}_2^2\leq \varepsilon.
\end{equation}
\end{corollary}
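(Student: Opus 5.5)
The corollary follows directly from \cref{prop:whole_trajectory_rel_efficiency_general_circuit}: I will use the same sequential trajectory $\{\widehat\rho_\ell\}$ with the same tolerance $\varepsilon$ and convert the relative guarantee into an absolute one via a uniform upper bound on the purity. No new construction is needed; the only inputs are the relative-error bound and the fact that every density operator has purity at most one.

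First, invoke \cref{prop:whole_trajectory_rel_efficiency_general_circuit} with relative tolerance $\varepsilon\in(0,1)$. This yields, for every $\ell$, a $\widehat\rho_\ell$ of polynomial bond dimension across the prescribed cut with
\begin{equation}
    \norm{\rho_\ell-\widehat\rho_\ell}_2^2\leq\varepsilon\norm{\rho_\ell}_2^2 .
\end{equation}
Second, note that $\rho_\ell$ is a genuine density operator, because it is the image of $\rho_0$ under a composition of CPTP maps. Its eigenvalues therefore lie in $[0,1]$ and sum to one, which gives $\norm{\rho_\ell}_2^2=\tr\rho_\ell^2\leq\tr\rho_\ell=1$. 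Combining the two steps gives $\norm{\rho_\ell-\widehat\rho_\ell}_2^2\leq\varepsilon$ for every $\ell$, while the bond-dimension bound is inherited unchanged.

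One could instead read the stronger statement off the proof of the proposition: the induction actually gives $\norm{\rho_\ell-\widehat\rho_\ell}_2^2\leq\varepsilon 2^{-n}$ for $\ell\geq L_\star$, and the error is zero for $\ell\leq L_\star$. This is far below $\varepsilon$, but it is not needed.

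There is no real obstacle here. The only point that needs care is confirming that the purity bound applies to the exact state $\rho_\ell$ and not to $\widehat\rho_\ell$, which need not be positive after trace correction. The argument uses only $\norm{\rho_\ell}_2\leq1$, so this is not an issue.
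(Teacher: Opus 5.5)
Your proposal is correct and is the same argument as the paper's proof. Both invoke the relative-error guarantee $\norm{\rho_\ell-\widehat\rho_\ell}_2^2\leq\varepsilon\norm{\rho_\ell}_2^2$ from the whole-trajectory proposition, then bound the purity of the exact state by $\tr(\rho_\ell^2)\leq 1$. The polynomial bond dimension carries over unchanged.
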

\begin{proof}
The relative-error construction gives
\begin{equation}
    \norm{\rho_\ell-\widehat\rho_\ell}_2^2
    \leq
    \varepsilon\norm{\rho_\ell}_2^2 .
\end{equation}
Since every density matrix satisfies $\norm{\rho_\ell}_2^2=\tr(\rho_\ell^2)\leq1$, the absolute-error bound follows immediately.
\end{proof}

\section{Dissipation driven by general noise}\label{appendix:general_noise}

\subsection{Single-qubit noise}
Here we consider the general single qubit noise channel $\mathcal{N}$. Using the following lemma, we can obtain the normal form of single-qubit noise channels:
\begin{lemma}[Normal form of a quantum channel~\cite{king2001minimal,ruskai2002analysis}]
    Any single-qubit quantum channel $\mathcal{N}$ can be brought into a canonical form via pre- and post-unitary rotations such that 
    \begin{equation}
        \mathcal{N}(\cdot) = U \mathcal{N}'(V^\dagger (\cdot) V) U^\dagger,
    \end{equation}
    where $U$ and $V$ are single-qubit unitary operators, and $\mathcal{N}'$ is a quantum channel with the following Pauli transfer matrix $\left(\mathcal{S}_{\mathcal{N}'}\right)_{i,j}=\tr{\mathcal{N}'(\sigma_i) \sigma_j}$ associated with the Pauli basis $\{\sigma_0 = \mathbb{I}, \sigma_1 = X, \sigma_2 = Y, \sigma_3 = Z\}$:
    \begin{equation}
        \mathcal{S}_{\mathcal{N}'} = \begin{pmatrix}
            1 & 0 & 0 & 0 \\
            t_X & D_X & 0 & 0 \\
            t_Y & 0 & D_Y & 0 \\
            t_Z & 0 & 0 & D_Z
        \end{pmatrix},
    \end{equation}
    where $D_i \in \mathbb{R}$, $t_i \in \mathbb{R}$ for $i = X,Y,Z$, and the entires of $\vec{D}$ have the same sign.
\end{lemma}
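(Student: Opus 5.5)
The plan is to work entirely at the level of the Pauli transfer matrix and reduce the statement to a signed singular value decomposition of a real $3\times 3$ matrix. The rotations produced by that decomposition are then lifted to single-qubit unitaries through the double cover $SU(2)\to SO(3)$.

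\textbf{Structure of the transfer matrix.} First I would record what complete positivity is not needed for, and what trace preservation and Hermiticity preservation already give. With the normalization $\tfrac12\tr{\cdot}$ implicit in the displayed form, Hermiticity preservation makes $\mathcal{S}_{\mathcal N}$ real. Trace preservation gives $\tr{\mathcal N(\sigma_i)}=\tr{\sigma_i}$, which fixes the first column to $(1,0,0,0)^{T}$. Hence
\begin{equation}
\mathcal{S}_{\mathcal N}=\begin{pmatrix}1&0\\ \vec t & T\end{pmatrix},
\end{equation}
where $T\in\mathbb{R}^{3\times3}$ and $\vec t\in\mathbb{R}^3$. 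Equivalently, on Bloch vectors the channel acts as the affine map $\vec r\mapsto T^{T}\vec r+\vec t$ (up to the paper's row/column convention).

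\textbf{Unitaries act as rotations.} Next I would show that a unitary channel $\mathcal U(\cdot)=U(\cdot)U^\dagger$ has transfer matrix $\mathrm{diag}(1,R_U)$ with $R_U\in SO(3)$, and that every $R\in SO(3)$ arises this way. This is the standard $SU(2)\to SO(3)$ double cover. Composition of channels corresponds to multiplication of transfer matrices, with the order fixed by the convention $S_{ij}=\tr{\mathcal N(\sigma_i)\sigma_j}$. Therefore
\begin{equation}
\mathcal{S}_{U\mathcal N'(V^\dagger\cdot V)U^\dagger}=\mathrm{diag}(1,R_V')\,\mathcal{S}_{\mathcal N'}\,\mathrm{diag}(1,R_U')
\end{equation}
for suitable rotations $R_U',R_V'$ determined by $U$ and $V$.

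Conjugating the block form shows that the lower-right block transforms as $T\mapsto R_1 T R_2$ and the displacement as $\vec t\mapsto R\,\vec t$ for the appropriate one of the two rotations. In particular, $\vec t$ stays an arbitrary real vector, so the first column and the $\vec t$ entries keep the claimed shape.

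\textbf{Signed SVD.} The core step is to write $T=O_1\Sigma O_2$ with $O_1,O_2\in O(3)$ and $\Sigma=\mathrm{diag}(s_1,s_2,s_3)\geq0$, and then make both orthogonal factors proper.
\begin{itemize}
\item If $\det O_1=\det O_2=1$, set $D=\Sigma$.
\item If $\det O_1=\det O_2=-1$, replace $(O_1,O_2)$ by $(-O_1,-O_2)$. This leaves the product unchanged, because $(-1)^2=1$, and both determinants flip since the dimension is $3$. Again $D=\Sigma$.
\item If the determinants differ, replace only the one equal to $-1$ by its negative, and compensate with $D=-\Sigma$.
\end{itemize}
In every case $T=R_1DR_2$ with $R_1,R_2\in SO(3)$ and all entries of $D$ of a common sign; that common sign is the sign of $\det T$ when $T$ is nonsingular. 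Lifting $R_1^{-1}$ and $R_2^{-1}$ to $U$ and $V$ via the double cover then brings $\mathcal S$ to the stated canonical form, with $\vec t$ replaced by its rotated image.

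\textbf{Main obstacle.} I expect the only delicate point to be bookkeeping. The convention $S_{ij}=\tr{\mathcal N(\sigma_i)\sigma_j}$ makes the transfer matrix act on row vectors. One must check that pre-composition with $V^\dagger\cdot V$ and post-composition with $U\cdot U^\dagger$ multiply on the correct sides, and that the displacement picks up only one of the two rotations. I would verify this on a single generic example, such as amplitude damping followed by a rotation, before fixing signs and transposes.
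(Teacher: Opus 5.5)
Your proof is correct and is essentially the standard argument behind this lemma, which the paper does not prove itself but cites from King--Ruskai and Ruskai--Szarek--Werner. The argument runs as follows: take a real SVD of the $3\times 3$ block, negate orthogonal factors to make them proper at the cost of a common sign on $D$, lift $SO(3)$ to $SU(2)$, and observe that $\mathcal{N}'(X)=U^\dagger\mathcal{N}(VXV^\dagger)U$ is automatically a channel.

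The only slip is in the bookkeeping you already flagged. The displayed matrix corresponds to $\mathcal{S}_{ij}=\tfrac{1}{2}\tr{\sigma_i\mathcal{N}(\sigma_j)}$, so the paper's stated index order is transposed. In that convention trace preservation fixes the first row rather than the first column, which is what your block form actually uses.
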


The parameters $D_i$ and $t_i$ for a legitimate quantum channel $\mathcal{N}'$ must satisfy certain constraints, for example, there is a contraction condition holding:
\begin{lemma}[Contraction coefficient in terms of the normal form parameters~\cite{mele2026noise}]
    For any single-qubit quantum channel $\mathcal{N}'$, the parameters $\vec{D}$ and $\vec{t}$ satisfy:
    \begin{equation}\label{eq:c:contraction_condition}
        c \coloneq \frac{1}{3} \left( t_X^2 + t_Y^2 + t_Z^2 + D_X^2 + D_Y^2 + D_Z^2 \right) \leq 1,
    \end{equation}
    and the equality holds if and only if the quantum channel is unitary. Furthermore, there is $\norm{\vec{t}}_2 \leq 1 $ and $\norm{\vec{D}}_\infty \leq 1$.
\end{lemma}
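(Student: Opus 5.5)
The plan is to read every claimed inequality off the positivity of suitable output states of $\mathcal N'$, written in the Pauli basis. From the normal form, $\mathcal N'(\mathbb I)=\mathbb I+\vec t\cdot\vec\sigma$ and $\mathcal N'(\sigma_j)=D_j\sigma_j$ for $j=X,Y,Z$. Equivalently, a Bloch vector $\vec r$ is mapped to $\vec t+D\vec r$ with $D=\mathrm{diag}(D_X,D_Y,D_Z)$.

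\textbf{Step 1 (the bound $c\le 1$ via the Choi state).} I would form the normalized Choi state $J=(\mathcal N'\otimes\mathrm{id})(\ketbra{\Phi^+}{\Phi^+})$. Using the Pauli expansion $\ketbra{\Phi^+}{\Phi^+}=\tfrac14\sum_{j}s_j\,\sigma_j\otimes\sigma_j$ with signs $s_j=\pm1$ (as in the proof of \cref{prop:main_depolarizing_log_scale_sharp}), this becomes
\begin{equation}
J=\frac14\Big[(\mathbb I+\vec t\cdot\vec\sigma)\otimes\mathbb I+\sum_{j=1}^3 s_jD_j\,\sigma_j\otimes\sigma_j\Big].
\end{equation}
The seven two-qubit Pauli strings appearing here are Hilbert--Schmidt orthogonal, each with squared norm $4$. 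Hence
\begin{equation}
\tr J^2=\frac{4}{16}\big(1+\norm{\vec t}_2^2+\norm{\vec D}_2^2\big)=\frac{1+3c}{4}.
\end{equation}
Since $J$ is a density operator (complete positivity), $\tr J^2\le 1$, and this is exactly $c\le 1$.

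\textbf{Step 2 (equality iff unitary).} Equality $c=1$ holds iff $\tr J^2=1$, i.e. iff $J$ is pure. By the Choi--Kraus correspondence this means Kraus rank one, $\mathcal N'(\cdot)=K(\cdot)K^\dagger$. Trace preservation then forces $K^\dagger K=\mathbb I$, so $K$ is unitary. Conversely, a unitary channel maps $\ketbra{\Phi^+}{\Phi^+}$ to a pure state, giving $c=1$. I expect this to be the only step needing care, but it is standard once the purity identity of Step 1 is in hand. Since $U,V$ are unitary, unitarity of $\mathcal N'$ is equivalent to unitarity of $\mathcal N$, so the characterization transfers to $\mathcal N$.

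\textbf{Step 3 (bounds on $\vec t$ and $\vec D$).} For $\norm{\vec t}_2\le1$, I would use that $\mathcal N'(\mathbb I/2)=\tfrac12(\mathbb I+\vec t\cdot\vec\sigma)$ is a qubit state, so its Bloch vector satisfies $\norm{\vec t}_2\le 1$. For $\norm{\vec D}_\infty\le1$, feed in the pure states $\tfrac12(\mathbb I\pm\sigma_j)$. Their images have Bloch vectors $\vec t\pm D_j\vec e_j$, each of norm at most $1$. By the triangle inequality,
\begin{equation}
2\abs{D_j}=\norm{(\vec t+D_j\vec e_j)-(\vec t-D_j\vec e_j)}_2\le 2,
\end{equation}
so $\abs{D_j}\le1$ for every $j$.
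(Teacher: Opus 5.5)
Your proof is correct. The paper gives no argument of its own here; it imports the lemma from Ref.~\cite{mele2026noise}. Your write-up is therefore a self-contained derivation, not an alternative to one in the paper. The identity $\tr J^2=(1+3c)/4$ for the normalized Choi state carries the first two claims: $c\le 1$ is just $\tr J^2\le 1$, and the equality case is purity of $J$, i.e.\ Kraus rank one, which trace preservation turns into a unitary. The identity also gives a coordinate-free characterization $c=\tfrac13\bigl(4\tr J^2-1\bigr)$. Choi purity is unchanged under pre- and post-composition with unitaries, so $c(\mathcal N)$ does not depend on the choice of $U,V$ in the normal form; this is a useful fact the paper never states. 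Step~3 is genuinely needed, since $c\le 1$ alone only gives $\sum_j D_j^2\le 3$, not $\abs{D_j}\le 1$, and your Bloch-vector argument with the two states $\tfrac12(\mathbb I\pm\sigma_j)$ settles it. Finally, you read the displayed transfer matrix with columns as inputs: $\mathcal N'(\mathbb I)=\mathbb I+\vec t\cdot\vec\sigma$ and $\mathcal N'(\sigma_j)=D_j\sigma_j$. That is the correct reading. It is the only one compatible with trace preservation and with the affine form $\mathcal N(\rho)=\tfrac12\bigl(\mathbb I+(D\vec v+\vec t)\cdot\vec\sigma\bigr)$ used later in the paper. The index formula $\mathcal S_{ij}=\tr[\mathcal N'(\sigma_i)\sigma_j]$ as printed is transposed and unnormalized relative to that matrix.
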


For example, the amplitude damping noise channel $\mathcal{N}_{AD}$ is defined as:
\begin{equation}
    \mathcal{N}_{AD}(\sigma) = E_0 \sigma E_0^{\dagger} + E_1 \sigma E_1^{\dagger},
\end{equation}
where $\sigma$ is the input single-qubit state, and the Kraus operators are given by:
\begin{equation}
    E_0 = \ketbra{0}{0} + \sqrt{1 - \gamma} \ketbra{1}{1}, \quad E_1 = \sqrt{\gamma} \ketbra{0}{1},
\end{equation}
where $\gamma \in [0, 1]$ is the amplitude damping probability.
Its parameters in the normal form are given by:
\begin{equation}
    D_X = \sqrt{1 - \gamma}, \quad D_Y = \sqrt{1 - \gamma}, \quad D_Z = 1 - \gamma, \quad t_X = 0, \quad t_Y = 0, \quad t_Z = \gamma.
\end{equation}
The fixed point of the amplitude damping noise channel is the ground state $\ket{0}$, which is a pure state.
Thus, the amplitude damping noise channel can increase the purity of a state by driving it towards the ground state.

\subsection{Auxiliary orbit state}

Here we consider a specific case that the quantm system is arranged in a 1D chain and the unitary evolution is given by a circuit with nearest-neighbor two-qubit gates, and the noise is applied to each qubit after each layer of unitary gates, as illustrated in \cref{fig:random_circuit}.
Formally, we denote the $\ell$-th layer of unitary gates as $\mathcal{U}_\ell$, which is a quantum channel defined as $\mathcal{U}_\ell(\cdot) = U_\ell (\cdot) U_\ell^\dagger$, where $U_\ell$ is the unitary operator corresponding to the $\ell$-th layer constructed by nearest-neighbor two-qubit gates.
The noise channel after applying the $\ell$-th layer of unitary gates can be expressed as $\mathcal{N}^{\otimes n}$, where $\mathcal{N}$ is the single-qubit noise channel applied to each qubit independently.
Consequently, the noisy circuit is given by:
\begin{equation}
    \Phi = \left( \mathcal{N}^{\otimes n} \circ \mathcal{U}_{L}\right) \circ \cdots \circ \left( \mathcal{N}^{\otimes n} \circ \mathcal{U}_1 \right),
\end{equation}
where $L$ is the total number of layers of unitary gates in the circuit.

\begin{figure}[ht]
    \centering
    \includegraphics[width=0.5\textwidth]{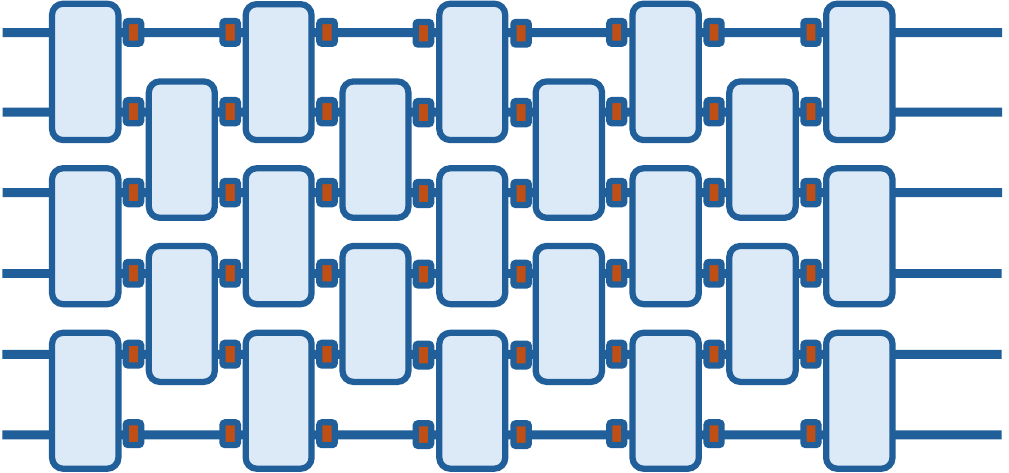}
    \caption{Schematic of a 1D circuit with nearest-neighbor two-qubit gates and noise applied after each layer of unitary gates. The noise is drawn in red color.}
    \label{fig:random_circuit}
\end{figure}

For convenience, we denote the composite noise channel corresponding to $\ell$-th layer of unitary gates as $\Phi_\ell = \mathcal{N}^{\otimes n} \circ \mathcal{U}_\ell$, and the nosiy circuit form a certain layer $\ell$ to layer $\ell'$ as:
\begin{equation}
    \Phi_{[\ell:\ell']} = \Phi_{\ell'} \circ \cdots \circ \Phi_\ell,
\end{equation} 
where $1 \leq \ell < \ell' \leq L$.
Then the overall noisy circuit can be expressed as $\Phi = \Phi_{[L:1]}$.

To capture the effect of noise on the operator entanglement, we introduce the concept of auxiliary orbit state, which is defined as the state obtained by applying the composite noise channel corresponding to a certain time step and memory time to the ground state $\ketbra{0}{0}^{\otimes n}$.
Specifically, for a given time step $\ell$ and memory time $m$, we define the auxiliary orbit state $\sigma_\ell^{(m)}$ as:
\begin{equation}
    \sigma_\ell^{(m)} = \Phi_\ell \circ \cdots \circ \Phi_{\ell-m+1} (\ketbra{0}{0}^{\otimes n}) = \Phi_{[\ell:\ell-m+1]} (\ketbra{0}{0}^{\otimes n}),
\end{equation}

While for the state of the system at time step $\ell$, we denote it as $\rho_\ell$, which can be expressed as:
\begin{equation}
    \rho_\ell = \Phi_\ell \circ \cdots \circ \Phi_1 (\rho),
\end{equation}
where $\rho$ is the initial state of the system, in the following we will consider the case that $\rho$ is a product state, for example, $\ketbra{0}{0}^{\otimes n}$.

If the system is divided into two subsystems $A$ and $B$, and the qubits are arranged in 1D chain, thus there is at most $m$ two-qubit gates acting across the boundary between $A$ and $B$ in the last $m$ layers of unitary gates, we can show that the operator entanglement entropy of the auxiliary orbit state $S_{OE}(\sigma_\ell^{(m)})$ is upper bounded by $2m$.
    \begin{lemma}\label{lemma:auxiliary_orbit_oee}
    For the auxiliary orbit state $\sigma_\ell^{(m)}$ defined above, its opeator Schmidt rank $\chi$ for any bipartition of the system into two subsystems $A$ and $B$ is upper bounded by:
    \begin{equation}
        \text{Schmidt rank}(\sigma_\ell^{(m)}) \leq 2^{2m},
    \end{equation}
    and thus its operator entanglement entropy is upper bounded by:
    \begin{equation}
        S_{OE}(\sigma_\ell^{(m)}) \leq 2m.
    \end{equation}
\end{lemma}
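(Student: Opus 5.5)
The plan is to bound the operator Schmidt rank of $\sigma_\ell^{(m)}$ directly. The entropy bound then follows because $S_{OE}$ is a Shannon entropy of a sub-normalized distribution supported on at most that many entries. Write $\sigma_\ell^{(m)}=\Phi_{[\ell:\ell-m+1]}(\ketbra{0}{0}^{\otimes n})$. I will track how the Schmidt rank across the fixed cut $A\mid B$ changes as each channel in this composition is applied.

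\emph{Step 1 (initial state).} The input $\ketbra{0}{0}^{\otimes n}=\ketbra{0}{0}^{\otimes A}\otimes\ketbra{0}{0}^{\otimes B}$ is a product across the cut, so its operator Schmidt rank is $1$.

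\emph{Step 2 (noise layers).} The noise layer $\mathcal N^{\otimes n}$ factorizes as $\mathcal N^{\otimes A}\otimes\mathcal N^{\otimes B}$. Applying it to $\sum_\alpha \lambda_\alpha L_\alpha\otimes R_\alpha$ gives $\sum_\alpha\lambda_\alpha \mathcal N^{\otimes A}(L_\alpha)\otimes \mathcal N^{\otimes B}(R_\alpha)$. This is a sum of at most as many product terms as before. Since the Schmidt rank equals the rank of the reshaped coefficient matrix, which is subadditive and cannot increase under $X\mapsto (M_A\otimes M_B)X$, the rank does not grow.

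\emph{Step 3 (unitary layers).} In a 1D chain with nearest-neighbour two-qubit gates, each layer $\mathcal U_j$ contains at most one gate $G$ straddling the cut. All other gates act within $A$ or within $B$ and, as in Step 2, do not increase the rank. For the straddling gate, I operator-Schmidt decompose $G=\sum_{k=1}^{r_G} g_k^{A}\otimes g_k^{B}$, with $r_G\le 4$ because $G$ acts on one qubit of $A$ and one qubit of $B$. The conjugation $X\mapsto GXG^\dagger=\sum_{k,k'}(g_k^A\otimes g_k^B)X(g_{k'}^{A\dagger}\otimes g_{k'}^{B\dagger})$ maps each product term $L\otimes R$ to a sum of at most $r_G^2=16$ product terms. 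This crude count gives a factor of $16$ per layer, which is too weak.

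The sharper bound comes from viewing the straddling gate as a superoperator on the doubled space $\mathcal B(\mathcal H_{a})\otimes\mathcal B(\mathcal H_{b})$, where $a\in A$ and $b\in B$ are the two boundary qubits. Its operator Schmidt rank across $a\mid b$ is at most $\min(\dim\mathcal B(\mathcal H_a)^2,\dim\mathcal B(\mathcal H_b)^2)=16$. A cleaner route is to use the fact that the vectorized channel $G\otimes\bar G$ has Schmidt rank $r_G^2$ across the cut. With $r_G\le 4$ this again gives $16$, so this route does not improve the count either.

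\emph{Main obstacle: reaching $4$ per layer.} To get $2^{2m}$, each crossing layer must multiply the rank by at most $4$. I would instead bound the total, in the spirit of a light-cone or min-cut argument on the MPO bond. After $m$ layers, the vectorized state $\|\sigma\rangle\!\rangle$ is a product state acted on by $m$ crossing two-site superoperators. Each crossing gate can increase the bond dimension by at most the local doubled dimension $4$ of one boundary site, since rank$(XY)\le\min$. Concretely, one can absorb the boundary site: the Schmidt rank across $A\mid B$ is at most $4\times$ the Schmidt rank across $(A\setminus a)\mid(a\cup B)$, or its mirror. By applying this absorption argument inductively layer by layer, I expect to show that each crossing gate raises the bound by a factor of at most $4$ (relative to a cut that the gate does not straddle), giving rank $\le 4^m=2^{2m}$.

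This bookkeeping step is where I expect the difficulty to lie. If it fails, the fallback is the weaker bound $16^m$, which yields $S_{OE}\le 4m$ and would not affect the later $\order{\log n}$ and $\order{1}$ conclusions.

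\emph{Entropy bound.} Given Schmidt rank $\chi\le 2^{2m}$ and weights $p_\alpha=\lambda_\alpha^2$ with $\sum p_\alpha=t\le1$, I have $S_{OE}=-\sum p_\alpha\log_2 p_\alpha = t\,\widetilde S_{OE}-t\log_2 t$. This is at most $t\log_2\chi - t\log_2 t$, and is maximized at $t=1$ when $\chi\ge 2$. Since $t\log_2(\chi/t)\le\log_2\chi$ whenever $\chi\ge e$, it follows that $S_{OE}\le 2m$. The cases $m=0$ and $\chi<e$ can be checked directly.
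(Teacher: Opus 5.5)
Your route is different from the paper's and is the sound one, but as written it does not yet prove the lemma. The step that yields a factor $4$ instead of $16$ is left as something you ``expect to show''. The announced fallback $16^m$ would only give $S_{OE}\le 4m$, which is a different statement.

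\textbf{Closing the gap.} The absorption idea works, provided you induct on \emph{all} cuts simultaneously and pay the factor $4$ per layer rather than per gate crossing the fixed cut. Let $r_k(p)$ be the operator Schmidt rank, after $k$ of the $m$ layers, across the cut between sites $p$ and $p+1$, with $r_k(0)=1$. Claim: $r_k(p)\le 4^k$ for every $p$.
\begin{itemize}
\item For $k=0$ this is the product input.
\item If layer $k+1$ has no gate on $(p,p+1)$, the layer factorizes across that cut, so $r_{k+1}(p)\le r_k(p)\le 4^k$.
\item If it does have such a gate, then by disjointness no gate of that layer acts on $(p-1,p)$. Hence $r_{k+1}(p-1)\le r_k(p-1)\le 4^k$. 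Moving the single site $p$ across the cut costs at most its Liouville dimension: expand each right-hand factor in an operator basis $\{E_i\}_{i=1}^4$ of site $p$. This gives $r_{k+1}(p)\le 4\,r_{k+1}(p-1)\le 4^{k+1}$.
\end{itemize}
This uses neither the brickwall pattern nor any property of $\mathcal N$ beyond its being single-site.

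\textbf{Comparison with the paper.} Your doubt about a factor $4$ per crossing gate is well founded, and your crude count of $16$ is tight. The paper asserts that a crossing two-qubit channel applied to a single Schmidt product term $L\otimes R$ lands in the $4\times 4$ Liouville space of the boundary pair, hence has rank at most $4$. But $L$ and $R$ are supported on all of $A$ and $B$ and can be correlated with the boundary sites. Take chain order $a_1a_2\mid b_1b_2$: a SWAP on $(a_2,b_1)$ sends the rank-one operator $\Phi^+_{a_1a_2}\otimes\Phi^+_{b_1b_2}$ to $\Phi^+_{a_1b_1}\otimes\Phi^+_{a_2b_2}$, which has rank $16$. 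This already occurs at $m=2$ in the lemma's setting: the first layer prepares the Bell pairs, and depolarizing noise with $\lambda<1$ keeps the rank at $16$. The lemma still holds, since $16=2^{2\cdot 2}$, but only through the per-layer, all-cuts count above.

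\textbf{Entropy step.} Yours is also more careful than the paper's bare claim $S_{OE}\le\log_2(\mathrm{rank})$. That claim fails for the unnormalized entropy at rank $1$ or $2$: a mixed product state has $S_{OE}=-t\log_2 t>0$. One small correction: $t\log_2(\chi/t)$ is maximized at $t=1$ only for $\chi\ge e$, not $\chi\ge 2$ as you first wrote. Your separate check of small $\chi$ is what handles $\chi=1,2$.
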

\begin{proof}
    Since the initial state $\ketbra{0}{0}^{\otimes n}$ is a product state, it has Schmidt rank 1.
    Each two-qubit unitary $U$ crossing the cut acts on the vectorized operator as $U\otimes \overline{U}$.
    After vectorization, each physical qubit gives a local Liouville space of dimension $2^2=4$. Acting on a single Schmidt product term, a cross-cut two-qubit unitary channel therefore produces a vector in a $4\times4$ bipartite local space, whose Schmidt rank is at most $4$. Thus each such gate can increase the operator Schmidt rank by at most a factor of $4$.
    While for the local noise, they act entirely within either subsystem $A$ or $B$, thus they do not change the operator Schmidt rank across the cut.
    Assuming that in each layer there is at most one gate crossing the cut, after $m$ layers, the Schmidt rank of the auxiliary orbit state $\sigma_\ell^{(m)}$ is upper bounded by $2^{2m}$.

    Since the operator entanglement entropy is upper bounded by the logarithm of the Schmidt rank, we have:
    \begin{equation}
        S_{OE}(\sigma_\ell^{(m)}) \leq \log_2 \text{Schmidt rank}(\sigma_\ell^{(m)}) \leq \log_2 2^{2m} = 2m.
    \end{equation}
\end{proof}

\subsection{Average case analysis}
In this section, we consider the average case, which means that the unitary gates in the circuit are drawn from random distribution, and we analyze the average behavior of the operator entanglement entropy under the effect of noise.
Specifically, we consider a random circuit that the two-qubit gates in each layer of the circuit are drawn independently from a 2-design distribution.

There are some useful properties of the 2-design distribution, for example, the following lemma shows that the Pauli mixing property:
\begin{lemma}[Pauli mixing property of 2-design distribution~\cite{dankert2009exact,mele2026noise}]\label{lemma:pauli_mixing}
    Let $\mu$ be a distribution over $n$-qubit unitaries that forms a 2-design, then for any n-qubit Pauli operator $P_1 , P_2 \in \mathcal{P}_n$, then:
    \begin{equation}
        \mathbb{E}_{U \sim \mu} \left[ U^{\otimes 2} \left( P_1 \otimes P_2 \right) U^{\dagger \otimes 2} \right] = \begin{cases}
            I \otimes I, & \text{if } P_1 = P_2 = I,\\
            \frac{1}{4^n - 1} \sum_{P \in \mathcal{P}_n \setminus \{I\}} P \otimes P, & \text{if } P_1 = P_2 \neq I,\\
            0, & \text{if } P_1 \neq P_2.
        \end{cases}
    \end{equation}
\end{lemma}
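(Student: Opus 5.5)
The plan is to reduce the 2-design average to a Haar twirl on two copies. Then we use Schur--Weyl duality to fix the twirl up to two scalars, and solve for those scalars with two trace identities. Set $d=2^n$. Let $F$ be the swap operator on $(\mathbb{C}^d)^{\otimes 2}$, and write $\mathcal{T}(X)=\mathbb{E}_{U\sim\mu}[U^{\otimes 2}XU^{\dagger\otimes 2}]$.

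\textbf{Step 1 (reduce to Haar).} The map $X\mapsto U^{\otimes 2}XU^{\dagger\otimes2}$ has entries that are degree-$(2,2)$ polynomials in the entries of $U$ and $\overline U$. A $2$-design reproduces Haar moments of this degree, so $\mathcal{T}$ equals the Haar twirl. By left-invariance of the Haar measure, $\mathcal{T}(X)$ commutes with every $V^{\otimes 2}$. By Schur--Weyl duality, the commutant of $\{V^{\otimes 2}\}$ is $\mathrm{span}\{I\otimes I, F\}$. Hence $\mathcal{T}(X)=a\,I\otimes I+b\,F$ for some scalars $a,b$. These two operators are linearly independent since $d\geq 2$, which is the point that needs care.

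\textbf{Step 2 (fix the coefficients).} Both $\tr(\cdot)$ and $\tr(\cdot\,F)$ are invariant under conjugation by $U^{\otimes 2}$, so they are preserved by $\mathcal{T}$. Using $\tr(I\otimes I)=d^2$, $\tr F=d$ and $\tr(F^2)=d^2$, this gives the linear system
\begin{equation}
    a d^2+b d=\tr X,\qquad a d+b d^2=\tr(XF).
\end{equation}
Its determinant is $d^2(d^2-1)\neq 0$. For $X=P_1\otimes P_2$ we have $\tr X=\tr P_1\,\tr P_2$ and $\tr(XF)=\tr(P_1P_2)=d\,\delta_{P_1P_2}$, using orthogonality of Paulis.

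\textbf{Step 3 (cases).}

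If $P_1=P_2=I$, then $X=I\otimes I$ is fixed by every conjugation, so $\mathcal{T}(X)=I\otimes I$.

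If $P_1\neq P_2$, then $\tr(P_1P_2)=0$. Moreover $\tr P_1\,\tr P_2=0$, because at least one of the two Paulis is non-identity and hence traceless. The system then forces $a=b=0$.

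If $P_1=P_2=P\neq I$, the right-hand sides are $(0,d)$. Solving gives
\begin{equation}
    b=\frac{d}{d^2-1},\qquad a=-\frac{1}{d^2-1}.
\end{equation}
Finally we insert the Pauli expansion of the swap, $F=\frac1d\sum_{Q\in\mathcal{P}_n}Q\otimes Q$, which one checks on a single qubit and tensorizes. This gives
\begin{equation}
    \mathcal{T}(P\otimes P)=\frac{1}{d^2-1}\Big(\sum_{Q}Q\otimes Q-I\otimes I\Big)=\frac{1}{4^n-1}\sum_{Q\neq I}Q\otimes Q,
\end{equation}
which is the middle case of the statement.

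The only nontrivial input is the Schur--Weyl identification of the commutant in Step 1; everything after that is a $2\times2$ linear solve. If one prefers to avoid Schur--Weyl, an alternative route is to twirl first over the Pauli group (a subgroup average that the Haar average absorbs). This kills all off-diagonal $P_1\neq P_2$ terms directly, and the remaining diagonal terms are fixed by the same two trace identities.
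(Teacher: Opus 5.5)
Your proof is correct. The paper does not prove this lemma itself; it imports it from Dankert et al.\ and Mele et al. Your argument is the standard $t=2$ Weingarten computation behind the cited result: reduce to the Haar twirl, use Schur--Weyl duality to write the twirl as $a\,I\otimes I+b\,F$, and fix $a,b$ from $\tr X$ and $\tr(XF)$. I checked the coefficients $a=-1/(d^2-1)$ and $b=d/(d^2-1)$, and the expansion $F=\frac{1}{d}\sum_Q Q\otimes Q$; both are right.

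\textbf{Convention for the $P_1\neq P_2$ case.} This case relies on $\mathcal{P}_n$ being the $4^n$ unsigned Pauli strings $\{I,X,Y,Z\}^{\otimes n}$. The paper uses this convention implicitly when it later sums $c^{\abs{P}}$ over $\mathcal{P}_n$. If phases were allowed, $P_1=-P_2$ would give $-\frac{1}{4^n-1}\sum_{Q\neq I}Q\otimes Q$ rather than $0$. You should state the convention explicitly.

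\textbf{Alternative route.} The Pauli twirl does kill the off-diagonal terms. For $P\otimes P$, however, the two trace identities only determine the answer once you know it lies in the two-dimensional span of $I\otimes I$ and $\sum_{Q\neq I}Q\otimes Q$. Knowing only that it lies in $\mathrm{span}\{Q\otimes Q\}$ leaves $4^n$ free parameters. Closing the gap needs an extra input, for example invariance under the Clifford group, which acts transitively on non-identity Paulis up to sign. This remark does not affect your main proof.
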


There is a Pauli expectation decay property for the random circuit defined above, which is given by the following lemma.
\begin{lemma}[Decay in Pauli expectation values~\cite{mele2026noise}]\label{lemma:pauli_expectation_decay}
    For any Pauli operator $P \in \mathcal{P}_n$ and the $L$-layer random circuit $\Phi$ defined above, let $\rho$ and $\sigma$ be any two $n$-qubit states, then
    we have the following upper bound on the expectation of the squared Pauli expectation value:
    \begin{equation}
        \mathbb{E}\left[ \tr{P \Phi(\rho - \sigma)}^2 \right] \leq 4 c^{\abs{P}+L-1},
    \end{equation}
    where $c$ is the contraction coefficient defined in \eqref{eq:c:contraction_condition}, and $\abs{P}$ is the Hamming weight of the Pauli operator $P$, i.e., the number of non-identity single-qubit Pauli operators in $P$.
\end{lemma}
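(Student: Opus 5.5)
The plan is to work in the Heisenberg picture and track second moments of Pauli coefficients as $P$ is back-propagated through the circuit. Write $X=\rho-\sigma$, so that $\tr{P\Phi(X)}=\tr{\Phi^\dagger(P)X}$ with $\Phi^\dagger=\mathcal U_1^\dagger\circ(\mathcal N^{\dagger})^{\otimes n}\circ\cdots\circ\mathcal U_L^\dagger\circ(\mathcal N^\dagger)^{\otimes n}$. Expand every intermediate operator as $O=\sum_{Q\in\mathcal P_n}a_Q Q$, and use the potential
\begin{equation}
W(O)=\sum_{Q\neq I}\mathbb E[a_Q^2].
\end{equation}

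First I simplify the noise. By the normal-form lemma, $\mathcal N(\cdot)=U\mathcal N'(V^\dagger(\cdot)V)U^\dagger$. The single-qubit rotations $U,V$ can be absorbed into the adjacent two-qubit gates, since a 2-design multiplied by a fixed unitary is still a 2-design. The only exception is the outermost rotation, which is harmless because it maps $P$ to another operator of the same weight. So I may assume $\mathcal N=\mathcal N'$. From the transfer matrix, $\mathcal N'^\dagger(\sigma_j)=D_j\sigma_j+t_j\mathbb I$ for $j\in\{X,Y,Z\}$ and $\mathcal N'^\dagger(\mathbb I)=\mathbb I$. Hence one site carrying a uniformly random non-identity Pauli passes on expected squared weight $\tfrac13\sum_j D_j^2$ to a non-identity Pauli and $\tfrac13\sum_j t_j^2$ to the identity. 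The total is $c$.

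The core step is a one-layer contraction. Fix one layer, $\mathcal U^\dagger\circ(\mathcal N^\dagger)^{\otimes n}$. By Lemma~\ref{lemma:pauli_mixing}, averaging a two-qubit gate kills all cross terms $\mathbb E[a_Qa_{Q'}]$ with $Q\neq Q'$ on its support. It also spreads the squared weight of any non-identity restriction uniformly over the 15 non-identity two-qubit Paulis, so each site of the pair is marginally uniform over $\{X,Y,Z\}$ whenever it is active. Feeding this into the next noise layer, I claim every string $Q\neq I$ has at least one active site, and each active site multiplies the expected squared coefficient by at most $c$ in total over its two outcomes. Therefore the total mass reaching the non-identity sector is at most $c$ times the incoming non-identity mass. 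Mass reaching $\mathbb I$ is discarded, since $\tr X=0$ and the identity never regenerates non-identity components. For the very first back-propagation step (the last physical layer) applied to the deterministic string $P$, I would order things as $\mathcal N^\dagger$ acting on $P$, then $\mathcal U_L^\dagger$, then $\mathcal N^\dagger$. I then bound the resulting mass by $c^{|P|}$ through one twirl of each active site, counting every one of the $|P|$ sites. Each of the remaining $L-1$ layers then gives at least one further factor $c$, so $W(\Phi^\dagger(P))\le c^{|P|+L-1}$.

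To conclude, write $\tr{\Phi^\dagger(P)X}=\sum_{Q\neq I}a_Q\tr{QX}$. The outermost gate layer $\mathcal U_1^\dagger$ is twirled, which makes the cross terms vanish in expectation, so
\begin{equation}
\mathbb E\big[\tr{\Phi^\dagger(P)X}^2\big]=\sum_{Q\neq I}\mathbb E[a_Q^2]\,\tr{QX}^2 .
\end{equation}
Since $\abs{\tr{QX}}\le\norm{Q}_\infty\norm{\rho-\sigma}_1\le 2$, this is at most $4W\le 4c^{|P|+L-1}$.

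The main obstacle is the per-layer step. Here the twirl from a gate layer and the noise must be correctly interleaved, and the brickwall offset must be handled: a pair active on only one site still has only $3/15$-type uniform marginals. The claim that an active site contributes at most $c$ requires the marginal of the twirled distribution on that site to be exactly uniform over $\{X,Y,Z\}$, and it must remain compatible with the other site of the pair. I would first verify this by direct computation on a single pair, checking that the expected retained factor is at most $c^{(\#\text{active})}\le c$ per layer. I would then get the $c^{|P|}$ head start in the initial layer by the same computation applied site by site to $P$.
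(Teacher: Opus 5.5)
The step that fails is your $c^{|P|}$ head start for the first layer.

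\textbf{Why the head start fails.} In the Heisenberg picture, the first map that hits $P$ is the last noise layer, and it acts on the deterministic string $P$ before any twirl. A site carrying $\sigma_j$ therefore keeps total weight $D_j^2+t_j^2$ (non-identity weight $D_j^2$). This can equal $1$, e.g.\ $D_Z=1$. The value $c$ only appears after averaging over a uniformly random $j\in\{X,Y,Z\}$, which requires a twirl first.

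After the twirl by $\mathcal U_L^\dagger$, Lemma~\ref{lemma:pauli_mixing} spreads the weight on each active gate pair uniformly over the $15$ non-identity two-qubit Paulis. This happens regardless of whether $P$ had weight one or two on that pair. The next noise layer then contracts by $\frac{1}{15}\bigl[(1+3c)^2-1\bigr]=\frac{2c+3c^2}{5}$ per active pair. That is roughly one power of $c$ per pair, not one per site of $P$.

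\textbf{Exponent count.} Your count is also off by one. Only $L-1$ of the $L$ noise layers are preceded by a twirl. Your head start already uses two noise layers ($\mathcal N^\dagger$ of layers $L$ and $L-1$). So the remaining $L-1$ gate layers contain only $L-2$ noise applications, and a gate layer alone preserves the non-identity weight.

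\textbf{The inequality itself fails.} This cannot be patched: as stated here, the inequality with exponent $|P|$ is false for anisotropic noise. The paper gives no argument of its own; it imports the lemma. Take the Pauli channel with transfer matrix $\mathrm{diag}(1,0,0,d)$, so $c=d^2/3$. Take $L=1$, $P=Z_1Z_2$ on one gate pair, $\rho=\ketbra{00}{00}\otimes\tau$ and $\sigma=\ketbra{11}{11}\otimes\tau$. Then $\mathcal N^{\dagger\otimes n}(P)=d^2P$, and Lemma~\ref{lemma:pauli_mixing} gives
\[
\mathbb{E}\left[\tr{P\Phi(\rho-\sigma)}^2\right]=\frac{d^4}{15}\sum_{Q\neq I}\tr{Q\left(\ketbra{00}{00}-\ketbra{11}{11}\right)}^2=\frac{8d^4}{15}>\frac{4d^4}{9}=4c^{|P|+L-1}.
\]

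\textbf{What your bookkeeping does prove.} The twirl-kills-cross-terms bookkeeping, with a factor $c$ per twirled noise layer, establishes $\mathbb{E}\left[\tr{P\Phi(\rho-\sigma)}^2\right]\le 4c^{L-1}$, uniformly in $P$. Even this needs every qubit to be acted on by a $2$-design gate in every layer.

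With open boundaries, the brickwall leaves end qubits idle in alternate layers. An idle site is not re-randomized, so its Pauli label is correlated with the previous noise step. Under the channel above, a $Z$ there is multiplied by $d^2=3c$ rather than $c$, so your per-layer contraction fails. You would need periodic boundary conditions with even $n$, or a separate treatment of idle sites.
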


The deviation between the state $\rho_\ell$ and the auxiliary orbit state $\sigma_\ell^{(m)}$ can be upper bounded using the following lemma.
    \begin{lemma}\label{lemma:auxiliary_orbit_distance_average}
    For the state $\rho_\ell$ and the auxiliary orbit state $\sigma_\ell^{(m)}$ defined above, their $l_2$-distance can be upper bounded by:
    \begin{equation}
        \mathbb{E}\left[ \norm{\rho_\ell - \sigma_\ell^{(m)}}^2_2  \right] \leq 4 c^{m-1} \left(\frac{1 + 3c}{2}\right)^n,
    \end{equation}
    where $c$ is the contraction coefficient defined in \eqref{eq:c:contraction_condition}.
\end{lemma}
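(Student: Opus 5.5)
The bound follows by expanding the squared Hilbert--Schmidt distance in the Pauli basis and applying \cref{lemma:pauli_expectation_decay} term by term. Write
\begin{equation}
    \rho_\ell-\sigma_\ell^{(m)} = \Phi_{[\ell:\ell-m+1]}\left(\rho_{\ell-m}-\ketbra{0}{0}^{\otimes n}\right),
\end{equation}
where $\rho_{\ell-m}=\Phi_{[\ell-m:1]}(\rho)$ is a genuine $n$-qubit state. This uses only that both $\rho_\ell$ and $\sigma_\ell^{(m)}$ are obtained by applying the same last $m$ noisy layers. For $\ell<m$ we interpret the auxiliary orbit as starting from the product state; the claim is then either trivial or handled identically. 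Thus the difference is the image under an $m$-layer random noisy circuit of the difference of two states, which is exactly the setting of \cref{lemma:pauli_expectation_decay} with depth $m$.

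Next, use the Pauli expansion $X=2^{-n}\sum_{P\in\mathcal P_n}\tr{PX}P$ together with Hilbert--Schmidt orthogonality of Paulis. This gives
\begin{equation}
    \norm{X}_2^2 = \frac{1}{2^n}\sum_{P}\tr{PX}^2 .
\end{equation}
The identity term vanishes because $X$ is traceless. Taking expectations and invoking \cref{lemma:pauli_expectation_decay} with $L=m$ bounds each term by $4c^{\abs{P}+m-1}$. Conditioning on the first $\ell-m$ layers makes $\rho_{\ell-m}$ fixed, since the last $m$ layers are independent of the earlier gates, so the lemma applies conditionally and then averages out.

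The remaining step is the combinatorial sum. The number of Paulis of weight $w$ is $\binom{n}{w}3^w$, so
\begin{equation}
    \sum_{P}c^{\abs{P}}=\sum_{w}\binom{n}{w}(3c)^w=(1+3c)^n .
\end{equation}
With the $2^{-n}$ prefactor this gives $4c^{m-1}\left(\tfrac{1+3c}{2}\right)^n$. Including the $P=I$ term only overestimates the sum, which is harmless.

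The main obstacle is making sure \cref{lemma:pauli_expectation_decay} is applicable with $\rho_{\ell-m}$, which itself depends on random gates. The conditioning argument above, using independence of the layers, resolves this. One should also check that the lemma's bound holds uniformly for every input pair, which is how it is stated.
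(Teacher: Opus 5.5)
Your proposal is correct and matches the paper's proof step for step. You rewrite the difference as the last $m$ noisy layers applied to $\rho_{\ell-m}-\ketbra{0}{0}^{\otimes n}$, expand $\norm{\cdot}_2^2$ in the Pauli basis, bound each term by $4c^{\abs{P}+m-1}$ via \cref{lemma:pauli_expectation_decay}, and sum binomially. Your conditioning on the first $\ell-m$ layers, using independence of the gates, makes explicit a step the paper leaves implicit.

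One quibble concerns your side remark on $\ell<m$. In that case the same argument only gives a factor $c^{\ell-1}$ rather than $c^{m-1}$, unless the input is $\ketbra{0}{0}^{\otimes n}$. This does not affect the proof, since the lemma is intended only for $m\le\ell$, where $\sigma_\ell^{(m)}$ is defined.
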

\begin{proof}
    The state $\rho_\ell$ can be expressed as:
    \begin{equation}
        \rho_\ell = \Phi_\ell \circ \cdots \circ \Phi_{\ell-m+1} (\rho_{\ell-m}),
    \end{equation}
    where $\rho_{\ell-m}$ is the state of the system at time step $\ell-m$.
    Thus, we have:
    \begin{equation}
        \norm{\rho_\ell - \sigma_\ell^{(m)}}_2 = \norm{\Phi_\ell \circ \cdots \circ \Phi_{\ell-m+1} (\rho_{\ell-m} - \ketbra{0}{0}^{\otimes n})}_2 = \norm{\Phi_{[\ell-m+1:\ell]} (\rho_{\ell-m} - \ketbra{0}{0}^{\otimes n})}_2.
    \end{equation}

    For the expectation of the squared $l_2$ distance, we have:
    \begin{equation}
        \mathbb{E}\left[ \norm{\rho_\ell - \sigma_\ell^{(m)}}^2_2  \right] = \mathbb{E}\left[ \tr{(\rho_\ell - \sigma_\ell^{(m)})^2} \right] = \frac{1}{2^n}\mathbb{E}\left[ \sum_{P \in P_n} \tr{P(\rho_\ell - \sigma_\ell^{(m)})}^2 \right] 
    \end{equation}
    
    On the other hand, the segment $\Phi_{[\ell-m+1:\ell]} = \Phi_\ell \circ \cdots \circ \Phi_{\ell-m+1}$ can be viewed as a random circuit with $m$ layers of unitary gates and noise channels, thus by \cref{lemma:pauli_expectation_decay}, we have:
    \begin{equation}
        \mathbb{E}\left[ \tr{P(\rho_\ell - \sigma_\ell^{(m)})}^2 \right] = \mathbb{E}\left[ \tr{P \Phi_{[\ell-m+1:\ell]}(\rho_{\ell-m} - \ketbra{0}{0}^{\otimes n})}^2 \right] \leq 4 c^{\abs{P}+m-1}.
    \end{equation}

    Therefore, using the binomial expansion, we have:
    \begin{equation}
        \sum_{P \in P_n} 4 c^{\abs{P}+m-1} = 4 c^{m-1} \sum_{k=0}^{n} \binom{n}{k} 3^k c^k = 4 c^{m-1} (1 + 3c)^n,
    \end{equation}
    and thus we have:
    \begin{equation}
        \mathbb{E}\left[ \norm{\rho_\ell - \sigma_\ell^{(m)}}^2_2  \right] \leq \frac{1}{2^n} \sum_{P \in P_n} 4 c^{\abs{P}+m-1} = 4 c^{m-1} \left(\frac{1 + 3c}{2}\right)^n.
    \end{equation}
\end{proof}

By combining \cref{cor:l2_distance_rank_r_approximation,lemma:auxiliary_orbit_distance_average}, we can obtain an upper bound on the operator entanglement entropy of the state $\rho_\ell$ as follows:

\begin{lemma}\label{thm:high_prob_oee_bound}
    For the random circuit $\Phi$ defined above, let $\rho$ be any $n$-qubit initial state, and let $\rho_\ell$ be the state at time step $\ell$. For any integer $1\le m\leq \ell$ and any $0<\varepsilon\leq \frac{1}{\sqrt{2}}$, define
    \begin{equation}
        \Delta_{m,\varepsilon}
        \coloneq
        4 c^{m-1} \left(\frac{1 + 3c}{2}\right)^n \frac{1}{\varepsilon^2}.
    \end{equation}
    Then for any bipartition of the system into two subsystems $A$ and $B$, we have 
    \begin{equation}
        \Pr\left[
        S_{OE}(\rho_\ell)
        >
        2m(1-\varepsilon^2) + n\varepsilon^2 - (1-\varepsilon^2)\log_2(1-\varepsilon^2) - \varepsilon^2\log_2\varepsilon^2
        \right]
        \leq
        \Delta_{m,\varepsilon}.
    \end{equation}
    Equivalently, if $\Delta_{m,\varepsilon}<1$, the same bound on $S_{OE}(\rho_\ell)$ holds with probability at least $1-\Delta_{m,\varepsilon}$.
\end{lemma}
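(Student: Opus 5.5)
The plan is to combine Markov's inequality on the averaged distance from \cref{lemma:auxiliary_orbit_distance_average} with the deterministic rank-approximation entropy bound of \cref{cor:l2_distance_rank_r_approximation}, using the auxiliary orbit state $\sigma_\ell^{(m)}$ as the low-rank comparison operator.

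\textbf{Step 1 (the bad event has small probability).} Consider the event
\begin{equation}
E=\left\{\norm{\rho_\ell-\sigma_\ell^{(m)}}_2^2>\varepsilon^2\right\}.
\end{equation}
The random variable $\norm{\rho_\ell-\sigma_\ell^{(m)}}_2^2$ is nonnegative. By \cref{lemma:auxiliary_orbit_distance_average}, its expectation is at most $4c^{m-1}\left(\frac{1+3c}{2}\right)^n$. Markov's inequality then gives $\Pr[E]\leq\Delta_{m,\varepsilon}$.

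One detail needs checking: the lemma was proved for the reference $\ketbra{0}{0}^{\otimes n}$ but with an arbitrary initial state $\rho$. Its proof only uses the Pauli decay of \cref{lemma:pauli_expectation_decay} applied to the difference of two states, $\rho_{\ell-m}-\ketbra{0}{0}^{\otimes n}$. That lemma is valid for any pair of states, so the argument goes through unchanged.

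There is also an independence point. The state $\rho_{\ell-m}$ depends on the earlier layers, which are independent of the last $m$ layers. So I would first condition on layers $1,\dots,\ell-m$, apply the decay lemma to the last $m$ random layers, and then take the outer expectation. This is the step I expect to need the most care, though I expect it to go through.

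\textbf{Step 2 (entropy bound on the complement).} On the complement of $E$ we have $\norm{\rho_\ell-\sigma_\ell^{(m)}}_2^2\leq\varepsilon^2=:p$. The hypothesis $\varepsilon\leq 1/\sqrt2$ gives $p\leq 1/2$, which is exactly what \cref{cor:l2_distance_rank_r_approximation} requires. By \cref{lemma:auxiliary_orbit_oee}, the Schmidt rank of $\sigma_\ell^{(m)}$ is at most $r=2^{2m}$.

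Two minor points remain.
\begin{itemize}
\item The corollary needs $r\geq 3$. This holds for $m\geq1$ if we take $r=2^{2m}\geq4$ as an upper bound on the rank. The Eckart--Young step in the corollary only uses that the rank is at most $r$, and the monotonicity in $r$ of \cref{lemma:entropy_tail_control} lets us replace the true rank by $2^{2m}$.
\item The corollary is stated for states, and $\sigma_\ell^{(m)}$ is indeed a state.
\end{itemize}

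\textbf{Step 3 (assembling the bound).} The corollary yields
\begin{equation}
S_{OE}(\rho_\ell)\leq(1-\varepsilon^2)\,2m-(1-\varepsilon^2)\log_2(1-\varepsilon^2)+n\varepsilon^2-\varepsilon^2\log_2\varepsilon^2,
\end{equation}
which is the stated threshold. Hence the event that $S_{OE}(\rho_\ell)$ exceeds the threshold is contained in $E$, so its probability is at most $\Delta_{m,\varepsilon}$. The complementary "with probability at least $1-\Delta_{m,\varepsilon}$" form follows immediately.
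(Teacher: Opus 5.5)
Your proposal is correct and matches the paper's proof step for step: Markov's inequality on $\norm{\rho_\ell-\sigma_\ell^{(m)}}_2^2$ via \cref{lemma:auxiliary_orbit_distance_average}, then \cref{cor:l2_distance_rank_r_approximation} with $p=\varepsilon^2\le\frac12$ and $r=2^{2m}$ taken from \cref{lemma:auxiliary_orbit_oee}. Your remarks on conditioning on the first $\ell-m$ layers, and on needing only ``rank at most $r$'' (so that $r=2^{2m}\ge 4$ meets the $r\ge 3$ hypothesis), spell out what the paper leaves implicit. One corner remains: when $2m>n$, the corollary's monotonicity-in-$p$ step requires $r\le 2^n$ and so does not apply, but this case is harmless because the threshold is then at least $n\ge S_{OE}(\rho_\ell)$.
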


\begin{proof}
    By \cref{lemma:auxiliary_orbit_distance_average}, we have
    \begin{equation}
        \mathbb{E}\left[\norm{\rho_\ell-\sigma_\ell^{(m)}}_2^2\right] \leq 4 c^{m-1}\left(\frac{1+3c}{2}\right)^n .
    \end{equation}
    Applying Markov's inequality to the nonnegative random variable
    $\norm{\rho_\ell-\sigma_\ell^{(m)}}_2^2$, we obtain that for any $0<\varepsilon\le 1$,
    \begin{equation}
        \Pr\left[\norm{\rho_\ell-\sigma_\ell^{(m)}}_2 > \varepsilon\right] = \Pr\left[ \norm{\rho_\ell-\sigma_\ell^{(m)}}_2^2 > \varepsilon^2\right]
        \leq
        \frac{\mathbb{E}\!\left[\norm{\rho_\ell-\sigma_\ell^{(m)}}_2^2\right]}{\varepsilon^2}
        \leq \Delta_{m,\varepsilon}.
    \end{equation}
    Therefore, outside an event of probability at most $\Delta_{m,\varepsilon}$, we have $\norm{\rho_\ell-\sigma_\ell^{(m)}}_2^2 \leq \varepsilon^2$.

    On the other hand, by \cref{lemma:auxiliary_orbit_oee}, the Schmidt rank of $\sigma_\ell^{(m)}$ is at most $2^{2m}$. Hence, by \cref{cor:l2_distance_rank_r_approximation}, on the event
    $\norm{\rho_\ell-\sigma_\ell^{(m)}}_2^2 \leq \varepsilon^2 \leq \frac{1}{2}$, we have
    \begin{equation}
        \begin{aligned}
            S_{OE}(\rho_\ell)
            &\le
            (1-\varepsilon^2)\log_2 2^{2m}
            -(1-\varepsilon^2)\log_2(1-\varepsilon^2)
            + n\varepsilon^2
            - \varepsilon^2\log_2\varepsilon^2 \\
            &=
            2m(1-\varepsilon^2)
            + n\varepsilon^2
            - (1-\varepsilon^2)\log_2(1-\varepsilon^2)
            - \varepsilon^2\log_2\varepsilon^2 .
        \end{aligned}
    \end{equation}
    Hence the probability that this entropy bound fails is at most $\Delta_{m,\varepsilon}$.
\end{proof}

\begin{theorem}[Average-case entanglement plateau in brickwall circuits (\cref{thm:main_average_case_plateau} in the main text)]\label{cor:uniform_constant_oee}
Assume $c<\frac{1}{3}$, and let
\begin{equation}
    b\coloneq\frac{1+3c}{2}<1.
\end{equation}
For the random circuit $\Phi$ defined above, let $\rho$ be any $n$-qubit initial state, and let $\rho_\ell$ be the state at depth $\ell$. Then, for any bipartition of the system into two subsystems $A$ and $B$, we have the following high-probability bound on the operator entanglement entropy of $\rho_\ell$:
\begin{equation}
    \Pr\left[S_{OE}(\rho_\ell) \leq 3-\frac{2}{n}+h_2\!\left(\frac{1}{n}\right), \quad \forall \ell\in\{1,\dots,L\} \right]
    \geq
    1-4Lnb^n,
\end{equation}
where $h_2(x)\coloneq -x\log_2 x-(1-x)\log_2(1-x)$ is the binary entropy function and $L$ is the total number of layers in the circuit.
The probability lower bound is nontrivial when $4Lnb^n<1$.

In particular, for all sufficiently large $n$ we have:
\begin{equation}
    \Pr\left[S_{OE}(\rho_\ell)\leq 4, \quad \forall \ell\in\{1,\dots,L\}\right] \geq 1-Le^{-\Omega(n)}.
\end{equation}
Hence, with high probability, for every layer $\ell=1,\dots,L$, there is $S_{OE}(\rho_\ell) = \order{1}$.
\end{theorem}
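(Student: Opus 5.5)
The plan is to apply \cref{thm:high_prob_oee_bound} at every depth with one choice of parameters and then take a union bound over $\ell$. The target constant $3-\frac{2}{n}+h_2(1/n)$ suggests the choice $m=1$ and $\varepsilon^2=\frac1n$. Indeed, with these values the bound of \cref{thm:high_prob_oee_bound} reads
\begin{equation}
    2m(1-\varepsilon^2)+n\varepsilon^2-(1-\varepsilon^2)\log_2(1-\varepsilon^2)-\varepsilon^2\log_2\varepsilon^2
    = 2-\frac{2}{n}+1+h_2\!\left(\frac1n\right),
\end{equation}
which is exactly the claimed constant. The constraint $\varepsilon\le 1/\sqrt2$ holds for $n\ge2$, and $m=1\le\ell$ holds for every $\ell\geq1$.

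Next I will compute the failure probability. With $m=1$ the factor $c^{m-1}$ equals $1$, so $\Delta_{1,\varepsilon}=4\,b^n\,n$ with $b=(1+3c)/2$. The assumption $c<1/3$ is used only to ensure $b<1$. Here $\rho_\ell$ is compared with $\sigma_\ell^{(1)}$, a single noisy layer applied to $\ketbra{0}{0}^{\otimes n}$. \cref{lemma:auxiliary_orbit_distance_average} needs $\ell-m\ge0$, which holds, and it also applies at $\ell=1$ with an arbitrary initial state $\rho_0$, since \cref{lemma:pauli_expectation_decay} allows arbitrary $\rho,\sigma$. So for each fixed $\ell$ the entropy bound fails with probability at most $4nb^n$. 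A union bound over $\ell=1,\dots,L$ then gives simultaneous validity with probability at least $1-4Lnb^n$. This step needs no independence across layers, since the union bound is insensitive to the correlations between the events for different $\ell$.

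For the asymptotic form, I will note that $3-\frac2n+h_2(1/n)\to3$, so the bound is at most $4$ for large $n$. Also $4nb^n=e^{-n\ln(1/b)+\ln(4n)}=e^{-\Omega(n)}$ because $\ln(1/b)>0$ is a constant. Together these give $1-Le^{-\Omega(n)}$.

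The main obstacle is checking that the hypotheses of \cref{cor:l2_distance_rank_r_approximation}, as invoked inside \cref{thm:high_prob_oee_bound}, actually hold here. With $m=1$ the auxiliary orbit has Schmidt rank at most $4\ge3$, so $r\ge3$ is satisfied; and since $\varepsilon^2=1/n\le1/2$, the squared distance is small enough for the $p\in[0,1/2]$ condition. One subtlety is that \cref{lemma:auxiliary_orbit_oee} assumes at most one cross-cut gate per layer. This holds for a brickwall circuit across a single cut of the chain, so I will state it explicitly.
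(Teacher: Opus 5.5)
Your proposal is correct and matches the paper's proof: apply \cref{thm:high_prob_oee_bound} at each depth with $m=1$ and $\varepsilon^2=1/n$, which gives failure probability $4nb^n$ and the constant $3-\frac{2}{n}+h_2(1/n)$; then take a union bound over $\ell$, and use $b<1$ to get the $e^{-\Omega(n)}$ form. Your remark that the rank bound in \cref{lemma:auxiliary_orbit_oee} needs the bipartition to be a cut of the chain (as in the main-text statement) is a useful clarification, since the appendix's ``any bipartition'' leaves it implicit.
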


\begin{proof}
For each fixed $\ell\in\{1,\dots,L\}$, apply \cref{thm:high_prob_oee_bound} with
\begin{equation}
    m=1,\qquad \varepsilon^2=\frac{1}{n}.
\end{equation}
Since $m=1\leq \ell \leq L$, this choice is valid for every $\ell\in\{1,\dots,L\}$. The failure probability becomes
\begin{equation}
    \delta_\ell
    =
    4c^{m-1}\left(\frac{1+3c}{2}\right)^n\frac{1}{\varepsilon^2}
    =
    4n\left(\frac{1+3c}{2}\right)^n
    =
    4n b^n.
\end{equation}
The fixed-time estimate is nontrivial once $\delta_\ell<1$; since $b<1$, this holds for all sufficiently large $n$. Therefore, for each fixed $\ell$ and such $n$, with probability at least $1-4n b^n$, there is
\begin{equation}
    \begin{aligned}
        S_{OE}(\rho_\ell)
        &\le
        2m(1-\varepsilon^2)
        +n\varepsilon^2
        -(1-\varepsilon^2)\log_2(1-\varepsilon^2)
        -\varepsilon^2\log_2\varepsilon^2 \\
        &=
        2\left(1-\frac{1}{n}\right)
        +1
        -\left(1-\frac{1}{n}\right)\log_2\left(1-\frac{1}{n}\right)
        -\frac{1}{n}\log_2\frac{1}{n} \\
        &=
        3-\frac{2}{n}+h_2\left(\frac{1}{n}\right).
    \end{aligned}
\end{equation}

Now apply the union bound over all $\ell=1,\dots,L$. We obtain
\begin{equation}
    \Pr\left[
        \forall \ell\in\{1,\dots,L\},\
        S_{OE}(\rho_\ell)\le 3-\frac{2}{n}+h_2\left(\frac{1}{n}\right)
    \right]
    \ge
    1-\sum_{\ell=1}^L 4n b^n
    =
    1-4Ln b^n.
\end{equation}

Finally, for $n\ge 2$ we have $1/n\le 1/2$, hence
\begin{equation}
    h_2\left(\frac{1}{n}\right)\leq 1.
\end{equation}
Therefore,
\begin{equation}
3-\frac{2}{n}+h_2 \left(\frac{1}{n}\right)\leq 4,
\end{equation}
which yields the simpler constant bound
\begin{equation}
    \Pr \left[
        \forall \ell\in\{1,\dots,L\},\
        S_{OE}(\rho_\ell)\le 4
    \right]
    \geq
    1-4Ln b^n.
\end{equation}

In particular, for sufficiently large $n$, since $n b^n = e^{-\Omega(n)}$, we have
\begin{equation}
    \Pr\left[ S_{OE}(\rho_\ell)\leq 4, \quad \forall \ell\in\{1,\dots,L\}\right]
    \geq
    1-4L e^{-\Omega(n)}.
\end{equation}
This completes the proof.
\end{proof}

\subsection{Worst case analysis}

In this section, we provide a worst-case analysis of the effect of general noise on the operator entanglement entropy. 
Using the quantum Wasserstein distance of order $1$ introduced in Ref.~\cite{DePalmaetal2021}. 
Unlike other more commonly used distance measures such as the trace distance and the fidelity, the quantum Wasserstein distance of order $1$ are \emph{not} invariant under unitary transformations and can capture the locality of noise and its propagation. 

\begin{definition}
    Given a traceless Hermitian operator $X$ on an $n$-qubit system, the quantum Wasserstein distance of order $1$ of $X$ is defined as:
    \begin{equation}
        \norm{X}_{W_1} = \frac{1}{2}\min\{\sum^n_i \norm{X^{(i)}}_1: \sum^n_i X^{(i)} = X, \tr_i\{X^{(i)}\} = 0,X^{(i)^\dagger} = X^{(i)}\}
    \end{equation}
    Given two quantum states $\rho$ and $\sigma$, we define their quantum Wasserstein distance of order $1$ as: 
    $\norm{\rho - \sigma}_{W_1}$.
\end{definition}

The quantum Wasserstein distance of order $1$ is a valid distance measure. 
Moreover, it satisfies the following three properties, which are useful for our analysis:

\begin{lemma}[Proposition 2~\cite{DePalmaetal2021}]\label{lemma:: W1 versus trace distance}
    For a system of $n$ qubits, the quantum Wasserstein distance of order $1$ satisfies the following relation with trace-distance: 
    \begin{equation}
        \frac{1}{2}\norm{\rho - \sigma}_1 \leq \norm{\rho - \sigma}_{W_1} \leq \frac{n}{2}\norm{\rho - \sigma}_1.
    \end{equation}
\end{lemma}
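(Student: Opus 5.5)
The lower bound is a one-line triangle inequality. The upper bound comes from a telescoping decomposition of $X=\rho-\sigma$ along the qubits, in which each summand is obtained by replacing one more qubit with the maximally mixed state. Throughout, $X$ is traceless and Hermitian, so $\norm{X}_{W_1}$ is well defined, and the minimum in the definition is over all admissible decompositions $X=\sum_i X^{(i)}$.

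\emph{Lower bound.} Take any admissible decomposition $X=\sum_i X^{(i)}$. The triangle inequality for the trace norm gives $\norm{X}_1\leq\sum_i\norm{X^{(i)}}_1$. Halving and minimizing over decompositions yields $\frac12\norm{\rho-\sigma}_1\leq\norm{\rho-\sigma}_{W_1}$. No structure of the constraint $\tr_i X^{(i)}=0$ is needed here.

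\emph{Upper bound: the decomposition.} Let $\mathcal R_i(Z)=\tr_i Z\otimes \frac{\mathbb I_i}{2}$ be the replacement channel on qubit $i$, and set $\mathcal R_{\leq k}=\mathcal R_k\circ\cdots\circ\mathcal R_1$ with $\mathcal R_{\leq 0}=\mathrm{id}$. Define
\begin{equation}
    X^{(i)}=\mathcal R_{\leq i-1}(X)-\mathcal R_{\leq i}(X)=(\mathrm{id}-\mathcal R_i)\bigl(\mathcal R_{\leq i-1}(X)\bigr).
\end{equation}
Each $X^{(i)}$ is Hermitian. It satisfies $\tr_i X^{(i)}=0$ because $\tr_i\circ\mathcal R_i=\tr_i$. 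The sum telescopes to $X-\mathcal R_{\leq n}(X)=X-\tr(X)\,\mathbb I/2^n=X$, using $\tr X=0$. This is therefore an admissible decomposition, and it remains to show $\sum_i\norm{X^{(i)}}_1\leq n\norm{X}_1$, i.e. $\norm{X^{(i)}}_1\leq\norm{X}_1$ for each $i$.

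\emph{Main obstacle: the constant.} Since $\mathcal R_{\leq i-1}$ is a channel, it contracts the trace norm: $\norm{\mathcal R_{\leq i-1}(X)}_1\leq\norm{X}_1$. The difficulty is bounding $\mathrm{id}-\mathcal R_i$. The naive estimate $\norm{(\mathrm{id}-\mathcal R_i)Z}_1\leq 2\norm{Z}_1$ loses a factor $2$ and gives only $n\norm{X}_1$ instead of $\frac n2\norm{X}_1$. Writing $\mathcal R_i(Z)=\frac14\sum_P PZP$ over single-qubit Paulis on qubit $i$ does not help, since it gives the worse constant $\frac32$.

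To recover the factor, I would first normalize: write $X=\frac{\norm{X}_1}{2}(\rho'-\sigma')$ with $\rho',\sigma'$ orthogonally supported states, the normalized positive and negative parts of $X$. The bound then reduces to the diameter statement $\norm{\rho'-\sigma'}_{W_1}\leq n$ for states. The telescoping above, applied to $\rho'-\sigma'$ jointly rather than to each state separately, gives summands $(\mathrm{id}-\mathcal R_i)(\omega'_{i-1}-\omega''_{i-1})$, where $\omega'_{i-1}=\mathcal R_{\leq i-1}(\rho')$ and $\omega''_{i-1}=\mathcal R_{\leq i-1}(\sigma')$ are states. The target is to show each summand has trace norm at most $2$. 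The idea is to exploit that $\omega'_{i-1}-\mathcal R_i\omega'_{i-1}$ and $\omega''_{i-1}-\mathcal R_i\omega''_{i-1}$ partially cancel whenever the single-qubit marginal structure coincides. If this sharp per-site step resists a direct argument, I would fall back on quoting Proposition 2 of Ref.~\cite{DePalmaetal2021}, where it is proved with this normalization. For the downstream use in the worst-case analysis, only $\norm{\rho-\sigma}_{W_1}=\order{n}\norm{\rho-\sigma}_1$ matters, and the decomposition above already establishes that unconditionally.
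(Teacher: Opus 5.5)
The paper gives no proof of its own here; it imports the lemma from Proposition 2 of \cite{DePalmaetal2021}, so your fallback of quoting that proposition is what the paper does. As a self-contained argument, however, your upper bound has a gap. Your lower bound is correct, and your telescoping decomposition is admissible. But it only yields $\norm{X}_{W_1}\le n\norm{X}_1$, or $\frac34 n\norm{X}_1$ with the Pauli-twirl estimate. It does not give the stated $\frac n2\norm{X}_1$.

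The step you propose for recovering the factor $2$ is false. Take $n=3$, $\rho'=\ketbra{\Phi^+}{\Phi^+}_{12}\otimes\ketbra{0}{0}_3$ and $\sigma'=\ketbra{\Phi^+}{\Phi^+}_{12}\otimes\ketbra{1}{1}_3$. These are orthogonal, so they are exactly the normalized positive and negative parts of $X=\rho'-\sigma'$. Your first summand is
\begin{equation*}
(\mathrm{id}-\mathcal R_1)(\rho'-\sigma')=\Bigl(\ketbra{\Phi^+}{\Phi^+}-\tfrac{\mathbb I}{4}\Bigr)\otimes Z ,
\end{equation*}
whose trace norm is $\frac32\cdot 2=3>2$. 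This happens even though $\rho'$ and $\sigma'$ have identical marginals on qubit $1$, so the hoped-for cancellation does not occur.

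The structural reason is that each of your summands, $\omega'_{i-1}-\mathcal R_i\omega'_{i-1}-\omega''_{i-1}+\mathcal R_i\omega''_{i-1}$, is a signed combination of four states. Your path from $\rho'$ to $\sigma'$ effectively detours through $\mathbb I/2^n$, and nothing caps a single step at $2$. In this example the total is $3+0+2=5\le 6$, so it is the per-site route that fails; controlling the sum would require a different argument.

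The missing idea is to interpolate directly between the two states using their own marginals rather than the maximally mixed state. Set $\tau_k=\sigma'_{1\cdots k}\otimes\rho'_{k+1\cdots n}$, so that $\tau_0=\rho'$ and $\tau_n=\sigma'$, and define $X^{(k)}=\tau_{k-1}-\tau_k$. Then $\sum_k X^{(k)}=\rho'-\sigma'$ and each $X^{(k)}$ is Hermitian. The partial-trace condition holds because $\tr_k X^{(k)}=\sigma'_{1\cdots k-1}\otimes\rho'_{k+1\cdots n}-\sigma'_{1\cdots k-1}\otimes\rho'_{k+1\cdots n}=0$. Since each $X^{(k)}$ is a difference of two states, $\norm{X^{(k)}}_1\le 2$. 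Hence $\norm{\rho'-\sigma'}_{W_1}\le n$, and your normalization plus homogeneity gives $\norm{X}_{W_1}\le\frac n2\norm{X}_1$.

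Your remark that only $\order{n}$ matters downstream is right for the $\order{\log n}$ conclusion of the worst-case theorem. However, the explicit constants in that theorem use $\norm{\rho_{\ell-m}-\ketbra{0}{0}^{\otimes n}}_{W_1}\le n$. These are the hypothesis $2n(3\eta)^m<\frac14$ and the prefactor $4n(3\eta)^m$, and both would double with your constant.
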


The second property of the $\norm{\cdot}_{W_1}$ allows us to compute the contraction coefficient of tensor product of single qubit noise channels. 
For a quantum channel $\Phi$, the contraction coefficient with respect to trace-distance is defined as
\begin{equation}
    \eta(\Phi) = \sup_{\rho \neq \sigma} \frac{\norm{\Phi(\rho) - \Phi(\sigma)}_1}{\norm{\rho - \sigma}_1}. 
\end{equation}
It is knwon that when $\Phi$ has a unique fixed point, $\eta(\Phi)<1$. 
Therefore, for a single-qubit noise channel $\mathcal{N}$ with a unique fixed point, we have $\eta(\mathcal{N}^{\otimes n}) < 1$ for all $n$. 
However, the contraction coefficient of $\mathcal{N}^{\otimes n}$ under the trace-distance can approach $1$ as $n$ increases, which means that the noise channel $\mathcal{N}^{\otimes n}$ can be almost non-contractive under the trace-distance for large $n$. 
To overcome this issue, one considers the contraction coefficient of $\mathcal{N}^{\otimes n}$ with respect to $\norm{\cdot}_{W_1}$, which is defined as:
\begin{equation}
    \norm{\Phi}_{W_1} = \sup_{\rho \neq \sigma} \frac{\norm{\Phi(\rho) - \Phi(\sigma)}_{W_1}}{\norm{\rho - \sigma}_{W_1}}.
\end{equation}
The superiority of this contraction coefficient is that $\norm{\mathcal{N}^{\otimes n}}_{W_1}$ can be upper bounded by a quantity that is independent of $n$. 

\begin{lemma}[Proposition 11~\cite{DePalmaetal2021}]\label{lemma:contraction_coefficient_W1 for tensor product of single qubit noise channels}
    Let $\mathcal{N}$ be a noise channel on a single qubit. 
    Assume $\sigma$ is a fixed point of $\mathcal{N}$, and define another quantum channel as $\mathcal{E}(X) = \sigma\tr\{X\}$ for $X\in M_2(\mathbb{C})$. 
    Then we have
    \begin{equation}
        \norm{\mathcal{N}^{\otimes n}}_{W_1} \leq \norm{\mathcal{N} - \mathcal{E}}_{\diamond}. 
    \end{equation}
\end{lemma}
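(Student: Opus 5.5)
The plan is to work directly from the variational definition of $\norm{\cdot}_{W_1}$. By that definition, it suffices to show that $\norm{\mathcal{N}^{\otimes n}(X)}_{W_1}\leq \norm{\mathcal{N}-\mathcal{E}}_\diamond \norm{X}_{W_1}$ for every traceless Hermitian $X$. Applying this to $X=\rho-\sigma'$ then gives the contraction coefficient bound, where $\sigma'$ denotes the second input state, to avoid a clash with the fixed point $\sigma$.

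\textbf{Step 1: decompose.} Fix a decomposition $X=\sum_i X^{(i)}$ with each $X^{(i)}$ Hermitian and $\tr_i X^{(i)}=0$. Take it optimal, which is possible by compactness in finite dimensions, or else $\epsilon$-optimal with $\epsilon\to0$ at the end. Linearity of $\mathcal{N}^{\otimes n}$ and the triangle inequality for $\norm{\cdot}_{W_1}$ (it is a norm, being defined by an infimal convolution) reduce the claim to the single-site bound
\[
\norm{\mathcal{N}^{\otimes n}(X^{(i)})}_{W_1}\le \tfrac12\norm{\mathcal{N}-\mathcal{E}}_\diamond\,\norm{X^{(i)}}_1 .
\]
Summing over $i$ then yields $\norm{\mathcal{N}-\mathcal{E}}_\diamond\norm{X}_{W_1}$.

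\textbf{Step 2: replace $\mathcal{N}$ by $\mathcal{N}-\mathcal{E}$ on site $i$.} Since $\tr_i X^{(i)}=0$, we have $\mathcal{E}_i(X^{(i)})=\sigma\otimes\tr_i X^{(i)}=0$. Hence
\[
\mathcal{N}^{\otimes n}(X^{(i)})=\big((\mathcal{N}-\mathcal{E})_i\otimes\mathcal{N}^{\otimes(n-1)}_{\bar i}\big)(X^{(i)})\eqqcolon Y .
\]
Moreover $\tr\circ(\mathcal{N}-\mathcal{E})=\tr-\tr=0$ on the single-qubit space, so $\tr_iY=0$. Also $Y$ is Hermitian, because both $\mathcal{N}$ and $\mathcal{E}$ are Hermiticity preserving. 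The trivial decomposition with the single term $Y^{(i)}=Y$ is therefore admissible in the definition, which gives $\norm{Y}_{W_1}\le\frac12\norm{Y}_1$.

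\textbf{Step 3: trace-norm bound.} Factor $Y=((\mathcal{N}-\mathcal{E})_i\otimes\mathrm{id})\circ(\mathrm{id}_i\otimes\mathcal{N}^{\otimes(n-1)})(X^{(i)})$. The inner map is CPTP, so it is trace-norm contractive on Hermitian inputs; the Hermitian input is the key point here. The outer map is bounded in trace norm by $\norm{\mathcal{N}-\mathcal{E}}_\diamond$, by the definition of the diamond norm with the remaining $n-1$ qubits as the ancilla. Together these give $\norm{Y}_1\le\norm{\mathcal{N}-\mathcal{E}}_\diamond\norm{X^{(i)}}_1$, which closes Step 1.

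The main subtlety I expect is Step 2: it requires noticing that the fixed-point channel $\mathcal{E}$ annihilates each decomposition piece exactly. It also requires checking that the output is again traceless on the same site, so that it is a legal one-term $W_1$ decomposition. The remaining steps are routine norm inequalities. Note that the fixed-point property of $\sigma$ is not actually needed for the inequality itself; it only matters for $\norm{\mathcal{N}-\mathcal{E}}_\diamond$ to be small.
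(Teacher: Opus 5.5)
Your proof is correct and is essentially the standard argument from De Palma et al., which the paper cites without reproducing: take an optimal $W_1$ decomposition, observe that $\mathcal{E}$ annihilates each $X^{(i)}$ on its own site, and bound the resulting one-term decomposition by $\frac12\|\mathcal{N}-\mathcal{E}\|_{\diamond}\|X^{(i)}\|_1$, using the other $n-1$ qubits as the ancilla. Your side remark is also right: the fixed-point property of $\sigma$ is not needed for the inequality itself, only to make $\|\mathcal{N}-\mathcal{E}\|_{\diamond}$ small.
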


The third property of the quantum Wasserstein distance of order $1$ allows us to control its expansion under brickwall circuit evolution, as shown in the following lemma. 

\begin{lemma}\label{lemma:: expansion of W1 under local unitary evolution}
    Let $\rho$ and $\sigma$ be two quantum states on a system of $n$ qubits. 
    Consider a unitary channel $\mathcal{U}(\cdot) = U (\cdot) U^\dagger$ where $U$ is a single layer of $2$-qubit gates. 
    Then we have
    \begin{equation}
        \norm{\mathcal{U}(\rho) - \mathcal{U}(\sigma)}_{W_1} \leq 3 \norm{\rho - \sigma}_{W_1}.
    \end{equation}
\end{lemma}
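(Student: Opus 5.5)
Since $\mathcal U$ is a tensor product of disjoint two-qubit unitaries, I will work directly with the variational definition of $\norm{\cdot}_{W_1}$ and push an optimal decomposition through $\mathcal U$. Set $X=\rho-\sigma$ and choose an optimal decomposition $X=\sum_{i=1}^n X^{(i)}$ with each $X^{(i)}$ Hermitian, $\tr_i X^{(i)}=0$, and $\norm{X}_{W_1}=\frac12\sum_i\norm{X^{(i)}}_1$. By linearity, $\mathcal U(X)=\sum_i \mathcal U(X^{(i)})$. Each term is Hermitian, and unitary invariance of the trace norm gives $\norm{\mathcal U(X^{(i)})}_1=\norm{X^{(i)}}_1$. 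The only defect is that $\mathcal U(X^{(i)})$ need no longer be traceless on qubit $i$ alone. It is, however, traceless on the two-qubit support $e(i)=\{i,j\}$ of the gate $G$ that acts on qubit $i$ (or on $\{i\}$ if no gate acts there). The reason is that the gates not touching $i$ commute with $\tr_{e(i)}$ and preserve zero, and $\tr_{e(i)}(G Y G^\dagger)=\tr_{e(i)}Y$ by cyclicity of the partial trace over the gate's own support. Hence $\tr_{e(i)}\mathcal U(X^{(i)})=0$.

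The key step is then a local lemma. If $Y$ is Hermitian with $\tr_{\{i,j\}}Y=0$, I claim $Y$ splits as $Y=Y^{(i)}+Y^{(j)}$ with $\tr_iY^{(i)}=0$, $\tr_jY^{(j)}=0$, both Hermitian, and $\norm{Y^{(i)}}_1+\norm{Y^{(j)}}_1\le 3\norm{Y}_1$. The natural construction uses the replacement maps $\mathcal R_k(Z)=\frac{\mathbb I_k}{2}\otimes\tr_k Z$. Set $Y^{(i)}=Y-\mathcal R_i(Y)$ and $Y^{(j)}=\mathcal R_i(Y)$. Then $\tr_iY^{(i)}=0$, and $\tr_j\mathcal R_i(Y)=\frac{\mathbb I_i}{2}\otimes\tr_{ij}Y=0$. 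Since $\mathcal R_i$ is a composition of a partial trace and tensoring with a state, it is trace-norm contractive on Hermitian operators. This gives $\norm{Y^{(i)}}_1\le 2\norm{Y}_1$ and $\norm{Y^{(j)}}_1\le\norm{Y}_1$, for a total of at most $3\norm{Y}_1$.

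Applying the lemma to each $Y=\mathcal U(X^{(i)})$ and regrouping the pieces by qubit yields a feasible decomposition of $\mathcal U(X)$ in the definition of $W_1$. Summing gives $\norm{\mathcal U(X)}_{W_1}\le\frac12\sum_i 3\norm{X^{(i)}}_1=3\norm{X}_{W_1}$. For qubits on which no gate acts, $\mathcal U(X^{(i)})$ is already traceless on $i$ and contributes with factor $1\le3$.

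The main obstacle is the local splitting lemma: checking that the partial-trace condition on the gate support propagates correctly, and that the constant $3$ is achieved. This bound is slightly cruder than the factor $2$ one might hope for from De Palma et al.'s light-cone bound, but $3$ is exactly what the statement requires. I would also double-check that the minimum in the definition of $W_1$ is attained; if not, I would run the argument with an $\epsilon$-optimal decomposition and let $\epsilon\to0$.
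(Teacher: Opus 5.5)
Your argument is correct and is essentially the paper's proof unpacked. The paper simply invokes De Palma et al.'s light-cone bound $\norm{\Phi}_{W_1}\le\frac{3}{2}\max_i\abs{\mathcal I_i}$ with $\abs{\mathcal I_i}\le 2$, and your two steps are exactly the mechanism behind that proposition, specialized to one layer of two-qubit gates: tracelessness on qubit $i$ propagates to tracelessness on the gate support, and then the local split $Y=(Y-\mathcal R_i(Y))+\mathcal R_i(Y)$ costs a factor $3$ in trace norm.

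One correction to your closing remark: the cited bound gives exactly $\frac32\cdot 2=3$ here, not $2$, so your elementary splitting estimate is not cruder than what the paper uses.
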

\begin{proof}
    By [Proposition 13] in \cite{DePalmaetal2021}, any quantum channel $\Phi$ acting on a system of $n$ qubits satsifies:
    \begin{equation}
        \norm{\Phi}_{W_1} \leq 2\frac{2^2-1}{2^2}\max_{i \in [n]} \abs{\mathcal{I}_i} = \frac{3}{2} \max_{i \in [n]} \abs{\mathcal{I}_i}.
    \end{equation}
    Here, for each qubit $i$ in the system, $\mathcal{I}_i$ is the light-cone of $i$ under the channel $\Phi$, which is defined as the minnimal subsets such that $\tr_i \{X\} = 0$ implies $\tr_{\mathcal{I}_i}\{\Phi(X)\} = 0$. 
    For a single layer of $2$-qubit gates, the light-cone of each qubit contains at most $2$ qubits, thus we have $\max_{i \in [n]} \abs{\mathcal{I}_i} \leq 2$ and $\norm{\mathcal{U}}_{W_1} \leq 3$. 
\end{proof}

\begin{theorem}\label{thm:worst_case_bound}
    Consider a brickwall unitary circuit with single-qubit noise channels of the form $\Phi = \left( \mathcal{N}^{\otimes n} \circ \mathcal{U}_{L}\right) \circ \cdots \circ \left( \mathcal{N}^{\otimes n} \circ \mathcal{U}_1 \right)$, where $\mathcal{U}_1, \ldots, \mathcal{U}_L$ are single layers of $2$-qubit gates. 
    Let $\mathcal{E}(X) = \sigma\tr\{X\}$ be the quantum channel defined by a fixed point $\sigma$ of $\mathcal{N}$. 
    Suppose we have the following condition on the noise channel $\mathcal{N}$:
    \begin{equation}
        \eta = \norm{\mathcal{N} - \mathcal{E}}_{\diamond}<\frac{1}{3}. 
    \end{equation} 
    Then for any $n$-qubit state $\rho$, any memory time $m>0$ such that $2n(3\eta)^m<\frac{1}{4}$ and any time step $\ell \geq m$, we have: 
    \begin{equation}
        \abs{S_{OE}(\rho_\ell) - S_{OE}(\sigma^{(m)}_\ell)}\leq 4n(3\eta)^m \left( 2\min\{n_A, n_B\}+2 \right) + h(4n(3\eta)^m) + \eta(4n(3\eta)^m),
    \end{equation}
    where $\sigma^{(m)}_\ell$ is the auxiliary orbit state, $n_A$ and $n_B$ are the number of qubits in subsystems $A$ and $B$, respectively. 
    The function $h(t) = -t\log_2 t -(1-t)\log_2(1-t)$ is the binary entropy function, and $\eta(t) = -t\log_2 t$.
\end{theorem}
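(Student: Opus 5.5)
The argument has three steps. First, contract the $W_1$ distance between the true state and the auxiliary orbit. Second, convert that contraction into a Hilbert-Schmidt bound. Third, apply the continuity bound \cref{prop::continuity_bound of S_OE}.

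\textbf{Step 1 ($W_1$ contraction).} Write
\[
\rho_\ell-\sigma^{(m)}_\ell=\Phi_{[\ell-m+1:\ell]}\bigl(\rho_{\ell-m}-\ketbra{0}{0}^{\otimes n}\bigr).
\]
The argument is the difference of two states, so it is traceless and Hermitian. Each layer is $\Phi_j=\mathcal{N}^{\otimes n}\circ\mathcal{U}_j$. By \cref{lemma:: expansion of W1 under local unitary evolution}, the unitary layer $\mathcal U_j$ expands $\norm{\cdot}_{W_1}$ by at most $3$. By \cref{lemma:contraction_coefficient W1 for tensor product of single qubit noise channels}, the noise layer $\mathcal N^{\otimes n}$ contracts $\norm{\cdot}_{W_1}$ by at least $\eta=\norm{\mathcal N-\mathcal E}_\diamond$. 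Iterating over the last $m$ layers gives
\[
\norm{\rho_\ell-\sigma^{(m)}_\ell}_{W_1}\le(3\eta)^m\norm{\rho_{\ell-m}-\ketbra{0}{0}^{\otimes n}}_{W_1}\le(3\eta)^m\, n .
\]
The last inequality uses \cref{lemma:: W1 versus trace distance} together with $\tfrac12\norm{\cdot}_1\le 1$ for a difference of states. The initial state $\rho$ is arbitrary and only enters through $\rho_{\ell-m}$, which is some state; this needs $\ell\ge m$.

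\textbf{Step 2 (passage to $\norm{\cdot}_2$).} By \cref{lemma:: W1 versus trace distance},
\[
\norm{\rho_\ell-\sigma^{(m)}_\ell}_2\le\norm{\rho_\ell-\sigma^{(m)}_\ell}_1\le 2\norm{\rho_\ell-\sigma^{(m)}_\ell}_{W_1}\le 2n(3\eta)^m .
\]
The hypothesis $2n(3\eta)^m<\tfrac14$ is exactly the condition $\norm{\rho-\sigma}_2\le\tfrac14$ needed in \cref{prop::continuity_bound of S_OE}, which is a reassuring consistency check.

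\textbf{Step 3 (continuity).} Both operators act on $A\otimes B$, so their operator Schmidt ranks are at most $\chi_0=D_{\min}^2=2^{2\min\{n_A,n_B\}}$, and $\log_2\chi_0=2\min\{n_A,n_B\}$. Substituting $\norm{\rho-\sigma}_2\le 2n(3\eta)^m$ into \cref{prop::continuity_bound of S_OE} gives
\[
\abs{S_{OE}(\rho_\ell)-S_{OE}(\sigma^{(m)}_\ell)}\le 4n(3\eta)^m\bigl(2\min\{n_A,n_B\}+2\bigr)+h\bigl(4n(3\eta)^m\bigr)+\eta\bigl(4n(3\eta)^m\bigr).
\]
Three facts make this substitution valid. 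The linear term is increasing in the distance. The function $h$ is increasing on $[0,1/2]$, and $4n(3\eta)^m<1/2$. The function $\eta(x)=-x\log_2 x$ is increasing on $[0,e^{-1}]$, and $1/2\ge e^{-1}$ fails, so the bound must be applied with the $\eta$-argument below $e^{-1}$; this is where the hypothesis is used. Since $\eta(\cdot)$ is increasing only on $[0,e^{-1}]$, I will check this range carefully, tightening by a constant factor if necessary.

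\textbf{Main obstacle.} I expect Step 1 to be the delicate part. \cref{lemma:contraction_coefficient W1 for tensor product of single qubit noise channels} is stated as a contraction of differences of states, so I must verify that it applies to the intermediate traceless operators. These intermediates remain differences of two states, because each partial composition maps both states to states, so the lemma applies at every stage.
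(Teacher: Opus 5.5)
Your proposal is correct and matches the paper's proof step for step: a per-layer $W_1$ factor $3\eta$ (expansion at most $3$ from the unitary layer, contraction by $\norm{\mathcal N-\mathcal E}_\diamond$ from the noise layer), then the chain $\norm{\cdot}_2\le\norm{\cdot}_1\le 2\norm{\cdot}_{W_1}\le 2n(3\eta)^m$, then the $l_2$-continuity proposition with $\log_2\chi_0=2\min\{n_A,n_B\}$. The monotonicity worry you raise about $y\mapsto -y\log_2 y$ on $(e^{-1},1/2)$, which the paper passes over silently, needs no tightening of the hypothesis: with $y=2\norm{\rho_\ell-\sigma^{(m)}_\ell}_2\in[0,1/2]$, the whole right-hand side $y(\log_2\chi_0+2)+h(y)-y\log_2 y$ has derivative at least $2+0+(1-\log_2 e)>0$, so replacing $y$ by its upper bound $4n(3\eta)^m$ is legitimate.
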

\begin{proof}
    By \cref{lemma:: W1 versus trace distance}, we have 
    \begin{equation}
       \norm{\rho_{\ell-m} - \ket{0^{\otimes n}}\bra{0^{\otimes n}}}_{W_1} \leq \frac{n}{2} \norm{\rho_{\ell-m} - \ket{0^{\otimes n}}\bra{0^{\otimes n}}}_1 \leq n. 
    \end{equation}
    By definition, both $\rho_\ell$ and $\sigma_\ell^{(m)}$ share the same evolution over the final $m$ steps. 
    For each step, by \cref{lemma:contraction_coefficient_W1 for tensor product of single qubit noise channels} and \cref{lemma:: expansion of W1 under local unitary evolution}, we have 
    \begin{equation}
        \norm{\mathcal{N}^{\otimes n}\mathcal{U}_s}_{W_1} \leq 3\eta, \quad  \ell\geq s \geq \ell-m+1.
    \end{equation}
    This implies that 
    \begin{equation}
        \norm{\rho_\ell - \sigma_\ell^{(m)}}_{W_1} = \norm{\Phi_{[\ell-m+1:\ell]}(\rho_{\ell-m} - \ket{0^{\otimes n}}\bra{0^{\otimes n}})}_{W_1} \leq (3\eta)^m \norm{\rho_{\ell-m} - \ket{0^{\otimes n}}\bra{0^{\otimes n}}}_{W_1} \leq n(3\eta)^m.
    \end{equation}
    By monotonicity of $p$-norms and \cref{lemma:: W1 versus trace distance} again, we have 
    \begin{equation}
        \norm{\rho_\ell - \sigma_\ell^{(m)}}_{2}\leq \norm{\rho_\ell - \sigma_\ell^{(m)}}_{1}\leq 2\norm{\rho_\ell - \sigma_\ell^{(m)}}_{W_1}\leq 2n(3\eta)^m.
    \end{equation}
    By assumption, we have $3\eta<1$, and thus $2n(3\eta)^m \leq \frac{1}{4}$ for $m$ sufficiently large. 
    For such $m$, \cref{prop::continuity_bound of S_OE} gives
    \begin{equation}
        \begin{aligned}
            \abs{S_{OE}(\rho_\ell) - S_{OE}(\sigma^{(m)}_\ell)} \leq 4n(3\eta)^m \left( 2\min\{n_A, n_B\}+2 \right) + h(4n(3\eta)^m) + \eta(4n(3\eta)^m),
        \end{aligned}
    \end{equation}
    where $n_A$ and $n_B$ are the number of qubits in subsystems $A$ and $B$, respectively, and $h(t) = -t\log_2 t -(1-t)\log_2(1-t)$, $\eta(t) = -t\log_2 t$ are functions from \cref{prop::continuity_bound of S_OE}.
\end{proof}

We now apply the above theorem to single qubit noise described by channel $\mathcal{N}$, whose Pauli transfer matrix $\mathcal{S}_{\mathcal{N}}$ is in the canonical form: 
\begin{equation}
    \mathcal{S}_{\mathcal{N}} = \begin{bmatrix}
        1 & 0 & 0 & 0 \\
        t_x & D_x & 0 & 0 \\
        t_y & 0 & D_y & 0 \\
        t_z & 0 & 0 & D_z
    \end{bmatrix}. 
\end{equation}
Let $D = \begin{bmatrix}
    D_x & 0 & 0 \\
    0 & D_y & 0 \\
    0 & 0 & D_z
\end{bmatrix}$, $t = \begin{bmatrix}
    t_x & t_y & t_z
\end{bmatrix}^{\intercal }$, then the channel can be expressed as 
\begin{equation}
    \mathcal{N}(\rho) = \frac{1}{2}\left( I + (D\vec{v} + t)\cdot \vec{\sigma} \right),
\end{equation}
where $\rho = \frac{1}{2}(I + v_x X + v_y Y + v_z Z) = \frac{1}{2}(I + \vec{v}\cdot \vec{\sigma})$ is a single-qubit state. 

\begin{lemma}\label{lemma: bound diamond norm by contraction condition}
    For any single-qubit noise channel $\mathcal{N}$ with Pauli transfer matrix in the canonical form, there exists a unique fixed point if and only if $D_x,D_y,D_z \neq 1$. 
    In this case, let $\omega$ be the unique fixed point of $\mathcal{N}$, and define $\mathcal{E}(X) = \omega\tr\{X\}$. 
    Then we have 
    \begin{equation}
        \norm{\mathcal{N} - \mathcal{E}}_{\diamond} \leq \frac{\sqrt{3c}}{1-\sqrt{3c}},
    \end{equation}
    where the quantity $c$ is defined as in \cref{eq:c:contraction_condition}. 
\end{lemma}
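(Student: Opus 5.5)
The plan has two parts: a direct computation in the Bloch representation for the fixed-point statement, then the diamond-norm estimate obtained by bounding $\norm{D}_{\mathrm{op}}$ and the fixed-point Bloch vector in terms of $c$.

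\emph{Fixed point.} Write $\mathcal N(\rho)=\tfrac12(I+(D\vec v+t)\cdot\vec\sigma)$. A state with Bloch vector $\vec v$ is fixed exactly when $(I-D)\vec v=t$.
\begin{itemize}
\item If $D_x,D_y,D_z\neq 1$, then $I-D$ is invertible. The unique candidate is $\vec v^*=(I-D)^{-1}t$.
\item This candidate is a genuine state. By Brouwer (the channel maps the Bloch ball into itself), some fixed point exists, and it must equal $\vec v^*$.
\item Conversely, suppose some $D_i=1$. Complete positivity forces the channel to be unital along that axis: using $\norm{\vec D}_\infty\le1$ together with the Fujiwara--Algoet conditions, $t_i=0$. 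Then the whole segment $\vec v^*+s e_i$ (intersected with the ball) consists of fixed points, so uniqueness fails.
\end{itemize}

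\emph{Explicit form of $\mathcal N-\mathcal E$.} Let $\omega$ have Bloch vector $\vec v^*$. Using $t=(I-D)\vec v^*$, a direct computation gives, for every operator $X$,
\begin{equation}
(\mathcal N-\mathcal E)(X)=\tfrac12\sum_{i}D_i\,\tr\!\big[(\sigma_i-v^*_iI)X\big]\,\sigma_i ,
\end{equation}
and hence on states $(\mathcal N-\mathcal E)(\rho)=\tfrac12\big(D(\vec v-\vec v^*)\big)\cdot\vec\sigma$.

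\emph{Two parameter bounds.} Set $s=\sqrt{3c}=\big(\norm{t}_2^2+\norm{\vec D}_2^2\big)^{1/2}$.
\begin{itemize}
\item $\norm{D}_{\mathrm{op}}=\max_i|D_i|\le s$ and $\norm{t}_2\le s$.
\item For $s<1$, a Neumann series gives $\abs{\vec v^*}\le \norm{t}_2/(1-\norm{D}_{\mathrm{op}})\le s/(1-s)$.
\end{itemize}

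\emph{Non-stabilized estimate.} For a single-qubit input $\rho$,
\begin{equation}
\norm{(\mathcal N-\mathcal E)(\rho)}_1=\abs{D(\vec v-\vec v^*)}\le s\,(1+\abs{\vec v^*})\le s\Big(1+\frac{s}{1-s}\Big)=\frac{s}{1-s}.
\end{equation}
This is exactly the claimed quantity. When $s\ge1$ the claimed bound is vacuous or negative in sign convention, so only $s<1$ needs treatment.

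\emph{Main obstacle: stabilization to the diamond norm.} The input is now $\rho_{AR}$ with an arbitrary reference system. My first attempt is to write
\begin{equation}
(\mathcal N-\mathcal E)\otimes\mathrm{id}(\rho_{AR})=\tfrac12\sum_i D_i\,\sigma_i\otimes Y_i,\qquad Y_i=\tr_A\!\big[((\sigma_i-v_i^*I)\otimes I)\rho_{AR}\big].
\end{equation}
I would then reduce, via a purification and the Schmidt form of $\rho_{AR}$ (a reference of dimension 2 suffices), to a $2\otimes2$ problem. There, the $3\times3$ "correlation" block $\tr[(\sigma_i\otimes\tau_j)\rho_{AR}]$ has bounded singular values.

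The target is to show that the trace norm is still controlled by $\norm{D}_{\mathrm{op}}$ times an effective norm of the vectors $(Y_i)$, bounded by $1+\abs{\vec v^*}$. I expect this step to be the delicate one: entangled inputs can in principle exceed the $1\to1$ bound.

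If it resists, the fallback is the factorization $\mathcal N-\mathcal E=\mathcal N\circ(\mathrm{id}-\mathcal E)$, which holds since $\mathcal N\circ\mathcal E=\mathcal E$. I would then use the unital/Pauli-diagonal part of $\mathcal N$, whose diamond norm on traceless inputs is governed by $\norm{D}_{\mathrm{op}}$, together with a separate $\norm{t}_2$ correction. This keeps the same $s/(1-s)$ geometric-series structure, possibly at the cost of a harmless constant.
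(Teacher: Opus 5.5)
Your fixed-point argument and the parameter bounds $\max_i|D_i|\le s$, $\|t\|_2\le s$, $|\vec v^*|\le s/(1-s)$ are fine. The fixed-point part is more careful than the paper's; one addition is needed there: positivity alone gives $t=0$ entirely once some $D_i=1$, since $t\pm e_i$ must lie in the Bloch ball, and you need this to make your segment of fixed points nondegenerate.

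The genuine gap is the stabilization step, which is the whole content of the diamond-norm bound. Your non-stabilized estimate only controls the $1\to1$ norm. The target you set for the entangled case, a bound of the form $\max_i|D_i|\,(1+|\vec v^*|)$, is false. Take the depolarizing channel in canonical form, $D=d\,I$, $t=0$, $0<d<1$. Then $\omega=I/2$, $\vec v^*=0$, and $\mathcal N-\mathcal E=\Phi_{\mathrm{diag}}$ with $\Phi_{\mathrm{diag}}(I)=0$, $\Phi_{\mathrm{diag}}(\sigma_i)=D_i\sigma_i$. On the maximally entangled input the output is $\tfrac d4\sum_i\sigma_i\otimes\sigma_i^{T}=d\,|\Omega\rangle\langle\Omega|-\tfrac d4 I$, whose trace norm is $\tfrac32 d>d$.

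Your fallback through $\mathcal N\circ(\mathrm{id}-\mathcal E)$ can be made rigorous, but by your own account it gives the estimate only up to a constant (e.g.\ $\max_i|D_i|\,\|\rho_{AR}-\omega\otimes\rho_R\|_1\le 2\max_i|D_i|$). That is not the lemma. Its exact constant is what turns into $\frac{s}{1-s}<\frac13\iff c<\frac1{48}$ in the worst-case theorem, so the constant is not harmless.

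The missing idea is that the entanglement enhancement must be paid for with $\|\vec D\|_2=(D_X^2+D_Y^2+D_Z^2)^{1/2}$ rather than $\max_i|D_i|$. The quantity $s$ has exactly this room, since $s^2=\|t\|_2^2+\|\vec D\|_2^2$. Your own expansion is one triangle inequality away from the paper's proof:

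\begin{itemize}
\item Writing $Y_i=\mathrm{tr}_A[(\sigma_i\otimes I)\rho_{AR}]-v_i^*\rho_R$ splits $\mathcal N-\mathcal E=\Phi_{\mathrm{diag}}-\Gamma$, with the replacement-type map $\Gamma(X)=\tfrac12\mathrm{tr}(X)\,(D\vec v^*)\cdot\vec\sigma$.
\item Since partial trace is trace-norm contractive, $\|\Gamma\|_\diamond\le|D\vec v^*|\le \max_i|D_i|\,|\vec v^*|\le s^2/(1-s)$.
\item For $\Phi_{\mathrm{diag}}$, the paper diagonalizes the Choi matrix in the Bell basis $|\beta_k\rangle=(\sigma_k\otimes I)|\Omega\rangle$. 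The eigenvalues are $\tfrac14(D_X+D_Y+D_Z)$ and $\tfrac14(D_j-\sum_{k\ne j}D_k)$.
\item Write a pure input as $(I\otimes M)|\Omega\rangle$ with $\tfrac12\mathrm{tr}(M^\dagger M)=1$. Then each $(I\otimes M)|\beta_k\rangle\langle\beta_k|(I\otimes M^\dagger)$ has trace norm $1$, so $\|\Phi_{\mathrm{diag}}\|_\diamond\le\sum_k|\lambda_k|\le 2(\sum_k\lambda_k^2)^{1/2}=\|\vec D\|_2\le s$.
\end{itemize}

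Adding the two pieces gives $s+\frac{s^2}{1-s}=\frac{s}{1-s}$ exactly.

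Finally, for $s>1$ the right-hand side is negative, so the inequality is false there, not vacuous. The lemma must be read with $\sqrt{3c}<1$, which covers the paper's use.
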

\begin{proof}
    The fixed points of the channel $\mathcal{N}$ are given by the solutions to the equation $(I - D)\vec{v} = t$.
    Therefore, there exists a unique fixed point if and only if $I - D$ is invertible, which is equivalent to $D_x,D_y,D_z \neq 1$. 

    Assume that this is the case, and denote $\vec{v_0}$ as the unique solution to the equation $(I - D)\vec{v} = t$. 
    Then, for any single-qubit state $\rho = \frac{1}{2}(I + \vec{v}\cdot \vec{\sigma})$, we have 
    \begin{equation}
        \left( \mathcal{N} - \mathcal{E} \right)(\rho) = \frac{1}{2}(D\vec{v} + \vec{t} - \vec{v_0})\cdot \vec{\sigma} = \frac{1}{2}D(\vec{v} - \vec{v_0})\cdot \vec{\sigma},
    \end{equation}
    where in the second equality we have used the fact that $\vec{v_0} = D\cdot\vec{v_0} + \vec{t}$. 
    To bound the diamond norm of $\mathcal{N} - \mathcal{E}$, we decompose $\mathcal{N} - \mathcal{E}$ into two completely bounded maps $\Phi_{\mathrm{diag}}$ and $\Gamma$ defined as follows:
    \begin{equation}
        \begin{aligned}
            &\Phi_{\mathrm{diag}}(I) = 0,\quad \Phi_{\mathrm{diag}}(\sigma_i) = D_i\sigma_i,\quad i\in \{X,Y,Z\};\\
            &\Gamma(A) = \frac{\tr\{A\}}{2}(D\vec{v_0})\cdot \vec{\sigma},\quad A\in M_2(\mathbb{C}). 
        \end{aligned}
    \end{equation}
    It is easy to verify that 
    \begin{equation}
        \mathcal{N} - \mathcal{E} = \Phi_{\mathrm{diag}} - \Gamma,
    \end{equation}
    therefore by triangle inequality:
    \begin{equation}
        \norm{\mathcal{N} - \mathcal{E}}_{\diamond} = \norm{\Phi_{\mathrm{diag}} - \Gamma}_{\diamond}\leq \norm{\Phi_{\mathrm{diag}}}_{\diamond} + \norm{\Gamma}_{\diamond}. 
    \end{equation}
    Below we compute the two diamond norms explicity. 
    First, note that the map $\Gamma$ sends any operator to the traceless Hermitian operator $\frac{1}{2}(D\vec{v_0})\cdot \vec{\sigma}$. 
    Thus for any operator $X\in M_2(\mathbb{C})\otimes M_2(\mathbb{C})$, we have 
    \begin{equation}
        \Gamma\otimes \mathrm{id}(X) = \left( \frac{1}{2}(D\vec{v_0})\cdot \vec{\sigma} \right)\otimes\tr_1\{X\},
    \end{equation}
    where $\tr_1$ is the partial trace over the first qubit. 
    Since partial trace does not increase the trace norm, we obtain
    \begin{equation}
        \norm{\Gamma}_{\diamond} = \max_{\substack{X\in M_2(\mathbb{C})\otimes M_2(\mathbb{C}) \\ \norm{X}_1 = 1}} \norm{\Gamma\otimes \mathrm{id}(X)}_1 = \norm{\frac{1}{2}(D\vec{v_0})\cdot \vec{\sigma}}_1\cdot ||\tr_{A}\{X\}||_1\leq \norm{\frac{1}{2}(D\vec{v_0})\cdot \vec{\sigma}}_1.
    \end{equation}
    By definition of $\vec{v_0}$, we have
    \begin{equation}
        \norm{\frac{1}{2}(D\vec{v_0})\cdot \vec{\sigma}}_1 = \norm{D\vec{v_0}}_2 = \norm{D(I-D)^{-1}\vec{t}}_2\leq \frac{\norm{D}_{\infty}}{1-\norm{D}_{\infty}}\norm{\vec{t}}_2.
    \end{equation}
    Since $c = \frac{1}{3}(\norm{\vec{t}}^2_2 + \norm{D}^2_2)$, we have $\norm{D}_{\infty}\leq \norm{D}_2\leq \sqrt{3c}$ and $\norm{\vec{t}}_2\leq \sqrt{3c}$. 
    Therefore, we get 
    \begin{equation}
        \norm{\Gamma}_{\diamond}\leq \frac{\norm{D}_{\infty}}{1-\norm{D}_{\infty}}\norm{\vec{t}}_2\leq \frac{3c}{1-\sqrt{3c}}. 
    \end{equation}
    Next, we bound the diamond norm of the diagonal component $\Phi_{\mathrm{diag}}$. 
    Let $\ket{\Omega} = \frac{1}{\sqrt{2}}\left( \ket{00} + \ket{11} \right)$, and $J(\Phi_{\mathrm{diag}}) = (\Phi_{\mathrm{diag}}\otimes I)\ketbra{\Omega}$ be the Choi matrix of $\Phi_{\mathrm{diag}}$. 
    Then we have 
    \begin{equation}
        J(\Phi_{\mathrm{diag}}) = \frac{1}{4}\sum_{k\in \{X,Y,Z\}}D_k \sigma_k\otimes \sigma^{\intercal}_k.
    \end{equation}
    This matrix is diagonalized under the Bell basis $\{\beta_k\}_{k\in \{I,X,Y,Z\}}$ where $\ket{\beta}_k = (\sigma_k \otimes I)\ket{\Omega}$. 
    Using the commutation relation among Pauli operators, we found the corresponding eigenvalues being 
    \begin{equation}
        \begin{aligned}
            &\lambda_0 = \frac{1}{4}(D_X+D_Y+D_Z),\quad \lambda_X = \frac{1}{4}(D_X-D_Y-D_Z);\\
            &\lambda_Y = \frac{1}{4}(-D_X+D_Y-D_Z),\quad \Lambda_Z = \frac{1}{4}(-D_X-D_Y+D_Z). 
        \end{aligned}
    \end{equation}
    Now let $M\in M_2(\mathbb{C})$ satisfies $\frac{1}{2}\tr\{M^\dagger M\} = 1$ so that $\ket{\psi} = (I\otimes M)\ket{\Omega}$ is a normalized state. 
    Then
    \begin{equation}
        (\Phi_{\mathrm{diag}} \otimes \mathcal{I})(\ketbra{\psi}) = (I \otimes M) J(\Phi_{\mathrm{diag}}) (I \otimes M^\dagger),
    \end{equation}
    thus by the spectral decomposition of $J(\Phi_{\mathrm{diag}})$, we have
    \begin{equation}
        \begin{aligned}
            \norm{(\Phi_{\mathrm{diag}} \otimes \mathcal{I})(\ketbra{\psi})}_1&\leq \sum_{k\in\{I,X,Y,Z\}}\abs{\lambda_k} \cdot \norm{(I \otimes M) \ketbra{\beta_k} (I \otimes M^\dagger)}_{1}\\
            &= \sum_{k\in\{I,X,Y,Z\}}\abs{\lambda_k} \cdot \norm{(I\otimes M)\ket{\beta_k}} \\
            &= \sum_{k\in\{I,X,Y,Z\}}\abs{\lambda_k} \\
            & \leq \sqrt{4\sum_{k\in\{I,X,Y,Z\}}\lambda^2_k} \\
            &= \sqrt{D^2_X + D^2_Y + D^2_Z}.
        \end{aligned}
    \end{equation}
    Thus we obtain:
    \begin{equation}
        \norm{\Phi_{\mathrm{diag}}}_{\diamond} = \sup_{\ket{\psi}\in \mathbb{C}^2\otimes \mathbb{C}^2}\norm{(\Phi_{\mathrm{diag}} \otimes \mathcal{I})(\ketbra{\psi})}_1\leq \sqrt{D^2_X + D^2_Y + D^2_Z}\leq \sqrt{3c}. 
    \end{equation}
    Put the above estimates together, we get 
    \begin{equation}
        \norm{\mathcal{N} - \mathcal{E}}_{\diamond}\leq \norm{\Phi_{\mathrm{diag}}}_{\diamond} + \norm{\Gamma}_{\diamond}\leq \frac{3c}{1-\sqrt{3c}} + \sqrt{3c} = \frac{\sqrt{3c}}{1-\sqrt{3c}}. 
    \end{equation}
\end{proof}

\begin{theorem}[Worst-case logarithmic entanglement under strong contraction (\cref{thm:main_worst_case_log} in the main text)]\label{thm:main_worst_case_log_appendix}
Consider the one-dimensional noisy circuit of the form $\Phi = \left( \mathcal{N}^{\otimes n} \circ \mathcal{U}_{\ell}\right) \circ \cdots \circ \left( \mathcal{N}^{\otimes n} \circ \mathcal{U}_1 \right)$, where $\mathcal{U}_1, \ldots, \mathcal{U}_\ell$ are layers of nearest-neighbor $2$-qubit gates, and $\mathcal{N}$ is any single-qubit noise channel. Fix an arbitrary cut of the chain.
Assume that the single-qubit channel $\mathcal{N}$ has a unique fixed point and satisfies $c(\mathcal{N})<\frac{1}{48}$. 
If the initial state $\rho$ is a product state, then the output state $\rho_\ell = \Phi(\rho)$ satisfies
\begin{equation}
    S_{OE}(\rho_\ell)=\order{\log n}, \quad \forall \ell\geq 0.
\end{equation}
Moreover, for any arbitrary input state $\rho$, there exists a cross-over depth $L_* = \order{\log n}$ such that for all $\ell\geq L_*$, the above bound holds.
\end{theorem}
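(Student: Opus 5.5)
The plan is to combine \cref{thm:worst_case_bound} with \cref{lemma: bound diamond norm by contraction condition} and \cref{lemma:auxiliary_orbit_oee}.

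\textbf{Step 1: the contraction rate.} The unique-fixed-point hypothesis lets us apply \cref{lemma: bound diamond norm by contraction condition}. With $c<1/48$ we have $\sqrt{3c}<1/4$, so
\begin{equation}
    \eta=\norm{\mathcal N-\mathcal E}_\diamond\leq \frac{\sqrt{3c}}{1-\sqrt{3c}}<\frac{1/4}{3/4}=\frac13 .
\end{equation}
Hence $3\eta<1$, which is exactly the hypothesis of \cref{thm:worst_case_bound}. The pre- and post-rotations $U,V$ in the normal form can be absorbed into the adjacent gate layers. They are single-qubit unitaries, so they preserve the brickwall nearest-neighbour structure and the $W_1$ expansion bound of \cref{lemma:: expansion of W1 under local unitary evolution}.

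\textbf{Step 2: choice of memory time.} Take
\begin{equation}
    m=\left\lceil \frac{\log(16 n^3)}{\log(1/(3\eta))}\right\rceil=\order{\log n},
\end{equation}
so that $2n(3\eta)^m\leq 1/(8n^2)<1/4$. \cref{thm:worst_case_bound} then gives, for every $\ell\geq m$,
\begin{equation}
    \abs{S_{OE}(\rho_\ell)-S_{OE}(\sigma_\ell^{(m)})}\leq 4n(3\eta)^m(n+2)+h(4n(3\eta)^m)+\eta(4n(3\eta)^m).
\end{equation}
Each term on the right is $o(1)$: the first is $\order{1/n}$, and the two entropy terms are evaluated at an argument of order $1/n^2$, giving $\order{\log n/n^2}$. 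By \cref{lemma:auxiliary_orbit_oee}, $S_{OE}(\sigma_\ell^{(m)})\leq 2m=\order{\log n}$, since in a 1D brickwall at most one gate per layer crosses the cut. Therefore $S_{OE}(\rho_\ell)=\order{\log n}$ for all $\ell\geq m$. This argument does not use the initial state, so it also gives the second claim with $L_*=m$.

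\textbf{Step 3: the early regime.} For a product input and $\ell<m$, the state $\rho_\ell$ is produced from a product state by $\ell$ noisy brickwall layers. The Schmidt-rank counting of \cref{lemma:auxiliary_orbit_oee}, started from the product $\rho$ instead of $\ketbra{0}{0}^{\otimes n}$, gives rank at most $4^\ell$. Hence $S_{OE}(\rho_\ell)\leq 2\ell<2m=\order{\log n}$. Combining with Step 2 covers all $\ell\geq0$.

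\textbf{Main obstacle.} The delicate step is applying \cref{thm:worst_case_bound} correctly. That theorem needs a fixed point $\sigma$ of $\mathcal N$ itself. In the normal form, the fixed point of $\mathcal N'$ must be transported by the rotations; equivalently, the rotations are absorbed into the gates and the bound is applied to $\mathcal N'$.

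If absorbing the rotations is awkward, the fallback is to bound $\norm{\mathcal N-\mathcal E}_\diamond$ directly. The diamond norm is unitarily invariant and the contraction coefficient $c$ is defined via the canonical form, so the same estimate $\norm{\mathcal N-\mathcal E}_\diamond\leq \sqrt{3c}/(1-\sqrt{3c})$ holds, with $\mathcal E$ built from the rotated fixed point.

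A secondary check is that the continuity bound uses $\log_2\chi_0\leq 2\min\{n_A,n_B\}\leq n$. This is why $m$ is chosen large enough that $n^2(3\eta)^m\to0$.
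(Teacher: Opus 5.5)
Your proposal is correct and follows the paper's route: the diamond-norm lemma gives $3\eta<1$, \cref{thm:worst_case_bound} is applied with a memory time $m=\order{\log n}$, the auxiliary orbit contributes at most $2m$ by \cref{lemma:auxiliary_orbit_oee}, and rank counting covers $\ell<m$ for product inputs. Your choice $(3\eta)^m\leq 1/(16n^3)$ is in fact the right one and fixes an error in the paper's write-up. The paper takes $L_*$ to be the smallest integer with $2n(3\eta)^{L_*}<1/4$, which leaves the continuity term $4n(3\eta)^{L_*}(2\min\{n_A,n_B\}+2)$ of order $\min\{n_A,n_B\}$, i.e.\ linear in $n$ for a balanced cut, rather than $\order{\log n}$; your choice makes it $o(1)$. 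Your absorption of the normal-form rotations into the adjacent gate layers also settles a point the paper skips. (Your fallback with the rotated fixed point also works, but only because the $W_1$ contraction estimate holds for any replacer map $\mathcal{E}(X)=\sigma\tr\{X\}$, whether or not $\sigma$ is a fixed point of $\mathcal{N}$.)
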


\begin{proof}
    By \cref{lemma: bound diamond norm by contraction condition}, we have $\norm{\mathcal{N} - \mathcal{E}}_{\diamond} \leq \frac{\sqrt{3c}}{1-\sqrt{3c}}$. 
    Therefore, imposing $c<\frac{1}{48}$ implies $\norm{\mathcal{N} - \mathcal{E}}_{\diamond} < \frac{1}{3}$. 
    Define $L_*$ to be the smallest integer such that $2n(3\eta)^{L_*} < \frac{1}{4}$, where $\eta = \norm{\mathcal{N} - \mathcal{E}}_{\diamond}$. 
    Then we have $L_* = \order{\log_2 n}$, and by \cref{thm:worst_case_bound} we have for all $\ell\geq L_*$, 
    \begin{equation}
        \abs{S_{OE}(\rho_\ell) - S_{OE}(\sigma^{(L_*)}_\ell)}\leq 4n(3\eta)^{L_*} \left( 2\min\{n_A, n_B\}+2 \right) + h(4n(3\eta)^{L_*}) + \eta(4n(3\eta)^{L_*}) \leq \order{\log_2 n}.
    \end{equation}
    By \cref{lemma:auxiliary_orbit_oee}, the auxiliary orbit state $\sigma^{(L_*)}_\ell$ has operator entanglement entropy $S_{OE}(\sigma^{(L_*)}_\ell)$ upper bounded by $2L_*$. 
    Thus we have
    \begin{equation}
        S_{OE}(\rho_\ell) \leq S_{OE}(\sigma^{(L_*)}_\ell) + \order{\log_2 n} \leq 2L_* + \order{\log_2 n},
    \end{equation}
    for all $\ell\geq L_*$.
    Since $L_* = \order{\log_2 n}$ we obtain the desired result for all $\ell\geq L_*$.

    When the initial state $\rho$ is the product state, following the same proof of \cref{lemma:auxiliary_orbit_oee}, we see that for any $\ell\leq L_*$, there is
    \begin{equation}
        \text{Schmidt rank}(\rho_\ell) \leq 2^{2\ell},
    \end{equation}
    which implies $S_{OE}(\rho_\ell) \leq 2\ell \leq 2L_* = \order{\log_2 n}$ for all $\ell\leq L_*$.
    This completes the proof.
\end{proof}

We now compute the quantity $\norm{\mathcal{N} - \mathcal{E}}_{\diamond}$ explicitly for the amplitude damping channel $\mathcal{N}_{AD}$, and determine the range of the damping parameter $\gamma$ such that the condition $\norm{\mathcal{N}_{AD} - \mathcal{E}}_{\diamond}<\frac{1}{3}$ is satisfied. 
The only fixed point of the amplitude damping channel $\mathcal{N}_{AD}$ is the ground state $\ket{0}\bra{0}$, thus we have $\mathcal{E}(X) = \ket{0}\bra{0}\tr\{X\}$. 
Note that the channel $\mathcal{E}$ is just the amplitude damping channel with damping parameter $\gamma = 1$. 
Since the both $\mathcal{N}_{AD}$ and $\mathcal{E}$ completely positive, the difference $\mathcal{N}_{AD} - \mathcal{E}$ is Hermitian-preserving, and thus by \cite{watrous2018TQI}[Theorem 3.51] we have: 
\begin{equation}
    \norm{\mathcal{N}_{AD} - \mathcal{E}}_{\diamond} = \max_{\ket{\psi}\in \mathbb{C}^2 \otimes \mathbb{C}^2}\norm{(\mathcal{N}_{AD} - \mathcal{E})\otimes \mathrm{id} (\ketbra{\psi}{\psi})}_1.
\end{equation}
Since the amplitude damping channel $\mathcal{N}_{AD}$ is symmetric with resepct to phase rotations, we can assume without loss of generality that the optimal state  takes the form $\ket{\psi(p,\theta)} = \sqrt{p}\ket{00} + e^{i\theta}\sqrt{1-p}\ket{11}$, for some $p\in[0,1]$ and $\theta \in [0,2\pi)$. 
Then with resoect to the computational basis $\{\ket{00}, \ket{01}, \ket{10}, \ket{11}\}$, we have: 

\begin{equation}
    (\mathcal{N}_{AD} - \mathcal{E})\otimes \mathrm{id} (\ketbra{\psi(p,\theta)}{\psi(p,\theta)}) = \begin{bmatrix}
        0 & 0 & 0 & -\sqrt{p(1-p)(1-\gamma)}e^{-i\theta} \\
        0 & (1-p)(\gamma-1) & 0 & 0 \\
        0 & 0 & 0 & 0 \\
        -\sqrt{p(1-p)(1-\gamma)}e^{i\theta} & 0 & 0 & (1-p)(1-\gamma)
    \end{bmatrix}. 
\end{equation}
Note that the above matrix is block-diagonal, with a $2\times 2$ block in the subspace spanned by $\{\ket{00}, \ket{11}\}$ and a $1\times 1$ block in the subspace spanned by $\ket{01}$: 
\begin{equation}
    \begin{bmatrix}
        0 & -\sqrt{p(1-p)(1-\gamma)}e^{-i\theta} \\
        -\sqrt{p(1-p)(1-\gamma)}e^{i\theta} & (1-p)(1-\gamma)
    \end{bmatrix}, \quad (1-p)(\gamma-1).
\end{equation}
Therefore, we can safely ignore the phase factor $e^{i\theta}$ by taking $\theta=0$. 
We then obtain:
\begin{equation}
    \norm{(\mathcal{N}_{AD} - \mathcal{E})\otimes \mathrm{id} (\ketbra{\psi(p,0)}{\psi(p,0)})}_1 = (1-p)(1-\gamma) + \sqrt{(1-p)^2(1-\gamma)^2 + 4p(1-p)(1-\gamma)}.
\end{equation}
We can verify that maximum is achieved at $p_0 = \frac{1 + \gamma - \sqrt{1-\gamma}}{3 + \gamma}$, with maximum value given by:
\begin{equation}
    \frac{2\sqrt{1-\gamma}}{2-\sqrt{1-\gamma}}.
\end{equation}
Following the same argument as in the previous corollary, we have:

\begin{corollary}
    Consider the one-dimensional noisy circuit of the form $\Phi = \left( \mathcal{N}_{AD}^{\otimes n} \circ \mathcal{U}_{\ell}\right) \circ \cdots \circ \left( \mathcal{N}_{AD}^{\otimes n} \circ \mathcal{U}_1 \right)$, where $\mathcal{U}_1, \ldots, \mathcal{U}_\ell$ are single layers of nearest-neighbor $2$-qubit gates, and $\mathcal{N}_{AD}$ is the amplitude damping channel with damping parameter $0\leq\gamma\leq 1$. Fix an arbitrary cut of the chain.
    Suppose that the damping parameter $\gamma$ satisfies $\gamma > \frac{45}{49}$. 
    If the initial state $\rho$ is a product state, then the output state $\rho_\ell = \Phi(\rho)$ satisfies
    \begin{equation}
        S_{OE}(\rho_\ell)=\order{\log n}, \quad \forall \ell\geq 0.
    \end{equation}
    Moreover, for any arbitrary input state $\rho$, there exists a cross-over time $L_* = \order{\log n}$ such that for all $\ell\geq L_*$, the above bound holds.
\end{corollary}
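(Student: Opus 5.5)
The statement is a direct specialization of \cref{thm:main_worst_case_log_appendix}. Its proof only used the unique-fixed-point assumption and $c<1/48$ to get $\norm{\mathcal{N}-\mathcal{E}}_{\diamond}<\frac13$, which then fed into \cref{thm:worst_case_bound}. So I will not route through the contraction coefficient $c$. Instead I will plug the exact diamond-norm value just computed for amplitude damping into \cref{thm:worst_case_bound}.

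First, I record that the unique fixed point of $\mathcal{N}_{AD}$ for $\gamma>0$ is $\ketbra{0}{0}$, with $\mathcal{E}(X)=\ketbra{0}{0}\tr\{X\}$. Then
\begin{equation}
\eta_{AD}:=\norm{\mathcal{N}_{AD}-\mathcal{E}}_{\diamond}=\frac{2\sqrt{1-\gamma}}{2-\sqrt{1-\gamma}}.
\end{equation}
Setting $s=\sqrt{1-\gamma}\in[0,1]$, the map $s\mapsto 2s/(2-s)$ is increasing. The condition $\eta_{AD}<\frac13$ becomes $6s<2-s$, i.e.\ $s<\frac27$, i.e.\ $1-\gamma<\frac{4}{49}$, i.e.\ $\gamma>\frac{45}{49}$. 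This is exactly the hypothesis. For completeness I will briefly justify the maximization over $p$ in the displayed trace-norm expression; the reduction to $\ket{\psi(p,\theta)}$ by phase covariance has already been given.

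With $3\eta_{AD}<1$, I define $L_*$ as the least integer with $2n(3\eta_{AD})^{L_*}<\frac14$, so $L_*=\order{\log n}$. For $\ell\ge L_*$, \cref{thm:worst_case_bound} with $m=L_*$ gives
\begin{equation}
\abs{S_{OE}(\rho_\ell)-S_{OE}(\sigma_\ell^{(L_*)})}\le 4n(3\eta_{AD})^{L_*}(2\min\{n_A,n_B\}+2)+\order{1}.
\end{equation}
Because $4n(3\eta_{AD})^{L_*}<\frac12$, the right-hand side is $\order{n}$. To get $\order{\log n}$ I will take $m=L_*+\order{\log n}$ instead, so that $n(3\eta_{AD})^m=\order{1/n}$; this enlarges $L_*$ only by a constant factor. \cref{lemma:auxiliary_orbit_oee} then gives $S_{OE}(\sigma_\ell^{(m)})\le 2m=\order{\log n}$. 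Combining the two bounds yields the claim for every $\ell\ge L_*$ and arbitrary input.

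For a product input and $\ell<L_*$, I will repeat the Schmidt-rank counting of \cref{lemma:auxiliary_orbit_oee} directly on $\rho_\ell$. Each layer contributes at most one gate crossing the cut, and noise acts locally, so the operator Schmidt rank is at most $2^{2\ell}$ and $S_{OE}(\rho_\ell)\le 2\ell\le 2L_*=\order{\log n}$.

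The main obstacle is the continuity step. The error term $n(3\eta)^m\cdot 2\min\{n_A,n_B\}$ must be driven to $\order{1}$ rather than merely below a constant. That is why I choose $m$ with $(3\eta_{AD})^m\le n^{-2}$ instead of the minimal $L_*$. The other nontrivial check is that the maximizer $p_0$ lies in $[0,1]$; this holds because $p_0=(1+\gamma-\sqrt{1-\gamma})/(3+\gamma)\in[0,1]$ for $\gamma\in[0,1]$.
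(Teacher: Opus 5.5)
Your proposal is correct and follows the paper's route: the exact value $\norm{\mathcal{N}_{AD}-\mathcal{E}}_{\diamond}=2\sqrt{1-\gamma}/(2-\sqrt{1-\gamma})$ is below $1/3$ precisely when $\gamma>45/49$, and then \cref{thm:worst_case_bound}, \cref{lemma:auxiliary_orbit_oee} and the early-time Schmidt-rank count are combined exactly as in the proof of \cref{thm:main_worst_case_log_appendix}. Your enlargement of the memory time to $(3\eta)^m\le n^{-2}$ is genuinely needed. With the paper's minimal $L_*$ one only has $4n(3\eta)^{L_*}\geq 3\eta/2$, so the continuity term $4n(3\eta)^{L_*}(2\min\{n_A,n_B\}+2)$ is of order $\min\{n_A,n_B\}$ rather than $\order{\log n}$; just make sure the enlarged $m$ is what you quote as the crossover time and as the upper end of the early-time rank-counting range.
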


\section{Generalization to higher-dimension}
\label{appendix:pepo}

This section makes precise the cut-rank and average bond dimension statements for projected entangled-pair operators~(PEPOs), the operator analogue of projected entangled-pair states~(PEPS)~\cite{orus2014practical,kshetrimayum2017simple}. The statements are cutwise: an actual PEPO of bounded virtual dimension imposes a Schmidt-rank constraint across a prescribed bipartition, while a Schmidt-rank truncation only induces an average boundary-bond dimension scale for that cut. 
They do not assert that such an average boundary-bond dimension estimate can be converted into an efficiently contractible PEPO, nor do they by themselves construct local PEPO tensors that realize the truncated Schmidt vectors or satisfy all cuts simultaneously.

Let $G=(V,E)$ be a finite interaction graph. A PEPO for an operator $X$ on the sites $V$ is a tensor-network representation with one physical input-output pair at each vertex and virtual indices on the edges. 
We consider a prescribed tensor-network cut: the PEPO is coarse-grained into two blocks supported on $A$ and $A^c$, with the virtual edges between the two blocks left open. If an edge $e$ has virtual dimension $\chi_e$, we denote this boundary-edge set by
\begin{equation}
    \partial A=\{e=(u,v)\in E:\,u\in A,\ v\in A^c\},
\end{equation}
and we write $a(A)=\abs{\partial A}$.

Let $X$ be represented by a PEPO on $G$ with virtual dimensions $\{\chi_e\}_{e\in E}$.
For the prescribed cut, fix all boundary virtual indices and contract the tensors and internal virtual indices within each of the two blocks. 
This produces an operator $L_{\boldsymbol{\alpha}}$ supported on $A$ and an operator $R_{\boldsymbol{\alpha}}$ supported on $A^c$, where $\boldsymbol{\alpha}$ is the multi-index formed by the boundary virtual indices. Therefore the PEPO can be written as
\begin{equation}\label{eq:pepo_schmidt_decomposition}
    X = \sum_{\boldsymbol{\alpha}} L_{\boldsymbol{\alpha}}\otimes R_{\boldsymbol{\alpha}},
\end{equation}
where the number of possible boundary multi-indices is $\prod_{e\in\partial A}\chi_e$. This expansion is not necessarily an orthonormal operator Schmidt decomposition, but it immediately bounds the actual operator Schmidt rank:
\begin{equation}
    \mathrm{rank}_{A \mid A^c}(X)
    \leq
    \prod_{e\in\partial A}\chi_e .
\end{equation}
To make the operator-entanglement notation explicit, let the orthonormal operator Schmidt decomposition of $X$ across the same cut be
\begin{equation}
    X
    =
    \sum_{\beta=1}^{\mathrm{rank}_{A \mid A^c}(X)}
    \lambda_{\beta}^{(A)}
    \widetilde L_{\beta}^{A}
    \otimes
    \widetilde R_{\beta}^{A},
\end{equation}
where $\{\widetilde L_{\beta}^{A}\}$ and $\{\widetilde R_{\beta}^{A}\}$ are Hilbert-Schmidt orthonormal families. Set
\begin{equation}
    t_X
    =
    \norm{X}_2^2
    =
    \sum_{\beta=1}^{\mathrm{rank}_{A|A^c}(X)}
    \bigl(\lambda_{\beta}^{(A)}\bigr)^2,
    \qquad
    p_{\beta}^{(A)}
    =
    \frac{\bigl(\lambda_{\beta}^{(A)}\bigr)^2}{t_X}.
\end{equation}
The normalized operator entanglement across $A|A^c$ is the Shannon entropy of this normalized operator-Schmidt distribution,
\begin{equation}
    \widetilde S_{OE}^{A}(X)
    =
    -\sum_{\beta=1}^{\mathrm{rank}_{A|A^c}(X)}
    p_{\beta}^{(A)}
    \log_2 p_{\beta}^{(A)} .
\end{equation}
The PEPO expansion above implies that this distribution is supported on at most $\prod_{e\in\partial A}\chi_e$ values, and therefore
\begin{equation}
    \widetilde S_{OE}^{A}(X)
    \leq
    \sum_{e\in\partial A}\log_2\chi_e .
\end{equation}
For uniform virtual dimension, this becomes $\widetilde S_{OE}^{A}(X)\leq a(A)\log_2\chi_{\mathrm{PEPO}}$. The unnormalized entropy differs only by the Hilbert-Schmidt norm factor: if $t=\norm{X}_2^2$, then $S_{OE}^{A}(X)=t\widetilde S_{OE}^{A}(X)-t\log_2 t$.

\begin{definition}\label{def:effective_boundary_bond_dimension}
    For a cut $A \mid A^c$ with $a(A)>0$ and a specified target Schmidt rank $R \geq 1$, the associated average boundary-bond dimension is defined as
    \begin{equation}
    \label{eq:effective_boundary_bond_dimension}
        \overline\chi_{\partial}(A) \coloneqq R^{\frac{1}{a(A)}}.
    \end{equation}
    If the target rank needs to be displayed explicitly, we write the same quantity as $\overline\chi_{\partial}(A;R)$.
    Equivalently,
    \begin{equation}
        \overline\chi_{\partial}(A)^{a(A)} = R,
        \quad
        \log_2 \overline\chi_{\partial}(A) = \frac{\log_2 R}{a(A)} .
    \end{equation}
\end{definition}
The average boundary-bond dimension records only the amount of boundary label space needed to store $R$ Schmidt labels after coarse-graining the two sides of the cut; it does not assert that the corresponding Schmidt vectors factor into local PEPO tensors. 
It is worth noting that a density matrix that admits a PEPO representation with average boundary-bond dimension $\overline\chi_{\partial}(A)$ across the cut $A|A^c$ obeys an area-law~\cite{Ciracetal2020,EntanglementAreaLaws2010} of entanglement entropy of the following form
\begin{equation}
    S(A)\leq a(A) \log_2 \overline\chi_{\partial}(A),
\end{equation}  
where $S(A)$ is the von Neumann entropy of the reduced state on $A$.

For any constant absolute error tolerance $\varepsilon > 0$ or relative error tolerance $\delta \in (0,1)$, we denote the corresponding Schimidt-rank that meets the error criterion by $R_{\mathrm{abs}}$ or $R_{\mathrm{rel}}$, correspondingly (constructed by \cref{thm:D_abs_error,thm:D_rel_error}).
With these definitions, the average boundary-bond dimensions for the two error criteria are short handed as
\begin{equation}
    \overline\chi_{\partial}^{\mathrm{abs}}(A) \coloneqq R_{\mathrm{abs}}^{\frac{1}{a(A)}},
    \qquad
    \overline\chi_{\partial}^{\mathrm{rel}}(A) \coloneqq R_{\mathrm{rel}}^{\frac{1}{a(A)}}.
\end{equation}

\begin{prop}[Cutwise average boundary-bond dimension from Schmidt-rank truncation]
\label{prop:pepo_cutwise_scale}
Fix a bipartition $A|A^c$ with PEPO boundary size $a(A)>0$. 
The entropy-to-rank bounds of \cref{thm:D_abs_error,thm:D_rel_error} give the cutwise estimates
\begin{equation}
    \log_2\overline\chi_{\partial}^{\mathrm{abs}}(A)
    \leq
    \frac{1}{a(A)}
    \log_2
    \max\left\{1,\left\lceil
    \left(\tr{\rho^2}-\varepsilon\right)
    2^{S_{OE}^{A}(\rho)/\varepsilon}
    \right\rceil\right\}
    \leq
    \frac{1+S_{OE}^{A}(\rho)/\varepsilon}{a(A)},
\end{equation}
for any absolute error tolerance $\varepsilon > 0$, and
\begin{equation}
    \log_2\overline\chi_{\partial}^{\mathrm{rel}}(A)
    \leq
    \frac{1}{a(A)}
    \log_2
    \max\left\{1,\left\lceil
    (1-\delta)
    2^{\widetilde S_{OE}^{A}(\rho)/\delta}
    \right\rceil\right\}
    \leq
    \frac{1+\widetilde S_{OE}^{A}(\rho)/\delta}{a(A)},
\end{equation}
for any relative error tolerance $\delta\in(0,1)$.
For fixed error tolerances, these bounds scale as
\begin{equation}
    \log\overline\chi_{\partial}^{\mathrm{abs}}(A)
    =
    O\!\left(\frac{1+S_{OE}^{A}(\rho)}{a(A)}\right),
    \qquad
    \log\overline\chi_{\partial}^{\mathrm{rel}}(A)
    =
    O\!\left(\frac{1+\widetilde S_{OE}^{A}(\rho)}{a(A)}\right).
\end{equation}
\end{prop}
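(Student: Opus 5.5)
The plan is to combine the definition of $\overline\chi_{\partial}(A)$ with the entropy-to-rank truncation bounds, taking logarithms and dividing by $a(A)$. By \cref{def:effective_boundary_bond_dimension}, $\log_2\overline\chi_{\partial}^{\mathrm{abs}}(A)=\log_2 R_{\mathrm{abs}}/a(A)$, where $R_{\mathrm{abs}}$ is the minimal Schmidt rank meeting the absolute tolerance. \cref{thm:D_abs_error} gives
\[
R_{\mathrm{abs}}\le\max\{1,\lceil(\tr{\rho^2}-\varepsilon)2^{S_{OE}^{A}(\rho)/\varepsilon}\rceil\}.
\]
Since $\log_2$ is monotone and $a(A)>0$, the first inequality follows immediately. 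The relative case is identical, using \cref{thm:D_rel_error} with $\widetilde S_{OE}^{A}$ and $\delta$.

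For the second inequality I bound the ceiling-max expression. Write $y=(\tr{\rho^2}-\varepsilon)2^{S/\varepsilon}$.

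\begin{itemize}
\item If $y\le 1$, the max is $1$ and its logarithm is $0$. This is at most $1+S/\varepsilon$ because $S_{OE}\ge0$.
\item If $y>1$, then $\lceil y\rceil\le y+1\le 2y$, so $\log_2\lceil y\rceil\le 1+\log_2 y$.
\end{itemize}

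In the second case I need $\tr{\rho^2}-\varepsilon\le 1$, which holds since purity is at most $1$ and $\varepsilon>0$. This gives $\log_2 y\le S/\varepsilon$. The relative case uses $(1-\delta)\le1$ in the same way.

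The one detail to handle is the absolute case with $\varepsilon\ge\tr{\rho^2}$. There the zero-rank approximation suffices, so $R_{\mathrm{abs}}\le 1$ (taking $R\ge1$ by convention in the definition). The max then equals $1$ and the bound is trivial, so I will remark on this separately rather than invoke \cref{thm:D_abs_error} outside its stated range.

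Finally, for fixed $\varepsilon,\delta$ the right-hand sides $(1+S/\varepsilon)/a(A)$ and $(1+\widetilde S/\delta)/a(A)$ are $O((1+S)/a(A))$ and $O((1+\widetilde S)/a(A))$ with constants $\max\{1,1/\varepsilon\}$ and $1/\delta$. Converting $\log_2$ to $\log$ changes only a constant, which yields the stated scalings.

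None of these steps is a real obstacle. The only care needed is the ceiling estimate and the degenerate regime where the max equals $1$.
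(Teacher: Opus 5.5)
Your proposal is correct and follows essentially the same route as the paper. Both insert the rank bounds of the absolute- and relative-error truncation theorems into $\log_2\overline\chi_{\partial}(A)=\log_2 R/a(A)$, control the ceiling via $\lceil x\rceil\le x+1$ with the prefactor (purity minus $\varepsilon$, or $1-\delta$) at most $1$, and treat $\varepsilon\geq$ purity as the trivial zero-rank case. The only cosmetic difference is that the paper bounds $R\le 1+2^{S/\varepsilon}$ and then uses $\log_2(1+2^x)\le 1+x$, where you split on whether the argument of the ceiling exceeds $1$.
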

\begin{proof}
For a fixed bipartition $A|A^c$, \cref{thm:D_abs_error,thm:D_rel_error} are statements about truncating the ordered operator Schmidt spectrum across that cut. Their proofs do not use one-dimensional geometry; the same bounds apply to the PEPO cutwise approximation.

For the absolute-error estimate, when fixing the error tolerance $\varepsilon$,  \cref{thm:D_abs_error} gives an operator Schmidt rank
\begin{equation}
    R_{\mathrm{abs}}
    \leq
    \max\left\{1,\left\lceil
    \left(\tr{\rho^2}-\varepsilon\right)
    2^{S_{OE}^{A}(\rho)/\varepsilon}
    \right\rceil\right\}
\end{equation}
for $0<\varepsilon<\tr{\rho^2}$. Since $\lceil x\rceil\leq x+1$ and $\tr{\rho^2}-\varepsilon\leq1$, this implies
\begin{equation}
    R_{\mathrm{abs}}\leq 1+2^{S_{OE}^{A}(\rho)/\varepsilon}.
\end{equation}
By the definition of average boundary-bond dimension, this implies
\begin{equation}
    \log_2\overline\chi_{\partial}^{\mathrm{abs}}(A)
    =
    \frac{1}{a(A)}\log_2 R_{\mathrm{abs}} .
\end{equation}
Combining this with the entropy-to-rank bound gives
\begin{equation}
    \log_2\overline\chi_{\partial}^{\mathrm{abs}}(A)
    \leq
    \frac{1}{a(A)}
    \log_2\left(1+2^{S_{OE}^{A}(\rho)/\varepsilon}\right)
    \leq
    \frac{1}{a(A)}
    \left(1+\frac{S_{OE}^{A}(\rho)}{\varepsilon}\right),
\end{equation}
where the last inequality uses $\log_2(1+2^x)\leq1+x$ for $x\geq0$. If $\varepsilon\geq\tr{\rho^2}$, discarding the whole operator already meets the absolute Hilbert-Schmidt error tolerance, so the average boundary-bond dimension requirement is trivial.

The relative-error estimate follows in the same way from \cref{thm:D_rel_error},
\begin{equation}
    R_{\mathrm{rel}}
    \leq
    \max\left\{1,\left\lceil
    (1-\delta)
    2^{\widetilde S_{OE}^{A}(\rho)/\delta}
    \right\rceil\right\}.
\end{equation}
Since $\lceil x\rceil\leq x+1$ and $1-\delta\leq1$, we have
\begin{equation}
    \log_2\overline\chi_{\partial}^{\mathrm{rel}}(A)
    =
    \frac{1}{a(A)}\log_2 R_{\mathrm{rel}}
    \leq
    \frac{1}{a(A)}
    \left(1+\frac{\widetilde S_{OE}^{A}(\rho)}{\delta}\right).
\end{equation}
The final scaling relations follow for fixed $\varepsilon$ and $\delta$.
\end{proof}

For the depolarizing noise thresholds of \cref{thm:main_depolarizing_thresholds}, the operator purity bounds are independent of the spatial geometry, therefore the similary results for the average PEPO boundary-bond dimension hold for every cut, as stated in the following.
\begin{prop}[Depolarizing thresholds as average PEPO boundary-bond dimension bounds]
\label{prop:depolarizing_pepo_effective_boundary_bond_dimension}
Consider the depolarizing setting of \cref{thm:main_depolarizing_thresholds}. For every bipartition $A\mid A^c$ with boundary size $a(A)$, there exists two critical circuit depths $L_{\mathrm{abs}}$ and $L_{\mathrm{rel}}$ satisfying 
\begin{equation}
    L_{\mathrm{abs}}=\order{1},
    \quad
    L_{\mathrm{rel}} = \Theta(\log n),
\end{equation}
such that for all $L\gtrsim L_{\mathrm{abs}}$ and $L\gtrsim L_{\mathrm{rel}}$, respectively, the operator entanglement across the cut is bounded by
\begin{equation}
    S_{OE}^{A}(\rho_L) = \order{\log n},
    \quad
    \widetilde S_{OE}^{A}(\rho_L) = \order{\log n}.
\end{equation}
\end{prop}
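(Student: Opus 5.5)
The plan is to observe that every ingredient in the proof of \cref{thm:main_depolarizing_thresholds} is already geometry independent, and then to rerun that argument verbatim for an arbitrary PEPO cut $A\mid A^c$. The only inputs are the hypercontractive purity bound of \cref{lemma:hypercontractive_purity_bound} and the purity-controlled maximum-OEE bound of \cref{theorem:maximum_entropy}. The first depends only on the layer structure $\mathcal D_\lambda^{\otimes n}\circ\mathcal U_\ell$ and on unitary invariance of Schatten norms. The second depends only on the Hilbert-space dimensions $D_{\min}=2^{n_{\min}(A)}$ and $D_{\max}$ of the two sides. Neither uses the interaction graph, the lattice, or the number $a(A)$ of boundary bonds. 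Accordingly, I will fix an arbitrary bipartition, set $n_{\min}=\min\{\abs{A},\abs{A^c}\}$, and dispose of the trivial case $n_{\min}=0$ at the outset, since there both entropies vanish.

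\emph{Absolute case.} I will take $L_{\mathrm{abs}}$ to be any constant with $\alpha_{L_{\mathrm{abs}}}>0$. For $L\ge L_{\mathrm{abs}}$, \cref{lemma:hypercontractive_purity_bound} gives purity $t_L\le 2^{-n\alpha_{L_{\mathrm{abs}}}}$. Combining the support bound $\widetilde S_{OE}^{A}\le 2n_{\min}\le n$ with \cref{lemma:normalized_oe_entropy} then gives
\[
S_{OE}^A(\rho_L)\le n t_L - t_L\log_2 t_L = \order{n2^{-n\alpha}},
\]
which is $\order{\log n}$. Here I use that $-u\log_2u$ is monotone near $0$.

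\emph{Relative case.} I will take $L_{\mathrm{rel}}=\lceil(2\ln n-\ln\ln n)/c_\lambda\rceil$ with $c_\lambda=-2\ln(1-\lambda)$. This choice makes $y_L\le 2\ln n/n\le 1\le 2n_{\min}$, so \cref{lemma:asymptotic_normalized_OE_bound_hypercontractivity} applies. Because that bound increases with $n_{\min}$, I will replace $D_{\min}^2$ by $2^n$. The elementary inequalities $2^y-1\ge y\ln 2$ and $1-2^{-y}\le y$ then give
\[
\widetilde S_{OE}^A\le y_L\,(n+\log_2(1/y_L)+\order{1})=\order{\log n}.
\]
For the lower half of $L_{\mathrm{rel}}=\Theta(\log n)$, I will appeal to \cref{prop:main_depolarizing_log_scale_sharp}. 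Its product-of-Bell-pairs example with identity unitaries is admissible for any geometry: the Bell pairs can be placed so that they straddle the chosen cut, and since the layers are trivial no locality constraint is violated. This yields super-logarithmic $\widetilde S_{OE}^A$ for $L\le (1-\delta)\log n/(4\abs{\log(1-\lambda)})$.

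The main obstacle is making this optimality claim honest for an arbitrary cut. The Bell-pair example needs $\Omega(n^{\text{const}})$ pairs straddling the cut to produce $n^\delta\log n$ entropy. For cuts with small $n_{\min}$ it only gives $n_{\min}$ times the per-pair entropy. I would therefore phrase $\Theta(\log n)$ as optimal for a uniform bound over cuts, exhibited on balanced cuts, or on any cut with $n_{\min}=\Omega(n)$, by pairing qubits across $A\mid A^c$. Finally, I would remark that \cref{prop:pepo_cutwise_scale} then converts these entropy bounds into $\log\overline\chi_\partial=\order{\log n/a(A)}$ at fixed tolerance.
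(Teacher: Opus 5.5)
Your proposal is correct and follows the paper's own justification: the paper simply observes that the hypercontractive purity bound and the purity-controlled maximum-OEE bound are geometry independent, so the proof of the depolarizing-threshold theorem carries over verbatim to any PEPO cut, with optimality taken from the Bell-pair example. That theorem's proof is already written for an arbitrary bipartition via $n_{\min}$, with the same choices of $L_{\mathrm{abs}}$ and $L_{\mathrm{rel}}=\lceil(2\ln n-\ln\ln n)/c_\lambda\rceil$. Your caveat that the $\Omega(\log n)$ half of $\Theta(\log n)$ is only witnessed on cuts with many qubits on both sides is a fair sharpening the paper leaves implicit: when $n_{\min}=\order{\log n}$, the trivial bound $\widetilde S_{OE}^A\le 2n_{\min}$ already gives $\order{\log n}$ at every depth.
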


The preceding depolarizing estimate uses purity decay and is essentially independent of the geometry of the cut. We now add locality to the picture. In a local circuit, the only gates that can create new Schmidt labels across $A|A^c$ are gates whose support intersects both sides of the cut. For bounded-degree lattices and bounded-size gates, the number of such gates in one layer is proportional to the number $a(A)$ of boundary bonds, and each of them can increase the operator Schmidt rank by only a constant factor. This gives a purely local, cutwise rank-counting estimate.

\begin{prop}[Locality implies linear cutwise Schmidt-rank growth]
\label{prop:locality_cutwise_schmidt_rank_growth}
Consider a local circuit on a bounded-degree lattice and gates supported on at most $\kappa=\order{1}$ sites, and fix a bipartition $A \mid A^c$ with PEPO boundary size $a(A)>0$. Assume that each layer contains at most $C_{\partial} \cdot a(A)$ gates crossing the cut. Locality implies that each crossing gate increases the operator Schmidt rank by at most the constant factor
\begin{equation}
    C_g
    \coloneqq 2^{2\lfloor\kappa/2\rfloor}.
\end{equation}
Writing the $\ell$-th noisy layer as $\Phi_\ell=\mathcal N_\ell\circ\mathcal U_\ell$, define $\sigma_L = \Phi_L\circ\Phi_{L-1}\circ\cdots\circ\Phi_1(\ketbra{0}{0}^{\otimes n})$.
Then
\begin{equation}
    \mathrm{rank}_{A|A^c}(\sigma_L) \leq C_g^{C_{\partial} L a(A)},
\end{equation}
and therefore
\begin{equation}
    S_{OE}^{A}(\sigma_L) = \order{L a(A)}, \quad \widetilde S_{OE}^{A}(\sigma_L) = \order{L a(A)}.
\end{equation}
\end{prop}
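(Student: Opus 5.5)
The plan is to prove the rank bound by induction on layers, tracking the operator Schmidt rank of the vectorized operator $|\sigma_\ell\rangle\!\rangle$ across $A|A^c$. This follows the same reasoning as \cref{lemma:auxiliary_orbit_oee}, extended to gates of size $\kappa$ and to many crossing gates per layer. The base case is immediate: $\ketbra{0}{0}^{\otimes n}$ is a product operator, so its rank is $1$.

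For the inductive step, split $\Phi_\ell=\mathcal N_\ell\circ\mathcal U_\ell$ and factor $\mathcal U_\ell$ as the product of its disjoint gates. We then treat each piece in turn.

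\begin{itemize}
\item \emph{Gates and noise on one side of the cut.} These act as $W_A\otimes \mathbb{I}$ or $\mathbb{I}\otimes W_{A^c}$ on the vectorized operator. Applying such a map to a decomposition $\sum_\alpha L_\alpha\otimes R_\alpha$ leaves the number of terms unchanged, so the rank cannot increase. This holds for noise channels as well, since they are local (single-site, or at least not straddling the cut).
\item \emph{A crossing gate $g$.} Suppose $g$ is supported on $S_A\cup S_{A^c}$ with $S_A\subset A$, $S_{A^c}\subset A^c$, and $|S_A|+|S_{A^c}|\le\kappa$. Its superoperator $\mathcal{G}=U_g\otimes\overline{U_g}$ acts on the local Liouville space $\mathbb{C}^{4^{|S_A|}}\otimes\mathbb{C}^{4^{|S_{A^c}|}}$. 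Its operator Schmidt decomposition there has at most $4^{\min(|S_A|,|S_{A^c}|)}\le 4^{\lfloor\kappa/2\rfloor}=C_g$ terms, giving $\mathcal G=\sum_{k=1}^{C_g}\mathcal G_k^{A}\otimes\mathcal G_k^{A^c}$. Applying this to $\sum_{\alpha=1}^{r} L_\alpha\otimes R_\alpha$ yields $\sum_{\alpha,k}\mathcal G_k^A(L_\alpha)\otimes\mathcal G_k^{A^c}(R_\alpha)$, which has at most $C_g r$ terms. Rank is subadditive and each term has rank one, so the rank grows by at most a factor $C_g$.
\end{itemize}

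By hypothesis there are at most $C_\partial a(A)$ crossing gates per layer, so each layer multiplies the rank by at most $C_g^{C_\partial a(A)}$. Iterating over $L$ layers gives $\mathrm{rank}_{A|A^c}(\sigma_L)\le C_g^{C_\partial L a(A)}$.

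The entropy bounds follow from the fact that the Shannon entropy of a distribution supported on $R$ points is at most $\log_2 R$. This gives
\[
\widetilde S^A_{OE}(\sigma_L)\le C_\partial L a(A)\log_2 C_g=\order{La(A)}.
\]
For the unnormalized entropy, \cref{lemma:normalized_oe_entropy} gives $S^A_{OE}=t\widetilde S^A_{OE}-t\log_2 t$ with $t\le 1$. Since $-t\log_2 t\le 1/(e\ln 2)$, this yields $S^A_{OE}(\sigma_L)\le \widetilde S^A_{OE}(\sigma_L)+\order{1}=\order{La(A)}$.

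The step needing the most care is the per-gate factor $C_g$. Bounding the superoperator's operator Schmidt rank by $4^{\min(|S_A|,|S_{A^c}|)}$ requires the doubled local dimension, and $\min\le\lfloor\kappa/2\rfloor$ is what gives $C_g=2^{2\lfloor\kappa/2\rfloor}$. Here I take the dimension of the smaller side of the operator space $\mathcal B(\mathcal H_{S_A})\otimes\mathcal B(\mathcal H_{S_{A^c}})$ acting on the smaller side. Order within a layer is irrelevant since the gates commute, being disjoint.
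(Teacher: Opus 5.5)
You follow the paper's route: induction over layers, one-sided maps do not raise the rank, each crossing gate multiplies the rank by a fixed factor, and entropy is at most $\log_2$ of the rank. Your handling of the $-t\log_2 t$ term for the unnormalized entropy is fine. The gap is exactly the step you flagged. The superoperator $\mathcal G=U_g\otimes\overline{U_g}$ is an \emph{operator} on $\mathbb{C}^{d_A}\otimes\mathbb{C}^{d_{A^c}}$ with $d_A=4^{|S_A|}$ and $d_{A^c}=4^{|S_{A^c}|}$. Its operator Schmidt rank is therefore bounded only by $\min(d_A^2,d_{A^c}^2)=16^{\min(|S_A|,|S_{A^c}|)}$. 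The bound $\min(d_A,d_{A^c})$ you used is the Schmidt-rank bound for a \emph{vector} in that space.

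Concretely, write $U_g=\sum_{k=1}^{K}V_k\otimes W_k$ with $K\le 4^{\min(|S_A|,|S_{A^c}|)}$. Then $\mathcal G(L\otimes R)=\sum_{k,k'}V_kLV_{k'}^\dagger\otimes W_kRW_{k'}^\dagger$ has up to $K^2$ terms, and this is attained. Apply a SWAP on a crossing pair $(a,b)$ to $\Phi^+_{aa'}\otimes\Phi^+_{bb'}$, with $a'\in A$ and $b'\in A^c$; this input has rank $1$ across the cut. The output is $\Phi^+_{a'b}\otimes\Phi^+_{ab'}$, which has rank $16$. Your count $4^{\min}$ is correct for an operator supported on $S_A\cup S_{A^c}$, or when $L_\alpha$ factorizes between $S_A$ and $A\setminus S_A$. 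Schmidt operators of $\sigma_{\ell-1}$ generally do neither. The paper's proof asserts the same per-gate factor $2^{2\min(|S_A|,|S_{A^c}|)}$ and is open to the same objection.

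With the corrected factor $16^{\lfloor\kappa/2\rfloor}$, your induction goes through verbatim and gives $\mathrm{rank}_{A|A^c}(\sigma_L)\le 16^{\lfloor\kappa/2\rfloor C_\partial L a(A)}$. So the conclusions $S^A_{OE},\widetilde S^A_{OE}=O(La(A))$ survive. The displayed inequality with $C_g=2^{2\lfloor\kappa/2\rfloor}$, however, is not established, and it can fail. Take a ladder with row-one sites $a_1',a_1\in A$ and $b_1,b_1'\in A^c$, and row-two sites $a_0\in A$ and $b_0\in A^c$, so $a(A)=2$. In layer 1, prepare Bell pairs on $(a_0,b_0)$, $(a_1',a_1)$ and $(b_1,b_1')$. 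In layer 2, apply SWAP on $(a_1,b_1)$. Each layer has one crossing gate, so $C_\partial=1/2$ is admissible and the claimed bound is $4^{2}=16$. The output, noiseless or with depolarizing noise of strength $\lambda<1$, has rank $4\cdot 16=64$.

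If you want the constant $4$ for two-qubit gates on edges with single-site noise, gate-by-gate counting cannot give it. A spacetime min-cut can. Assign every crossing gate to the $A$ side. The cut then severs only worldline segments of boundary qubits $b\in A^c$ at switches between crossing gates and non-crossing gates (or the output). There are at most $L$ such switches per qubit and at most $a(A)$ such qubits, so $\mathrm{rank}\le 4^{La(A)}$. This recovers the stated bound whenever $C_\partial\ge 1$.
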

\begin{proof}
The constant $C_g$ follows directly from locality. Let a crossing gate or channel $\mathcal G$ have support $S=S_A\cup S_{A^c}$, where $S_A\subset A$ and $S_{A^c}\subset A^c$, with $\abs{S}\leq\kappa$. After vectorizing operators, the local Liouville dimensions on the two sides are
\begin{equation}
    2^{2\abs{S_A}} \quad\text{and}\quad 2^{2\abs{S_{A^c}}}.
\end{equation}
If
\begin{equation}
    X=\sum_{r=1}^{R} X_r^{A}\otimes X_r^{A^c}
\end{equation}
has operator Schmidt rank at most $R$ across $A|A^c$, then applying $\mathcal G$ to a single Schmidt product term can produce Schmidt rank at most
\begin{equation}
    2^{2\min(\abs{S_A},\abs{S_{A^c}})} \leq 2^{2\lfloor\kappa/2\rfloor} = C_g .
\end{equation}
Thus, $\mathrm{rank}_{A|A^c}(\mathcal G(X))\leq C_g R$.

The product input has operator Schmidt rank one across every cut. Local gates and noise channels supported entirely inside $A$ or entirely inside $A^c$ do not increase the Schmidt rank across $A|A^c$. The gates crossing the cut increase the rank by at most $C_g$ each, and there are at most $C_{\partial}a(A)$ such gates per layer. After $L$ layers,
\begin{equation}
    \mathrm{rank}_{A|A^c}(\sigma_L) \leq C_g^{C_{\partial} L a(A)} .
\end{equation}
The OEEs are bounded by the logarithm of the support size of the normalized Schmidt distribution, using
\begin{equation}
    \log_2\mathrm{rank}_{A|A^c}(\sigma_L)
    =\order{L a(A)},
\end{equation}
we have
\begin{equation}
    S_{OE}^{A}(\sigma_L) = \order{L a(A)}, \quad \widetilde S_{OE}^{A}(\sigma_L) = \order{L a(A)}.
\end{equation}
\end{proof}

The rank-counting estimate in \cref{prop:locality_cutwise_schmidt_rank_growth} has a direct per-boundary-bond interpretation. Since $\overline\chi_{\partial}(A)^{a(A)}$ is the cutwise Schmidt-rank capacity represented by the boundary indices, the local rank bound gives
\begin{equation}
    \log \overline\chi_{\partial}(A)
    \leq
    C_{\partial}L\log C_g
    =
    \order{L}.
\end{equation}
Thus a local depth window $L=\order{1}$ costs only a bounded average boundary-bond dimension, while $L=\order{\log n}$ costs at most a polynomial one. Combined with the depolarizing OEE bounds, this gives a uniform-in-depth statement: before the relevant depolarizing crossover the locality estimate controls the required boundary capacity, and after the crossover the depolarizing estimate controls it.

\begin{corollary}[Polynomial average PEPO boundary-bond dimension at all depths]
\label{cor:depolarizing_pepo_polynomial_effective_boundary_bond_dimension}
Fix constant error tolerances $\varepsilon>0$ and $\delta\in(0,1)$ in \cref{prop:pepo_cutwise_scale}. In the depolarizing setting of \cref{prop:depolarizing_pepo_effective_boundary_bond_dimension}, for every circuit depth $L\geq0$ and every nontrivial cut $A\mid A^c$ with $a(A)>0$, the average PEPO boundary-bond dimensions required to approximate $\rho_L$ at absolute Hilbert-Schmidt error $\varepsilon$ and relative error $\delta$ obey
\begin{equation}
    \log \overline\chi_{\partial}^{\mathrm{abs}}(A)=\order{\log n},
    \quad
    \log \overline\chi_{\partial}^{\mathrm{rel}}(A)=\order{\log n}.
\end{equation}
Consequently,
\begin{equation}
    \overline\chi_{\partial}^{\mathrm{abs}}(A)= \order{\mathrm{poly}(n)},
    \quad
    \overline\chi_{\partial}^{\mathrm{rel}}(A)= \order{\mathrm{poly}(n)},
\end{equation}
uniformly in $L$.
\end{corollary}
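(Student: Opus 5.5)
The argument splits the depth axis at the relevant crossover and uses the locality estimate on one side and the depolarizing entropy estimate on the other. The bounds are then converted to average boundary-bond dimensions through \cref{prop:pepo_cutwise_scale}. Fix a cut $A\mid A^c$ with $a(A)>0$ and fix the constant tolerances $\varepsilon,\delta$. We treat the absolute and relative criteria in parallel, with crossover depths $L_{\mathrm{abs}}=\order{1}$ and $L_{\mathrm{rel}}=\order{\log n}$ taken from \cref{prop:depolarizing_pepo_effective_boundary_bond_dimension}. In both cases the crossover depth is independent of the cut, because the hypercontractive purity bound of \cref{lemma:hypercontractive_purity_bound} is geometry independent.

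\emph{Shallow regime} ($L< L_{\mathrm{abs}}$, respectively $L<L_{\mathrm{rel}}$). We invoke \cref{prop:locality_cutwise_schmidt_rank_growth}: the exact state $\rho_L$ has $\mathrm{rank}_{A|A^c}(\rho_L)\le C_g^{C_\partial L a(A)}$. Since zero truncation is already admissible here, the required rank $R_{\mathrm{abs}}$ or $R_{\mathrm{rel}}$ is at most the exact rank. This gives $\log\overline\chi_\partial(A)\le C_\partial L\log C_g$. For $L<L_{\mathrm{abs}}=\order{1}$ this is $\order{1}$, and for $L<L_{\mathrm{rel}}=\order{\log n}$ it is $\order{\log n}$. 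The proposition is stated for the input $\ketbra{0}{0}^{\otimes n}$; for a general product input, which is what we need here, its counting argument applies verbatim, since the input rank is one.

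\emph{Deep regime} ($L\ge L_{\mathrm{abs}}$, respectively $L\ge L_{\mathrm{rel}}$). From \cref{prop:depolarizing_pepo_effective_boundary_bond_dimension} (equivalently the proof of \cref{thm:main_depolarizing_thresholds}, which applies to any bipartition) we get $S^A_{OE}(\rho_L)=\order{\log n}$ and $\widetilde S^A_{OE}(\rho_L)=\order{\log n}$ respectively. Substituting into \cref{prop:pepo_cutwise_scale} gives
\[
\log_2\overline\chi_\partial(A)\le \frac{1+\order{\log n}/\varepsilon}{a(A)}\le \order{\log n},
\]
using $a(A)\ge1$ and fixed $\varepsilon$; the same holds with $\delta$. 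In the absolute case with $\varepsilon\ge\tr\rho_L^2$, the zero-rank approximation suffices, so $R_{\mathrm{abs}}=1$ and the bound is trivial.

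Taking the maximum of the two regimes gives $\log\overline\chi_\partial=\order{\log n}$ for every $L$, and hence a $\mathrm{poly}(n)$ bound uniformly in $L$. The main obstacle is checking that the constants hidden in both regimes are uniform in $L$ and in the cut. For the deep regime, the bound of \cref{thm:main_depolarizing_thresholds} uses monotonicity of $t\mapsto -t\log t$ and of $y_L$ in $L$, and both hold for all $L$ beyond the crossover. For the shallow regime, $C_\partial$ and $C_g$ depend only on the lattice degree and $\kappa$. I would also verify explicitly that the deep-regime estimate does not rely on a balanced cut: the unbalanced case is already handled in the proof via $D_{\min}^2\le 2^n$.
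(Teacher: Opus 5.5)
Your proposal is correct and follows essentially the same route as the paper: below the relevant crossover depth you bound the required Schmidt rank by the exact rank from the locality count of \cref{prop:locality_cutwise_schmidt_rank_growth}, which gives $\log\overline\chi_\partial(A)=\order{L}$, and above it you feed the cut-independent depolarizing OEE bounds into \cref{prop:pepo_cutwise_scale}. Taking the maximum of the two regimes yields the uniform-in-$L$ $\order{\log n}$ bound. Your explicit checks of a product input, monotonicity beyond the crossover, and unbalanced cuts via $D_{\min}^2\le 2^n$ are details the paper leaves implicit.
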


The same auxiliary-orbit mechanism behind the worst-case one-dimensional result has the following PEPO interpretation. Since the theorem controls the unnormalized operator entanglement, the corresponding PEPO statement is for fixed absolute Hilbert-Schmidt accuracy.

\begin{theorem}[Worst-case PEPO boundary dimension under strong contraction]
\label{thm:worst_case_pepo_boundary_dimension}
Consider an $n$-qubit noisy local circuit on a bounded-degree interaction graph $G$, with layer maps
\begin{equation}
    \Phi_{\ell}
    =
    \mathcal N^{\otimes n}\circ \mathcal U_{\ell},
\end{equation}
where each $\mathcal U_{\ell}$ is a product of disjoint two-qubit gates supported on edges of $G$. Assume that the single-qubit channel $\mathcal N$ has a unique fixed point and satisfies
\begin{equation}
    c(\mathcal N)<\frac{1}{48}.
\end{equation}
Fix a bipartition $A\mid A^c$ with PEPO boundary size $a(A)>0$, and assume that each layer contains at most $C_{\partial}a(A)$ gates crossing the cut. For product input $\rho_0=\ketbra{0}{0}^{\otimes n}$ and any depth $L\geq0$, the output $\rho_L=\Phi_L\circ\cdots\circ\Phi_1(\rho_0)$ satisfies
\begin{equation}
    S_{OE}^{A}(\rho_L)
    = \order{a(A)\log n+\log n}.
\end{equation}
Consequently, for every fixed absolute Hilbert-Schmidt error tolerance $\varepsilon>0$,
\begin{equation}
    \log \overline\chi_{\partial}^{\mathrm{abs}}(A)=\order{\log n},
    \quad
    \overline\chi_{\partial}^{\mathrm{abs}}(A) = \order{\mathrm{poly}(n)},
\end{equation}
uniformly in time.
\end{theorem}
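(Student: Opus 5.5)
We follow the proof of \cref{thm:main_worst_case_log_appendix}, replacing the one-dimensional light-cone and rank count with their bounded-degree analogues. First, \cref{lemma: bound diamond norm by contraction condition} shows that $c(\mathcal N)<\frac{1}{48}$ gives $\eta=\norm{\mathcal N-\mathcal E}_{\diamond}<\frac13$, and \cref{lemma:contraction_coefficient_W1 for tensor product of single qubit noise channels} gives $\norm{\mathcal N^{\otimes n}}_{W_1}\leq\eta$. On a bounded-degree graph a single layer of disjoint two-qubit gates still has light-cones $\abs{\mathcal I_i}\leq 2$. The argument of \cref{lemma:: expansion of W1 under local unitary evolution} (Proposition 13 of \cite{DePalmaetal2021}) therefore applies unchanged and gives $\norm{\mathcal U_\ell}_{W_1}\leq 3$. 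Each layer is thus a $W_1$-contraction with factor $3\eta<1$. Choose $m=L_*=\order{\log n}$ minimal with $2n(3\eta)^{m}<\frac14$, and compare $\rho_L$ with the auxiliary orbit state $\sigma_L^{(m)}$ obtained by resetting the input to $\ketbra{0}{0}^{\otimes n}$ at time $L-m$. As in \cref{thm:worst_case_bound},
\begin{equation}
    \norm{\rho_L-\sigma_L^{(m)}}_2\leq 2\norm{\rho_L-\sigma_L^{(m)}}_{W_1}\leq 2n(3\eta)^m .
\end{equation}
None of this uses the geometry of the graph.

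Next, bound the rank of $\sigma_L^{(m)}$ with \cref{prop:locality_cutwise_schmidt_rank_growth}. The reset state is a product state, there are at most $C_\partial a(A)$ crossing gates per layer, and each multiplies the rank by at most $C_g=4$. Hence $\mathrm{rank}_{A|A^c}(\sigma_L^{(m)})\leq 4^{C_\partial m\,a(A)}$, which gives $S^A_{OE}(\sigma_L^{(m)})\leq 2C_\partial m\,a(A)=\order{a(A)\log n}$. The case $L\leq m$ needs no comparison: $\rho_L$ itself starts from a product state, so the same count gives $S_{OE}^A(\rho_L)\leq 2C_\partial L\,a(A)=\order{a(A)\log n}$ directly.

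To transfer the bound to $\rho_L$, apply \cref{prop::continuity_bound of S_OE} with $\chi_0\leq 4^{\min(n_A,n_{A^c})}\leq 2^n$. This gives $\abs{S^A_{OE}(\rho_L)-S^A_{OE}(\sigma_L^{(m)})}\leq 4n(3\eta)^m(n+2)+h(\cdot)+\eta(\cdot)$. The dimensional prefactor $n$ in front of $(3\eta)^m$ is the main obstacle. I would absorb it by enlarging the memory to $m=\lceil \log(n^2)/\log(1/(3\eta))\rceil+\order{1}=\order{\log n}$. Then $4n^2(3\eta)^m=\order{1}$, indeed $o(1)$, and the $h$ and $\eta$ terms are bounded by constants. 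This yields $S^A_{OE}(\rho_L)\leq 2C_\partial m\,a(A)+\order{1}=\order{a(A)\log n+\log n}$ for every $L$. An alternative is \cref{cor:l2_distance_rank_r_approximation} with tail $p=\norm{\rho_L-\sigma_L^{(m)}}_2^2$, whose $np$ term is likewise killed by the choice of $m$.

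Finally, insert this into \cref{prop:pepo_cutwise_scale} for fixed $\varepsilon$. This gives $\log_2\overline\chi^{\mathrm{abs}}_\partial(A)\leq (1+S^A_{OE}(\rho_L)/\varepsilon)/a(A)=\order{\log n+\log n/a(A)}=\order{\log n}$, using $a(A)\geq1$. Therefore $\overline\chi^{\mathrm{abs}}_\partial(A)=\mathrm{poly}(n)$, and since the memory $m$ does not depend on $L$, the bound holds uniformly in time.
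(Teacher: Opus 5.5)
Your proposal is correct and follows the paper's proof essentially step for step. The shared chain is: $W_1$-contraction by $3\eta<1$ toward the auxiliary orbit $\sigma_L^{(m)}$ with memory $m=\order{\log n}$, the $\ell_2$ continuity bound for $S_{OE}$, the cutwise rank count $\log_2\mathrm{rank}_{A|A^c}(\sigma_L^{(m)})=\order{m\,a(A)}$, and finally \cref{prop:pepo_cutwise_scale}. Two of your steps make explicit what the paper leaves implicit in its choice $m=\lceil K\log n\rceil$ with ``$K$ large enough'': enlarging $m$ so that $n^2(3\eta)^m=\order{1}$, which kills the $\log_2\chi_0\le n$ prefactor, and treating the case $L\le m$ separately.
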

\begin{proof}
Let $\omega$ be the unique fixed point of $\mathcal N$ and let $\mathcal E(X)=\omega\tr X$. By \cref{lemma: bound diamond norm by contraction condition},
\begin{equation}
    \zeta
    \coloneqq
    \norm{\mathcal N-\mathcal E}_{\diamond}
    <
    \frac{1}{3}.
\end{equation}
Choose a memory time $m=\lceil K\log n\rceil$, with $K$ large enough as stated in \cref{thm:worst_case_bound}. 

Now define the auxiliary orbit using only the last $m$ layers,
\begin{equation}
    \sigma_\ell^{(m)} = \Phi_\ell\circ\Phi_{\ell-1}\circ\cdots\circ\Phi_{\ell-m+1} \left(\ketbra{0}{0}^{\otimes n}\right).
\end{equation}
The proof of \cref{thm:worst_case_bound} applies to these layers because a layer of disjoint two-qubit gates expands the $W_1$ distance by at most the same constant factor $3$, independent of the spatial dimension. Hence
\begin{equation}
    \abs{ S_{OE}^{A}(\rho_\ell) - S_{OE}^{A}(\sigma_\ell^{(m)})} = \order{\log n}.
\end{equation}

Similar to \cref{prop:locality_cutwise_schmidt_rank_growth}, we have
\begin{equation}
    \log_2 \mathrm{rank}_{A|A^c}(\sigma_\ell^{(m)}) = \order{m a(A)},
\end{equation}
and so
\begin{equation}
    S_{OE}^{A}(\sigma_\ell^{(m)})
    =\order{ma(A)}.
\end{equation}

Combining the two estimates and using $m=\order{\log n}$ gives
\begin{equation}
    S_{OE}^{A}(\rho_L) = \order{a(A)\log n+\log n}.
\end{equation}
Finally, \cref{prop:pepo_cutwise_scale} gives, at fixed absolute tolerance $\varepsilon$,
\begin{equation}
    \log \overline\chi_{\partial}^{\mathrm{abs}}(A) =
    \order{\frac{S_{OE}^{A}(\rho_L)}{a(A)}}
    = \order{\log n}.
\end{equation}
As $\overline\chi_{\partial}^{\mathrm{abs}}(A) = 2^{\log \overline\chi_{\partial}^{\mathrm{abs}}(A)}$, we have $\overline\chi_{\partial}^{\mathrm{abs}}(A) \leq \mathrm{poly}(n)$, uniformly in time.
\end{proof}

\end{document}